\newtheorem{prop}{Proposition}
\newtheorem{thm}{Theorem}
\chardef\bslash=`\\ % p. 424, TeXbook
\begin{document}

\title{Extension of W-method and A-learner \\ for multiple binary outcomes}

\author{Shintaro Yuki\thanks{Graduate School of Culture and Information Science, Doshisha University, Tataramiyakodani 1-3, Kyotanabe City, Kyoto, Japan.}\and
		Kensuke Tanioka\thanks{Department of Life and Medical Sciences, Doshisha University, Tataramiyakodani 1-3, Kyotanabe City, Kyoto, Japan.}  \and
			Hiroshi Yadohisa\thanks{Department of Culture and Information Science, Doshisha University, Tataramiyakodani 1-3, Kyotanabe City, Kyoto, Japan} 
	}

\date{}
\maketitle

%-------------------------------------------------------------------------------------------

\begin{abstract}
In this study, we compared two groups, in which subjects were assigned to either the treatment or the control group. In such trials, if the efficacy of the treatment cannot be demonstrated in a population that meets the eligibility criteria, identifying the subgroups for which the treatment is effective is desirable. Such subgroups can be identified by estimating heterogeneous treatment effects (HTE). In recent years, methods for estimating HTE have increasingly relied on complex models. Although these models improve the estimation accuracy, they often sacrifice interpretability. Despite significant advancements in the methods for continuous or univariate binary outcomes, methods for multiple binary outcomes are less prevalent, and existing interpretable methods, such as the W-method and A-learner, while capable of estimating HTE for a single binary outcome, still fail to capture the correlation structure when applied to multiple binary outcomes. We thus propose two methods for estimating HTE for multiple binary outcomes: one based on the W-method and the other based on the A-learner. We also demonstrate that the conventional A-learner introduces bias in the estimation of the treatment effect. The proposed method employs a framework based on reduced-rank regression to capture the correlation structure among multiple binary outcomes. We correct for the bias inherent in the A-learner estimates and investigate the impact of this bias through numerical simulations. Finally, we demonstrate the effectiveness of the proposed method using a real data application.
\end{abstract}

\noindent{\bf Keywords}:  Heterogeneous treatment effect, Reduced rank regression, Subgroup identification, Two arm comparison

\section{Introduction}
Clinical trials are currently being conducted to evaluate the efficacy of treatment as an outcome. We specifically compared the two groups in these trials. The subjects were allocated to either the treatment or the control group. In such comparative studies, even if the efficacy of a treatment cannot be demonstrated in the entire eligible population, the treatment may be effective for specific subgroups with particular characteristics. The exploratory identification of such subgroups can be meaningful (e.g., \cite{rothwell2005subgroup}, \cite{strobl2009introduction}, \cite{foster2011subgroup}, and \cite{shen2015inference}).
In recent years, clinical statistics research has aimed at providing optimal treatments based on the characteristics of each subject, known as personalized medicine \cite{kosorok2015adaptive}. Per the studies, estimating the treatment effects tailored to individual subjects, known as heterogeneous treatment effects (HTE), can help us develop more effective treatment strategies from the perspective of personalized medicine.
%HTEの初出の引用をチェック
\par
Statistical methods for estimating HTE in comparisons between two groups are continuously advancing. %evolveの使い方合ってる？ More recently, scholars have proposed estimation methods using neural networks (e.g., \cite{shalit2017estimating} and \cite{shi2019adapting}) and those employing machine learning models, such as random forests \cite{wager2018estimation}. These methods estimate HTE using nonlinear models, which improves the estimation accuracy. However, this approach introduces a challenge, as it complicates the interpretation of the models. To avoid such interpretability issues, we focus on methods for estimating HTE using linear functions. Methods such as the advantage-learner (A-learner) (e.g., \cite{lu2013variable} and \cite{ciarleglio2015treatment}) and W-method (W-method) (e.g., \cite{tian2014simple} and \cite{chen2017general}) have been proposed for this purpose.
%線形関数を用いてHTEを推定する方法として，これまでAdvantage-learner (A-learner) (see A, B)やW-method (see C,D)が提案されている．
% As methods for estimating HTE using linear functions, Chen et al. (2017) have proposed the W-method and the A-learner \cite{chen2017general}.
By simply transforming data based on propensity scores and treatment assignments, these methods can estimate HTE in both randomized controlled trials (RCTs) and observational studies, making them easy to implement.
%これらの方法は傾向スコアや処置の割り当てに基づいてデータを変換するだけでHTEを推定することができ，実行が容易である．
% They also provide a framework for estimating HTE for various types of outcomes, such as continuous and binary outcomes, in both randomized controlled trials and observational studies.
These methods were designed to estimate HTE for a single outcome. However, clinical trials have multiple outcomes, including primary and secondary endpoints that tend to exhibit high correlations. Methods developed for a single outcome are inadequate for capturing complex correlation structures among multiple outcomes. To address these correlation structures, scholars have proposed methods for estimating the HTE and identifying subgroups using linear functions for multiple continuous outcomes \cite{siriwardhana2020personalized} \cite{kulasekera2022quantiles}. Despite the existence of clinical statistics research that addresses multiple binary outcomes (e.g., \cite{williams1996design}; \cite{inan2017joint}; \cite{dunson2000bayesian}), methods for estimating HTE and identifying subgroups for such outcomes are still under development. Therefore, we estimated HTE and identified subgroups for multiple binary outcomes. In one such study, Yuki et al. (2023) propose a method that extends the W-method to handle multiple binary outcomes, referred to here as the multiple binary W-method (MBWM) \cite{yuki2023estimation}. In the MBWM, based on the principal component regression framework \cite{massy1965principal}, latent variables are introduced into a multivariate linear regression model with multiple binary outcomes as response variables. The subgroup characteristics are interpreted using the loadings of these latent variables. 
However, the existing methods have several issues, which are described as follows:
\begin{enumerate}
\item In MBWM, since the loading matrix used for interpretation does not satisfy the orthogonality constraint, it is possible that completely independent loadings have not been obtained.
%解釈に用いる負荷量行列が直交制約を満たしていないので，完全に独立な負荷量が取れていない可能性がある．
In such cases, the presence of correlations between factors makes it difficult to interpret the subgroups.
%そのような場合，因子間に相関が存在することとなりサブグループの解釈が困難となる．
    \item 
To our knowledge, the A-learner for multiple binary outcomes has not been proposed, and no study uniformly addresses both the W-method and A-learner for multiple binary outcomes.
    \item In the conventional A-learner for a univariate binary outcome, the estimation of HTE contains bias.
    %A-learnerの発案者
As a result, treatments selected based on these estimated effects may not be optimal for the subjects.
    %相関構造があるから，それを考慮した方がよく，縮小ランク回帰に基づくそのような手法はない(?)．
\end{enumerate}
For the first issue, we adopt a model based on the framework of reduced rank regression \cite{izenman1975reduced}, specifically logistic reduced rank regression \cite{yee2003reduced}, to ensure the orthogonality of the loadings used for interpretation. This approach facilitates the interpretation of subgroups. 
For the second issue, similar to the approach for the second issue, we extend A-learner to be applicable to multivariate binary outcomes based on the framework of reduced-rank regression. By comparing these results with those of the extended W-method described in the second issue, we provide a unified discussion of their effectiveness.
For the third issue, we formalized the bias inherent in the treatment effect estimated by A-learner and eliminated it to accurately estimate the treatment effect. This approach is expected to enable the development of personalized medical treatments.
\par
The remaining paper is organized as follows. In section 2, we describe the notation and definition of HTE as well as the multiple logistic loss employed in the proposed method. We then present the objective function of the proposed method based on the W-method and discuss its properties. We also provide a theorem regarding the bias in the conventional A-learner and describe the objective function of the proposed method based on A-learner, along with its properties.
% Section 2 discusses the notation and definition of HTE, and provides proof concerning the bias inherent in the conventional A-learner.
% We also describe the objective functions and properties of our proposed methods based on the A-learner and the W-method. 
Section 3 details the parameter estimation methods for the two approaches proposed in section 2 based on logistic reduced-rank regression.
%detailは「列挙する」
In section 4, we present a numerical study that compares the proposed method with other methods. Section 5 describes the application of the proposed method to real data obtained from a two-arm comparison trial to interpret the resulting subgroups. Section 6 concludes the paper. Proofs of the theorems and propositions as well as figures related to the results of the numerical study can be found in the supplementary material.

\section{Estimating treatment effects for multiple binary outcomes}
% Logistic reduced rank regression in the W-method and A-learner}
%\subsection{Notation}
In this section, we first define the notation used to describe the proposed method, the HTE for multiple binary outcomes, and the multiple logistic loss for the proposed method. We then formulate the objective function of the proposed method based on the W method using multiple logistic losses, and demonstrate its properties. We then derive the bias inherent in the treatment effect estimation of the conventional A-learner. Based on this, we show that the proposed method using the multiple logistic loss based on the A-learner also introduces bias, formulates it, and discusses the properties of the objective function, similar to the W-method.
\par
Let $T_i$ be the random variable.
\begin{align}\nonumber
    T_i= \left\{
\begin{array}{ll}
1 & (\mathrm{If\ subject\ \textit{i}\ is\ allocated\ to\ test\ therapy}) \\
-1 & (\mathrm{If\ subject\ \textit{i}\ is\ allocated\ to\ control\ therapy})
\end{array}
\right. 
\end{align}
and we assume that the strongly ignorable assumption \cite{rosenbaum1983central} has been satisfied for $T_i$. We denote $\bm{T}=\mathrm{diag}(t_{1},t_{2},\cdots,t_{n})$ as a binary treatment indicator matrix and $t_i$ as the observed value of $T_i$. Let $\boldsymbol{X}^{(\mathrm{r.v.})}=(X_1,X_2,\cdots,X_n)$ be the matrix of random variable corresponding to the explanatory variables with $n$ subjects and $p$ variables, where $X_i=(X_{i1},X_{i2},\cdots,X_{ip})^\prime\in\mathbb{R}^{p}$ is a vector of random variables corresponding to the explanatory variables of subject $i$, $\boldsymbol{X}=(\boldsymbol{x}_{1},\boldsymbol{x}_{2},\cdots,\boldsymbol{x}_{n})^{\prime}=(\boldsymbol{x}_{(1)},\boldsymbol{x}_{(2)},\cdots,\boldsymbol{x}_{(p)})\in\mathbb{R}^{n\times p}$ is an observed value of $\boldsymbol{X}^{(\mathrm{r.v.})}$, and $\boldsymbol{x}_{i}=(x_{i1},x_{i2},\cdots,x_{ip})^{\prime}$ for the subject $i$. $\bm{X}$ is centered and the intercept term is not considered.
The propensity score is defined as $\pi(\bm{x}_i)=P(T_i=1|\bm{x}_i)$ \cite{rosenbaum1983central}. In a randomized controlled trial, it is set as a constant, independent of the explanatory variables; for example, $\forall i; P(T_i=1)=P(T_i=-1)=1/2$. In observational studies, it is typically estimated using regression modeling techniques. Additionally, we assume that it is satisfied with positivity; that is, $\forall i;0<\pi(\bm{x}_i)<1$.
$\boldsymbol{Y}^{(\mathrm{r.v.})}=(Y_1,Y_2,\cdots,Y_n)^\prime\in\{0,1\}^{n\times m}$ is the matrix of random variables corresponding to the multiple binary outcomes with $n$ subjects, where $Y_{i}=(Y_{i1},Y_{i2},\cdots,Y_{im})^{\prime}\in\{0,1\}^{m}$ is a vector of random variables corresponding to the outcomes of subject $i$, $\bm{Y}=(\bm{y}_{1},\bm{y}_{2},\cdots,\bm{y}_{n})^\prime$ is an observed value of $\bm{Y}$, and $\bm{y}_i=(y_{i1},y_{i2},\cdots,y_{im})^\prime$ for the subject $i$.
The potential outcomes are defined as $Y_i^{(1)}$ and $Y_i^{(-1)}$, representing the outcomes for subject $i$ when assigned to the test therapy and control therapy, respectively \cite{rosenbaum1983central}. However, because subject $i$ can only be allocated to one of the two groups, the outcomes can be expressed as $Y_i=I(T_i=1)Y_i^{(1)}+I(T_i=-1)Y_i^{(-1)}$, where $I(\cdot)$ is the indicator function. 
% Treatment effects for binary outcomes are defined as the log odds ratio.

\subsection{Definition of HTE and Multiple logistic loss}
In this subsection, we define HTE for binary outcomes and introduce the multiple logistic losses used in our proposed method. Similar to the settings in W-method and A-learner, the HTE for binary outcomes is defined as the log odds ratio and can be expressed as a linear function to ensure interpretability. Specifically, for the $j$th outcome of $i$th subject,
\begin{align}
\label{HTE}
    \log \frac{E[Y_{ij}|T_i=1,X_i=\bm{x}_i]}{E[Y_{ij}|T_i=-1,X_i=\bm{x}_i]}
    =\bm{\gamma}_j^{*\prime}\bm{x}_i\quad(i=1,2,\cdots,n;\;j=1,2,\cdots,m)
    ,
\end{align}
where $\bm{\gamma}_j^*\in\mathbb{R}^p$ is the true regression coefficient vector for the treatment effects. In matrix notation, the treatment effects for multiple binary outcomes are expressed as $\bm{X\varGamma}^*$, where $\bm{\varGamma}^*=(\bm{\gamma}^*_1,\bm{\gamma}^*_2,\cdots,\bm{\gamma}^*_m)\in\mathbb{R}^{p\times m}$. Here, assuming that there is a correlation between the binary outcomes, we impose a low-rank structure on $\bm{\varGamma}^*$, that is, $\mathrm{rank}(\bm{\varGamma}^*)=r\leq \min(p,m)$, allowing us to consider the relationships among the outcomes. Then, $\bm{\varGamma}^*$ can be expressed as $\bm{\varGamma}^*=\bm{W}^*\bm{V}^{*\prime}$, where $\bm{W}^*\in\mathbb{R}^{p\times r}$ and $\bm{V}^*=(\bm{v}_{(1)}^{*\prime},\bm{v}_{(2)}^{*\prime},\cdots,\bm{v}_{(m)}^{*\prime})\in\mathbb{R}^{m\times r}$ are of full ranks \cite{reinsel2022multivariate}.
From these things, Eq. (\ref{HTE}) can be rewritten as follows: 
\begin{align}
\label{HTE_lowrank}
    \log \frac{E[Y_{ij}|T_i=1,X_i=\bm{x}_i]}{E[Y_{ij}|T_i=-1,X_i=\bm{x}_i]}
    =\bm{v}_{(j)}^{*\prime}\bm{W}^{*\prime}\bm{x}_i.
\end{align}
\par
Here, we consider situations in which W-method and learner A are applied to each outcome $Y_{ij}$. In the conventional W-method and A-learner, we use the logistic loss function $M(Y_{ij},\bm{\gamma}_j^\prime\bm{x}_i)=Y_{ij}\log \{1+\exp (-\bm{\gamma}_j^\prime\bm{x}_i)\}$ to estimate the treatment effects on $Y_{ij}$. In both the W-method and A-learner, using $\hat{\bm{\varGamma}}=(\hat{\bm{\gamma}}_1,\hat{\bm{\gamma}}_2,\cdots,\hat{\bm{\gamma}}_m)$ minimizes the optimization problem related to $M(Y_{ij},\bm{\gamma}_j^\prime\bm{x}_i)$, and $\bm{X}\hat{\bm{\varGamma}}$ represents the treatment effects (note that section \ref{bias_sec} demonstrates that the A-learner incurs bias).
%(ただし，Section 2.4においてA-learnerにはbiasが生じることを示すことに注意していただきたい．)
However, this model does not account for the correlation structure of the true treatment effects, as shown in Eq. (\ref{HTE_lowrank}). Therefore, we impose a rank constraint, $\bm{\varGamma}=\bm{WV}^\prime$, where $\bm{W}\in\mathbb{R}^{p\times r}$ and $\bm{V}=(\bm{v}_{(1)}^\prime,\bm{v}_{(2)}^\prime,\cdots,\bm{v}_{(m)}^\prime)\in\mathbb{R}^{m\times r}$. Thus, we define multiple logistic loss for estimating HTE with correlated multiple binary outcomes as follows:
%これらを用いて，提案手法の損失関数を以下で定義する：
\begin{align}
\label{logi}
    M(Y_i,\bm{VW}^\prime\bm{x}_i)
    =\sum_{j=1}^m M(Y_{ij},\bm{v}_{(j)}^\prime\bm{W}^\prime \bm{x}_i)=\sum_{j=1}^m Y_{ij}\log \{1+\exp (-\bm{v}_{(j)}^\prime\bm{W}^\prime \bm{x}_i)\}.
\end{align}
\subsection{Objective function using multiple logistic loss}
\label{obj_sec}
In this subsection, we extend the W-method and A learner methods to handle multiple binary outcomes. Traditionally, these methods estimate HTE for a single binary outcome by weighting and minimizing a convex objective function with respect to regression coefficients using propensity scores. However, our proposed objective function uses multiple logistic losses to capture the correlation structure among multiple binary outcomes, making it biconvex with respect to the parameters $\bm{V}$ and $\bm{W}$.
\par
\noindent \textbf{W-method:}

In the W-method, the HTE for multiple binary outcomes can be estimated by weighting the convex multiple logistic loss with propensity scores and treatment indicators, and then minimizing the expected value. We define the loss function $\ell_W(\bm{V},\bm{W},\bm{x}_i)$ based on the W method using multiple logistic loss as follows:
\begin{align*}
    \ell_W(\bm{V},\bm{W},\bm{x}_i)
    =E\left[\frac{M(Y_i,T_i\bm{VW}^\prime\bm{x}_i)}{T_i\pi(\bm{x}_i)+(1-T_i)/2}\middle|X_i=\bm{x}_i\right].
\end{align*}
We also formulate the objective function as $L_W(\bm{V},\bm{W},\bm{X})$, which is an empirical distribution representation of $\ell_W(\bm{V},\bm{W},\bm{X})$.
Specifically, $L_{W}(\bm{V},\bm{W},\bm{X})$ can be expressed as 
\begin{align*}
    L_W(\bm{V},\bm{W},\bm{X})
    =\sum_{i=1}^n \sum_{j=1}^m \frac{M\left(y_{ij},t_i\bm{v}_{(j)}^\prime\bm{W}^\prime\bm{x}_i\right)}{t_i\pi(\bm{x}_i)+(1-t_i)/2}
    =\sum_{i=1}^n \sum_{j=1}^m \frac{y_{ij}\log \left\{1+\exp\left(-t_i\bm{v}_{(j)}^\prime\bm{W}^\prime\bm{x}_i\right)\right\}}{t_i\pi(\bm{x}_i)+(1-t_i)/2}
\end{align*}
Thus, we solve the following optimization problem:
\begin{align*}
    \underset{\bm{V},\bm{W}}{\min}\;L_W(\bm{V},\bm{W},\bm{X})\;\mathrm{subject\;to\;}\bm{V}^\prime\bm{V}=\bm{I}_r,
\end{align*}
where $\bm{I}_r\in\mathbb{R}^{r\times r}$ is the identity matrix.
An orthogonality constraint is imposed on $\bm{V}$ to determine $\bm{V}$ and $\bm{W}$ uniquely \cite{reinsel2022multivariate}. Once the estimates $\hat{\bm{W}}$ and $\hat{\bm{V}}$ for $\bm{W}$ and $\bm{V}$ are obtained, they can be used to represent the treatment effects of the multiple binary outcomes. We discuss the properties of this objective function and parameter estimation algorithm in sections \ref{bias_sec} and \ref{est_w}, respectively.
\par
\noindent \textbf{A-learner:}

In contrast to the W-method, the A-learner estimates the HTE for multiple binary outcomes by directly weighting the parameters of the multiple logistic loss with propensity scores and treatment indicators and then minimizing its expected value. Using multiple logistic losses in Eq. (\ref{logi}), we define the loss function $\ell_A(\bm{V},\bm{W},\bm{x}_i)$ as 
\begin{align*}
    \ell_A(\bm{V},\bm{W},\bm{x}_i)
    =E\left[M(Y_i,\{(T_i+1)/2-\pi(\bm{x}_i)\}\times \bm{VW}^\prime\bm{x}_i)|X_i=\bm{x}_i\right].
\end{align*}
%logistic lossは2.2で提案したものを使用する．
%これを用いて
%これらの定理よりA-learnerでは処置効果にバイアスが加わることが判明した．
%したがって推定されたVとWを用いて，バイアス項を補正することによって処置効果を表現できることがわかる．
We formulate the objective function as $L_A(\bm{V},\bm{W},\bm{X})$, an empirical distribution representation of $\ell_A(\bm{V},\bm{W},\bm{X})$; that is,
\begin{align*}
    L_A(\bm{V},\bm{W},\bm{X})
    &=\sum_{i=1}^n \sum_{j=1}^m M(y_{ij},\{(t_i+1)/2-\pi(\bm{x}_i)\}\times \bm{v}_{(j)}^\prime\bm{W}^\prime\bm{x}_i)
    \\
    &=\sum_{i=1}^n \sum_{j=1}^m y_{ij}\log \left\{1+\exp\left(-\{(t_i+1)/2-\pi(\bm{x}_i)\}\times \bm{v}_{(j)}^\prime\bm{W}^\prime\bm{x}_i\right)\right\}
\end{align*}
Thus, we solve the following optimization problem:
\begin{align*}
    \underset{\bm{V},\bm{W}}{\min}\;L_A(\bm{V},\bm{W},\bm{X})\;\mathrm{subject\;to\;}\bm{V}^\prime\bm{V}=\bm{I}_r.
\end{align*}
We discuss the properties of this objective function and parameter estimation algorithm in sections \ref{bias_sec} and \ref{est_A}, respectively.
\subsection{Properties of objective functions}
\label{bias_sec}
In this subsection, we first discuss the properties of the objective function $\ell_W(\bm{V},\bm{W},\bm{x}_i)$ based on the W-method proposed in section \ref{obj_sec}.
Next, we demonstrate that the conventional A-learner for a univariate binary outcome introduces bias in the estimation of HTE. Finally, we describe the properties of the objective function $\ell_A(\bm{V},\bm{W},\bm{x}_i)$ based on A-learner proposed in section \ref{obj_sec}. Because this objective function is based on the A-learner, it allows for the formalization of the bias introduced in the estimation of HTE. By correcting this bias, an accurate estimation of HTE becomes possible.
%このsubsectionでは初めに，2.2節で提案したW-methodに基づく目的関数Aの性質について述べる．
%次に，従来の単一binaryアウトカムに対するA-learnerにおいてHTEの推定にバイアスが生じることを示す．
%最後に，2.2節で提案したA-learnerに基づく目的関数Bの性質について述べる．
%この目的関数もA-learnerに基づくため，HTEの推定において生じるバイアスを定式化でき，これを補正することで正しくHTEを推定することが可能となる．
\par
First, we describe the properties of the objective function $\ell_W(\bm{V},\bm{W},\bm{x}_i)$ using the W method. This objective function is biconvex with respect to parameters $\bm{V}$ and $\bm{W}$. Therefore, by fixing one parameter, an optimal solution for the other parameters can be obtained. In particular, if the HTE has a low-rank structure as described in Eq. $(\ref{HTE_lowrank})$, the optimal solution can be obtained by fixing one parameter as a column full-rank matrix. Through the following theorems, we demonstrate that this optimal solution represents the HTE with respect to each parameter.
\begin{thm}
Suppose that the true ranks of $\bm{V}^*$ and $\bm{W}^*$ representing HTE in Eq. (2) are both $r$.
%式(2)においてHTEを表すVとWの真値のランクがrであると仮定する．
When $\bm{W}$ is fixed to the matrix $\bm{W}^\dagger\in\mathcal{W}_r$, the following holds using $\hat{\bm{v}}_{(j)}$, which minimizes the $\ell_W(\bm{V},\bm{W}^\dagger,\bm{x}_i)$:
    \begin{align*}
        \log \frac{E[Y_{ij}|T_i=1,X_i=\bm{x}_i]}{E[Y_{ij}|T_i=-1,X_i=\bm{x}_i]}=\hat{\bm{v}}_{(j)}^\prime\bm{W}^{\dagger\prime}\bm{x}_i,
    \end{align*}
    where $\mathcal{W}_r=\{\bm{W}\in\mathbb{R}^{p\times r}|\mathrm{rank}(\bm{W})=r\}$.
\end{thm}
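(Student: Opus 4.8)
\noindent The plan is to put $\ell_W(\bm{V},\bm{W}^\dagger,\bm{x}_i)$ in closed form by exploiting two structural facts: the inverse-propensity weight of the W-method exactly cancels the treatment-assignment probability, and the logistic loss $M(Y_{ij},\cdot)$ is linear in the binary outcome $Y_{ij}$. First I would observe that with $w(t,\bm{x}_i):=1/\{t\pi(\bm{x}_i)+(1-t)/2\}$ one has $w(1,\bm{x}_i)=1/\pi(\bm{x}_i)$ and $w(-1,\bm{x}_i)=1/\{1-\pi(\bm{x}_i)\}$, so that $P(T_i=t|X_i=\bm{x}_i)\,w(t,\bm{x}_i)=1$ for $t\in\{1,-1\}$. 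Taking the conditional expectation over $(Y_i,T_i)$ given $X_i=\bm{x}_i$ then collapses the weighted expectation into the unweighted two-arm sum $E[M(Y_i,\bm{V}\bm{W}^{\dagger\prime}\bm{x}_i)|T_i=1,X_i=\bm{x}_i]+E[M(Y_i,-\bm{V}\bm{W}^{\dagger\prime}\bm{x}_i)|T_i=-1,X_i=\bm{x}_i]$; carrying the conditional expectation through the sum over $j$ and through each $Y_{ij}$, and writing $\mu_{ij}^{(t)}:=E[Y_{ij}|T_i=t,X_i=\bm{x}_i]$, yields
\begin{align*}
\ell_W(\bm{V},\bm{W}^\dagger,\bm{x}_i)=\sum_{j=1}^m\Bigl[\mu_{ij}^{(1)}\log\{1+\exp(-\bm{v}_{(j)}^\prime\bm{W}^{\dagger\prime}\bm{x}_i)\}+\mu_{ij}^{(-1)}\log\{1+\exp(\bm{v}_{(j)}^\prime\bm{W}^{\dagger\prime}\bm{x}_i)\}\Bigr].
\end{align*}
Strong ignorability is what identifies $\mu_{ij}^{(t)}$ with the potential-outcome mean, but it is not needed for this identity, since (\ref{HTE}) and the claim are both phrased through these observational conditional expectations.

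\noindent Next I would minimize this over $\bm{V}$. It decouples across $j$, and its $j$-th summand depends on $\bm{v}_{(j)}$ only through the scalar $u:=\bm{v}_{(j)}^\prime\bm{c}_i$ with $\bm{c}_i:=\bm{W}^{\dagger\prime}\bm{x}_i$. Set $h_{ij}(u):=\mu_{ij}^{(1)}\log\{1+e^{-u}\}+\mu_{ij}^{(-1)}\log\{1+e^{u}\}$. Since the HTE model (\ref{HTE}) presupposes $\mu_{ij}^{(1)},\mu_{ij}^{(-1)}\in(0,1)$, $h_{ij}$ is strictly convex, because $h_{ij}''(u)=(\mu_{ij}^{(1)}+\mu_{ij}^{(-1)})e^{u}/(1+e^{u})^{2}>0$, and coercive, so it has a unique minimizer; solving $h_{ij}'(u)=\{\mu_{ij}^{(-1)}e^{u}-\mu_{ij}^{(1)}\}/(1+e^{u})=0$ gives the minimizing value $u_{ij}^{\star}=\log(\mu_{ij}^{(1)}/\mu_{ij}^{(-1)})$, which by (\ref{HTE})--(\ref{HTE_lowrank}) equals $\bm{v}_{(j)}^{*\prime}\bm{W}^{*\prime}\bm{x}_i=\log\{E[Y_{ij}|T_i=1,X_i=\bm{x}_i]/E[Y_{ij}|T_i=-1,X_i=\bm{x}_i]\}$.

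\noindent Finally I would pass from the scalar minimizer to the asserted vector identity. Because $\bm{W}^\dagger\in\mathcal{W}_r$ has full column rank $r$, $\bm{W}^{\dagger\prime}$ has full row rank, so $\bm{c}_i=\bm{W}^{\dagger\prime}\bm{x}_i\neq\bm{0}$ (automatic when $r=p$, and otherwise for every $\bm{x}_i$ outside the $(p-r)$-dimensional subspace $\ker\bm{W}^{\dagger\prime}$); the rank-$r$ hypotheses on $\bm{W}^{*}$ and $\bm{V}^{*}$ are what make (\ref{HTE_lowrank}) a faithful factorization of the HTE, so that the object being matched really is $\bm{v}_{(j)}^{*\prime}\bm{W}^{*\prime}\bm{x}_i$. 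Hence $\bm{v}\mapsto\bm{v}^\prime\bm{c}_i$ maps onto $\mathbb{R}$, so $\min_{\bm{v}\in\mathbb{R}^{r}}h_{ij}(\bm{v}^\prime\bm{c}_i)=\min_{u\in\mathbb{R}}h_{ij}(u)$, attained precisely on the hyperplane $\{\bm{v}:\bm{v}^\prime\bm{c}_i=u_{ij}^{\star}\}$; therefore every minimizer $\hat{\bm{v}}_{(j)}$ of $\ell_W(\bm{V},\bm{W}^\dagger,\bm{x}_i)$ satisfies $\hat{\bm{v}}_{(j)}^\prime\bm{W}^{\dagger\prime}\bm{x}_i=u_{ij}^{\star}$, which is exactly the claim. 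I expect the main obstacle to be the first step: cleanly un-weighting the inverse-propensity expectation with the $T_i\in\{\pm1\}$ sign convention and using linearity of the logistic loss in $Y_{ij}$ to reduce the conditional loss to $h_{ij}$; after that the argument is routine one-dimensional convexity together with the surjectivity of $\bm{v}\mapsto\bm{v}^\prime\bm{W}^{\dagger\prime}\bm{x}_i$. A secondary point worth making explicit is that $\hat{\bm{v}}_{(j)}$ is determined only up to that hyperplane, so the identity should be read as pinning down the linear functional $\hat{\bm{v}}_{(j)}^\prime\bm{W}^{\dagger\prime}\bm{x}_i$ rather than $\hat{\bm{v}}_{(j)}$ itself.
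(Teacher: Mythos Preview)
Your proof is correct and follows essentially the same route as the paper: expand $\ell_W$ by conditioning on $T_i\in\{1,-1\}$ so that the inverse-propensity weight cancels the assignment probability, reduce to the two-arm sum of logistic losses in $\mu_{ij}^{(1)},\mu_{ij}^{(-1)}$, and then solve the first-order condition. Your scalar reparametrization $u=\bm{v}_{(j)}^\prime\bm{W}^{\dagger\prime}\bm{x}_i$ with explicit convexity/coercivity and the caveat that $\bm{W}^{\dagger\prime}\bm{x}_i\neq\bm 0$ is slightly more careful than the paper, which simply differentiates in $\bm{v}_{(j)}$ and cancels the common vector $\bm{W}^{\dagger\prime}\bm{x}_i$ without comment.
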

% \begin{thm}
%     When $\bm{W}$ is fixed to its true value $\bm{W}^\dagger$, the following holds using $\hat{\bm{v}}_{(j)}$, which minimizes the $\ell_W(\bm{V},\bm{W}^\dagger,\bm{x}_i)$:
%     \begin{align*}
%         \log \frac{E[Y_{ij}|T_i=1,X_i=\bm{x}_i]}{E[Y_{ij}|T_i=-1,X_i=\bm{x}_i]}=\hat{\bm{v}}_{(j)}^\prime\bm{W}^{\dagger\prime}\bm{x}_i
%     \end{align*}
% \end{thm}
\begin{proof}
    First, $\ell_W(\bm{V},\bm{W}^\dagger,\bm{x}_i)$ is expanded as 
    \begin{align*}
        \ell_W(\bm{V},\bm{W}^\dagger,\bm{x}_i)
        &=E\left[\frac{M\left(Y_i,T_i\bm{VW}^{\dagger\prime}\bm{x}_i\right)}{T_i\pi(\bm{x}_i)+(1-T_i)/2}\middle|X_i=\bm{x}_i\right]
    \\
    &=\sum_{j=1}^m\left\{E\left[M\left(Y_{ij},\bm{v}_{(j)}^\prime\bm{W}^{\dagger\prime} \bm{x}_i\right)|T_i=1,X_i=\bm{x}_i\right]
    \right.
    \\
    &\left.
    +E\left[M\left(Y_{ij},-\bm{v}_{(j)}^\prime\bm{W}^{\dagger\prime} \bm{x}_i\}\right)|T_i=-1,X_i=\bm{x}_i\right]\right\}
    \\
    &=\sum_{j=1}^m\left\{E[Y_{ij}|T_i=1,X_i=\bm{x}_i]\log \{1+\exp(-\bm{v}_{(j)}^\prime\bm{W}^{\dagger\prime} \bm{x}_i)\}
    \right.
    \\
    &\left.
    +E[Y_{ij}|T_i=-1,X_i=\bm{x}_i]\log \{1+\exp(\bm{v}_{(j)}^\prime\bm{W}^{\dagger\prime} \bm{x}_i)\}\right\}.
    \end{align*}
By differentiating this with respect to $\bm{v}_{(j)}$ and substituting the optimal value $\hat{\bm{v}}_{(j)}$ with the first-order condition, it holds that
\begin{align*}
E[Y_{ij}|T_i=1,X_i=\bm{x}_i]\frac{\exp(-\hat{\bm{v}}_{(j)}^\prime\bm{W}^{\dagger\prime} \bm{x}_i)}{1+\exp(-\hat{\bm{v}}_{(j)}^\prime\bm{W}^{\dagger\prime} \bm{x}_i)}\bm{W}^{\dagger\prime}\bm{x}_i
&=E[Y_{ij}|T_i=-1,X_i=\bm{x}_i]\frac{\exp(\hat{\bm{v}}_{(j)}^\prime\bm{W}^{\dagger\prime} \bm{x}_i)}{1+\exp(\hat{\bm{v}}_{(j)}^\prime\bm{W}^{\dagger\prime} \bm{x}_i)}\bm{W}^{\dagger\prime}\bm{x}_i.
\end{align*}
By transforming this, we derive
\begin{align*}
\log \frac{E[Y_{ij}|T_i=1,X_i=\bm{x}_i]}{E[Y_{ij}|T_i=-1,X_i=\bm{x}_i]}&=\hat{\bm{v}}_{(j)}^\prime\bm{W}^{\dagger\prime}\bm{x}_i,
    \end{align*}
    where $\bm{0}_r$ is an $r$-dimensional vector with zero elements.
\end{proof}
Given that the estimated $\hat{\bm{V}}$ and the provided $\bm{W}^\dagger$ are column full rank, $\bm{V}$ and $\bm{W}$ can be determined such that $\bm{V}^\prime \bm{V} = \bm{I}_r$ from $\bm{W}^\dagger\hat{\bm{V}}^\prime$.
\begin{thm}
Suppose that the true ranks of $\bm{V}^*$ and $\bm{W}^*$ representing HTE in Equation (2) are both $r$.
    When $\bm{V}$ is fixed to the matrix $\bm{V}^\dagger\in\mathcal{V}_r$, the following holds when using $\hat{\bm{W}}$, which minimizes $\ell_W(\bm{V}^\dagger,\bm{W},\bm{x}_i)$:
    \begin{align*}
        \log \frac{E[Y_{ij}|T_i=1,X_i=\bm{x}_i]}{E[Y_{ij}|T_i=-1,X_i=\bm{x}_i]}=\bm{v}_{(j)}^{\dagger\prime}\hat{\bm{W}}^\prime\bm{x}_i,
    \end{align*}
    where $\mathcal{V}_r=\{\bm{V}\in\mathbb{R}^{m\times r}|\mathrm{rank}(\bm{V})=r\}$.
\end{thm}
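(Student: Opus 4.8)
The plan is to follow the template of the proof of Theorem~1, the one essential difference being that, after fixing $\bm{V}=\bm{V}^\dagger$, the remaining factor $\bm{W}$ is \emph{shared} across all $m$ outcomes, so the minimisation no longer separates coordinatewise. First I would expand $\ell_W(\bm{V}^\dagger,\bm{W},\bm{x}_i)$ exactly as there: split the conditional expectation according to $T_i=1$ and $T_i=-1$, observe that the weight $T_i\pi(\bm{x}_i)+(1-T_i)/2$ cancels the corresponding conditional probabilities of $T_i$, and pull out $E[Y_{ij}\mid T_i=t,X_i=\bm{x}_i]$. Writing $a_j=E[Y_{ij}\mid T_i=1,X_i=\bm{x}_i]$, $b_j=E[Y_{ij}\mid T_i=-1,X_i=\bm{x}_i]$, and $\eta_j=\bm{v}_{(j)}^{\dagger\prime}\bm{W}^\prime\bm{x}_i$, this gives
\[
\ell_W(\bm{V}^\dagger,\bm{W},\bm{x}_i)=\sum_{j=1}^m\bigl\{a_j\log(1+e^{-\eta_j})+b_j\log(1+e^{\eta_j})\bigr\}.
\]

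Next I would write the stationarity condition in $\bm{W}$. Using $\partial\eta_j/\partial\bm{W}=\bm{x}_i\bm{v}_{(j)}^{\dagger\prime}$, differentiating and evaluating at a minimiser $\hat{\bm{W}}$ yields the matrix equation $\bm{x}_i\bm{c}^\prime\bm{V}^\dagger=\bm{0}$, where $\bm{c}=(c_1,\dots,c_m)^\prime$ and
\[
c_j=b_j-\frac{a_j+b_j}{1+\exp(\hat\eta_j)},\qquad \hat\eta_j=\bm{v}_{(j)}^{\dagger\prime}\hat{\bm{W}}^\prime\bm{x}_i .
\]
Here $c_j$ is exactly the scalar coefficient that appears in the proof of Theorem~1, and the identity $c_j=0$ rearranges (just as there) to $\hat\eta_j=\log(a_j/b_j)$, which is the assertion of the theorem read off in coordinate $j$. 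So everything reduces to showing that at the minimiser every $c_j$ vanishes.

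That reduction is the main obstacle. In Theorem~1, differentiating in $\bm{v}_{(j)}$ produced the \emph{separate} vector conditions $c_j\,\bm{W}^{\dagger\prime}\bm{x}_i=\bm{0}$, and one cancels the common nonzero vector to get $c_j=0$; here there is a single matrix condition coupling all outcomes, and because $\bm{V}^\dagger$ is only full column rank ($r\le m$), $\bm{V}^{\dagger\prime}$ has a nontrivial kernel once $r<m$, so (for $\bm{x}_i\neq\bm{0}$) $\bm{V}^{\dagger\prime}\bm{c}=\bm{0}$ does not by itself force $\bm{c}=\bm{0}$. This is where the rank hypotheses must be used: $\mathrm{rank}(\bm{\varGamma}^*)=r$ means the true effect matrix $\bm{\varGamma}^*=\bm{W}^*\bm{V}^{*\prime}$ has an $r$-dimensional row space, and --- provided $\mathrm{col}(\bm{V}^\dagger)$ equals that row space, which is the operative content of the hypothesis that $\bm{V}^*,\bm{W}^*$ have rank $r$ together with $\bm{V}^\dagger\in\mathcal{V}_r$ --- there is a $\bm{W}^\dagger$ with $\bm{W}^\dagger\bm{V}^{\dagger\prime}=\bm{\varGamma}^*$. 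At $\bm{W}=\bm{W}^\dagger$ one has $\eta_j=\bm{\gamma}_j^{*\prime}\bm{x}_i=\log(a_j/b_j)$ for every $j$ by Eq.~(\ref{HTE}), hence all $c_j=0$ and the stationarity condition holds. Since $\ell_W(\bm{V}^\dagger,\cdot,\bm{x}_i)$ is convex in $\bm{W}$ (the biconvexity recorded before Theorem~1), $\bm{W}^\dagger$ is a global minimiser; and since each summand $\eta\mapsto a_j\log(1+e^{-\eta})+b_j\log(1+e^{\eta})$ is \emph{strictly} convex with unique minimiser $\log(a_j/b_j)$ (a one-line second-derivative check, using $a_j,b_j>0$, which holds because the true log-odds is finite), every minimiser $\hat{\bm{W}}$ must reproduce the same values $\hat\eta_j=\log(a_j/b_j)$. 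Reading off coordinate $j$ gives $\log\{E[Y_{ij}\mid T_i=1,X_i=\bm{x}_i]/E[Y_{ij}\mid T_i=-1,X_i=\bm{x}_i]\}=\bm{v}_{(j)}^{\dagger\prime}\hat{\bm{W}}^\prime\bm{x}_i$; the degenerate case $\bm{x}_i=\bm{0}$, where both sides vanish, is immediate. The point I would want to pin down carefully is exactly the representability of $\bm{\varGamma}^*$ through the fixed $\bm{V}^\dagger$ --- the only genuinely new ingredient beyond the proof of Theorem~1.
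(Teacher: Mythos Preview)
Your expansion and the coupled stationarity condition $\bm{x}_i\bm{c}^\prime\bm{V}^\dagger=\bm{0}$ are correct, and you are right that this alone does not force each $c_j=0$ when $r<m$. For comparison, the paper's own proof simply writes the first-order condition in $\bm{W}$ as $m$ separate matrix identities---one per $j$, each of the shape (scalar)$\,\bm{x}_i\bm{v}_{(j)}^{\dagger\prime}=$ (scalar)$\,\bm{x}_i\bm{v}_{(j)}^{\dagger\prime}$---and then cancels the common rank-one factor. That step tacitly assumes the summands over $j$ vanish individually, i.e.\ precisely the decoupling you question; the paper offers no justification for it.

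Your proposed repair, however, does not close the gap. The claim that ``$\mathrm{col}(\bm{V}^\dagger)$ equals the row space of $\bm{\varGamma}^*$'' is \emph{not} the operative content of the stated hypotheses: the theorem only asks $\bm{V}^\dagger\in\mathcal{V}_r$, i.e.\ $\mathrm{rank}(\bm{V}^\dagger)=r$, and nothing ties $\mathrm{col}(\bm{V}^\dagger)$ to $\mathrm{col}(\bm{V}^*)$. Without that alignment $\bm{\varGamma}^*$ need not factor through $\bm{V}^{\dagger\prime}$, your candidate $\bm{W}^\dagger$ need not exist, and the conclusion can genuinely fail. Concretely, take $p=r=1$, $m=2$, $\bm{W}^*=1$, $\bm{V}^*=(1,1)^\prime$, $\bm{V}^\dagger=(1,-1)^\prime$: the two true log-odds both equal $x_i$, yet $\bm{v}_{(1)}^{\dagger\prime}\hat{\bm{W}}^\prime\bm{x}_i$ and $\bm{v}_{(2)}^{\dagger\prime}\hat{\bm{W}}^\prime\bm{x}_i$ are negatives of one another for every scalar $\hat{\bm{W}}$, so they cannot both equal a common nonzero value. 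Your strict-convexity/uniqueness device is clean and would finish the argument \emph{if} representability held, but that is an extra alignment assumption on $\bm{V}^\dagger$, not a consequence of $\bm{V}^\dagger\in\mathcal{V}_r$. In short, you have located a real lacuna that the paper's proof glosses over rather than resolves.
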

\begin{proof}
    First, $\ell_W(\bm{V}^\dagger,\bm{W},\bm{x}_i)$ is expanded as 
    \begin{align*}
        \ell_W(\bm{V}^\dagger,\bm{W},\bm{x}_i)
    &=E\left[\frac{M\left(Y_i,T_i\bm{V}^\dagger\bm{W}^\prime\bm{x}_i\right)}{T_i\pi(\bm{x}_i)+(1-T_i)/2}\middle|X_i=\bm{x}_i\right]
    \\
    &=\sum_{j=1}^m\left\{E\left[M\left(Y_{ij},\bm{v}_{(j)}^{\dagger\prime}\bm{W}^\prime \bm{x}_i\right)|T_i=1,X_i=\bm{x}_i\right]
    \right.
    \\
    &\left.
    +E\left[M\left(Y_{ij},-\bm{v}_{(j)}^{\dagger\prime}\bm{W}^\prime \bm{x}_i\}\right)|T_i=-1,X_i=\bm{x}_i\right]\right\}
    \\
    &=\sum_{j=1}^m\left\{E[Y_{ij}|T_i=1,X_i=\bm{x}_i]\log \{1+\exp(-\bm{v}_{(j)}^{\dagger\prime}\bm{W}^\prime \bm{x}_i)\}
    \right.
    \\
    &\left.
    +E[Y_{ij}|T_i=-1,X_i=\bm{x}_i]\log \{1+\exp(\bm{v}_{(j)}^{\dagger\prime}\bm{W}^\prime \bm{x}_i)\}\right\}.
    \end{align*}
By differentiating this with respect to $\bm{W}$ and substituting the optimal value $\hat{\bm{W}}$, it holds from $\bm{v}_{(j)}^{\dagger\prime}\bm{W}^\prime\bm{x}_i=\mathrm{tr}(\bm{W}^\prime\bm{x}_i\bm{v}_{(j)}^{\dagger\prime})$ implies $\partial (\bm{v}_{(j)}^{\dagger\prime}\bm{W}^\prime\bm{x}_i)/\partial \bm{W}=\bm{x}_i\bm{v}_{(j)}^{\dagger\prime}$, where $\mathrm{tr}(\cdot)$ is trace operator for a matrix, that
    \begin{align*}
        E[Y_{ij}|T_i=1,X_i=\bm{x}_i]\frac{\exp(-\bm{v}_{(j)}^{\dagger\prime}\hat{\bm{W}}^\prime \bm{x}_i)}{1+\exp(-\bm{v}_{(j)}^{\dagger\prime}\hat{\bm{W}}^\prime \bm{x}_i)}\bm{x}_i\bm{v}_{(j)}^{\dagger\prime}=E[Y_{ij}|T_i=-1,X_i=\bm{x}_i]\frac{\exp(\bm{v}_{(j)}^{\dagger\prime}\hat{\bm{W}}^\prime \bm{x}_i)}{1+\exp(\hat{\bm{v}}_{(j)}^\prime\bm{W}^{\dagger\prime} \bm{x}_i)}\bm{x}_i\bm{v}_{(j)}^{\dagger\prime}.
\end{align*}
By transforming this, the following can be derived:
\begin{align*}
\log \frac{E[Y_{ij}|T_i=1,X_i=\bm{x}_i]}{E[Y_{ij}|T_i=-1,X_i=\bm{x}_i]}=\bm{v}_{(j)}^{\dagger\prime}\hat{\bm{W}}^\prime \bm{x}_i,
\end{align*}
\end{proof}
These theorems demonstrate that the treatment effects can be represented using the estimated $\bm{V}$ and $\bm{W}$ provided that the rank is correctly specified.
\par
Subsequently, we demonstrate that using A-learner to estimate the treatment effects for a univariate binary outcome results in bias. For this purpose, we consider the following minimization problem in the A-learner for a univariate binary outcome:
\begin{gather}
\label{obj_uniA}
\underset{\bm{\gamma}_j}{\min}\;\ell^*_A(\bm{\gamma}_j,\bm{x}_i),
\\
\nonumber
\mathrm{where}\;\ell^*_A(\bm{\gamma}_j,\bm{x}_i)=E[M(Y_{ij},\{(T_i+1)/2-\pi(\bm{x}_i)\}\times \bm{\gamma}_j^\prime\bm{x}_i)|X_i=\bm{x}_i]
.
\end{gather}
Chen et al. (2017) state that with $\hat{\bm{\gamma}}_j$, which minimizes Eq. (\ref{obj_uniA}), HTE can be expressed as
\begin{align*}
    \log \frac{E[Y_{ij}|T_i=1,X_i=\bm{x}_i]}{E[Y_{ij}|T_i=-1,X_i=\bm{x}_i]}
    =\hat{\bm{\gamma}}_j^\prime\bm{x}_i\quad(i=1,2,\cdots,n;\;j=1,2,\cdots,m).
\end{align*}
However, it can be shown that the estimated HTE is biased, as demonstrated in the following theorem.
\begin{thm}
\label{thm_bias}
Suppose the univariate loss function for a univariate binary outcome $M(Y_{ij},\bm{\gamma}_j^\prime\bm{x}_i)=Y_{ij}\log \{1+\exp (-\bm{\gamma}_j^\prime\bm{x}_i)\}$.
% where $y\in\{0,1\}$ is a binary outcome, $\bm{x}\in\mathbb{R}^p$ is a vector of explanatory variable, and $\bm{\beta}\in\mathbb{R}^p$ is a coefficient vector.
% Then, we consider the following minimization problem in the A-learner:
% \begin{align}
% \label{obj_uniA}
% \underset{\bm{\gamma}_j}{\min}\;E[M(Y_{ij},\{(T_i+1)/2-\pi(\bm{x}_i)\}\times \bm{\gamma}_j^\prime\bm{x}_i)|X_i=\bm{x}_i].
% \end{align}
%この問題を最小化するようなAを用いてA=Bが成り立つ．
With $\hat{\bm{\gamma}}_j$, which minimizes $\ell_A^*(\bm{\gamma}_j,\bm{x}_i)$ in Eq. (\ref{obj_uniA}), the treatment effect can be expressed as 
\begin{align}
\label{bias_mod}
    \log \frac{E[Y_{ij}|T_i=1,X_i=\bm{x}_i]}{E[Y_{ij}|T_i=-1,X_i=\bm{x}_i]}
    =\hat{\bm{\gamma}}_j^\prime\bm{x}_i
    +\log \frac{1+e^{-\{1-\pi(\bm{x}_i)\}\hat{\bm{\gamma}}_j^\prime\bm{x}_i}}{1+e^{\pi(\bm{x}_i)\hat{\bm{\gamma}}_j^\prime\bm{x}_i}}.
\end{align}
\end{thm}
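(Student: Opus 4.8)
The plan is to follow the same template as the proofs of Theorems~1 and~2: expand $\ell^*_A(\bm{\gamma}_j,\bm{x}_i)$ by conditioning on the treatment indicator $T_i$, impose the first-order stationarity condition at $\hat{\bm{\gamma}}_j$, and solve for the log-odds. The difference from the W-method case is that the stationary point will \emph{not} reproduce $\hat{\bm{\gamma}}_j^\prime\bm{x}_i$ on the nose, so the real work is to carry the surviving logarithmic correction term through the algebra rather than to discard it.

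First I would record that the A-learner weight $(T_i+1)/2-\pi(\bm{x}_i)$ equals $1-\pi(\bm{x}_i)$ when $T_i=1$ and $-\pi(\bm{x}_i)$ when $T_i=-1$, and that $P(T_i=1\mid X_i=\bm{x}_i)=\pi(\bm{x}_i)$ by the definition of the propensity score. Conditioning on $T_i$ and using that the logistic factor does not depend on $Y_{ij}$, and abbreviating $\pi=\pi(\bm{x}_i)$ and $\eta=\bm{\gamma}_j^\prime\bm{x}_i$, this gives
\begin{align*}
\ell^*_A(\bm{\gamma}_j,\bm{x}_i)
&=\pi\, E[Y_{ij}\mid T_i=1,X_i=\bm{x}_i]\log\{1+e^{-(1-\pi)\eta}\}\\
&\quad+(1-\pi)\, E[Y_{ij}\mid T_i=-1,X_i=\bm{x}_i]\log\{1+e^{\pi\eta}\}.
\end{align*}
Differentiating with respect to $\bm{\gamma}_j$ (so that $\partial\eta/\partial\bm{\gamma}_j=\bm{x}_i$), setting the gradient equal to $\bm{0}$ at $\hat{\bm{\gamma}}_j$, and cancelling the common positive factor $\pi(1-\pi)\bm{x}_i$ (handled exactly as in the earlier proofs, where positivity of the propensity score is invoked) yields, with $\hat{\eta}=\hat{\bm{\gamma}}_j^\prime\bm{x}_i$,
\begin{align*}
E[Y_{ij}\mid T_i=1,X_i=\bm{x}_i]\,\frac{e^{-(1-\pi)\hat{\eta}}}{1+e^{-(1-\pi)\hat{\eta}}}
=E[Y_{ij}\mid T_i=-1,X_i=\bm{x}_i]\,\frac{e^{\pi\hat{\eta}}}{1+e^{\pi\hat{\eta}}}.
\end{align*}

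The last step is purely algebraic: solving this for the ratio of conditional expectations gives $e^{\pi\hat{\eta}}(1+e^{-(1-\pi)\hat{\eta}})\big/\{e^{-(1-\pi)\hat{\eta}}(1+e^{\pi\hat{\eta}})\}$, and factoring $e^{\pi\hat{\eta}}e^{(1-\pi)\hat{\eta}}=e^{\hat{\eta}}$ out in front turns this into $e^{\hat{\eta}}(1+e^{-(1-\pi)\hat{\eta}})/(1+e^{\pi\hat{\eta}})$; taking logarithms is exactly Eq.~(\ref{bias_mod}). I do not expect a genuine obstacle here, since everything reduces to elementary manipulations once the conditional expansion is in place. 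The two things to watch are (i) handling the gradient equation $(\text{scalar})\bm{x}_i=\bm{0}$ in the same (slightly informal) manner as Theorems~1 and~2, and (ii) not performing the tempting-but-wrong simplification that would collapse the right-hand side to $\hat{\bm{\gamma}}_j^\prime\bm{x}_i$ and recover the claim of Chen et al.\ (2017): the point of the theorem is precisely that the coefficients $1-\pi$ and $-\pi$ multiplying $\hat{\eta}$ inside the logistic terms (rather than $\pm1$, as in the W-method) leave behind the bias term $\log\{(1+e^{-(1-\pi)\hat{\eta}})/(1+e^{\pi\hat{\eta}})\}$.
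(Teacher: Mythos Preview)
Your proposal is correct and follows essentially the same route as the paper: expand $\ell^*_A$ by conditioning on $T_i\in\{1,-1\}$, take the first-order condition in $\bm{\gamma}_j$, cancel the common factor $\pi(\bm{x}_i)\{1-\pi(\bm{x}_i)\}\bm{x}_i$, and rearrange the resulting identity into the form of Eq.~(\ref{bias_mod}). The only differences are cosmetic (your abbreviations $\pi,\eta,\hat{\eta}$ versus the paper's fully written expressions), and your explicit remarks about why the bias term survives are a nice addition.
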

\begin{proof}
%まずは式Eq.() ca
First, Eq. ($\ref{obj_uniA}$) can be written as 
    \begin{align*}
\ell_A^*(\bm{\gamma}_j,\bm{x}_i)&=E[M(Y_{ij},\{(T_i+1)/2-\pi(\bm{x}_i)\}\times \bm{\gamma}_j^\prime\bm{x}_i)|X_i=\bm{x}_i]
\\
&=E[I(T_i=1)M(Y_{ij},\{1-\pi(\bm{x}_i)\}\bm{\gamma}_j^\prime\bm{x}_i)|X_i=\bm{x}_i]
    \\
    &+E[I(T=-1)M(Y_{ij},-\pi(\bm{x}_i)\bm{\gamma}_j^\prime\bm{x}_i)|X_i=\bm{x}_i]
    \\
    &=\pi(\bm{x}_i)E[M(Y_{ij},\{1-\pi(\bm{x}_i)\}\bm{\gamma}_j^\prime\bm{x}_i)|T_i=1,X_i=\bm{x}_i]
    \\
    &+\{1-\pi(\bm{x}_i)\}E[M(Y_{ij},-\pi(\bm{x}_i)\bm{\gamma}_j^\prime\bm{x}_i)|T_i=-1,X_i=\bm{x}_i]
    \\
&=\pi(\bm{x}_i)E[Y_{ij}\log \{1+\exp (-\{1-\pi(\bm{x}_i)\}\bm{\gamma}_j^\prime\bm{x}_i)\}|T_i=1,X_i=\bm{x}_i]
    \\
    &+
    \{1-\pi(\bm{x}_i)\}E[Y_{ij}\log \{1+\exp (\pi(\bm{x}_i)\bm{\gamma}_j^\prime\bm{x}_i)\}|T_i=-1,X_i=\bm{x}_i]
\end{align*}
Differentiating this with respect to $\bm{\gamma}_j$ and substituting the optimal value $\hat{\bm{\gamma}}_j$ from the first-order condition holds that 
\begin{align*}
    &\pi(\bm{x}_i)\{1-\pi(\bm{x}_i)\} E[Y_{ij}|T_i=1,X_i=\bm{x}_i]\frac{e^{-\{1-\pi(\bm{x}_i)\}\hat{\bm{\gamma}}_j^\prime\bm{x}_i}}{1+e^{-\{1-\pi(\bm{x}_i)\}\hat{\bm{\gamma}}_j^\prime\bm{x}_i}}\bm{x}_i
    \\
    &=\pi(\bm{x}_i)\{1-\pi(\bm{x}_i)\}E[Y_{ij}|T_i=-1,\bm{X}_i]\frac{e^{\pi(\bm{x}_i)\hat{\bm{\gamma}}_j^\prime\bm{x}_i}}{1+e^{\pi(\bm{x}_i)\hat{\bm{\gamma}}_j^\prime\bm{x}_i}}\bm{x}_i
    \\
    \Rightarrow &\frac{E[Y_{ij}|T_i=1,X_i=\bm{x}_i]}{E[Y_{ij}|T_i=-1,X_i=\bm{x}_i]}
    =\frac{e^{\pi(\bm{x}_i)\hat{\bm{\gamma}}_j^\prime\bm{x}_i}}{1+e^{\pi(\bm{x}_i)\hat{\bm{\gamma}}_j^\prime\bm{x}_i}}\times \frac{1+e^{-\{1-\pi(\bm{x}_i)\}\hat{\bm{\gamma}}_j^\prime\bm{x}_i}}{e^{-\{1-\pi(\bm{x}_i)\}\hat{\bm{\gamma}}_j^\prime\bm{x}_i}}
    \\
    \Rightarrow &\frac{E[Y_{ij}|T_i=1,X_i=\bm{x}_i]}{E[Y_{ij}|T_i=-1,X_i=\bm{x}_i]}
    =e^{\hat{\bm{\gamma}}_j^\prime\bm{x}_i}
    \times \frac{1+e^{-\{1-\pi(\bm{x}_i)\}\hat{\bm{\gamma}}_j^\prime\bm{x}_i}}{1+e^{\pi(\bm{x}_i)\hat{\bm{\gamma}}_j^\prime\bm{x}_i}}
    \\
    \Rightarrow &\log \frac{E[Y_{ij}|T_i=1,X_i=\bm{x}_i]}{E[Y_{ij}|T_i=-1,X_i=\bm{x}_i]}
    =\hat{\bm{\gamma}}_j^\prime\bm{x}_i
    +\log \frac{1+e^{-\{1-\pi(\bm{x}_i)\}\hat{\bm{\gamma}}_j^\prime\bm{x}_i}}{1+e^{\pi(\bm{x}_i)\hat{\bm{\gamma}}_j^\prime\bm{x}_i}}
\end{align*}
\end{proof}
Theorem \ref{thm_bias} implies that the treatment effect estimated by A-learner using linear functions includes a bias (the second term on the right-hand side of Eq. (\ref{bias_mod})), depending on the propensity score.
In the proof by Chen et al. (2017), bias occurs because the objective function is not differentiated with respect to the parameters when solving the minimization problem in Eq. (\ref{obj_uniA}). Therefore, the bias term must be removed when calculating the treatment effect using the estimated $\bm{\gamma}_j$.
\par
Finally, we describe the properties of the objective function $\ell_A(\bm{V},\bm{W},\bm{x}_i)$ based on A-learner. Similar to $\ell_W(\bm{V},\bm{W},\bm{x}_i)$, this function is biconvex with respect to the parameters. Therefore, we can prove that the optimal value for one parameter can be reached by fixing the other. However, similar to the conventional A-learner for univariate binary outcomes, bias occurs in the estimation of HTE. Therefore, we discuss how to correct the bias and the optimal values of the parameters to accurately represent the HTE.
\begin{thm}
\label{bias_RRRA1}
Suppose that the true ranks of $\bm{V}^*$ and $\bm{W}^*$ representing HTE in Eq. (2) are both $r$. When $\bm{W}$ is fixed to the matrix $\bm{W}^\dagger\in\mathcal{W}_r$, the following holds using $\hat{\bm{v}}_{(j)}$, which minimizes $\ell_A(\bm{V},\bm{W}^\dagger,\bm{x}_i)$:
    \begin{align*}
    %\label{A_bias1}
       \log \frac{E[Y_{ij}|T_i=1,X_i=\bm{x}_i]}{E[Y_{ij}|T_i=-1,X_i=\bm{x}_i]}
    =\hat{\bm{v}}_{(j)}^\prime\bm{W}^{\dagger\prime}\bm{x}_i+
    \log \frac{1+e^{-\{1-\pi(\bm{x}_i)\}\hat{\bm{v}}_{(j)}^\prime\bm{W}^{\dagger\prime}\bm{x}_i}}{1+e^{\pi(\bm{x}_i)\hat{\bm{v}}_{(j)}^\prime\bm{W}^{\dagger\prime}\bm{x}_i}}
    \end{align*}
\end{thm}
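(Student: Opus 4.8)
The plan is to follow the proof of Theorem~\ref{thm_bias} essentially verbatim, but to carry the extra outcome index $j$ and to replace the scalar predictor $\bm{\gamma}_j^\prime\bm{x}_i$ by $\bm{v}_{(j)}^\prime\bm{W}^{\dagger\prime}\bm{x}_i$, in the same way the earlier W-method theorems lift the single-outcome W-method argument to the reduced-rank setting. The point of the fixed column-full-rank $\bm{W}^\dagger$ is that the objective then decouples across the coordinates $\bm{v}_{(1)},\dots,\bm{v}_{(m)}$, so the univariate bias computation can be run outcome by outcome.

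First I would expand $\ell_A(\bm{V},\bm{W}^\dagger,\bm{x}_i)$ by splitting on the treatment indicator via $I(T_i=1)+I(T_i=-1)=1$ and the tower property, using that $(T_i+1)/2-\pi(\bm{x}_i)$ equals $1-\pi(\bm{x}_i)$ on $\{T_i=1\}$ and $-\pi(\bm{x}_i)$ on $\{T_i=-1\}$, and that only $E[Y_{ij}\mid T_i=t,X_i=\bm{x}_i]$ survives taking the $Y_{ij}$-expectation of $M$. This yields
\begin{align*}
\ell_A(\bm{V},\bm{W}^\dagger,\bm{x}_i)
&=\sum_{j=1}^m\Big\{\pi(\bm{x}_i)\,E[Y_{ij}\mid T_i=1,X_i=\bm{x}_i]\log\{1+e^{-\{1-\pi(\bm{x}_i)\}\bm{v}_{(j)}^\prime\bm{W}^{\dagger\prime}\bm{x}_i}\}\\
&\qquad+\{1-\pi(\bm{x}_i)\}\,E[Y_{ij}\mid T_i=-1,X_i=\bm{x}_i]\log\{1+e^{\pi(\bm{x}_i)\bm{v}_{(j)}^\prime\bm{W}^{\dagger\prime}\bm{x}_i}\}\Big\},
\end{align*}
which is the multiple-outcome analogue of the first display in the proof of Theorem~\ref{thm_bias}. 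Since for fixed $\bm{W}^\dagger$ each summand is a nonnegatively weighted composition of the convex function $u\mapsto\log(1+e^{-u})$ with an affine map of $\bm{v}_{(j)}$, the objective is convex in $\bm{v}_{(j)}$ and the first-order condition characterises the minimiser $\hat{\bm{v}}_{(j)}$.

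Next I would differentiate the (only) $j$th summand with respect to $\bm{v}_{(j)}$, using $\partial(\bm{v}_{(j)}^\prime\bm{W}^{\dagger\prime}\bm{x}_i)/\partial\bm{v}_{(j)}=\bm{W}^{\dagger\prime}\bm{x}_i$, and set the gradient to $\bm{0}_r$ at $\hat{\bm{v}}_{(j)}$. As in Theorem~\ref{thm_bias}, the common factor $\pi(\bm{x}_i)\{1-\pi(\bm{x}_i)\}$ (positive by the positivity assumption) and the common vector $\bm{W}^{\dagger\prime}\bm{x}_i$ appear on both sides; cancelling $\bm{W}^{\dagger\prime}\bm{x}_i\neq\bm{0}_r$ reduces the vector identity to the scalar one
\begin{align*}
E[Y_{ij}\mid T_i=1,X_i=\bm{x}_i]\frac{e^{-\{1-\pi(\bm{x}_i)\}\hat{\bm{v}}_{(j)}^\prime\bm{W}^{\dagger\prime}\bm{x}_i}}{1+e^{-\{1-\pi(\bm{x}_i)\}\hat{\bm{v}}_{(j)}^\prime\bm{W}^{\dagger\prime}\bm{x}_i}}
=E[Y_{ij}\mid T_i=-1,X_i=\bm{x}_i]\frac{e^{\pi(\bm{x}_i)\hat{\bm{v}}_{(j)}^\prime\bm{W}^{\dagger\prime}\bm{x}_i}}{1+e^{\pi(\bm{x}_i)\hat{\bm{v}}_{(j)}^\prime\bm{W}^{\dagger\prime}\bm{x}_i}}.
\end{align*}
From here the conclusion is pure algebra, identical to the final chain of implications in the proof of Theorem~\ref{thm_bias}: isolate the ratio $E[Y_{ij}\mid T_i=1,\cdot]/E[Y_{ij}\mid T_i=-1,\cdot]$, merge the two exponential factors using $\pi(\bm{x}_i)+\{1-\pi(\bm{x}_i)\}=1$ so that their product is $e^{\hat{\bm{v}}_{(j)}^\prime\bm{W}^{\dagger\prime}\bm{x}_i}$, and take logarithms to obtain the stated identity, whose last term is the bias.

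The only genuinely delicate step is the cancellation of $\bm{W}^{\dagger\prime}\bm{x}_i$ from the first-order condition: it requires $\bm{W}^{\dagger\prime}\bm{x}_i\neq\bm{0}_r$, which is exactly where the full-column-rank hypothesis $\bm{W}^\dagger\in\mathcal{W}_r$ is used (it holds for all $\bm{x}_i$ outside a measure-zero set), mirroring the same device in the earlier W-method theorems. Everything else is the bookkeeping of Theorem~\ref{thm_bias} repeated outcome by outcome, now legitimate because the fixed $\bm{W}^\dagger$ makes the coordinates $\bm{v}_{(j)}$ separate in $\ell_A$.
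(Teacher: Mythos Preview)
Your proposal is correct and follows essentially the same approach as the paper's own proof: expand $\ell_A(\bm{V},\bm{W}^\dagger,\bm{x}_i)$ by conditioning on $T_i\in\{1,-1\}$, differentiate the $j$th summand with respect to $\bm{v}_{(j)}$, set the gradient to zero, cancel the common factors, and rearrange exactly as in Theorem~\ref{thm_bias}. If anything, you are slightly more explicit than the paper in noting the convexity that makes the first-order condition sufficient and in flagging that cancelling $\bm{W}^{\dagger\prime}\bm{x}_i$ requires it to be nonzero (which the paper leaves implicit).
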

\begin{proof}
    $\ell_A(\bm{V},\bm{W}^\dagger,\bm{x}_i)$ is expanded in the same manner as in Theorem \ref{thm_bias}:
    \begin{align*}
\ell_A(\bm{V},\bm{W}^\dagger,\bm{x}_i)
        &=E[M(Y_{i},\{(T_i+1)/2-\pi(\bm{x_i})\}\times \bm{VW}^{\dagger\prime}\bm{x}_i)|X_i=\bm{x}_i]
    \\
    &=\sum_{j=1}^m \left\{E[I(T_i=1)M(Y_{ij},\{1-\pi(\bm{x}_i)\}\bm{v}_{(j)}^\prime\bm{W}^{\dagger\prime}\bm{x}_i)|X_i=\bm{x}_i]
    \right.
    \\
    &\left.
    +
    E[I(T_i=-1)M(Y_{ij},-\pi(\bm{x}_i)\bm{v}_{(j)}^\prime\bm{W}^{\dagger\prime}\bm{x}_i)|X_i=\bm{x}_i]
    \right\}
    \\
    &=\sum_{j=1}^m \left\{
    \pi(\bm{x}_i)E[M(Y_{ij},\{1-\pi(\bm{x}_i)\}\bm{v}_{(j)}^\prime\bm{W}^{\dagger\prime}\bm{x}_i)|T_i=1,X_i=\bm{x}_i]
    \right.
    \\
    &\left.
    +
    \{1-\pi(\bm{x}_i)\}E[M(Y_{ij},-\pi(\bm{x}_i)\bm{v}_{(j)}^\prime\bm{W}^{\dagger\prime}\bm{x}_i)|T_i=-1,X_i=\bm{x}_i]
    \right\}
    \\
    &=\sum_{j=1}^m \left\{
    \pi(\bm{x}_i)E[Y_{ij}\log \{1+\exp (-\{1-\pi(\bm{x}_i)\}\bm{v}_{(j)}^\prime\bm{W}^{\dagger\prime}\bm{x}_i))\}|T_i=1,X_i=\bm{x}_i]
    \right.
    \\
    &\left.
    +
    \{1-\pi(\bm{x}_i)\}E[Y_{ij}\log \{1+\exp (\pi(\bm{x}_i)\bm{v}_{(j)}^\prime\bm{W}^{\dagger\prime}\bm{x}_i)\}|T_i=-1,X_i=\bm{x}_i]
    \right\}
    \end{align*}
    By differentiating this with respect to $\bm{v}_{(j)}$ and substituting the optimal value $\hat{\bm{v}}_{(j)}$ with the first-order condition, it holds that
    \begin{align*}
        &\pi(\bm{x}_i)\{1-\pi(\bm{x}_i)\} E[Y_{ij}|T_i=1,X_i=\bm{x}_i]\frac{e^{-\{1-\pi(\bm{x}_i)\}\hat{\bm{v}}_{(j)}^\prime\bm{W}^{\dagger\prime}\bm{x}_i}}{1+e^{-\{1-\pi(\bm{x}_i)\}\hat{\bm{v}}_{(j)}^\prime\bm{W}^{\dagger\prime}\bm{x}_i}}\bm{W}^{\dagger\prime}\bm{x}_i
       \\=&\pi(\bm{x}_i)\{1-\pi(\bm{x}_i)\}E[Y_{ij}|T_i=-1,X_i=\bm{x}_i]\frac{e^{\pi(\bm{x})\hat{\bm{v}}_{(j)}^\prime\bm{W}^{\dagger\prime}\bm{x}_i}}{1+e^{\pi(\bm{x})\hat{\bm{v}}_{(j)}^\prime\bm{W}^{\dagger\prime}\bm{x}_i}}\bm{W}^{\dagger\prime}\bm{x}_i
    \\
    \Rightarrow &
    \frac{E[Y_{ij}|T_i=1,X_i=\bm{x}_i]}{E[Y_{ij}|T_i=-1,X_i=\bm{x}_i]}
    =\frac{e^{\pi(\bm{x})\hat{\bm{v}}_{(j)}^\prime\bm{W}^{\dagger\prime}\bm{x}_i}}{1+e^{\pi(\bm{x})\hat{\bm{v}}_{(j)}^\prime\bm{W}^{\dagger\prime}\bm{x}_i}}\times
    \frac{1+e^{-\{1-\pi(\bm{x}_i)\}\hat{\bm{v}}_{(j)}^\prime\bm{W}^{\dagger\prime}\bm{x}_i}}{e^{-\{1-\pi(\bm{x}_i)\}\hat{\bm{v}}_{(j)}^\prime\bm{W}^{\dagger\prime}\bm{x}_i}}
    \\
    \Rightarrow &
    \frac{E[Y_{ij}|T_i=1,X_i=\bm{x}_i]}{E[Y_{ij}|T_i=-1,X_i=\bm{x}_i]}
    =e^{\hat{\bm{v}}_{(j)}^\prime\bm{W}^{\dagger\prime}\bm{x}_i}\times
    \frac{1+e^{-\{1-\pi(\bm{x}_i)\}\hat{\bm{v}}_{(j)}^\prime\bm{W}^{\dagger\prime}\bm{x}_i}}{1+e^{\pi(\bm{x}_i)\hat{\bm{v}}_{(j)}^\prime\bm{W}^{\dagger\prime}\bm{x}_i}}
    \\
    \Rightarrow
    &\log \frac{E[Y_{ij}|T_i=1,X_i=\bm{x}_i]}{E[Y_{ij}|T_i=-1,X_i=\bm{x}_i]}
    =\hat{\bm{v}}_{(j)}^\prime\bm{W}^{\dagger\prime}\bm{x}_i+
    \log \frac{1+e^{-\{1-\pi(\bm{x}_i)\}\hat{\bm{v}}_{(j)}^\prime\bm{W}^{\dagger\prime}\bm{x}_i}}{1+e^{\pi(\bm{x}_i)\hat{\bm{v}}_{(j)}^\prime\bm{W}^{\dagger\prime}\bm{x}_i}}
    \end{align*}
\end{proof}
\begin{thm}
    \label{bias_RRRA2}
   Suppose that the true ranks of $\bm{V}^*$ and $\bm{W}^*$ representing the HTE in Eq. (2) are both $r$. When $\bm{V}$ is fixed to the matrix $\bm{V}^\dagger\in\mathcal{V}_r$, the following holds when using $\hat{\bm{W}}$, which minimizes $\ell_A(\bm{V}^\dagger,\bm{W},\bm{x}_i)$:
    \begin{align*}
    %\label{A_bias2}
       \log \frac{E[Y_{ij}|T_i=1,X_i=\bm{x}_i]}{E[Y_{ij}|T_i=-1,X_i=\bm{x}_i]}
    =\bm{v}_{(j)}^{\dagger\prime}\hat{\bm{W}}^\prime\bm{x}_i+
    \log \frac{1+e^{-\{1-\pi(\bm{x}_i)\}\bm{v}_{(j)}^{\dagger\prime}\hat{\bm{W}}^\prime\bm{x}_i}}{1+e^{\pi(\bm{x}_i)\bm{v}_{(j)}^{\dagger\prime}\hat{\bm{W}}^\prime\bm{x}_i}}
    \end{align*}
\end{thm}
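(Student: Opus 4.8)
The plan is to mirror the proof of Theorem \ref{bias_RRRA1} with the two factors interchanged: instead of differentiating the A-learner loss with respect to $\bm{v}_{(j)}$ while $\bm{W}$ is held at a column-full-rank $\bm{W}^\dagger$, I would differentiate with respect to $\bm{W}$ while $\bm{V}$ is held at a column-full-rank $\bm{V}^\dagger$, reusing the matrix-derivative identity that appears in the $\bm{W}$-step of the W-method theorems. First I would expand $\ell_A(\bm{V}^\dagger,\bm{W},\bm{x}_i)$ exactly as in the proof of Theorem \ref{bias_RRRA1}: split the conditional expectation on $T_i=1$ and $T_i=-1$, use that $(T_i+1)/2-\pi(\bm{x}_i)$ equals $1-\pi(\bm{x}_i)$ on the treated arm and $-\pi(\bm{x}_i)$ on the control arm, invoke strong ignorability to pass to the conditional expectations $E[Y_{ij}\mid T_i=\pm1,X_i=\bm{x}_i]$, and collect the propensity weights $\pi(\bm{x}_i)$ and $1-\pi(\bm{x}_i)$. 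This gives the same per-outcome two-term expression as in Theorem \ref{bias_RRRA1}, with every $\bm{v}_{(j)}^\prime\bm{W}^{\dagger\prime}$ replaced by $\bm{v}_{(j)}^{\dagger\prime}\bm{W}^\prime$.

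Next I would differentiate with respect to $\bm{W}$. Writing $\bm{v}_{(j)}^{\dagger\prime}\bm{W}^\prime\bm{x}_i=\mathrm{tr}(\bm{W}^\prime\bm{x}_i\bm{v}_{(j)}^{\dagger\prime})$ gives $\partial(\bm{v}_{(j)}^{\dagger\prime}\bm{W}^\prime\bm{x}_i)/\partial\bm{W}=\bm{x}_i\bm{v}_{(j)}^{\dagger\prime}$, the same identity used in the W-method $\bm{W}$-step. Imposing the first-order condition at the minimizer $\hat{\bm{W}}$ and isolating the contribution of outcome $j$ would yield
\begin{align*}
&\pi(\bm{x}_i)\{1-\pi(\bm{x}_i)\}E[Y_{ij}|T_i=1,X_i=\bm{x}_i]\frac{e^{-\{1-\pi(\bm{x}_i)\}\bm{v}_{(j)}^{\dagger\prime}\hat{\bm{W}}^\prime\bm{x}_i}}{1+e^{-\{1-\pi(\bm{x}_i)\}\bm{v}_{(j)}^{\dagger\prime}\hat{\bm{W}}^\prime\bm{x}_i}}\bm{x}_i\bm{v}_{(j)}^{\dagger\prime}\\
&\qquad=\pi(\bm{x}_i)\{1-\pi(\bm{x}_i)\}E[Y_{ij}|T_i=-1,X_i=\bm{x}_i]\frac{e^{\pi(\bm{x}_i)\bm{v}_{(j)}^{\dagger\prime}\hat{\bm{W}}^\prime\bm{x}_i}}{1+e^{\pi(\bm{x}_i)\bm{v}_{(j)}^{\dagger\prime}\hat{\bm{W}}^\prime\bm{x}_i}}\bm{x}_i\bm{v}_{(j)}^{\dagger\prime}.
\end{align*}
Cancelling the common factor $\pi(\bm{x}_i)\{1-\pi(\bm{x}_i)\}\bm{x}_i\bm{v}_{(j)}^{\dagger\prime}$, which is legitimate by the positivity assumption $0<\pi(\bm{x}_i)<1$ together with the column-full-rank of $\bm{V}^\dagger$, then taking the ratio and the logarithm and simplifying via $-\{1-\pi(\bm{x}_i)\}a+\pi(\bm{x}_i)a=a$ with $a=\bm{v}_{(j)}^{\dagger\prime}\hat{\bm{W}}^\prime\bm{x}_i$, produces the claimed identity with bias term $\log\{1+e^{-\{1-\pi(\bm{x}_i)\}a}\}-\log\{1+e^{\pi(\bm{x}_i)a}\}$. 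This closing computation is word-for-word the one at the end of Theorem \ref{bias_RRRA1}.

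The step I expect to be the main obstacle is the passage from the full stationarity condition $\partial\ell_A(\bm{V}^\dagger,\bm{W},\bm{x}_i)/\partial\bm{W}=\bm{0}$ to the per-outcome equation above: the gradient is a sum over $j$ of terms proportional to $\bm{x}_i\bm{v}_{(j)}^{\dagger\prime}$, and when $r<m$ the vectors $\bm{v}_{(1)}^\dagger,\dots,\bm{v}_{(m)}^\dagger$ are linearly dependent, so vanishing of the sum does not by itself force each summand to vanish. I would handle this exactly as it is handled implicitly in the W-method theorem with $\bm{V}$ fixed: under the standing hypothesis that the true HTE has the rank-$r$ form of Eq. (\ref{HTE_lowrank}) with $\mathrm{rank}(\bm{V}^*)=\mathrm{rank}(\bm{W}^*)=r$, once $\bm{V}$ is fixed at a compatible $\bm{V}^\dagger$ the biconvex family still contains the truth, so the minimizer $\hat{\bm{W}}$ can be taken to reproduce the true conditional odds simultaneously for every $j$ and hence to satisfy each per-$j$ stationarity equation on its own; the column-full-rank of $\bm{V}^\dagger$ is what makes $\bm{W}\mapsto\bm{W}\bm{V}^{\dagger\prime}$ injective, so this $\hat{\bm{W}}$ is uniquely determined. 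Granting this reduction, the rest is the routine rearrangement described above.
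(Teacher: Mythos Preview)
Your proposal is correct and follows essentially the same route as the paper: expand $\ell_A(\bm{V}^\dagger,\bm{W},\bm{x}_i)$ by conditioning on $T_i=\pm1$, differentiate in $\bm{W}$ via the trace identity $\partial(\bm{v}_{(j)}^{\dagger\prime}\bm{W}^\prime\bm{x}_i)/\partial\bm{W}=\bm{x}_i\bm{v}_{(j)}^{\dagger\prime}$, set the first-order condition at $\hat{\bm{W}}$, and rearrange. The paper's proof in fact passes directly from the full gradient to the per-$j$ equation without comment, so the obstacle you flag about separating the $j$-summands is glossed over there as well; your explicit discussion of it is more careful than the original.
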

\begin{proof}
   $\ell_A(\bm{V}^\dagger,\bm{W},\bm{x}_i)$ is expanded in the same manner as in the proof of Theorem \ref{thm_bias}, that is,
    \begin{align*}
\ell_A(\bm{V}^\dagger,\bm{W},\bm{x}_i)
        &=E[M(Y_{i},\{(T_i+1)/2-\pi(\bm{x_i})\}\times \bm{V}^\dagger\bm{W}^\prime\bm{x}_i)|X_i=\bm{x}_i]
    \\
    &=\sum_{j=1}^m \left\{E[I(T_i=1)M(Y_{ij},\{1-\pi(\bm{x}_i)\}\bm{v}_{(j)}^{\dagger\prime}\bm{W}^\prime\bm{x}_i)|X_i=\bm{x}_i]
    \right.
    \\
    &\left.
    +
    E[I(T_i=-1)M(Y_{ij},-\pi(\bm{x}_i)\bm{v}_{(j)}^{\dagger\prime}\bm{W}^\prime\bm{x}_i)|X_i=\bm{x}_i]
    \right\}
    \\
    &=\sum_{j=1}^m \left\{
    \pi(\bm{x}_i)E[M(Y_{ij},\{1-\pi(\bm{x}_i)\}\bm{v}_{(j)}^{\dagger\prime}\bm{W}^\prime\bm{x}_i)|T_i=1,X_i=\bm{x}_i]
    \right.
    \\
    &\left.
    +
    \{1-\pi(\bm{x}_i)\}E[M(Y_{ij},-\pi(\bm{x}_i)\bm{v}_{(j)}^{\dagger\prime}\bm{W}^\prime\bm{x}_i)|T_i=-1,X_i=\bm{x}_i]
    \right\}
    \\
    &=\sum_{j=1}^m \left\{
    \pi(\bm{x}_i)E[Y_{ij}\log \{1+\exp (-\{1-\pi(\bm{x}_i)\}\bm{v}_{(j)}^{\dagger\prime}\bm{W}^\prime\bm{x}_i))\}|T_i=1,X_i=\bm{x}_i]
    \right.
    \\
    &\left.
    +
    \{1-\pi(\bm{x}_i)\}E[Y_{ij}\log \{1+\exp (\pi(\bm{x}_i)\bm{v}_{(j)}^{\dagger\prime}\bm{W}^\prime\bm{x}_i))\}|T_i=-1,X_i=\bm{x}_i]
    \right\}
    \end{align*}
    By differentiating this with respect to $\bm{W}$ and substituting the optimal value $\hat{\bm{W}}$, it holds from $\bm{v}_{(j)}^{\dagger\prime}\bm{W}^\prime\bm{x}_i=\mathrm{tr}(\bm{W}^\prime\bm{x}_i\bm{v}_{(j)}^{\dagger\prime})$, which implies $\partial (\bm{v}_{(j)}^{\dagger\prime}\bm{W}^\prime\bm{x}_i)/\partial \bm{W}=\bm{x}_i\bm{v}_{(j)}^{\dagger\prime}$, where $\mathrm{tr}(\cdot)$ is trace operator for a matrix, that
    \begin{align*}
        &\pi(\bm{x}_i)\{1-\pi(\bm{x}_i)\} E[Y_{ij}|T_i=1,X_i=\bm{x}_i]\frac{e^{-\{1-\pi(\bm{x}_i)\}\bm{v}_{(j)}^{\dagger\prime}\hat{\bm{W}}^\prime\bm{x}_i}}{1+e^{-\{1-\pi(\bm{x}_i)\}\bm{v}_{(j)}^{\dagger\prime}\hat{\bm{W}}^\prime\bm{x}_i}}\bm{x}_i\bm{v}_{(j)}^{\dagger\prime}
        \\
        =&\pi(\bm{x}_i)\{1-\pi(\bm{x}_i)\}E[Y_{ij}|T_i=-1,X_i=\bm{x}_i]\frac{e^{\pi(\bm{x}_i)\bm{v}_{(j)}^{\dagger\prime}\hat{\bm{W}}^\prime\bm{x}_i}}{1+e^{\pi(\bm{x}_i)\bm{v}_{(j)}^{\dagger\prime}\hat{\bm{W}}^\prime\bm{x}_i}}\bm{x}_i\bm{v}_{(j)}^{\dagger\prime}
    \\
    \Rightarrow &
    \frac{E[Y_{ij}|T_i=1,X_i=\bm{x}_i]}{E[Y_{ij}|T_i=-1,X_i=\bm{x}_i]}
    =\frac{e^{\pi(\bm{x}_i)\bm{v}_{(j)}^{\dagger\prime}\hat{\bm{W}}^\prime\bm{x}_i}}{1+e^{\pi(\bm{x}_i)\bm{v}_{(j)}^{\dagger\prime}\hat{\bm{W}}^\prime\bm{x}_i}}\times
    \frac{1+e^{-\{1-\pi(\bm{x}_i)\}\bm{v}_{(j)}^{\dagger\prime}\hat{\bm{W}}^\prime\bm{x}_i}}{e^{-\{1-\pi(\bm{x}_i)\}\bm{v}_{(j)}^{\dagger\prime}\hat{\bm{W}}^\prime\bm{x}_i}}
    \\
    \Rightarrow &
    \frac{E[Y_{ij}|T_i=1,X_i=\bm{x}_i]}{E[Y_{ij}|T_i=-1,X_i=\bm{x}_i]}
    =e^{\bm{v}_{(j)}^{\dagger\prime}\hat{\bm{W}}^\prime\bm{x}_i}\times
    \frac{1+e^{-\{1-\pi(\bm{x}_i)\}\bm{v}_{(j)}^{\dagger\prime}\hat{\bm{W}}^\prime\bm{x}_i}}{1+e^{\pi(\bm{x}_i)\bm{v}_{(j)}^{\dagger\prime}\hat{\bm{W}}^\prime\bm{x}_i}}
    \\
    \Rightarrow
    &\log \frac{E[Y_{ij}|T_i=1,X_i=\bm{x}_i]}{E[Y_{ij}|T_i=-1,X_i=\bm{x}_i]}
    =\bm{v}_{(j)}^{\dagger\prime}\hat{\bm{W}}^\prime\bm{x}_i+
    \log \frac{1+e^{-\{1-\pi(\bm{x}_i)\}\bm{v}_{(j)}^{\dagger\prime}\hat{\bm{W}}^\prime\bm{x}_i}}{1+e^{\pi(\bm{x}_i)\bm{v}_{(j)}^{\dagger\prime}\hat{\bm{W}}^\prime\bm{x}_i}}
    \end{align*}
\end{proof}
These theorems determine that the estimation by the A-learner introduces a bias in the treatment effect, even when using multiple logistic loss. Therefore, the treatment effect can be expressed by correcting the bias term using the estimated $\bm{V}$ and $\bm{W}$.
\section{Estimation algorithm}
In this section, we describe the estimation algorithms used in the proposed method. The proposed method, whether based on the W-method or A-learner, estimates the parameters by using the majorization–minimization (MM) algorithm \cite{hunter2004tutorial}. The MM algorithm helps us approximate the minimization problem of the proposed method using a simple quadratic function, allowing optimization based on singular value decomposition.
\subsection{Estimation algorithm in the framework of the W-method}
\label{est_w}
In this subsection, we describe the estimation algorithm used to solve the optimization problem within the framework of the W method. We specifically utilized the MM algorithm.
%MM-algorithmの引用はチュートリアル資料を持ってきた．
The fundamental concept of the MM algorithm is to iteratively approximate the original complex function $\mathcal{L}(\theta)$ using an auxiliary function $\mathcal{M}(\theta,\vartheta)$ (referred to as the majorization function), where $\theta$ and $\vartheta$ are the parameters and supporting points, respectively.
To qualify $\mathcal{M}(\theta,\vartheta)$ as the majorization function of $\mathcal{L}(\theta)$, the following conditions must be satisfied.
\begin{enumerate}
    \item The original function is always less than or equal to the auxiliary function, that is, $\forall \theta;\mathcal{L}(\theta)\leq \mathcal{M}(\theta,\vartheta)$.
    \item The original function is tangent to the auxiliary function at the supporting point, that is,  $\mathcal{L}(\vartheta)=\mathcal{M}(\vartheta,\vartheta)$.
\end{enumerate}
Let $\theta^*$ denote the minimum $\mathcal{M}(\theta,\vartheta)$ at the supporting point $\vartheta$. From the conditions above, we derive the following inequality:
\begin{align*}
    \mathcal{L}(\theta^*)\leq \mathcal{M}(\theta^*,\vartheta)\leq \mathcal{M}(\vartheta,\vartheta)=\mathcal{L}(\vartheta).
\end{align*}
%supporting point zでのLのミニマムがtheta^*として得られたとする．上記のMajorizing関数の条件より以下の不等式を導くことができる．
Here, we derive a majorization function approximating $L_W(\bm{V},\bm{W},\bm{X})$ following the approach of de Rooij (2024) \cite{de2024new}.
%具体的には以下のMajorization functionが導出できる．
\begin{prop}
\label{prop_w}
The majorization function of $L_W(\bm{V},\bm{W},\bm{X})$ is derived as 
% \begin{align*}
%     \frac{1}{8}\|\bm{A}^{1/2}\bm{Z}-\bm{A}^{1/2}\bm{TXWV}^\prime\|_F^2 + \mathrm{const.\;},
% \end{align*}
\begin{align*}
    \mathcal{M}_W(\bm{V},\bm{W})=\|\bm{A}^{1/2}\bm{Z}-\bm{A}^{1/2}\bm{TXWV}^\prime\|_F^2,
\end{align*}
where $\bm{A}=\mathrm{diag}(a_1,a_2,\cdots,a_n),\;a_i=\left(t_i\pi(\bm{x}_i)+(1-t_i)/2\right)^{-1}$, $\bm{Z}=\bm{TXW}^{(t)}\bm{V}^{(t)\prime}+4\bm{Y}\odot\bm{\Phi}^{(t)}\in\mathbb{R}^{n\times m}$ using the $\bm{V}^{(t)}$ and $\bm{W}^{(t)}$ from the $t$th iteration, and $\odot$ represents the Hadamard product.
$\bm{A}^{1/2}$ is the square root of the matrix $\bm{A}$, $\bm{\Phi}^{(t)}\in [0,1]^{n\times m}$ is a matrix in which each element is equal to $\phi_{ij}^{(t)}$, and
\begin{align*}
    \phi_{ij}^{(t)}
    =\frac{\exp(-t_i\bm{v}_{(j)}^{(t)\prime}\bm{W}^{(t)\prime}\bm{x}_i)}{1+\exp(-t_i\bm{v}_{(j)}^{(t)\prime}\bm{W}^{(t)\prime}\bm{x}_i)}\;(i=1,2,\cdots,n;\;j=1,2,\cdots,m).
\end{align*}
\end{prop}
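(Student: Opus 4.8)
The plan is to assemble the majorizer term by term, starting from a single scalar quadratic upper bound for the softplus function $f(z)=\log\{1+\exp(-z)\}$ and then recognizing the resulting double sum as a weighted Frobenius norm, in the spirit of \cite{de2024new}. The function $f$ is twice continuously differentiable with $f'(z)=-\phi(z)$, where $\phi(z)=\exp(-z)/\{1+\exp(-z)\}$, and $f''(z)=\phi(z)\{1-\phi(z)\}\in[0,1/4]$ for every $z$. Hence, by Taylor's theorem with Lagrange remainder at an arbitrary support point $z_0$,
\begin{align*}
f(z)\le f(z_0)-\phi(z_0)(z-z_0)+\tfrac{1}{8}(z-z_0)^2,
\end{align*}
with equality at $z=z_0$; completing the square rewrites the right-hand side as $\tfrac{1}{8}\bigl(z-z_0-4\phi(z_0)\bigr)^2+\bigl(f(z_0)-2\phi(z_0)^2\bigr)$, i.e., a quadratic in $z$ with target point $z_0+4\phi(z_0)$ plus a constant not involving $z$.

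I would then specialize $z$ to $z_{ij}=t_i\bm{v}_{(j)}^\prime\bm{W}^\prime\bm{x}_i=(\bm{TXWV}^\prime)_{ij}$ and the support point to $z_{ij}^{(t)}=t_i\bm{v}_{(j)}^{(t)\prime}\bm{W}^{(t)\prime}\bm{x}_i=(\bm{TXW}^{(t)}\bm{V}^{(t)\prime})_{ij}$, so that $\phi(z_{ij}^{(t)})=\phi_{ij}^{(t)}$, and multiply the scalar bound by the nonnegative weight $a_i=(t_i\pi(\bm{x}_i)+(1-t_i)/2)^{-1}$ and by $y_{ij}$ to obtain a majorizer of the $(i,j)$ summand $a_iy_{ij}f(z_{ij})$ of $L_W$. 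Here the key use of $y_{ij}\in\{0,1\}$: when $y_{ij}=1$ the bound is $\tfrac{1}{8} a_i\bigl(z_{ij}-z_{ij}^{(t)}-4\phi_{ij}^{(t)}\bigr)^2$ plus a constant, and when $y_{ij}=0$ the summand is identically zero and is dominated by the nonnegative quantity $\tfrac{1}{8} a_i(z_{ij}-z_{ij}^{(t)})^2$; in both cases the bound is tight at $z_{ij}=z_{ij}^{(t)}$ and the target point equals $z_{ij}^{(t)}+4y_{ij}\phi_{ij}^{(t)}$, which is exactly the $(i,j)$ entry of $\bm{Z}=\bm{TXW}^{(t)}\bm{V}^{(t)\prime}+4\,\bm{Y}\odot\bm{\Phi}^{(t)}$.

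Summing over $i$ and $j$ then collapses the quadratic parts into $\tfrac{1}{8}\sum_{i,j}a_i\bigl(Z_{ij}-(\bm{TXWV}^\prime)_{ij}\bigr)^2=\tfrac{1}{8}\|\bm{A}^{1/2}\bm{Z}-\bm{A}^{1/2}\bm{TXWV}^\prime\|_F^2$, while the leftover terms aggregate to a constant independent of $\bm{V}$ and $\bm{W}$; discarding that additive constant and the positive factor $\tfrac{1}{8}$ — neither affects the minimizer sought in the MM step — leaves $\mathcal{M}_W(\bm{V},\bm{W})$. The two majorization conditions then follow: $L_W\le\mathcal{M}_W$ (modulo the dropped affine transformation) by summing the per-term inequalities, and tangency at $(\bm{V}^{(t)},\bm{W}^{(t)})$ because each scalar bound is tight at its own support point. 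I expect the difficulty to be bookkeeping rather than conceptual depth: keeping the factor $y_{ij}$ correctly outside the logit, tracking the constant $4\phi(z_0)$ produced by completing the square so it matches the Hadamard term $4\,\bm{Y}\odot\bm{\Phi}^{(t)}$, and being explicit that the multiplicative and additive constants discarded along the way are immaterial for the subsequent singular-value-decomposition-based minimization of $\mathcal{M}_W$.
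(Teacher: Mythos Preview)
Your proposal is correct and follows essentially the same approach as the paper: bound the second derivative of the softplus by $1/4$, apply the resulting quadratic Taylor majorizer, complete the square to identify the working response $z_{ij}^{(t)}+4y_{ij}\phi_{ij}^{(t)}$, and collect the weighted sum into the Frobenius-norm form. The only cosmetic difference is that the paper folds $y_{ij}$ into the scalar function $\mathcal{L}(\theta_{ij})=y_{ij}\log\{1+\exp(-\theta_{ij})\}$ from the outset (so $\xi(\vartheta_{ij})=-y_{ij}\phi_{ij}^{(t)}$ handles both $y_{ij}\in\{0,1\}$ cases uniformly), whereas you treat the two cases separately; the resulting majorizer and all subsequent steps are identical.
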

\begin{proof}
$L_W(\bm{V},\bm{W},\bm{X})$ can be written as
\begin{align}
\label{MM}
    L_W(\bm{V},\bm{W},\bm{X})
    =\sum_{i=1}^n \sum_{j=1}^m \frac{y_{ij}\log \left\{1+\exp\left(-t_i\bm{v}_{(j)}^\prime\bm{W}^\prime\bm{x}_i\right)\right\}}{t_i\pi(\bm{x}_i)+(1-t_i)/2}
    =\sum_{i=1}^n a_i\sum_{j=1}^m y_{ij}\log \left\{1+\exp\left(-\theta_{ij}\right)\right\},
\end{align}
where $\theta_{ij}=t_i\bm{v}_{(j)}^\prime\bm{W}^\prime\bm{x}_i$.
From the Eq. (\ref{MM}), we isolate the terms involving parameter $\theta_{ij}$ and define them as $\mathcal{L}(\theta_{ij})$, that is,
\begin{align*}
    \mathcal{L}(\theta_{ij})
    =y_{ij}\log \left\{1+\exp\left(-\theta_{ij}\right)\right\}.
\end{align*}
To derive the majorization function of $\mathcal{L}(\theta_{ij})$, we calculate the first and second derivates of $\mathcal{L}(\theta_{ij})$ with respect to the parameter.
The first derivative of $\mathcal{L}(\theta_{ij})$ is
\begin{align*}
    \xi(\theta_{ij})\equiv \frac{\partial \mathcal{L}(\theta_{ij})}{\partial \theta_{ij}}
    =-y_{ij}\frac{\exp(-\theta_{ij})}{1+\exp(-\theta_{ij})}.
\end{align*}
The second derivative of $\mathcal{L}(\theta_{ij})$ is
\begin{align*}
    \frac{\partial^2 \mathcal{L}(\theta_{ij})}{\partial \theta_{ij}^2}
    =y_{ij}\frac{\exp(-\theta_{ij})}{1+\exp(-\theta_{ij})}\frac{1}{1+\exp(-\theta_{ij})}\leq \frac{1}{4}.
\end{align*}
Thus, we majorize this function using the least squares function, that is,
\begin{align*}
    \mathcal{L}(\theta_{ij})
    &\leq \mathcal{L}(\vartheta_{ij})
    +\xi (\vartheta_{ij})(\theta_{ij}-\vartheta_{ij})
    +\frac{1}{8}(\theta_{ij}-\vartheta_{ij})^2
    \\
    &=\mathcal{L}(\vartheta_{ij})
    +\xi (\vartheta_{ij})\theta_{ij}-\xi (\vartheta_{ij})\vartheta_{ij}
    +\frac{1}{8}(\theta_{ij}^2+\vartheta_{ij}^2-2\theta_{ij}\vartheta_{ij})
    \\
    &=\mathcal{L}(\vartheta_{ij})+\frac{1}{8}\theta_{ij}^2 +\xi (\vartheta_{ij})\theta_{ij} -2\frac{1}{8}\theta_{ij}\vartheta_{ij} -\xi (\vartheta_{ij})\vartheta_{ij}+\frac{1}{8}\vartheta_{ij}^2
    \\
    &=\mathcal{L}(\vartheta_{ij})+\frac{1}{8}\theta_{ij}^2 -2\frac{1}{8}\theta_{ij} (\vartheta_{ij}-4\xi (\vartheta_{ij})) -\xi (\vartheta_{ij})\vartheta_{ij}+\frac{1}{8}\vartheta_{ij}^2
    \\
    &=\mathcal{L}(\vartheta_{ij})+\frac{1}{8}\theta_{ij}^2 -2\frac{1}{8}\theta_{ij} z_{ij} + \left(\frac{1}{8}z_{ij}^2 - \frac{1}{8}z_{ij}^2\right) -\xi (\vartheta_{ij})\vartheta_{ij}+\frac{1}{8}\vartheta_{ij}^2
    \\
    &=\mathcal{L}(\vartheta_{ij})+\frac{1}{8}(\theta_{ij}-z_{ij})^2 - \frac{1}{8}z_{ij}^2 -\xi (\vartheta_{ij})\vartheta_{ij}+\frac{1}{8}\vartheta_{ij}^2
    \\
    &=\frac{1}{8}(z_{ij}-\theta_{ij})^2+\mathrm{const.}\;,
\end{align*}
where $\vartheta_{ij}=t_i\bm{v}^{(t)\prime}_{(j)}\bm{W}^{(t)\prime}\bm{x}_i$ in the $t$th iteration, $\mathrm{const}$ is a term that is independent of the parameter, and $z_{ij}=\vartheta_{ij}-4\xi(\vartheta_{ij})$.
Therefore, by the summation property and $\forall i;a_i\geq 0$, the majorization function of $L_W(\bm{V},\bm{W},\bm{X})$ is 
\begin{align*}
    \frac{1}{8}\sum_{i=1}^n a_i\sum_{j=1}^m (z_{ij}-\theta_{ij})^2 + \mathrm{const.}\;
\end{align*}
Furthermore, in the matrix notation, we obtain
\begin{align*}
    \frac{1}{8}\sum_{i=1}^n a_i\sum_{j=1}^m (z_{ij}-\theta_{ij})^2 + \mathrm{const.}
    =\frac{1}{8}\|\bm{A}^{1/2}\bm{Z}-\bm{A}^{1/2}\bm{\Theta}\| + \mathrm{const.},
\end{align*}
where $\bm{\Theta}$ and $\bm{Z}$ are the matrices with elements $\theta_{ij}$ and $\bm{z}_{ij}$, respectively.
Because each $\theta_{ij}$ is expressed as $t_i\bm{v}_{(j)}^\prime\bm{W}^\prime\bm{x}_i$ and $\bm{\Theta}=\bm{TXWV}^\prime$, it follows that 
\begin{align*}
   \frac{1}{8}\|\bm{A}^{1/2}\bm{Z}-\bm{A}^{1/2}\bm{TXWV}^\prime\|_F^2 + \mathrm{const.}.
\end{align*}
Here, since $z_{ij}=t_i\bm{v}^{(t)\prime}_{(j)}\bm{W}^{(t)\prime}\bm{x}_i +4y_{ij}\frac{\exp(-t_i\bm{v}^{(t)\prime}_{(j)}\bm{W}^{(t)\prime}\bm{x}_i)}{1+\exp(-t_i\bm{v}^{(t)\prime}_{(j)}\bm{W}^{(t)\prime}\bm{x}_i)}$ in the $t$th iteration, $\bm{Z}=\bm{TXW}^{(t)}\bm{V}^{(t)\prime}+4\bm{Y}\odot\bm{\Phi}^{(t)}\in\mathbb{R}^{n\times m}$.
By excluding this constant factor, we can derive $\mathcal{M}_W(\bm{V},\bm{W})$.
\end{proof}
Therefore, in every iteration of the MM algorithm, the following minimization problem should be considered:
\begin{gather}
\label{obj}
    \underset{\bm{V},\bm{W}}{\min}\;\mathcal{M}_W(\bm{V},\bm{W})\;\mathrm{subject\;to\;}\bm{V}^\prime\bm{V}=\bm{I}_r.
\end{gather}
We present the following two propositions for estimating the parameters $\bm{V}$ and $\bm{W}$.
\begin{prop}
Given $\bm{W}$, $\bm{V}$ is updated to $\hat{\bm{V}}$ to solve the minimization problem Eq. (\ref{obj}).
\begin{align*}
\hat{\bm{V}}=\bm{KL}^\prime
\end{align*}
where $2\bm{Z}^\prime\bm{A}\bm{TXW}=\bm{K\Lambda L}^\prime$ by the singular value decomposition.
$\bm{K}\in\mathbb{R}^{m\times r}$ and $\bm{L}\in\mathbb{R}^{r\times r}$ are left and right singular vectors, respectively.
$\bm{\Lambda}=\mathrm{diag}(\lambda_1,\lambda_2,\cdots,\lambda_r)$ denotes a square-diagonal matrix.
\end{prop}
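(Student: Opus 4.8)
The plan is to recognize the subproblem in Eq.~(\ref{obj}) with $\bm{W}$ held fixed as an orthogonal Procrustes problem over the Stiefel manifold $\{\bm{V}\in\mathbb{R}^{m\times r}:\bm{V}^\prime\bm{V}=\bm{I}_r\}$ and to solve it through the singular value decomposition in the standard way.

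First I would expand the squared Frobenius norm, using $\bm{A}^{1/2}\bm{A}^{1/2}=\bm{A}$ and the cyclic property of the trace,
\begin{align*}
\mathcal{M}_W(\bm{V},\bm{W})
=\|\bm{A}^{1/2}\bm{Z}\|_F^2
-2\,\mathrm{tr}\!\left(\bm{V}^\prime\bm{Z}^\prime\bm{A}\bm{TXW}\right)
+\mathrm{tr}\!\left(\bm{W}^\prime\bm{X}^\prime\bm{T}^\prime\bm{A}\bm{TXW}\,\bm{V}^\prime\bm{V}\right).
\end{align*}
Imposing $\bm{V}^\prime\bm{V}=\bm{I}_r$ reduces the last term to $\mathrm{tr}(\bm{W}^\prime\bm{X}^\prime\bm{T}^\prime\bm{A}\bm{TXW})$, which, together with $\|\bm{A}^{1/2}\bm{Z}\|_F^2$, is independent of $\bm{V}$. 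Hence minimizing $\mathcal{M}_W$ over feasible $\bm{V}$ is equivalent to maximizing $2\,\mathrm{tr}(\bm{V}^\prime\bm{Z}^\prime\bm{A}\bm{TXW})$; substituting the stated SVD $2\bm{Z}^\prime\bm{A}\bm{TXW}=\bm{K}\bm{\Lambda}\bm{L}^\prime$ and cycling the trace gives $2\,\mathrm{tr}(\bm{V}^\prime\bm{Z}^\prime\bm{A}\bm{TXW})=\mathrm{tr}(\bm{\Lambda}\bm{Q})$ with $\bm{Q}=\bm{L}^\prime\bm{V}^\prime\bm{K}\in\mathbb{R}^{r\times r}$.

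Next I would bound $\bm{Q}$: since $\bm{V}\bm{V}^\prime$ is an orthogonal projector in $\mathbb{R}^{m\times m}$ and $\bm{L}$ is orthogonal, $\bm{Q}^\prime\bm{Q}=\bm{K}^\prime\bm{V}\bm{V}^\prime\bm{K}\preceq\bm{K}^\prime\bm{K}=\bm{I}_r$, so every singular value of $\bm{Q}$ is at most $1$ and in particular $|Q_{kk}|\le1$ for every $k$. Because $\bm{\Lambda}=\mathrm{diag}(\lambda_1,\dots,\lambda_r)$ has nonnegative entries, $\mathrm{tr}(\bm{\Lambda}\bm{Q})=\sum_{k=1}^r\lambda_k Q_{kk}\le\sum_{k=1}^r\lambda_k$. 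Finally I would check that $\hat{\bm{V}}=\bm{K}\bm{L}^\prime$ is feasible, $\hat{\bm{V}}^\prime\hat{\bm{V}}=\bm{L}\bm{K}^\prime\bm{K}\bm{L}^\prime=\bm{I}_r$, and attains the bound, since it yields $\bm{Q}=\bm{L}^\prime\bm{L}\bm{K}^\prime\bm{K}=\bm{I}_r$; therefore $\hat{\bm{V}}=\bm{K}\bm{L}^\prime$ is a global minimizer of Eq.~(\ref{obj}) in $\bm{V}$.

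The trace manipulations are routine; the step deserving the most care is the Procrustes bound, i.e.\ that $\mathrm{tr}(\bm{\Lambda}\bm{Q})$ over feasible $\bm{V}$ cannot exceed $\sum_k\lambda_k$ and that this value is realized at $\bm{K}\bm{L}^\prime$ --- equivalently one may invoke von Neumann's trace inequality $\mathrm{tr}(\bm{\Lambda}\bm{Q})\le\sum_k\lambda_k\,\sigma_k(\bm{Q})\le\sum_k\lambda_k$. If one wished to record uniqueness (not claimed by the proposition), one would add that when all $\lambda_k>0$ equality forces $Q_{kk}=1$, and then $\langle\bm{e}_k,\bm{Q}\bm{e}_k\rangle=1$ with $\|\bm{Q}\bm{e}_k\|\le1$ yields $\bm{Q}\bm{e}_k=\bm{e}_k$ by the equality case of Cauchy--Schwarz, so $\bm{Q}=\bm{I}_r$ and $\hat{\bm{V}}=\bm{K}\bm{L}^\prime$ is the unique solution.
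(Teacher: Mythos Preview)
Your proof is correct and follows essentially the same approach as the paper: expand the Frobenius norm, observe that under the constraint $\bm{V}^\prime\bm{V}=\bm{I}_r$ only the cross term depends on $\bm{V}$, and solve the resulting trace-maximization (orthogonal Procrustes) problem via the SVD. The only difference is that the paper dispatches the Procrustes step by citing Theorem~A.4.2 of Adachi (2016), whereas you supply a self-contained argument via the bound $\bm{Q}^\prime\bm{Q}\preceq\bm{I}_r$ (or equivalently von Neumann's trace inequality) and verify attainment at $\bm{K}\bm{L}^\prime$; your added uniqueness remark is a bonus not present in the paper.
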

\begin{proof}
$\mathcal{M}_W(\bm{V},\bm{W})$ in Eq. ~ (\ref{obj}) can be written as 
\begin{align*}
    \|\bm{A}^{1/2}\bm{Z}-\bm{A}^{1/2}\bm{TXWV}^\prime\|_F^2
    =\mathrm{tr}\left(\bm{Z}^\prime\bm{A}\bm{Z}-2\bm{VW}^\prime\bm{X}^\prime\bm{TA}\bm{Z}+\bm{W}^\prime\bm{X}^\prime\bm{A}\bm{XW}\right).
\end{align*}
Considering only the term related to parameter $\bm{V}$,
\begin{align*}
    \underset{\bm{V}}{\mathrm{argmax}}\;2\mathrm{tr}\left((\bm{Z}^\prime\bm{A}\bm{TXW})^\prime\bm{V}\right)\;\mathrm{s.t.}\;\bm{V}^\prime\bm{V}=\bm{I}_r.
\end{align*}
Referring to Theorem A.4.2 in Adachi (2016) \cite{adachi2016matrix}, this maximization problem can be solved using singular value decomposition, thereby proving this proposition.
%Adachi (2016)におけるTheorem A.4.2を参照すれば，上記の最小化問題は以下の特異値分解によって求められるため，Propositionが証明された．
\end{proof}
\begin{prop}
    Suppose that $\bm{X}^\prime\bm{X}$ is a regular matrix.
    Given $\bm{V}$, $\bm{W}$ is updated to $\hat{\bm{W}}$ to solve the minimization problem in Eq. (\ref{obj}).
\begin{align*}
\hat{\bm{W}}=(\bm{X}^\prime\bm{AX})^{-1}\bm{X}^\prime\bm{TAZV}
\end{align*}
\end{prop}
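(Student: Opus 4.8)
The plan is to fix $\bm{V}$ subject to $\bm{V}^\prime\bm{V}=\bm{I}_r$ and minimize the quadratic form $\mathcal{M}_W(\bm{V},\bm{W})$ over the unconstrained matrix $\bm{W}$. First I would expand, exactly as in the proof of the previous proposition,
\begin{align*}
\mathcal{M}_W(\bm{V},\bm{W})
=\mathrm{tr}\!\left(\bm{Z}^\prime\bm{A}\bm{Z}\right)
-2\,\mathrm{tr}\!\left(\bm{V}\bm{W}^\prime\bm{X}^\prime\bm{T}\bm{A}\bm{Z}\right)
+\mathrm{tr}\!\left(\bm{V}\bm{W}^\prime\bm{X}^\prime\bm{T}\bm{A}\bm{T}\bm{X}\bm{W}\bm{V}^\prime\right),
\end{align*}
and then simplify the last term. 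Since $\bm{T}=\mathrm{diag}(t_1,\dots,t_n)$ with $t_i\in\{-1,1\}$ we have $\bm{T}^2=\bm{I}_n$, and because $\bm{T}$ and $\bm{A}$ are both diagonal they commute, so $\bm{T}\bm{A}\bm{T}=\bm{A}$; combined with $\bm{V}^\prime\bm{V}=\bm{I}_r$ and the cyclic invariance of the trace, the quadratic term collapses to $\mathrm{tr}(\bm{W}^\prime\bm{X}^\prime\bm{A}\bm{X}\bm{W})$. Hence, up to a term constant in $\bm{W}$,
\begin{align*}
\mathcal{M}_W(\bm{V},\bm{W})
=\mathrm{tr}(\bm{W}^\prime\bm{X}^\prime\bm{A}\bm{X}\bm{W})-2\,\mathrm{tr}(\bm{W}^\prime\bm{X}^\prime\bm{T}\bm{A}\bm{Z}\bm{V})+\mathrm{const.}
\end{align*}

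Next I would note that this is a convex quadratic in $\bm{W}$: its Hessian is governed by $\bm{X}^\prime\bm{A}\bm{X}$, which is positive definite because $a_i=(t_i\pi(\bm{x}_i)+(1-t_i)/2)^{-1}>0$ under the positivity assumption $0<\pi(\bm{x}_i)<1$, so $\bm{A}\succ\bm{0}$, and because $\bm{X}$ has full column rank by the hypothesis that $\bm{X}^\prime\bm{X}$ is regular. Consequently any stationary point is the unique global minimizer. Differentiating the displayed expression with respect to $\bm{W}$ via the identities $\partial\,\mathrm{tr}(\bm{W}^\prime\bm{B})/\partial\bm{W}=\bm{B}$ and $\partial\,\mathrm{tr}(\bm{W}^\prime\bm{C}\bm{W})/\partial\bm{W}=(\bm{C}+\bm{C}^\prime)\bm{W}$, and setting the gradient to $\bm{0}$, yields the normal equations $\bm{X}^\prime\bm{A}\bm{X}\bm{W}=\bm{X}^\prime\bm{T}\bm{A}\bm{Z}\bm{V}$. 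Left-multiplying by $(\bm{X}^\prime\bm{A}\bm{X})^{-1}$ gives $\hat{\bm{W}}=(\bm{X}^\prime\bm{A}\bm{X})^{-1}\bm{X}^\prime\bm{T}\bm{A}\bm{Z}\bm{V}$, as claimed.

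There is essentially no deep obstacle here — the argument is a constrained-then-unconstrained least-squares reduction — but the step requiring the most care is the algebraic simplification of the quadratic term, where one must use all three facts $\bm{T}^2=\bm{I}_n$, the commutativity of the diagonal matrices $\bm{T}$ and $\bm{A}$, and $\bm{V}^\prime\bm{V}=\bm{I}_r$ simultaneously; omitting any of them leaves a $\bm{V}$-dependent quadratic term and the closed form breaks down. A secondary point worth stating explicitly is why $\bm{X}^\prime\bm{A}\bm{X}$ (rather than merely $\bm{X}^\prime\bm{X}$) is invertible, which is exactly where the positivity of the propensity score enters.
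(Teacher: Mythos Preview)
Your proof is correct and follows essentially the same route as the paper: expand the Frobenius norm, isolate the $\bm{W}$-terms, differentiate, and solve the resulting normal equations. You are in fact more explicit than the paper, which silently writes the quadratic term as $\mathrm{tr}(\bm{W}^\prime\bm{X}^\prime\bm{A}\bm{X}\bm{W})$ without spelling out that $\bm{T}\bm{A}\bm{T}=\bm{A}$ and $\bm{V}^\prime\bm{V}=\bm{I}_r$ are needed, and which does not comment on convexity or on why $\bm{X}^\prime\bm{A}\bm{X}$ is invertible.
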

\begin{proof}
    $\mathcal{M}_W(\bm{V},\bm{W})$ in Eqs. (\ref{obj}) can be written as 
\begin{align*}
    \|\bm{A}^{1/2}\bm{Z}-\bm{A}^{1/2}\bm{TXWV}^\prime\|_F^2
    =\mathrm{tr}\left(\bm{Z}^\prime\bm{A}\bm{Z}-2\bm{VW}^\prime\bm{X}^\prime\bm{TA}\bm{Z}+\bm{W}^\prime\bm{X}^\prime\bm{A}\bm{XW}\right).
\end{align*}
Considering only the terms related to parameter $\bm{W}$,
\begin{align*}
    \underset{\bm{W}}{\mathrm{argmin}}\;-2\mathrm{tr}\left(\bm{W}^\prime\bm{X}^\prime\bm{TA}\bm{ZV}\right)+\mathrm{tr}\left(\bm{W}^\prime\bm{X}^\prime\bm{A}\bm{XW}\right).
\end{align*}
Differentiating with respect to $\bm{W}$ to determine $\hat{\bm{W}}$, it holds that
\begin{align*}
    -2\bm{X}^\prime\bm{TA}\bm{ZV}
    +2\bm{X}^\prime\bm{A}\bm{XW}=\bm{O}_{p\times r}
    \Rightarrow \hat{\bm{W}}=\left(\bm{X}^\prime\bm{A}\bm{X}\right)^{-1}\bm{X}^\prime\bm{TA}\bm{ZV}.
\end{align*}
\end{proof}
The pseudocode for the estimation algorithm is presented in Algorithm \ref{alg1}.
%これらの命題より，提案手法の推定アルゴリズムは以下となる．
\begin{figure}[t]
\begin{algorithm}[H]
    \caption{Logistic reduced rank regression based on the W-method}
    \label{alg1}
    \begin{algorithmic}[1]    
    \REQUIRE$\bm{X}$,$\bm{Y}$,$\bm{T}$,$\bm{A}$
    \ENSURE $\bm{W},\bm{V}$
    \STATE $\varepsilon>0$
    \STATE $t\leftarrow 1$
    \STATE Set initial value for $\bm{W}^{(0)}$,$\bm{V}^{(0)}$ 
    \WHILE{$\mathcal{M}_W(\bm{V}^{(t)},\bm{W}^{(t)})-\mathcal{M}_W(\bm{V}^{(t-1)},\bm{W}^{(t-1)})<\varepsilon$}
    \STATE $\bm{\Phi}^{(t)}=(\phi_{ij}^{(t)})\leftarrow \frac{\exp(-t_i\bm{v}_{(j)}^{(t-1)\prime}\bm{W}^{(t-1)\prime}\bm{x}_i)}{1+\exp(-t_i\bm{v}_{(j)}^{(t-1)\prime}\bm{W}^{(t-1)\prime}\bm{x}_i)}$
    \STATE $\bm{Z}\leftarrow \bm{TX}\bm{W}^{(t-1)}\bm{V}^{(t-1)\prime}+4\bm{Y}\odot\bm{\Phi}^{(t)}$
    \STATE  $\bm{K}\bm{\Lambda}\bm{L}^{\prime}$ $\leftarrow$ singular value decomposition of $2\bm{Z}^\prime\bm{A}\bm{TXW}^{(t-1)}$
     \STATE $\bm{V}^{(t)}\leftarrow\bm{K}\bm{L}^{\prime}$
     \STATE $\bm{W}^{(t)}\leftarrow (\bm{X}^\prime\bm{AX})^{-1}\bm{X}^\prime\bm{TAZV}^{(t)}$
     \STATE $t \leftarrow t+1$
    \ENDWHILE
      \STATE$\bm{W} \leftarrow \bm{W}^{(t)}$
      \STATE$\bm{V} \leftarrow \bm{V}^{(t)}$
    \RETURN $\bm{W},\bm{V}$
    \end{algorithmic}
\end{algorithm}
\end{figure}
\subsection{Estimation algorithm in the framework of the A-learner}
\label{est_A}
In this subsection, we describe the estimation algorithm for solving the optimization problem within Learner A’s framework. Similar to the W method, we use the MM algorithm. The majorization function is derived as follows:
\begin{prop}
    The majorization function of $L_A(\bm{V},\bm{W},\bm{X})$ is derived as 
    \begin{align*}
        \mathcal{M}_A(\bm{V},\bm{W})=\|\bm{Z}^{\dagger}-\bm{QXWV}^\prime\|_F^2,
    \end{align*}
    where $\bm{Q}=\mathrm{diag}(q_1,q_2,\cdots,q_n),\;q_i=(t_i+1)/2-\pi(\bm{x}_i)$, and $\bm{Z}^\dagger=\bm{QX}\bm{W}^{(t)}\bm{V}^{(t)\prime} + 4\bm{Y}\odot \bm{\Phi}^{\dagger(t)}\in\mathbb{R}^{n\times m}$ using the $\bm{V}^{(t)}$ and $\bm{W}^{(t)}$ from the $t$th iteration.
    $\bm{\Phi}^{\dagger(t)}\in [0,1]^{n\times m}$ is a matrix, in which each element is equal to $\phi_{ij}^{\dagger(t)}$.
    \begin{align*}
        \phi_{ij}^{\dagger(t)}
        =
        \frac{\exp(-q_i\bm{v}_{(j)}^{(t)\prime}\bm{W}^{(t)\prime}\bm{x}_i)}{1+\exp(-q_i\bm{v}_{(j)}^{(t)\prime}\bm{W}^{(t)\prime}\bm{x}_i)}\;(i=1,2,\cdots,n;\;j=1,2,\cdots,m).
    \end{align*}
\end{prop}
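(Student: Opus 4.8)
The plan is to mirror the proof of Proposition~\ref{prop_w} almost verbatim; the only structural change is that $q_i=(t_i+1)/2-\pi(\bm{x}_i)$ now appears \emph{inside} the linear predictor rather than as a multiplicative weight on the per-subject loss. First I would write $L_A(\bm{V},\bm{W},\bm{X})=\sum_{i=1}^n\sum_{j=1}^m y_{ij}\log\{1+\exp(-\theta_{ij})\}$ with $\theta_{ij}:=q_i\bm{v}_{(j)}^\prime\bm{W}^\prime\bm{x}_i$, and isolate the scalar term $\mathcal{L}(\theta_{ij})=y_{ij}\log\{1+\exp(-\theta_{ij})\}$. This is precisely the function majorized in Proposition~\ref{prop_w}, so the computation of $\xi(\theta_{ij})=\partial\mathcal{L}/\partial\theta_{ij}=-y_{ij}\exp(-\theta_{ij})/\{1+\exp(-\theta_{ij})\}$ and the second-derivative bound $\partial^2\mathcal{L}/\partial\theta_{ij}^2\le 1/4$ carry over unchanged.

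Next I would apply the same quadratic majorization about the current iterate: with $\vartheta_{ij}:=q_i\bm{v}_{(j)}^{(t)\prime}\bm{W}^{(t)\prime}\bm{x}_i$ and $z_{ij}:=\vartheta_{ij}-4\xi(\vartheta_{ij})$, a Taylor expansion with the $1/4$ curvature bound gives $\mathcal{L}(\theta_{ij})\le \tfrac18(z_{ij}-\theta_{ij})^2+\mathrm{const}$, exactly as in the W-method derivation. Summing over $i$ and $j$ then yields $L_A(\bm{V},\bm{W},\bm{X})\le \tfrac18\sum_{i=1}^n\sum_{j=1}^m(z_{ij}-\theta_{ij})^2+\mathrm{const}$; note that here, unlike the W-method, no nonnegativity argument for outer weights $a_i$ is needed, since the loss carries no such weight.

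Finally I would pass to matrix notation. Since the $(i,j)$ entry of $\bm{QXWV}^\prime$ is $q_i\bm{x}_i^\prime\bm{W}\bm{v}_{(j)}=\theta_{ij}$, the array $(\theta_{ij})$ equals $\bm{\Theta}=\bm{QXWV}^\prime$; similarly $z_{ij}=q_i\bm{v}_{(j)}^{(t)\prime}\bm{W}^{(t)\prime}\bm{x}_i+4y_{ij}\phi_{ij}^{\dagger(t)}$ assembles into $\bm{Z}^\dagger=\bm{QXW}^{(t)}\bm{V}^{(t)\prime}+4\bm{Y}\odot\bm{\Phi}^{\dagger(t)}$. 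Hence $\tfrac18\sum_{i,j}(z_{ij}-\theta_{ij})^2+\mathrm{const}=\tfrac18\|\bm{Z}^\dagger-\bm{QXWV}^\prime\|_F^2+\mathrm{const}$, and discarding the positive factor $\tfrac18$ and the additive constant gives $\mathcal{M}_A(\bm{V},\bm{W})$. I would also verify the two defining properties of a majorizer: the inequality $L_A\le\mathcal{M}_A$ is the display just obtained, and tangency at $(\bm{V}^{(t)},\bm{W}^{(t)})$ holds because there $\theta_{ij}=\vartheta_{ij}$, making every scalar bound tight. The one place demanding a little care is the sign of $q_i$, which is negative whenever $t_i=-1$; this causes no difficulty precisely because $q_i$ sits inside $\theta_{ij}$, so the quadratic upper bound on $\mathcal{L}$ viewed as a function of $\theta_{ij}$ is untouched, and the only genuine content of the proof is the matrix-assembly identity — there is no real analytic obstacle.
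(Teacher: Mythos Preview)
Your proposal is correct and follows essentially the same route as the paper's own proof: rewrite $L_A$ in terms of $\theta_{ij}=q_i\bm{v}_{(j)}^\prime\bm{W}^\prime\bm{x}_i$, reuse the scalar first/second-derivative computation and the $1/4$ curvature bound from Proposition~\ref{prop_w} to obtain the quadratic majorizer $\tfrac18(z_{ij}-\theta_{ij})^2+\mathrm{const}$, sum over $i,j$, and assemble into $\tfrac18\|\bm{Z}^\dagger-\bm{QXWV}^\prime\|_F^2+\mathrm{const}$. Your additional remarks (no nonnegativity of outer weights needed, the sign of $q_i$ is harmless because it lives inside $\theta_{ij}$, explicit check of tangency) are accurate refinements that the paper leaves implicit.
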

\begin{proof}
$L_A(\bm{V},\bm{W},\bm{X})$ can be written as
\begin{align}
\nonumber
    L_A(\bm{V},\bm{W},\bm{X})
    &=\sum_{i=1}^n \sum_{j=1}^m y_{ij}\log \left\{1+\exp\left(-\{(t_i+1)/2-\pi(\bm{x}_i)\}\times \bm{v}_{(j)}^\prime\bm{W}^\prime\bm{x}_i\right)\right\}
    \\
    \label{MM_A}
    &=\sum_{i=1}^n \sum_{j=1}^m y_{ij}\log \left\{1+\exp\left(-q_i\bm{v}_{(j)}^\prime\bm{W}^\prime\bm{x}_i\right)\right\}
    =\sum_{i=1}^n \sum_{j=1}^m y_{ij}\log \left\{1+\exp\left(-\theta_{ij}^\dagger\right)\right\},
\end{align}
where $\theta^\dagger_{ij}=q_i\bm{v}_{(j)}^\prime\bm{W}^\prime\bm{x}_i$.
From the Eq. (\ref{MM_A}), we isolate the terms involving parameter $\theta^\dagger_{ij}$ and define them as $\mathcal{L}(\theta^\dagger_{ij})$, that is,
\begin{align*}
    \mathcal{L}(\theta_{ij}^\dagger)
    =y_{ij}\log \left\{1+\exp\left(-\theta_{ij}^\dagger\right)\right\}.
\end{align*}
To derive the majorization function of $\mathcal{L}(\theta_{ij})$, we calculate the first and second derivates of $\mathcal{L}(\theta_{ij})$ with respect to the parameter.
Because these are the same as those in the weighting approach, the following holds:
%これらはWeighting approachのものと全く同様であるため，以下となる．
\begin{gather*}
    \xi(\theta^{\dagger}_{ij})\equiv \frac{\partial \mathcal{L}(\theta^{\dagger}_{ij})}{\partial \theta^{\dagger}_{ij}}
    =-y_{ij}\frac{\exp(-\theta^{\dagger}_{ij})}{1+\exp(-\theta^{\dagger}_{ij})}.
    \\
    \frac{\partial^2 \mathcal{L}(\theta^{\dagger}_{ij})}{\partial \theta^{\dagger2}_{ij}}
    =y_{ij}\frac{\exp(-\theta^{\dagger}_{ij})}{1+\exp(-\theta^{\dagger}_{ij})}\frac{1}{1+\exp(-\theta^{\dagger}_{ij})}\leq \frac{1}{4}.
\end{gather*}
Thus, we majorize this function using the least squares function in the same manner as in the proof of Proposition \ref{prop_w}:
\begin{align*}
    \mathcal{L}(\theta^{\dagger}_{ij})
    &\leq \mathcal{L}(\vartheta^\dagger_{ij})
    +\xi (\vartheta^\dagger_{ij})(\theta^{\dagger}_{ij}-\vartheta^\dagger_{ij})
    +\frac{1}{8}(\theta^{\dagger}_{ij}-\vartheta^\dagger_{ij})^2
    \\
    &=\frac{1}{8}(z_{ij}^\dagger-\theta^{\dagger}_{ij})^2+\mathrm{const.}\;,
\end{align*}
where $\vartheta^\dagger_{ij}=q_i\bm{v}^{(t)\prime}_{(j)}\bm{W}^{(t)\prime}\bm{x}_i$ in $t$th iteration, and $z_{ij}^\dagger=\vartheta^\dagger_{ij}-4\xi(\vartheta^\dagger_{ij})$.
Therefore, owing to the summation property, the majorization function of $L_A(\bm{V},\bm{W},\bm{X})$ is 
\begin{align*}
    \frac{1}{8}\sum_{i=1}^n\sum_{j=1}^m (z_{ij}^\dagger-\theta^\dagger_{ij})^2 + \mathrm{const.}\;
\end{align*}
In the matrix notation, we obtain
\begin{align*}
    \frac{1}{8}\sum_{i=1}^n \sum_{j=1}^m (z_{ij}^\dagger-\theta^\dagger_{ij})^2 + \mathrm{const.}
    =\frac{1}{8}\|\bm{Z}^\dagger-\bm{\Theta}^\dagger\| + \mathrm{const.}
\end{align*}
where $\bm{\Theta}^\dagger$ and $\bm{Z}^\dagger$ are the matrices with elements $\theta_{ij}^\dagger$ and $\bm{z}_{ij}^\dagger$, respectively.
Because each $\theta_{ij}^\dagger$ is expressed as $q_i\bm{v}_{(j)}^\prime\bm{W}^\prime\bm{x}_i$ and $\bm{\Theta}=\bm{QXWV}^\prime$, it follows that 
\begin{align*}
   \frac{1}{8}\|\bm{Z}^\dagger-\bm{QXWV}^\prime\|_F^2 + \mathrm{const.}
\end{align*}
Here, since $z_{ij}^\dagger=q_i\bm{v}^{(t)\prime}_{(j)}\bm{W}^{(t)\prime}\bm{x}_i +4y_{ij}\frac{\exp(-M_i\bm{v}^{(t)\prime}_{(j)}\bm{W}^{(t)\prime}\bm{x}_i)}{1+\exp(-M_i\bm{v}^{(t)\prime}_{(j)}\bm{W}^{(t)\prime}\bm{x}_i)}$ in the $t$th iteration, $\bm{Z}^\dagger=\bm{QXW}^{(t)}\bm{V}^{(t)\prime}+4\bm{Y}\odot\bm{\Phi}^{\dagger(t)}\in\mathbb{R}^{n\times m}$.
By excluding this constant factor, $\mathcal{M}_A(\bm{V},\bm{W})$ is derived.
\end{proof}
Therefore, in every iteration of the MM algorithm, the following minimization problem should be considered:
\begin{gather}
\label{obj2}
    \underset{\bm{V},\bm{W}}{\min}\;\mathcal{M}_A(\bm{V},\bm{W})\;\mathrm{subject\;to\;}\bm{V}^\prime\bm{V}=\bm{I}_r.
    \end{gather}
We present the following two propositions for estimating the parameters $\bm{V}$ and $\bm{W}$.
\begin{prop}
Given $\bm{W}$, $\bm{V}$ is updated to $\hat{\bm{V}}$ to solve the minimization problem Eq. (\ref{obj2}).
\begin{align*}
\hat{\bm{V}}=\bm{K}^\dagger\bm{L}^{\dagger\prime}
\end{align*}
where $2\bm{Z}^{\dagger\prime}\bm{QXW}=\bm{K}^\dagger\bm{\Lambda}^\dagger\bm{L}^{\dagger\prime}$ through singular value decomposition.
$\bm{K}^\dagger\in\mathbb{R}^{m\times r}$ and $\bm{L}^\dagger\in\mathbb{R}^{r\times r}$ are the left- and right-singular vectors, respectively.
$\bm{\Lambda}^\dagger=\mathrm{diag}(\lambda^\dagger_1,\lambda^\dagger_2,\cdots,\lambda^\dagger_r)$ is a square-diagonal matrix.
\end{prop}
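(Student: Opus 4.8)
The plan is to follow exactly the template of the corresponding proposition for the W-method (the one giving $\hat{\bm{V}}=\bm{K}\bm{L}^\prime$), since $\mathcal{M}_A$ has the same bilinear-in-$(\bm{V},\bm{W})$ structure as $\mathcal{M}_W$, with the diagonal weight $\bm{A}^{1/2}$ simply replaced by the diagonal matrix $\bm{Q}$ sitting inside the linear predictor. First I would expand the squared Frobenius norm as a trace,
\begin{align*}
\mathcal{M}_A(\bm{V},\bm{W})
=\mathrm{tr}\!\left(\bm{Z}^{\dagger\prime}\bm{Z}^{\dagger}\right)
-2\,\mathrm{tr}\!\left(\bm{V}\bm{W}^\prime\bm{X}^\prime\bm{Q}\bm{Z}^{\dagger}\right)
+\mathrm{tr}\!\left(\bm{V}\bm{W}^\prime\bm{X}^\prime\bm{Q}^{2}\bm{X}\bm{W}\bm{V}^\prime\right),
\end{align*}
where I have used that $\bm{Q}$ is diagonal, hence symmetric, so $\bm{Q}^\prime\bm{Q}=\bm{Q}^{2}$, and that the two cross terms are transposes of one another and therefore equal.

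Next I would impose the constraint $\bm{V}^\prime\bm{V}=\bm{I}_r$. By the cyclic property of the trace, the quadratic-in-$\bm{V}$ term becomes $\mathrm{tr}\!\left(\bm{W}^\prime\bm{X}^\prime\bm{Q}^{2}\bm{X}\bm{W}\,\bm{V}^\prime\bm{V}\right)=\mathrm{tr}\!\left(\bm{W}^\prime\bm{X}^\prime\bm{Q}^{2}\bm{X}\bm{W}\right)$, which does not involve $\bm{V}$, and the first term is a constant as well. Hence, on the feasible set, minimizing $\mathcal{M}_A$ over $\bm{V}$ for fixed $\bm{W}$ is equivalent to
\begin{align*}
\underset{\bm{V}:\,\bm{V}^\prime\bm{V}=\bm{I}_r}{\mathrm{argmax}}\;
2\,\mathrm{tr}\!\left(\left(\bm{Z}^{\dagger\prime}\bm{Q}\bm{X}\bm{W}\right)^{\prime}\bm{V}\right),
\end{align*}
where I have again used cyclicity together with $\bm{Q}=\bm{Q}^\prime$ to rewrite the linear term in the Procrustes form $\mathrm{tr}(\bm{B}^\prime\bm{V})$ with $\bm{B}=2\bm{Z}^{\dagger\prime}\bm{Q}\bm{X}\bm{W}\in\mathbb{R}^{m\times r}$.

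Finally, this is an orthogonally constrained linear trace maximization, and it is solved exactly as in the W-method proposition: by Theorem A.4.2 of Adachi (2016) \cite{adachi2016matrix}, writing the singular value decomposition $2\bm{Z}^{\dagger\prime}\bm{Q}\bm{X}\bm{W}=\bm{K}^{\dagger}\bm{\Lambda}^{\dagger}\bm{L}^{\dagger\prime}$ with $\bm{K}^{\dagger}\in\mathbb{R}^{m\times r}$ and $\bm{L}^{\dagger}\in\mathbb{R}^{r\times r}$ the left- and right-singular vectors and $\bm{\Lambda}^{\dagger}$ a nonnegative square-diagonal matrix, the maximizer is $\hat{\bm{V}}=\bm{K}^{\dagger}\bm{L}^{\dagger\prime}$, which is the claim. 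I do not expect any genuinely hard step here; the proof is essentially bookkeeping. The only points requiring care are keeping the transposes straight when expanding $\|\bm{Z}^{\dagger}-\bm{Q}\bm{X}\bm{W}\bm{V}^\prime\|_F^{2}$, and verifying that the term quadratic in $\bm{V}$ collapses to a $\bm{V}$-free constant under the orthogonality constraint — it is this collapse that reduces the problem to a clean orthogonal Procrustes problem rather than a harder constrained quadratic, and it depends on the constraint being imposed on $\bm{V}$ (not on $\bm{W}$).
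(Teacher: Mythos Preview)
Your proposal is correct and follows essentially the same route as the paper's own proof: expand the squared Frobenius norm as a trace, observe that under the constraint $\bm{V}^\prime\bm{V}=\bm{I}_r$ only the cross term depends on $\bm{V}$, and invoke Theorem A.4.2 of Adachi (2016) on the resulting Procrustes problem $\max_{\bm{V}^\prime\bm{V}=\bm{I}_r}\mathrm{tr}\bigl((\bm{Z}^{\dagger\prime}\bm{Q}\bm{X}\bm{W})^\prime\bm{V}\bigr)$. If anything, you are slightly more explicit than the paper in justifying why the quadratic term $\mathrm{tr}(\bm{V}\bm{W}^\prime\bm{X}^\prime\bm{Q}^2\bm{X}\bm{W}\bm{V}^\prime)$ collapses to a $\bm{V}$-free constant via cyclicity and the orthogonality constraint.
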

\begin{proof}
$\mathcal{M}_A(\bm{V},\bm{W})$ in Eq. ~ (\ref{obj2}) can be written as 
\begin{align*}
    \|\bm{Z}^\dagger-\bm{QXWV}^\prime\|_F^2
    =\mathrm{tr}\left(\bm{Z}^{\dagger\prime}\bm{Z}^\dagger-2\bm{VW}^\prime\bm{X}^\prime\bm{Q}\bm{Z}^\dagger+\bm{W}^\prime\bm{X}^\prime\bm{Q}^2\bm{XW}\right),
\end{align*}
where $\bm{Q}^2$ is the square of matrix $\bm{Q}$.
Considering only the term related to parameter $\bm{V}$,
\begin{align*}
    \underset{\bm{V}}{\mathrm{argmax}}\;2\mathrm{tr}\left((\bm{Z}^{\dagger\prime}\bm{QXW})^\prime\bm{V}\right)\;\mathrm{subject\;to}\;\bm{V}^\prime\bm{V}=\bm{I}_r.
\end{align*}
Referring to Theorem A.4.2 in Adachi (2016) \cite{adachi2016matrix}, this maximization problem can be solved using singular value decomposition, thereby proving this proposition.
%Adachi (2016)におけるTheorem A.4.2を参照すれば，上記の最小化問題は以下の特異値分解によって求められるため，Propositionが証明された．
\end{proof}
\begin{prop}
    Suppose that $\bm{X}^\prime\bm{X}$ is a regular matrix.
    Given $\bm{V}$, $\bm{W}$ is updated to $\hat{\bm{W}}$ to solve the minimization problem in Eq. (\ref{obj2}).
\begin{align*}
\hat{\bm{W}}=\left(\bm{X}^\prime\bm{Q}^2\bm{X}\right)^{-1}\bm{X}^\prime\bm{Q}\bm{Z}^\dagger\bm{V}
\end{align*}
\end{prop}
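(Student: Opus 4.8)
The plan is to solve the fixed-$\bm{V}$ restricted least-squares problem $\min_{\bm{W}}\mathcal{M}_A(\bm{V},\bm{W})$ exactly as in the corresponding W-method proposition, only with the diagonal weight matrix $\bm{A}$ replaced by $\bm{Q}^2$. First I would expand the Frobenius norm in $\mathcal{M}_A(\bm{V},\bm{W})=\|\bm{Z}^\dagger-\bm{QXWV}^\prime\|_F^2$ as a trace,
\begin{align*}
\mathcal{M}_A(\bm{V},\bm{W})
=\mathrm{tr}\!\left(\bm{Z}^{\dagger\prime}\bm{Z}^\dagger-2\bm{VW}^\prime\bm{X}^\prime\bm{Q}\bm{Z}^\dagger+\bm{W}^\prime\bm{X}^\prime\bm{Q}^2\bm{XW}\right),
\end{align*}
where the orthogonality constraint $\bm{V}^\prime\bm{V}=\bm{I}_r$ has been used to collapse $\mathrm{tr}(\bm{V}\bm{W}^\prime\bm{X}^\prime\bm{Q}^2\bm{X}\bm{W}\bm{V}^\prime)$ into $\mathrm{tr}(\bm{W}^\prime\bm{X}^\prime\bm{Q}^2\bm{X}\bm{W})$. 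Discarding the $\bm{W}$-free term leaves the objective $-2\,\mathrm{tr}(\bm{W}^\prime\bm{X}^\prime\bm{Q}\bm{Z}^\dagger\bm{V})+\mathrm{tr}(\bm{W}^\prime\bm{X}^\prime\bm{Q}^2\bm{X}\bm{W})$, a quadratic form in $\bm{W}$.

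Next I would differentiate with respect to $\bm{W}$ and impose the first-order condition, giving the normal equation $-2\bm{X}^\prime\bm{Q}\bm{Z}^\dagger\bm{V}+2\bm{X}^\prime\bm{Q}^2\bm{X}\bm{W}=\bm{O}_{p\times r}$, hence $\hat{\bm{W}}=(\bm{X}^\prime\bm{Q}^2\bm{X})^{-1}\bm{X}^\prime\bm{Q}\bm{Z}^\dagger\bm{V}$, which is the asserted update. To justify that the inverse exists, note that $\bm{Q}=\mathrm{diag}(q_1,\dots,q_n)$ has $q_i=(t_i+1)/2-\pi(\bm{x}_i)$ equal to $1-\pi(\bm{x}_i)$ when $t_i=1$ and to $-\pi(\bm{x}_i)$ when $t_i=-1$; by the positivity assumption $0<\pi(\bm{x}_i)<1$ every $q_i\neq 0$, so $\bm{Q}^2$ is a diagonal matrix with strictly positive entries and $\bm{X}^\prime\bm{Q}^2\bm{X}=(\bm{Q}\bm{X})^\prime(\bm{Q}\bm{X})$ is nonsingular precisely when $\bm{X}$ has full column rank, i.e.\ under the stated hypothesis that $\bm{X}^\prime\bm{X}$ is regular. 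Since the quadratic form in $\bm{W}$ has positive-definite curvature matrix $\bm{X}^\prime\bm{Q}^2\bm{X}$, the unique stationary point $\hat{\bm{W}}$ is the global minimizer, so it solves Eq.~(\ref{obj2}) for the given $\bm{V}$.

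There is no real obstacle here: the argument is the routine closed-form solution of a matrix-valued least-squares problem and runs in parallel with the W-method proposition almost verbatim. The only points demanding a little care are (i) using $\bm{V}^\prime\bm{V}=\bm{I}_r$ to eliminate $\bm{V}$ from the quadratic term in $\bm{W}$, and (ii) verifying that substituting $\bm{Q}^2$ for the W-method weights $\bm{A}$ does not spoil invertibility, which follows immediately from the positivity of the propensity score.
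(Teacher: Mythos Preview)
Your proposal is correct and follows essentially the same route as the paper: expand $\mathcal{M}_A(\bm{V},\bm{W})$ as a trace, isolate the $\bm{W}$-dependent terms, differentiate, and solve the resulting normal equation. Your treatment is in fact slightly more careful than the paper's, which does not explicitly invoke $\bm{V}^\prime\bm{V}=\bm{I}_r$ when simplifying the quadratic term nor verify the invertibility of $\bm{X}^\prime\bm{Q}^2\bm{X}$ from the positivity assumption on $\pi(\bm{x}_i)$.
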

\begin{proof}
    $\mathcal{M}_A(\bm{V},\bm{W})$ in Eqs. (\ref{obj}) can be written as 
\begin{align*}
   \|\bm{Z}^\dagger-\bm{QXWV}^\prime\|_F^2
    =\mathrm{tr}\left(\bm{Z}^{\dagger\prime}\bm{Z}^\dagger-2\bm{VW}^\prime\bm{X}^\prime\bm{Q}\bm{Z}^\dagger+\bm{W}^\prime\bm{X}^\prime\bm{Q}^2\bm{XW}\right).
\end{align*}
Considering only the terms related to parameter $\bm{W}$,
\begin{align*}
    \underset{\bm{W}}{\mathrm{argmin}}\;-2\mathrm{tr}\left(\bm{W}^\prime\bm{X}^\prime\bm{Q}\bm{Z}^\dagger\bm{V}\right)+\mathrm{tr}\left(\bm{W}^\prime\bm{X}^\prime\bm{Q}^2\bm{XW}\right)
\end{align*}
Differentiating with respect to $\bm{W}$ to determine $\hat{\bm{W}}$, it holds that
\begin{align*}
    -2\bm{X}^\prime\bm{Q}\bm{Z}^\dagger\bm{V}
    +2\bm{X}^\prime\bm{Q}^2\bm{XW}=\bm{O}_{p\times r}
    \Rightarrow \hat{\bm{W}}=\left(\bm{X}^\prime\bm{Q}^2\bm{X}\right)^{-1}\bm{X}^\prime\bm{Q}\bm{Z}^\dagger\bm{V}.
\end{align*}
\end{proof}
The pseudocode for the estimation algorithm is presented in Algorithm \ref{alg2}.
\begin{figure}[t]
\begin{algorithm}[H]
    \caption{Logistic reduced rank regression based on the A-learner}
    \label{alg2}
    \begin{algorithmic}[1]    
    \REQUIRE$\bm{X}$,$\bm{Y}$,$\bm{T}$,$\bm{Q}$
    \ENSURE $\bm{W},\bm{V}$
    \STATE $\varepsilon>0$
    \STATE $t\leftarrow 1$
    \STATE Set initial value for $\bm{W}^{(0)}$,$\bm{V}^{(0)}$ 
\WHILE{$\mathcal{M}_A(\bm{V}^{(t)},\bm{W}^{(t)})-\mathcal{M}_A(\bm{V}^{(t-1)},\bm{W}^{(t-1)})<\varepsilon$}
    \STATE $\bm{\Phi}^{\dagger(t)}=(\phi^{\dagger(t)}_{ij})\leftarrow \frac{\exp(-q_i\bm{v}_{(j)}^{(t-1)\prime}\bm{W}^{(t-1)\prime}\bm{x}_i)}{1+\exp(-q_i\bm{v}_{(j)}^{(t-1)\prime}\bm{W}^{(t-1)\prime}\bm{x}_i)}$
    \STATE $\bm{Z}^\dagger\leftarrow \bm{QX}\bm{W}^{(t-1)}\bm{V}^{(t-1)\prime}+4\bm{Y}\odot\bm{\Phi}^{\dagger(t)}$
    \STATE  $\bm{K}^\dagger\bm{\Lambda}^\dagger\bm{L}^{\dagger\prime}$ $\leftarrow$ singular value decomposition of $2\bm{Z}^{\dagger\prime}\bm{Q}\bm{XW}^{(t-1)}$
     \STATE $\bm{V}^{(t)}\leftarrow\bm{K}^\dagger\bm{L}^{\dagger\prime}$
     \STATE $\bm{W}^{(t)}\leftarrow (\bm{X}^\prime\bm{Q}^2\bm{X})^{-1}\bm{X}^\prime\bm{QZ}^\dagger\bm{V}^{(t)}$
     \STATE $t \leftarrow t+1$
    \ENDWHILE
      \STATE$\bm{W} \leftarrow \bm{W}^{(t)}$
      \STATE$\bm{V} \leftarrow \bm{V}^{(t)}$
    \RETURN $\bm{W},\bm{V}$
    \end{algorithmic}
\end{algorithm}
\end{figure}
\section{Numerical study}
In this section, we present the simulation designs, including data generation, comparison methods, evaluation indices, and results of the numerical simulations.
First, we describe the data-generation method.
We generate a matrix corresponding to binary outcome $\bm{Y}\in\{0,1\}^{n\times m}$ as $Y_{ij}=I(Y_{ij}^* >0)$, where $Y_{ij}^*$ is the $(i,j)$th element of $\bm{Y}^*=\bm{XD}\odot \bm{XD}+\bm{TXWV}^\prime+\bm{E}$. 
This generation method was performed in the same manner as that used by Tian et al. (2014) \cite{tian2014simple}. The first term represents the main effect and $\bm{XWV}^\prime$ in the second term corresponds to the treatment effect. The matrix of explanatory variables is $\bm{X}\in\mathbb{R}^{n\times p}\sim N(\bm{0}_p,(1-\rho_1)\bm{I}_p+\rho_1\bm{1}_p\bm{1}^\prime_p)$, the coefficient matrix corresponding to the main effect is $\bm{D}=(\bm{d}_1,\bm{d}_2,\cdots,\bm{d}_m)\in\mathbb{R}^{p\times m}$, and the matrix of the random error is $\bm{E}\in\mathbb{R}^{n\times m}\sim N(\bm{0}_m,(1-\rho_2)\bm{I}_m+\rho_2\bm{1}_m\bm{1}^\prime_m)$, where $\boldsymbol{1}_{p}$ is the $p$-dimensional vector whose elements are all one.
The treatment assignment $T_i$ for the $i$-th subject is generated from a Bernoulli distribution satisfying $P(T_i=1)=P(T_i=-1)=0.5$ in the case of RCTs, and from simple logistic model $P(T_i=1|\bm{X}_i=\bm{x}_i)=1/\{1+\exp (1-x_{i1})\}$, following Chen et al. (2017), in the case of observational studies.
The true values of $\bm{D}$ and $\bm{W}$ are generated as random variables from a normal distribution $N(0,1)$. The true values of $\bm{V}$ are obtained by performing QR decomposition on a matrix of random variables from a normal distribution $N(0,1)$. Here, we assume that we mistakenly estimate the nonlinear main effects using a linear function. We thus consider the following settings for sample size, number of variables, and hyperparameters:
\begin{enumerate}
    \item Sample size; $n=\{100,300,500\}$
    \item The number of explanatory variables; $p=\{10,50\}$
    \item The number of outcomes; $m=\{5,10\}$
    \item The rank of the coefficient matrix for estimating treatment effects; $r=\{3,5\}$
    \item Hyperparameters; $\rho_1=\{0,1/3,2/3\}$,\;$\rho_2=\{0,1/3,2/3\}$
\end{enumerate}
As the proposed method assumes a low-rank structure in the coefficient matrix, we set $m=10$ when $r=5$.
%ただし，mとrの組み合わせについては，~~~の三パターンとする．
The number of iterations for each setting is 100.
\par
Next, we describe the comparison methods.
These values are determined following the study of Chen et al. (2017) \cite{chen2017general}.
Specifically, they consist of the following seven components:
\begin{enumerate}
\item Full model (represented as ``Full,'' e.g., \cite{chen2017general} and \cite{tian2014simple});
using multiple outcomes $\bm{Y}$ as response variables, $\bm{X}$ as the explanatory variable for the main effect, and $\bm{TX}$ as the explanatory variables for the interaction effect, we first estimate the parameters using standard logistic regression.
Subsequently, we construct the HTE based on these estimates.
    % \item Full model (represented as ``Full", e.g., \cite{tian2014simple} and \cite{chen2017general}); we use logistic regression with multiple binary outcomes as the response variables, where $\bm{d}_j^\prime\bm{x}_i+t_i\bm{\gamma}_j^\prime\bm{x}_i$ is used as the linear predictor $\eta_{ij}$ for each $y_{ij}$, estimating $\bm{D}$ and $\bm{\varGamma}$.
    % $\bm{\varGamma}=(\bm{\gamma}_1,\bm{\gamma}_2,\cdots,\bm{\gamma}_m)\in\mathbb{R}^{p\times m}$ is the coefficient matrix for estimating treatment effects, but it has the rank deficiency.
    %それぞれのアウトカムを応答変数として，logistic regression
    %二値アウトカムを応答変数としたlogistic regressionであり，それぞれのy_{ij}に対して線形予測子\eta_{ij}としてBを用い，Dとgammaを推定する．
    \item Multivariate logistic regression in the framework of the A-learner (represented as ``MA'');
    we apply the A-learner to each outcome.
    Therefore, we solve the following minimization problem:
    \begin{align*}
        \underset{\bm{\varGamma}}{\min}\;\sum_{i=1}^n \sum_{j=1}^m y_{ij}\log \left\{1+\exp\left(-\{(t_i+1)/2-\pi(\bm{x}_i)\}\times \bm{\gamma}_{j}^\prime\bm{x}_i\right)\right\}
    \end{align*}
    %それぞれのアウトカムに対して,A-learnerを適用する．したがって，以下の最小化問題を解くことになる．
    \item The bias-corrected MA (represented as ``MAmod'');
    we adjust for the bias identified by Eq. (\ref{bias_mod}) in Theorem \ref{thm_bias} when computing the treatment effect using the estimated $\bm{\varGamma}$ from MA.
    %MRAによって推定されたVとWを用いて処置効果を推定する際，Theoremで求めたbiasを補正する．
    \item Multivariate logistic regression in the framework of the W-method (represented as ``MW'');
    we apply the W-method to each outcome.
    Therefore, we solve the following minimization problem:
    \begin{align*}
        \underset{\bm{\varGamma}}{\min}\;\sum_{i=1}^n \sum_{j=1}^m \frac{y_{ij}\log \left\{1+\exp\left(-t_i\bm{\gamma}_{j}^\prime\bm{x}_i\right)\right\}}{t_i\pi(\bm{x}_i)+(1-t_i)/2}
    \end{align*}
     \item Proposed method in the framework of the A-learner (represented as ``R3A'');
     we estimate $\bm{V}$ and $\bm{W}$ using Algorithm $\ref{alg2}$.
     %Algorithm 1によってVとWを推定する．
     \item The bias-corrected R3A (represented as ``R3Amod'');
     we adjust for the bias identified by Theorem \ref{bias_RRRA1} and \ref{bias_RRRA2} when computing the treatment effect using the estimated $\hat{\bm{W}}$ and $\hat{\bm{V}}$ from R3A as follows;
     \begin{align*}
         \log \frac{E[Y_{ij}|T_i=1,X_i=\bm{x}_i]}{E[Y_{ij}|T_i=-1,X_i=\bm{x}_i]}
    =\hat{\bm{v}}_{(j)}^\prime\hat{\bm{W}}^\prime\bm{x}_i+
    \log \frac{1+e^{-\{1-\pi(\bm{x}_i)\}\hat{\bm{v}}_{(j)}^\prime\hat{\bm{W}}^\prime\bm{x}_i}}{1+e^{\pi(\bm{x}_i)\hat{\bm{v}}_{(j)}^\prime\hat{\bm{W}}^\prime\bm{x}_i}}.
     \end{align*}
    \item Proposed method in the framework of the W-method (represented as ``R3W'');
    we estimate $\bm{V}$ and $\bm{W}$ using Algorithm $\ref{alg1}$.
\end{enumerate}
\par
Finally, we describe the evaluation indices.
The objective of this simulation is to accurately estimate the treatment effect $\bm{XWV}^\prime$ for binary outcomes and identify subgroups in which the treatment is effective.
Therefore, we utilize the mean squared error (MSEs) and area under the curve (AUC) as evaluation indices.
MSEs are defined as follows:
\begin{align*}
    \frac{1}{nm}\|h(\hat{\bm{\varGamma}})-\bm{XWV}^\prime\|_F^2,
    %\;\frac{1}{nm}\|\bm{X}\hat{\bm{W}}\hat{\bm{V}}^\prime-\bm{XWV}^\prime\|_F^2,
\end{align*}
where $\hat{\bm{\varGamma}}=(\hat{\bm{\gamma}}_1,\hat{\bm{\gamma}}_2,\cdots,\hat{\bm{\gamma}}_m)\in\mathbb{R}^{p\times m}$ is the estimator of $\bm{\varGamma}$ and $h(\bm{\varGamma}):\mathbb{R}^{p\times m}\mapsto \mathbb{R}^{n\times m}$ represents the function of estimated treatment effects.
Next, we define the false-positive rate (FPR) and false-negative rate (FNR) to evaluate the AUC:
\begin{gather*}
    \mathrm{FPR} =\mathrm{card}\left\{i\middle|\sum_{j=1}^m \hat{\bm{\gamma}}_j^\prime\bm{x}_i > 0\; \mathrm{and}\; \sum_{j=1}^m \hat{\bm{\gamma}}_j^\prime\bm{x}_i\leq 0\right\}/\mathrm{card}\left\{i\middle|\sum_{j=1}^m \hat{\bm{\gamma}}_j^\prime\bm{x}_i\leq 0\right\},
    \\
    \mathrm{FNR} =\mathrm{card}\left\{i\middle|\sum_{j=1}^m \hat{\bm{\gamma}}_j^\prime\bm{x}_i \leq  0\; \mathrm{and}\; \sum_{j=1}^m \hat{\bm{\gamma}}_j^\prime\bm{x}_i > 0\right\}/\mathrm{card}\left\{i\middle|\sum_{j=1}^m \hat{\bm{\gamma}}_j^\prime\bm{x}_i> 0\right\},
\end{gather*}
% \begin{gather*}
%     \mathrm{FPR} =\mathrm{card}\left\{i\middle|\sum_{j=1}^m \hat{\bm{v}}_{(j)}^\prime\hat{\bm{W}}^\prime\bm{x}_i > 0\; \mathrm{and}\; \sum_{j=1}^m \bm{v}_{(j)}^\prime\bm{W}^\prime\bm{x}_i\leq 0\right\}/\mathrm{card}\left\{i\middle|\sum_{j=1}^m \bm{v}_{(j)}^\prime\bm{W}^\prime\bm{x}_i\leq 0\right\},
%     \\
%     \mathrm{FNR} =\mathrm{card}\left\{i\middle|\sum_{j=1}^m \hat{\bm{v}}_{(j)}^\prime\hat{\bm{W}}^\prime\bm{x}_i \leq  0\; \mathrm{and}\; \sum_{j=1}^m \bm{v}_{(j)}^\prime\bm{W}^\prime\bm{x}_i > 0\right\}/\mathrm{card}\left\{i\middle|\sum_{j=1}^m \bm{v}_{(j)}^\prime\bm{W}^\prime\bm{x}_i> 0\right\},
% \end{gather*}
where $\mathrm{card}(\cdot)$ denotes the number of elements in the set.
When evaluating the AUC, to ensure a clear interpretation of the plots, we focus on the ROC curves for Full, MAmod, MW, R3Amod, and R3W.
\subsection{Results}
In this subsection, we discuss the results of the numerical simulations assuming RCTs from the perspectives of MSEs and the AUC, respectively.
The results of the numerical simulations, assuming observational studies, are presented in the Supplementary Materials.
First, we discuss the results of MSEs from the perspective of the estimation accuracy of treatment effects.
The results are presented for scenarios with a large number of explanatory variables ($p=50$) and various combinations of correlations between the explanatory variables and the outcome, specifically, $(\rho_1,\rho_2)=(2/3,0),\;(0,2/3)$ and $(2/3,2/3)$.
The figures for all scenarios, including these scenarios, are presented in the Supplementary Material.
%(Appe：n=100のときはMRWに負けてるけど，サンプル増えると勝ちます)
% In the scenario where $n=100$ and $p=50$, since it was not possible to estimate the propensity score, the logistic lasso regression was used to estimate it.
Throughout the numerical study of the MSEs, in every scenario, R3W consistently exhibits a higher estimation accuracy than R3Amod.
% This may be attributed to the fact that, while the W-method represents treatment effects based on only estimated parameters, the bias correction in the A-learner depends on the propensity score.
Comparing R3A with R3Amod and MA with MAmod clearly indicates that the modifications allow for correction of the bias inherent in the original A-learner, as demonstrated in Theorems \ref{thm_bias}, \ref{bias_RRRA1}, and \ref{bias_RRRA2}, thereby accurately estimating the treatment effects.
\par
When the sample size is large ($n=500$), the proposed methods, R3Amod and R3W, outperform the other comparison methods in terms of estimating the treatment effects, regardless of the number of explanatory variables.
Furthermore, Full performs poorly in terms of MSEs overall because it misspecifies the nonlinear main effects as linear functions.
Referencing Figures \ref{set6} and \ref{set14}, it is evident that in scenarios with high correlation among explanatory variables $(\rho_1=2/3,\;\rho_2=0)$ and in scenarios with high correlation among outcomes $(\rho_1=0,\rho_2=2/3)$, the proposed method, particularly R3W, achieves lower MSEs values compared with other methods when the sample size is large.
Based on Figure \ref{set18}, in scenarios where there is high correlation both among explanatory variables and among outcomes $(\rho_1=2/3,\;\rho_2=2/3)$, the proposed method outperforms others when $n$ is relatively large $(n=300,\;n=500)$.
In addition, by comparing the first and second rows in Figures \ref{set6}, \ref{set14}, and \ref{set18}, it is apparent that, when the number of outcomes is large $(m=10)$ and the rank of the treatment effects is small $(r=3)$, the proposed method exhibits superior accuracy in terms of MSEs.
This indicates that the proposed method correctly identifies the rank of treatment effects and considers the correlation structure of the treatment effects.
% %ランクを識別できているのは当たり前かも・・・
% Furthermore, it is observed that even with a small sample size, if the number of explanatory variables is low, the estimation accuracy is superior compared to other methods.
% As seen in Fig 13,
% %\ref{set13}$, 
% when there is a high correlation among outcomes, the proposed methods demonstrate high estimation accuracy, even with a small sample size.
\par
Subsequently, we discuss the results of the AUC from the perspective of accuracy in identifying the subgroups in which the treatment is effective.
The results are presented for scenarios with a large number of explanatory variables ($p=50$) and various combinations of correlations between the explanatory variables and the outcome; specifically, $(\rho_1,\rho_2)=(1/3,0),\;(2/3,0),\;(0,2/3),\;(1/3,2/3)$, and $(2/3,2/3)$.
In all the figures, the blue, green, pink, red, and black lines represent R3W, R3Amod, MRW, MRAmod, and Full, respectively.
The figures for all scenarios, including these scenarios, are presented in the Supplementary Material.
Overall, the proposed method R3W demonstrates higher accuracy in terms of AUC than the other methods. In contrast, R3Amod shows comparable accuracy to MRAmod, and is more accurate than MRW. When the sample size is moderately large, such as $n=300$ or $n=500$, the proposed R3W method consistently achieves the highest AUC values in all scenarios.
Figure \ref{roc_set4} show that, in scenarios with moderate correlation among explanatory variables $(\rho_1=1/3,\;\rho_2=0)$, the proposed method, particularly when the sample size is large $(n=500)$, demonstrates significantly higher AUC values compared to other methods.
However, as shown in Figures \ref{roc_set6}, in scenarios with high correlation among explanatory variables $(\rho_1=2/3,\;\rho_2=0)$, the proposed method does not exhibit substantial differences in accuracy compared to other methods.
Figures \ref{roc_set14} and \ref{roc_set16} confirm that the proposed method robustly identifies subgroups even in cases with high correlation among outcomes $(\rho_1=0,1/3,\;\rho_2=2/3)$.
In addition, as shown in Fig \ref{roc_set18}, in scenarios where there is high correlation among both outcomes and explanatory variables $(\rho_1=2/3,\;\rho_2=2/3)$, the proposed method achieves superior AUC values when the number of outcomes is relatively small.
% In scenarios with a large number of predictors ($p=50$), R3W significantly outperforms other methods in terms of AUC values.
% The proposed method performs best in scenarios where there is a large number of predictors and outcomes ($p=50,\;m=10$), but a smaller number of treatment effect ranks ($r=3$). This can be observed in the middle plots of the second row and column of Figures 2, 4, 8, 10, 14, and 16.
% \ref{roc_set2}, \ref{roc_set4}, \ref{roc_set8}, \ref{roc_set10}, \ref{roc_set14}, and \ref{roc_set16}.
% On the other hand, in scenarios with a small number of predictors ($p=10$), not only does R3W achieve high AUC values, but all other methods, except for Full, also show large AUC values.
These results confirm that the proposed method, R3W, can robustly discriminate between subgroups in the presence of correlations between the predictors and outcomes.
\begin{figure}[htbp]
  \centering
  \hspace*{-15mm}
\includegraphics[width=1.2\textwidth]{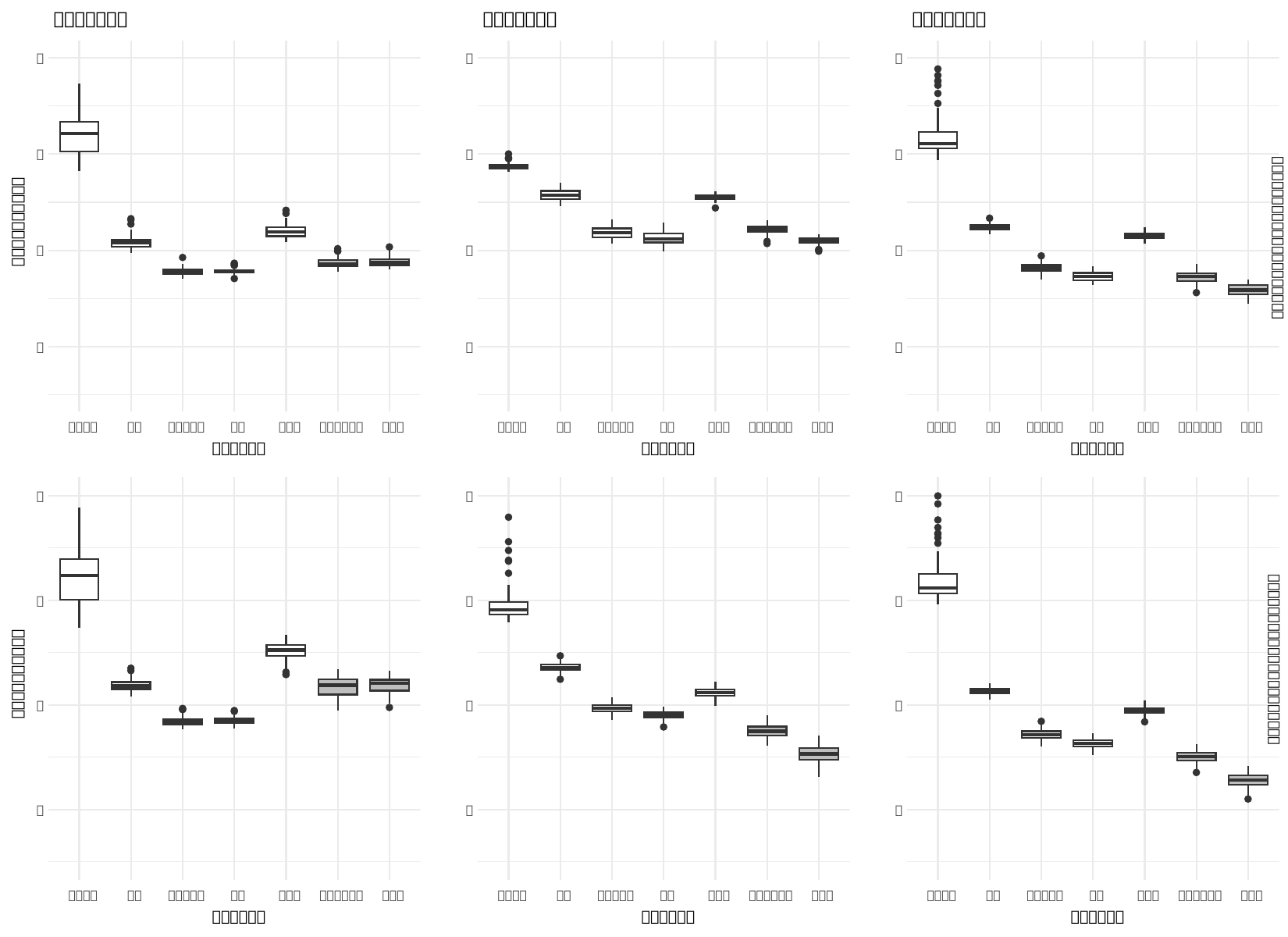}
  \caption{Boxplot of MSE calculated for scenarios with 50 explanatory variables that are highly correlated and errors that are uncorrelated, i.e., $\rho_1=2/3$ and $\rho_2=0$}
  \label{set6}
\end{figure}
\begin{figure}[htbp]
  \centering
  \hspace*{-15mm}
\includegraphics[width=1.2\textwidth]{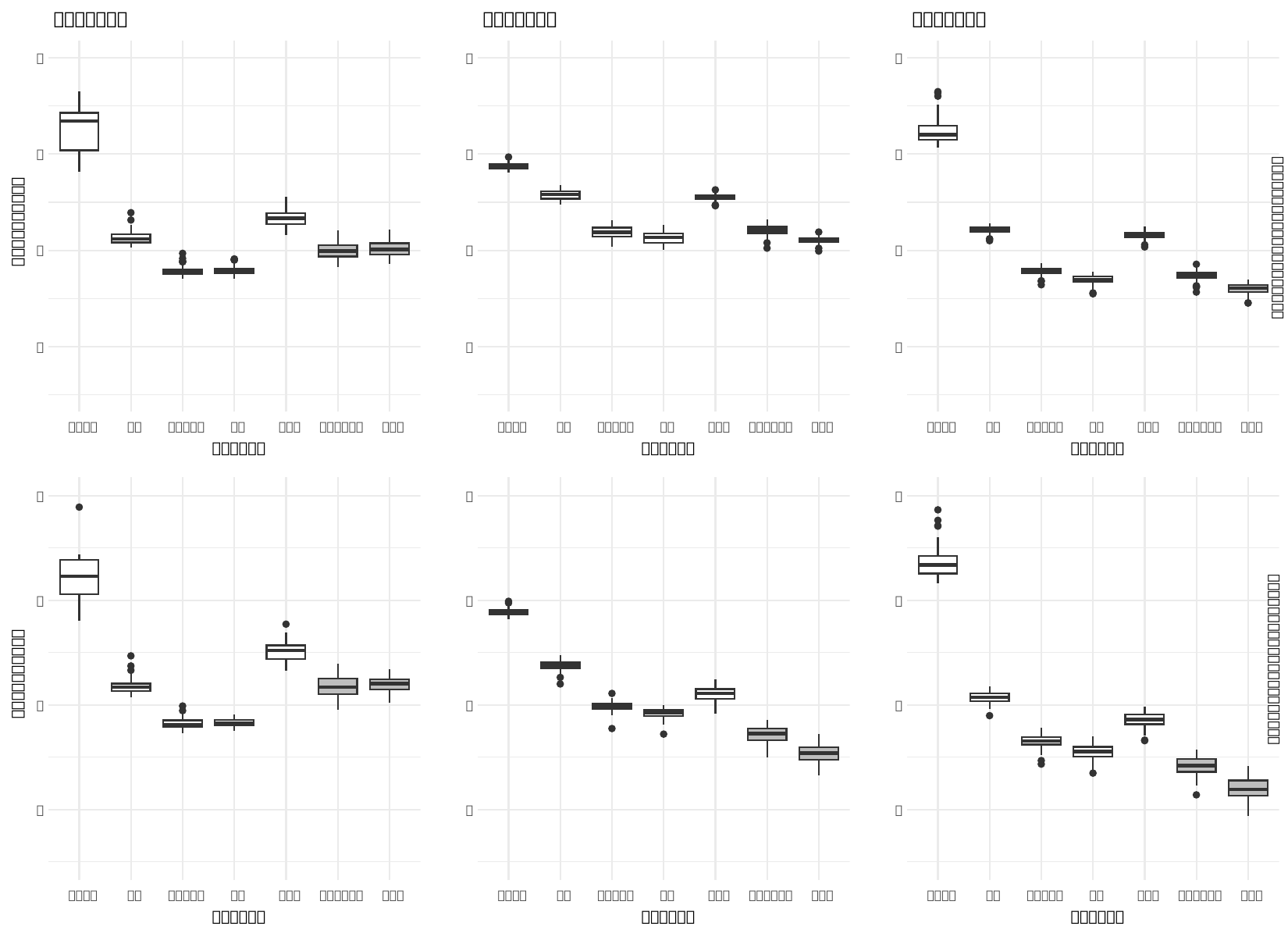}
  \caption{Boxplot of MSE calculated for scenarios with 50 explanatory variables that are uncorrelated and errors that are highly correlated, i.e., $\rho_1=0$ and $\rho_2=2/3$}
  \label{set14}
\end{figure}
\begin{figure}[htbp]
  \centering
  \hspace*{-15mm}
\includegraphics[width=1.2\textwidth]{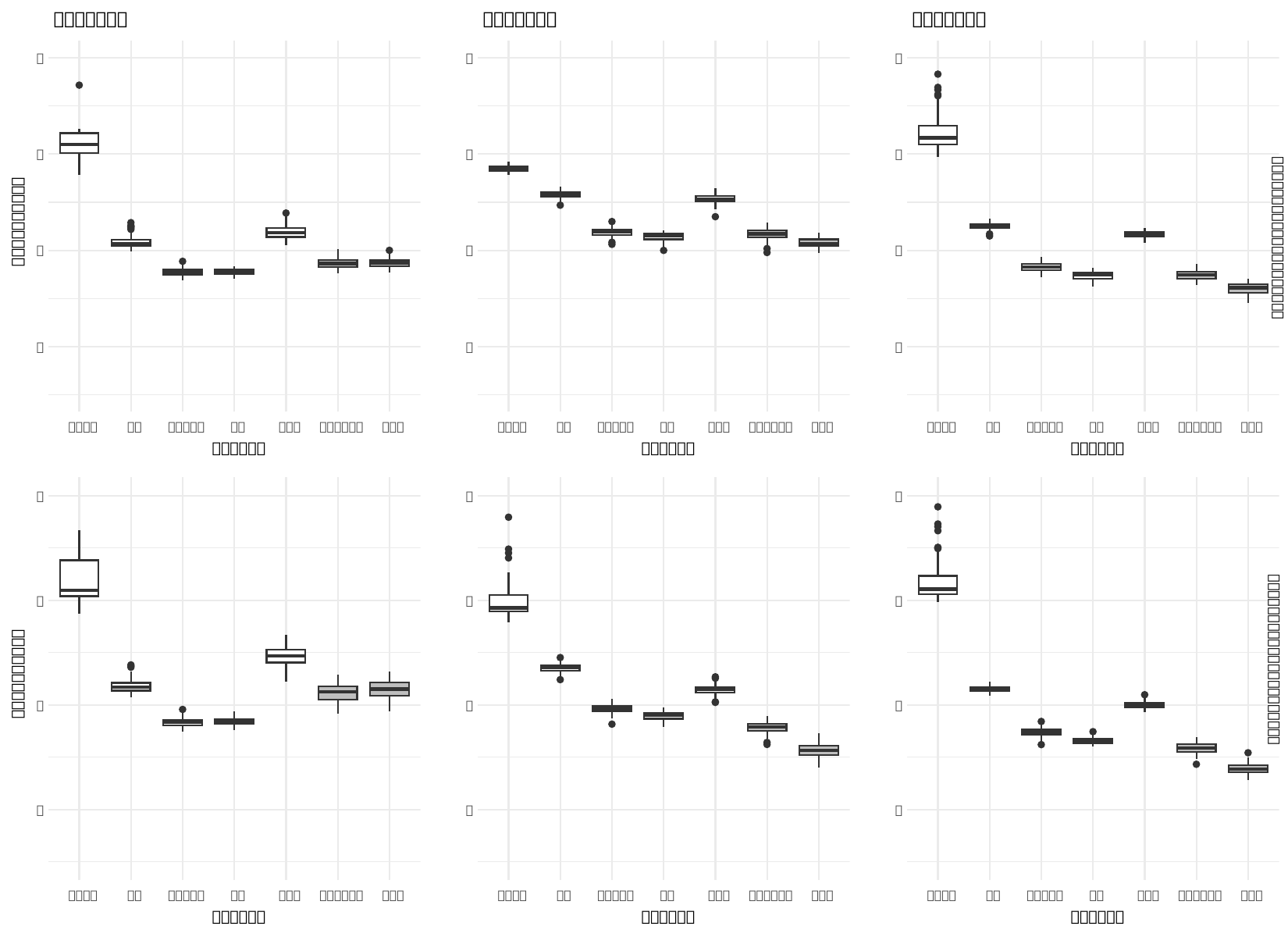}
  \caption{Boxplot of MSE calculated for scenarios with 50 explanatory variables and errors, both of which are highly correlated, i.e., $\rho_1=2/3$ and $\rho_2=2/3$}
  \label{set18}
\end{figure}
%AUC
\begin{figure}[htbp]
  \centering
  \hspace*{-15mm}
\includegraphics[width=1.2\textwidth]{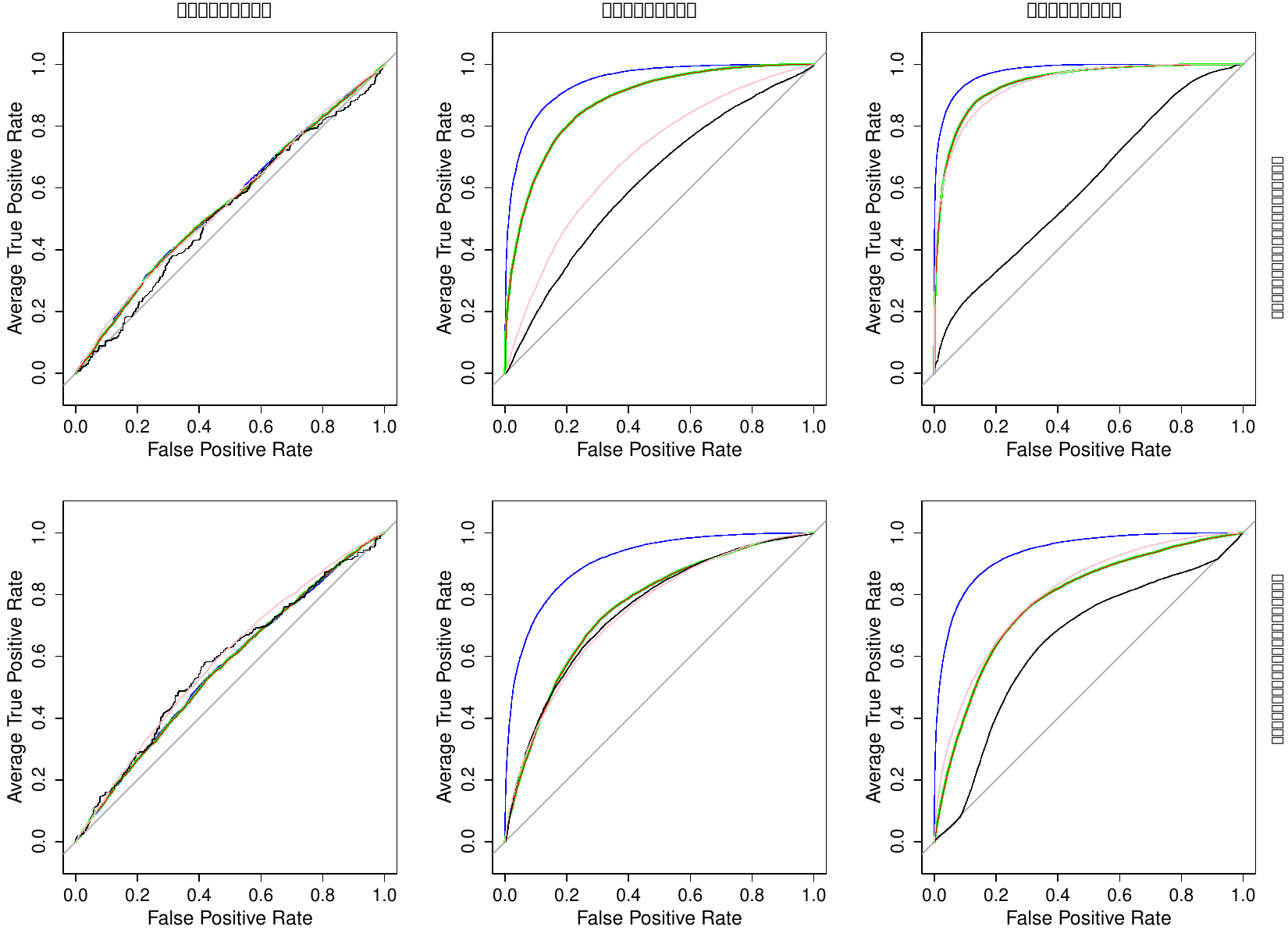}
  %説明変数および誤差に相関がない場合に計算されたMSEの箱ひげ図
  \caption{ROC curves for scenarios with 50 explanatory variables that are moderately correlated and errors that are uncorrelated, i.e., $\rho_1=1/3$ and $\rho_2=0$.
  The blue, green, pink, red, and black lines represent R3W, R3Amod, MRW, MRAmod, and Full, respectively.}
  \label{roc_set4}
\end{figure}
\begin{figure}[htbp]
  \centering
  \hspace*{-15mm}
\includegraphics[width=1.2\textwidth]{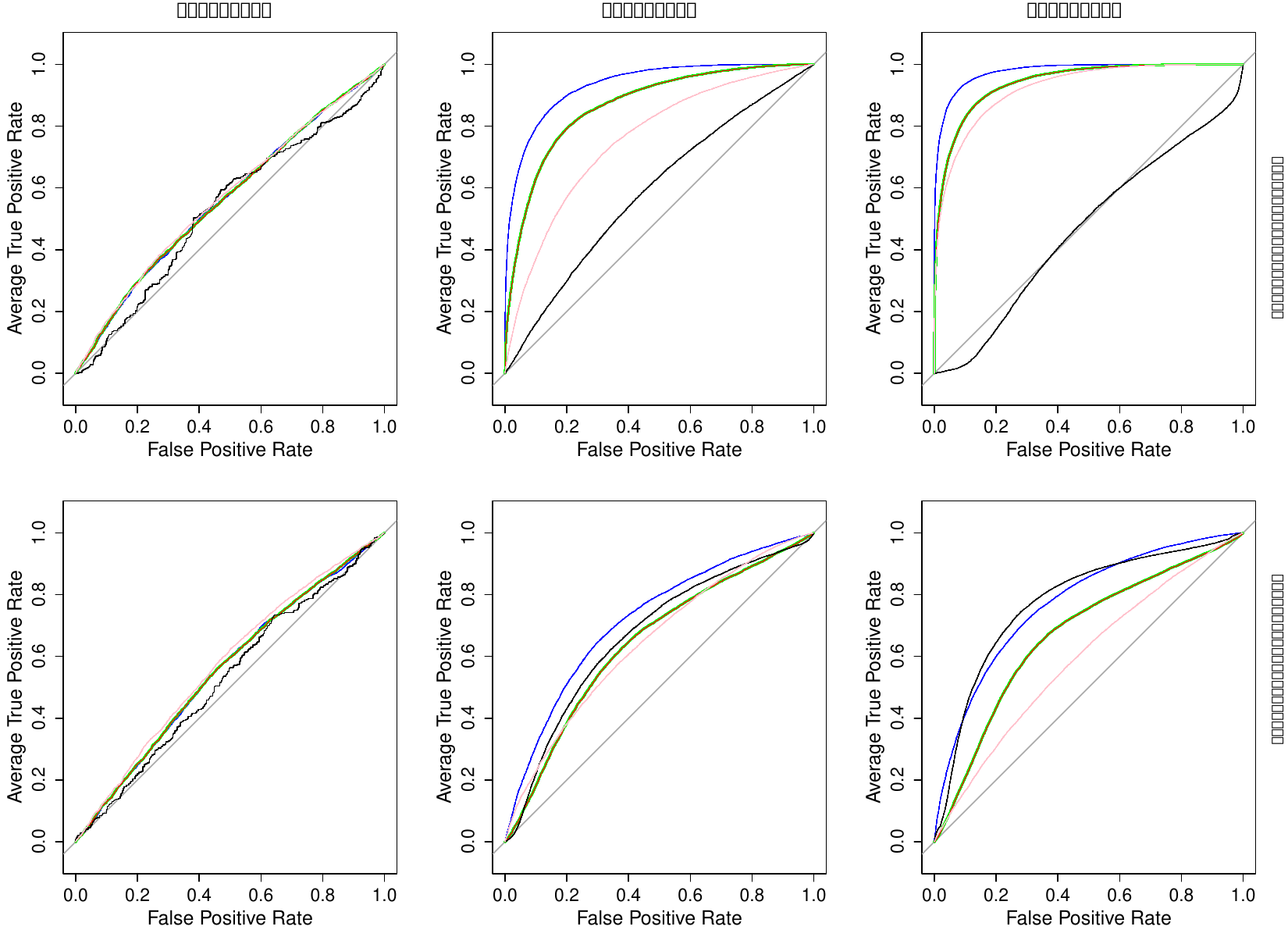}
  %説明変数および誤差に相関がない場合に計算されたMSEの箱ひげ図
  \caption{ROC curves for scenarios with 50 explanatory variables that are highly correlated and errors that are uncorrelated, i.e., $\rho_1=2/3$ and $\rho_2=0$.
  The blue, green, pink, red, and black lines represent R3W, R3Amod, MRW, MRAmod, and Full, respectively.}
  \label{roc_set6}
\end{figure}
\begin{figure}[htbp]
  \centering
  \hspace*{-15mm}
\includegraphics[width=1.2\textwidth]{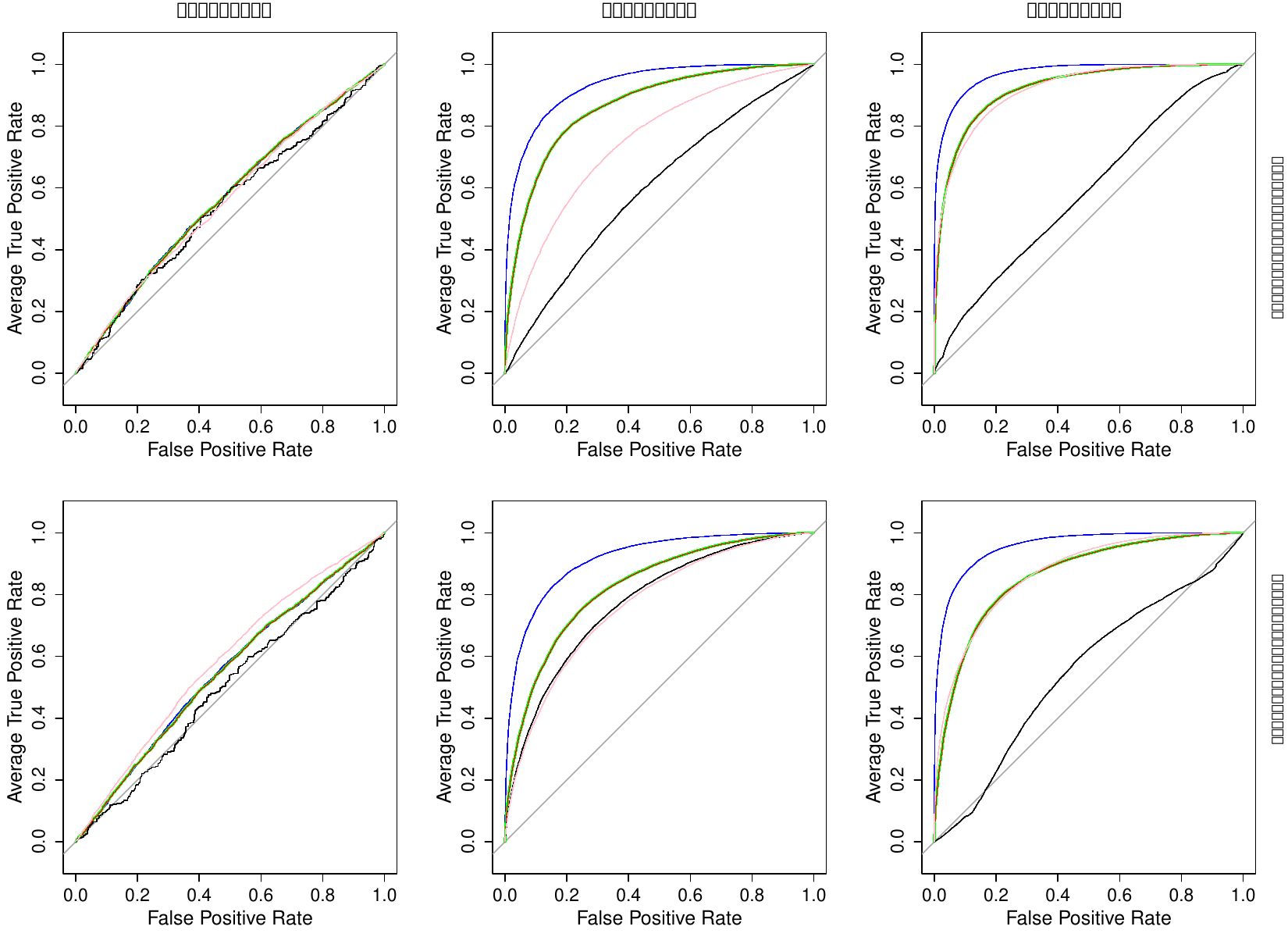}
  %説明変数および誤差に相関がない場合に計算されたMSEの箱ひげ図
  \caption{ROC curves for scenarios with 50 explanatory variables that are uncorrelated and errors that are highly correlated, i.e., $\rho_1=0$ and $\rho_2=2/3$.
  The blue, green, pink, red, and black lines represent R3W, R3Amod, MRW, MRAmod, and Full, respectively.}
  \label{roc_set14}
\end{figure}
\begin{figure}[htbp]
  \centering
  \hspace*{-15mm}
\includegraphics[width=1.2\textwidth]{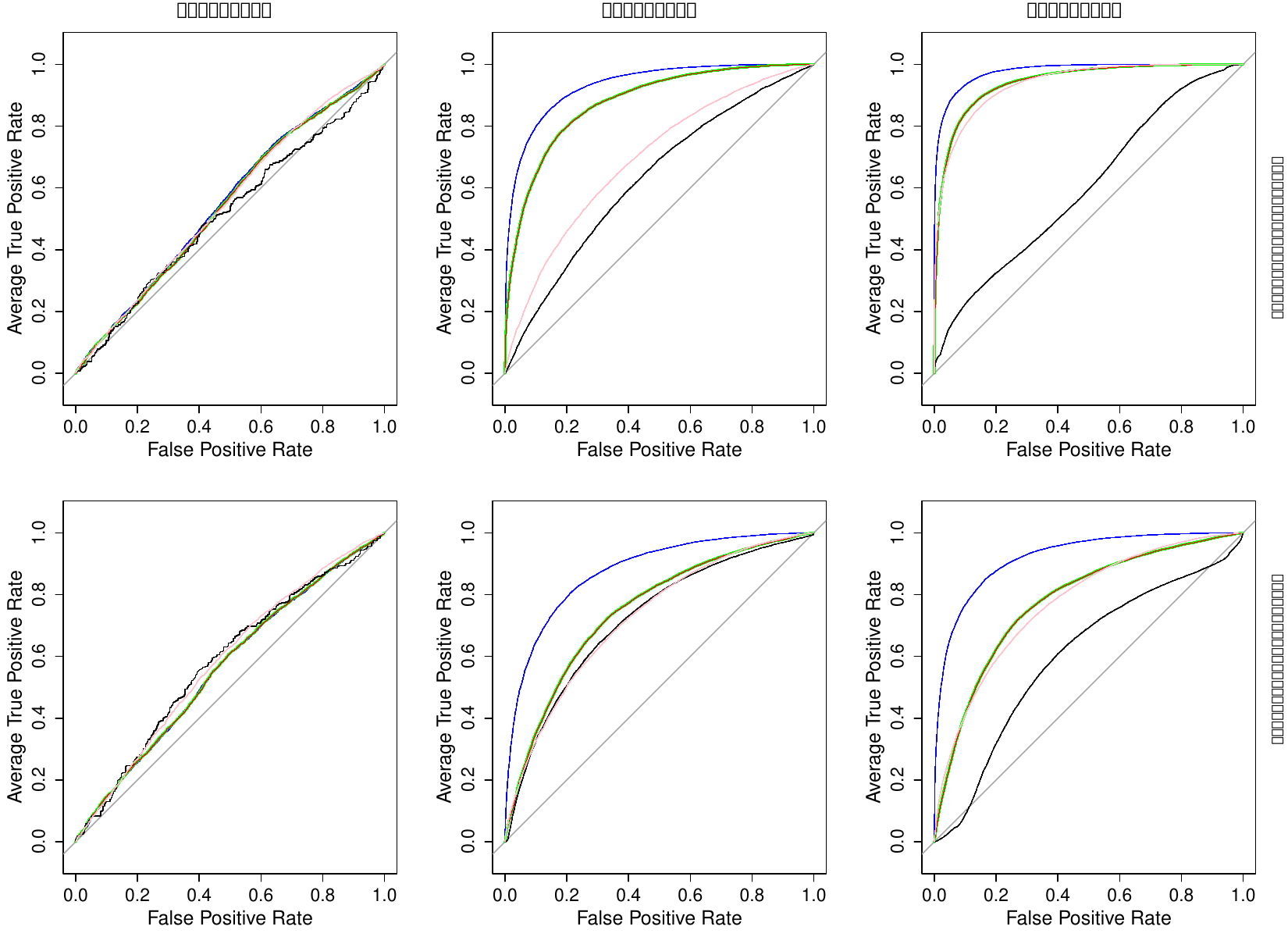}
  %説明変数および誤差に相関がない場合に計算されたMSEの箱ひげ図
  \caption{ROC curves for scenarios with 50 explanatory variables that are moderately correlated and errors that are highly correlated, i.e., $\rho_1=1/3$ and $\rho_2=2/3$.
  The blue, green, pink, red, and black lines represent R3W, R3Amod, MRW, MRAmod, and Full, respectively.}
  \label{roc_set16}
\end{figure}
\begin{figure}[htbp]
  \centering
  \hspace*{-15mm}
\includegraphics[width=1.2\textwidth]{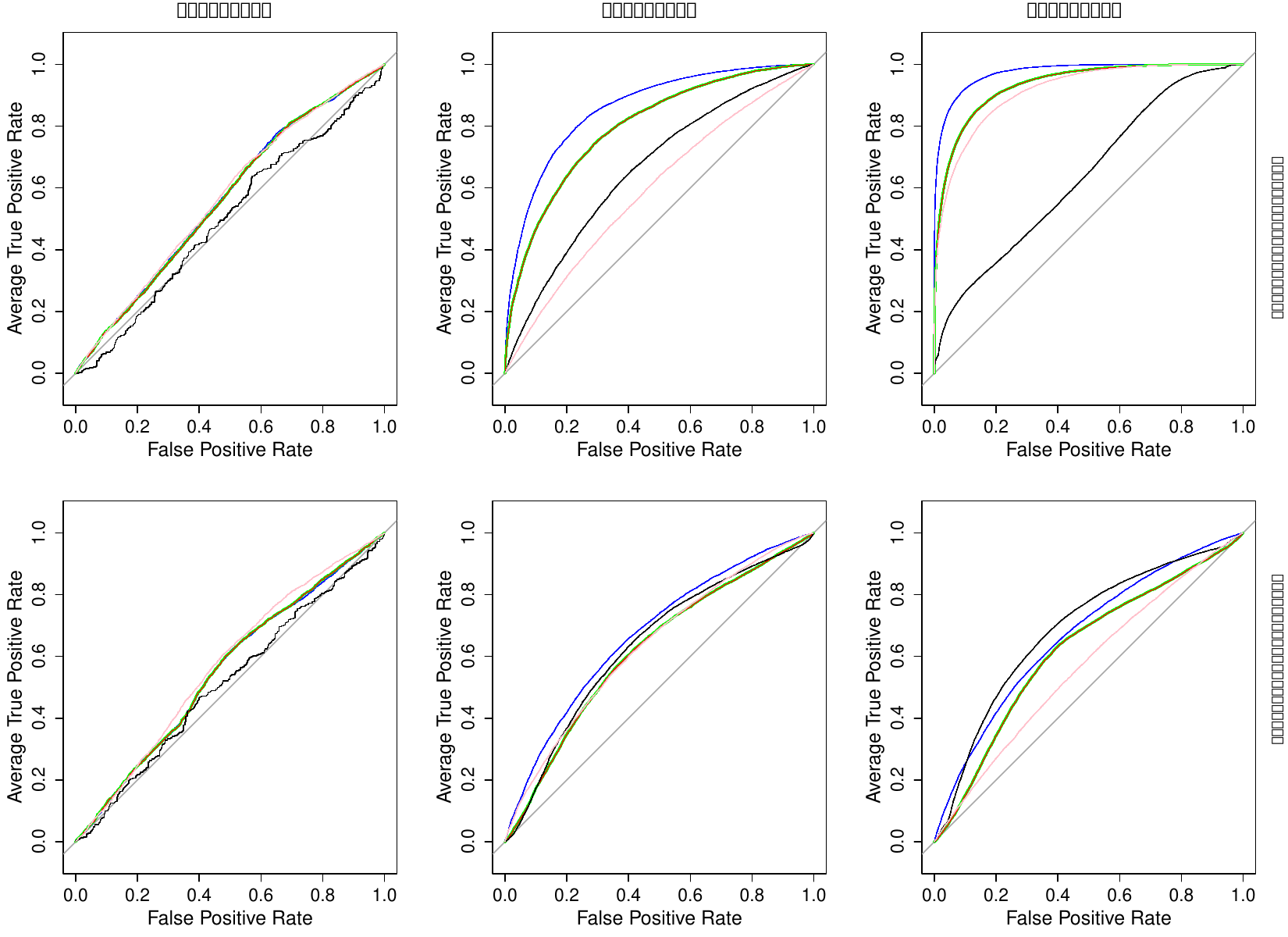}
  %説明変数および誤差に相関がない場合に計算されたMSEの箱ひげ図
  \caption{ROC curves for scenarios with 50 explanatory variables and errors, both of which are highly correlated, i.e., $\rho_1=2/3$ and $\rho_2=2/3$.
  The blue, green, pink, red, and black lines represent R3W, R3Amod, MRW, MRAmod, and Full, respectively.}
  \label{roc_set18}
\end{figure}
\section{Application}
In this section, we describe the application of the proposed method to real data to demonstrate its effectiveness. We identify subgroups and interpret the results based on the estimated loadings.
\subsection{Data description}
In this subsection, we explain the data acquired from the AIDS Clinical Trial Group Study 175 (ACTG175).
This dataset is available in the \ textbf{speff2trial}package of R \cite{juraska2022package}.
This study, known as ACTG175, was a randomized controlled trial designed to evaluate the effectiveness of different treatments in HIV-positive patients.
In this trial, the patients were randomized into four treatment groups: 1. zidovudine alone, 2. zidovudine plus didanosine, 3. zidovudine plus zalcitabine, and 4. didanosine alone.
We define the test therapy as a combination therapy with zidovudine plus didanosine ($n = 332$) and the control therapy as monotherapy with zidovudine ($n = 318$).
We set three binary outcomes: CD4 T cell count at 20 $\pm$ 5 weeks (represented as ``cd420''), CD4 T cell count at 96 $\pm$ 5 weeks (``cd496''), and CD8 T cell count at 20 $\pm$ 5 weeks (``cd820'') are each dichotomized using the median values as the cutoff.
%この文，あまり良くないかも．二値アウトカムのモチベと合わない.
We set fifteen covariates: age in years at baseline (``age''); 
weight in kg at baseline (``wtkg''); 
hemophilia: 0=no, 1=yes (``hemo''); homosexual activity:
0=no, 1=yes (``homo''); 
Karnofsky score(on a scale of 0-100) (``karn''); 
zidovudine use in the 30 days prior to treatment initiation: 0=no, 1=yes (``z30''); 
Race: 0=white, 1=non-white(``race'') 
history of intravenous drug use: 0=no, 1=yes (``drugs'');
gender: 0=female, 1=male (``gend''); 
antiretroviral history: 0=naive, 1=experienced (``str2'');
symptomatic indicator: 0=asymptomatic, 1=symptomatic (``symp'');
non-zidovudine antiretroviral therapy prior to initiation of study treatment: 0=no, 1=yes (``opri''); CD4 T cell count at baseline (``cd40''); 
CD8 T cell count at baseline (``cd80'').
The purpose of this application was to identify and interpret subgroups in which zidovudine plus didanosine treatment showed higher efficacy than zidovudine alone by applying the proposed method to this dataset.
Based on the results of the numerical simulations, the proposed method, R3W, demonstrated superior accuracy in estimating treatment effects and distinguishing subgroups compared to the comparison methods in terms of both MSEs and AUC.
Therefore, in this subsection, we apply the R3W to the ACTG175 dataset to interpret the results.
Here, we set the rank $r$ of the coefficient matrix to estimate treatment effects at $2$.
\subsection{Results}
This subsection discusses the results of the application and interpretation of the subgroups.
Figure \ref{path} depicts the estimated $\bm{V}$ and $\bm{W}$ as path diagrams.
In Figure $\ref{path}$, the first axis of the low-rank structure is labeled as ``Dim1,'' and the second axis as ``Dim2.''
Furthermore, we removed the coefficients of $\bm{W}$ with absolute values less than $0.1$ from Figure $\ref{path}$.
%さらに，Wの負荷量の絶対値が0.1より小さいものは除去した．
%この図において，低次元構造の1軸目をDim1，2軸目をDim2と表記した．
\begin{figure}[htbp]
  \centering
\includegraphics[width=1.05\textwidth]{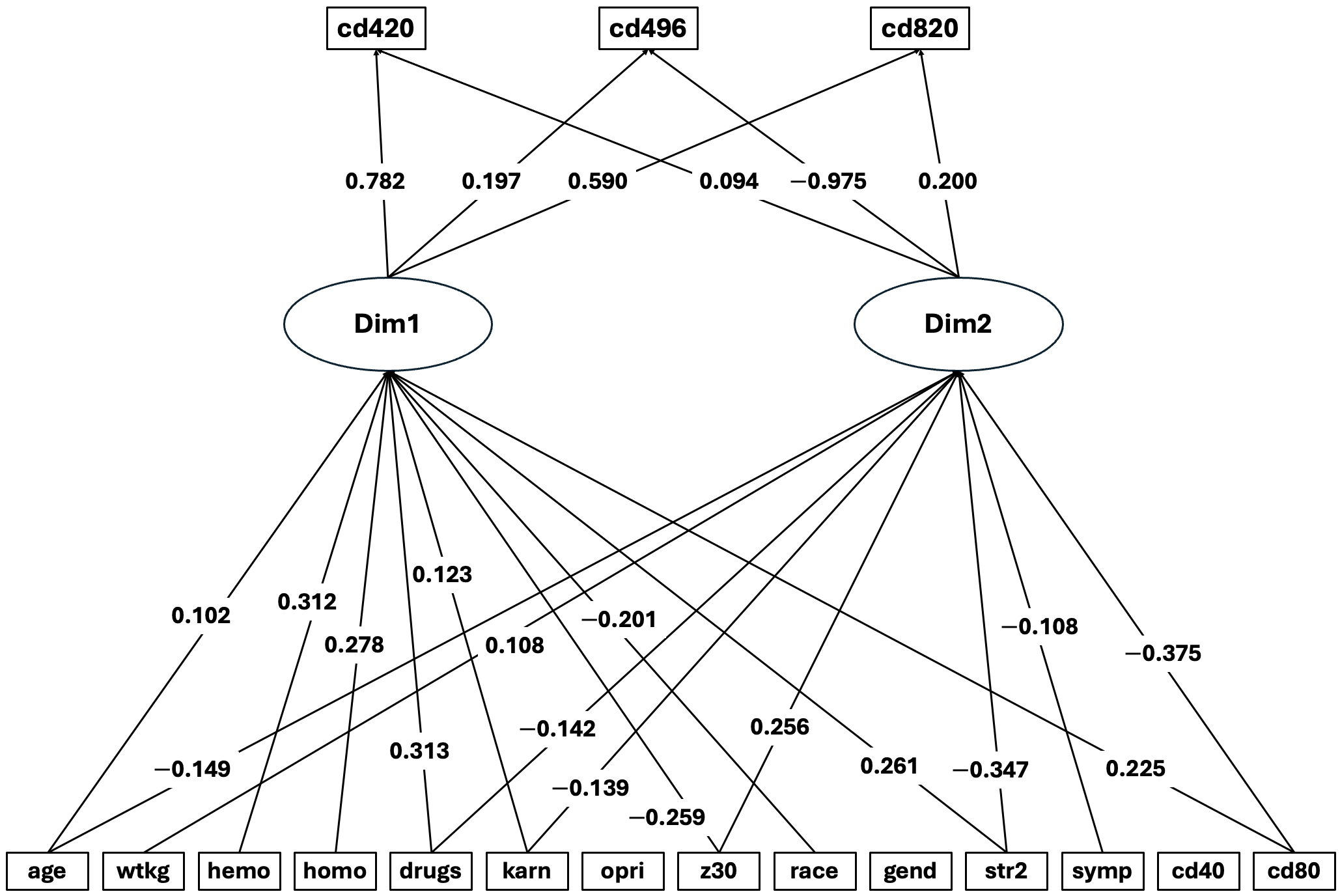}
  \caption{The path diagram using estimated $\bm{V}$ (top) and $\bm{W}$ (bottom)}
  \label{path}
\end{figure}
From this path diagram, ``Dim1'' is identified as a factor where an increase in its value corresponds to higher treatment effects on CD4 and CD8 cells.
Conversely, ``Dim2'' is associated with a decrease in the treatment effect on CD4 cells as its value increases.
%始めに，このパス図から``Dim1"と``Dim2"はそれぞれ値が増加すると，cd4細胞およびcd8細胞の処置効果が高くなる因子とcd4細胞への処置効果が低くなる因子になっている．
%このパス図から，
From these findings, we interpret that the subgroup in which zidovudine plus didanosine treatment shows higher efficacy than zidovudine alone can be described as follows:
\\
    $\bullet$ Older patients\quad 
    $\bullet$ Patients with a history of hemophilia
    \\
    $\bullet$ Patients with history of homosexual activity
    \\
    $\bullet$ Patients with a history of intravenous drug use
    \\
    $\bullet$ Patients with high Karnofsky scores.
    \\
    $\bullet$ Patients who did not receive zidovudine within the past 30 days\quad
    $\bullet$ White people.
    \\
    $\bullet$ Patients with experienced antiretroviral history\quad 
    $\bullet$ Symptomatic patients.
    \\
    $\bullet$ Patients with high baseline of CD8cells
\section{Conclusion}
In this study, we propose a method that allows the estimation of HTE for multiple binary outcomes in both randomized controlled trials and observational studies using interpretable linear functions.
These methods extend the W-method and A-learner within the framework of a reduced-rank regression, thereby capturing the correlation structure among multiple outcomes.
In addition, by assuming a low-rank structure, the latent variables can be interpreted as subgroups.
We also demonstrate that the conventional A-learner includes a bias dependent on the propensity score. 
In our numerical study, the proposed methods outperforms the comparison methods in terms of MSEs and AUC, except in certain specific scenarios.
Overall, the proposed method based on the W-method demonstrates higher accuracy in terms of both MSEs and AUC compared with the approach based on the A-learner, and this tendency is also observed for existing methods.
Furthermore, the bias in the A-learner is accurately detected, and after correction, the MSEs values decrease significantly.
In a real data application using the ACTG 175 dataset, we identify factors for which the combination of zidovudine and didanosine show greater treatment effects on CD4 and CD8 cell counts than zidovudine alone.
\par
Three issues need to be addressed in future research.
First, in the numerical simulations and real data applications, we assume scenarios in which the numbers of explanatory variables and binary outcomes are relatively small, specifically $p,m < n$.
However, in studies focusing on genes, the number of explanatory variables and binary outcomes can be high-dimensional (e.g., \cite{kobak2021sparse} and \cite{park2024low}).
In such situations, the interpretation of subgroups can become complex and the model may overfit the data. 
Therefore, in future, we plan to incorporate sparse estimations into the loading matrices.
This approach enables interpretable and versatile modeling, even in high-dimensional settings.
%「汎用性のある」でversatileが出てきた
Second, this study considers only cases in which the true rank of the treatment effect is correctly identified. 
However, in many cases, the true rank is unknown, and misidentifying it can introduce bias into the estimators \cite{anderson2002specification}.
Therefore, future work should address the extent of bias introduced by misidentifying the rank of the treatment effect in the proposed methods and discuss strategies for selecting an appropriate rank (e.g., \cite{bunea2011optimal} and \cite{chen2013reduced}).
Finally, our numerical simulations are performed under the assumption of randomized controlled trials. However, the proposed method is capable of estimating the treatment effects in both randomized controlled trials and observational studies.
% Therefore, it is necessary to conduct numerical simulations for observational studies as well. 
% Specifically, we plan to reference the scenarios from Chen et al. (2017) \cite{chen2017general} to calculate propensity scores and assign treatments accordingly.
\bibliography{Manu}

\begin{thebibliography}{10}

\bibitem{rothwell2005subgroup}
Peter~M Rothwell.
\newblock Subgroup analysis in randomised controlled trials: importance, indications, and interpretation.
\newblock {\em The Lancet}, Vol. 365, No. 9454, pp. 176--186, 2005.

\bibitem{strobl2009introduction}
Carolin Strobl, James Malley, and Gerhard Tutz.
\newblock An introduction to recursive partitioning: rationale, application, and characteristics of classification and regression trees, bagging, and random forests.
\newblock {\em Psychological methods}, Vol.~14, No.~4, p. 323, 2009.

\bibitem{foster2011subgroup}
Jared~C Foster, Jeremy~MG Taylor, and Stephen~J Ruberg.
\newblock Subgroup identification from randomized clinical trial data.
\newblock {\em Statistics in medicine}, Vol.~30, No.~24, pp. 2867--2880, 2011.

\bibitem{shen2015inference}
Juan Shen and Xuming He.
\newblock Inference for subgroup analysis with a structured logistic-normal mixture model.
\newblock {\em Journal of the American Statistical Association}, Vol. 110, No. 509, pp. 303--312, 2015.

\bibitem{kosorok2015adaptive}
Michael~R Kosorok and Erica~EM Moodie.
\newblock {\em Adaptive treatment strategies in practice: planning trials and analyzing data for personalized medicine}.
\newblock SIAM, 2015.

\bibitem{siriwardhana2020personalized}
Chathura Siriwardhana and Karunarathna~B Kulasekera.
\newblock Personalized treatment plans with multivariate outcomes.
\newblock {\em Biometrical Journal}, Vol.~62, No.~8, pp. 1973--1985, 2020.

\bibitem{kulasekera2022quantiles}
Karunarathna~B Kulasekera and Chathura Siriwardhana.
\newblock Quantiles based personalized treatment selection for multivariate outcomes and multiple treatments.
\newblock {\em Statistics in medicine}, Vol.~41, No.~15, pp. 2695--2710, 2022.

\bibitem{williams1996design}
Paige Williams and Louise Ryan.
\newblock Design of multiple binary outcome studies with intentionally missing data.
\newblock {\em Biometrics}, pp. 1498--1514, 1996.

\bibitem{inan2017joint}
G{\"u}l Inan and R~Yucel.
\newblock Joint gees for multivariate correlated data with incomplete binary outcomes.
\newblock {\em Journal of Applied Statistics}, Vol.~44, No.~11, pp. 1920--1937, 2017.

\bibitem{dunson2000bayesian}
David~B Dunson.
\newblock Bayesian latent variable models for clustered mixed outcomes.
\newblock {\em Journal of the Royal Statistical Society: Series B (Statistical Methodology)}, Vol.~62, No.~2, pp. 355--366, 2000.

\bibitem{yuki2023estimation}
Shintaro Yuki, Kensuke Tanioka, and Hiroshi Yadohisa.
\newblock Estimation and visualization of heterogeneous treatment effects for multiple outcomes.
\newblock {\em Statistics in Medicine}, Vol.~42, No.~5, pp. 693--715, 2023.

\bibitem{massy1965principal}
William~F Massy.
\newblock Principal components regression in exploratory statistical research.
\newblock {\em Journal of the American Statistical Association}, Vol.~60, No. 309, pp. 234--256, 1965.

\bibitem{izenman1975reduced}
Alan~Julian Izenman.
\newblock Reduced-rank regression for the multivariate linear model.
\newblock {\em Journal of multivariate analysis}, Vol.~5, No.~2, pp. 248--264, 1975.

\bibitem{yee2003reduced}
Thomas~W Yee and Trevor~J Hastie.
\newblock Reduced-rank vector generalized linear models.
\newblock {\em Statistical modelling}, Vol.~3, No.~1, pp. 15--41, 2003.

\bibitem{rosenbaum1983central}
Paul~R Rosenbaum and Donald~B Rubin.
\newblock The central role of the propensity score in observational studies for causal effects.
\newblock {\em Biometrika}, Vol.~70, No.~1, pp. 41--55, 1983.

\bibitem{reinsel2022multivariate}
Gregory~C Reinsel, Raja~P Velu, and Kun Chen.
\newblock {\em Multivariate reduced-rank regression: theory, methods and applications}, Vol. 225.
\newblock Springer Nature, 2022.

\bibitem{hunter2004tutorial}
David~R Hunter and Kenneth Lange.
\newblock A tutorial on mm algorithms.
\newblock {\em The American Statistician}, Vol.~58, No.~1, pp. 30--37, 2004.

\bibitem{de2024new}
Mark de~Rooij.
\newblock A new algorithm and a discussion about visualization for logistic reduced rank regression.
\newblock {\em Behaviormetrika}, Vol.~51, No.~1, pp. 389--410, 2024.

\bibitem{adachi2016matrix}
Kohei Adachi.
\newblock {\em Matrix-based introduction to multivariate data analysis}.
\newblock Springer, 2016.

\bibitem{tian2014simple}
Lu~Tian, Ash~A Alizadeh, Andrew~J Gentles, and Robert Tibshirani.
\newblock A simple method for estimating interactions between a treatment and a large number of covariates.
\newblock {\em Journal of the American Statistical Association}, Vol. 109, No. 508, pp. 1517--1532, 2014.

\bibitem{chen2017general}
Shuai Chen, Lu~Tian, Tianxi Cai, and Menggang Yu.
\newblock A general statistical framework for subgroup identification and comparative treatment scoring.
\newblock {\em Biometrics}, Vol.~73, No.~4, pp. 1199--1209, 2017.

\bibitem{juraska2022package}
Michal Juraska and Maintainer~Michal Juraska.
\newblock Package ‘speff2trial’.
\newblock 2022.

\bibitem{kobak2021sparse}
Dmitry Kobak, Yves Bernaerts, Marissa~A Weis, Federico Scala, Andreas~S Tolias, and Philipp Berens.
\newblock Sparse reduced-rank regression for exploratory visualisation of paired multivariate data.
\newblock {\em Journal of the Royal Statistical Society Series C: Applied Statistics}, Vol.~70, No.~4, pp. 980--1000, 2021.

\bibitem{park2024low}
Seyoung Park, Eun~Ryung Lee, and Hongyu Zhao.
\newblock Low-rank regression models for multiple binary responses and their applications to cancer cell-line encyclopedia data.
\newblock {\em Journal of the American Statistical Association}, Vol. 119, No. 545, pp. 202--216, 2024.

\bibitem{anderson2002specification}
TW~Anderson.
\newblock Specification and misspecification in reduced rank regression.
\newblock {\em Sankhy{\=a}: The Indian Journal of Statistics, Series A}, pp. 193--205, 2002.

\bibitem{bunea2011optimal}
Florentina Bunea, Yiyuan She, and Marten~H Wegkamp.
\newblock Optimal selection of reduced rank estimators of high-dimensional matrices.
\newblock 2011.

\bibitem{chen2013reduced}
Kun Chen, Hongbo Dong, and Kung-Sik Chan.
\newblock Reduced rank regression via adaptive nuclear norm penalization.
\newblock {\em Biometrika}, Vol. 100, No.~4, pp. 901--920, 2013.

\end{thebibliography}
\bibliographystyle{junsrt}

\newpage
\section{Appendix}
Here, we present figures illustrating the simulation results conducted in Section 4 of the main paper.
The figures include the results of numerical simulations assuming randomized controlled trials (RCTs) and observational studies, specifically boxplots related to mean squared errors (MSEs) and ROC curves related to the area under the cover (AUC).
For details on the simulation design, please refer to the main paper.
Additionally, figures for all scenarios, including those presented in the main paper, are provided here.
%%%%%%
%%%%%%
%%%%%%
%Setting1
\begin{figure}[ht]
    \centering
    \begin{minipage}[b]{0.8\textwidth}
        \centering
        \begin{subfigure}[b]{\textwidth}
            \centering            \includegraphics[width=0.9\textwidth]{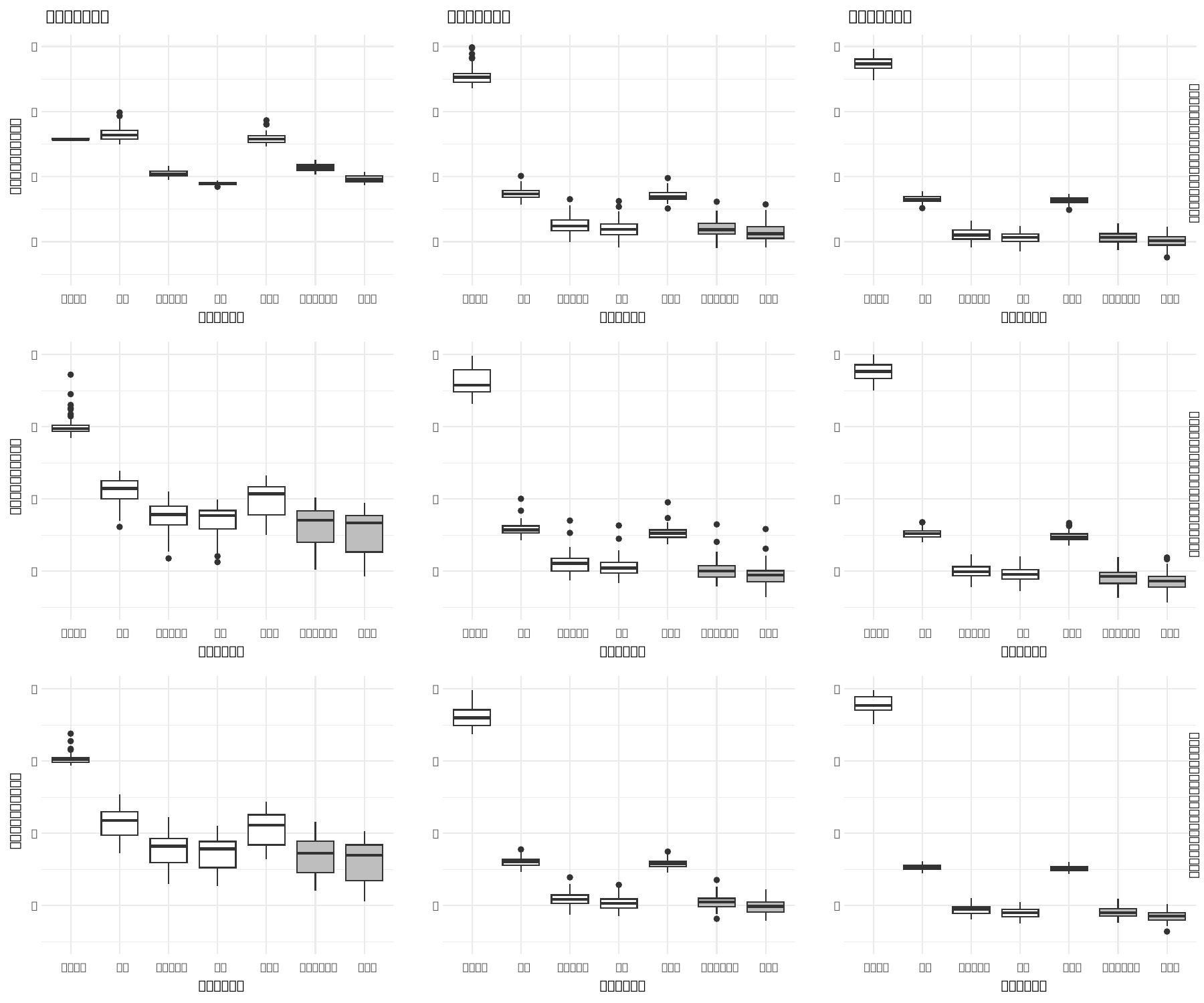}
            \caption{RCTs: Boxplot of MSE calculated for scenarios with 10 explanatory variables and errors that are uncorrelated, i.e., $\rho_1=0$ and $\rho_2=0$}
        \end{subfigure}
    \end{minipage}
    \vspace{0.5cm} % Adjust vertical space between figures
    \begin{minipage}[b]{0.8\textwidth}
        \centering
        \begin{subfigure}[b]{\textwidth}
            \centering
            \includegraphics[width=0.9\textwidth]{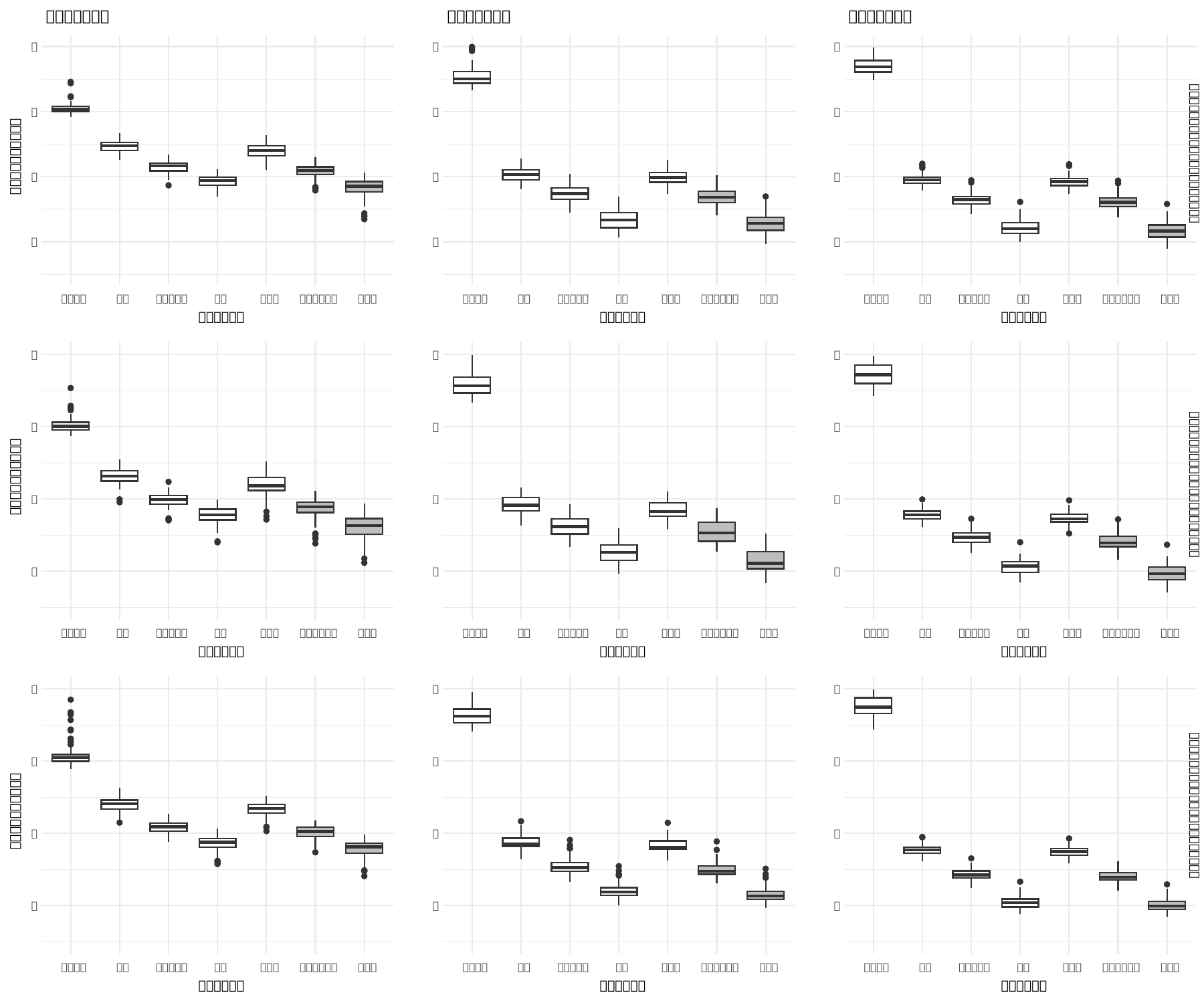}
            \caption{Observational Studies: Boxplot of MSE calculated for scenarios with 10 explanatory variables and errors that are uncorrelated, i.e., $\rho_1=0$ and $\rho_2=0$}
        \end{subfigure}
    \end{minipage}
    %\caption{Overall caption for the figures}
    %%\label{fig:figures}
\end{figure}
%%%%%%%%%%%%%%
%Setting2
\begin{figure}[ht]
    \centering
    \begin{minipage}[b]{0.8\textwidth}
        \centering
        \begin{subfigure}[b]{\textwidth}
            \centering
            \includegraphics[width=0.9\textwidth]{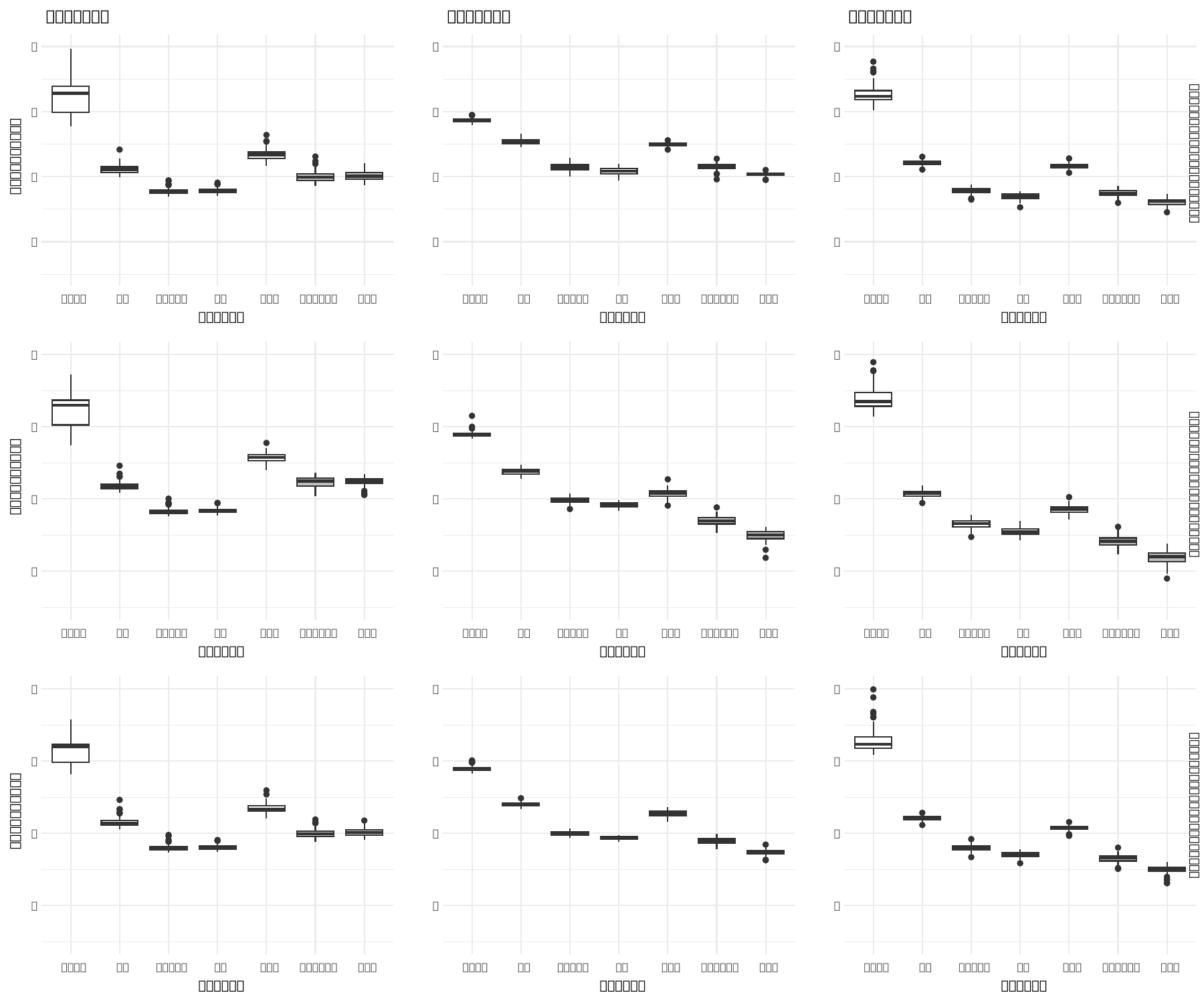}
            \caption{RCTs: Boxplot of MSE calculated for scenarios with 50 explanatory variables and errors that are uncorrelated, i.e., $\rho_1=0$ and $\rho_2=0$}
        \end{subfigure}
    \end{minipage}
    \vspace{0.5cm} % Adjust vertical space between figures
    \begin{minipage}[b]{0.8\textwidth}
        \centering
        \begin{subfigure}[b]{\textwidth}
            \centering
            \includegraphics[width=0.9\textwidth]{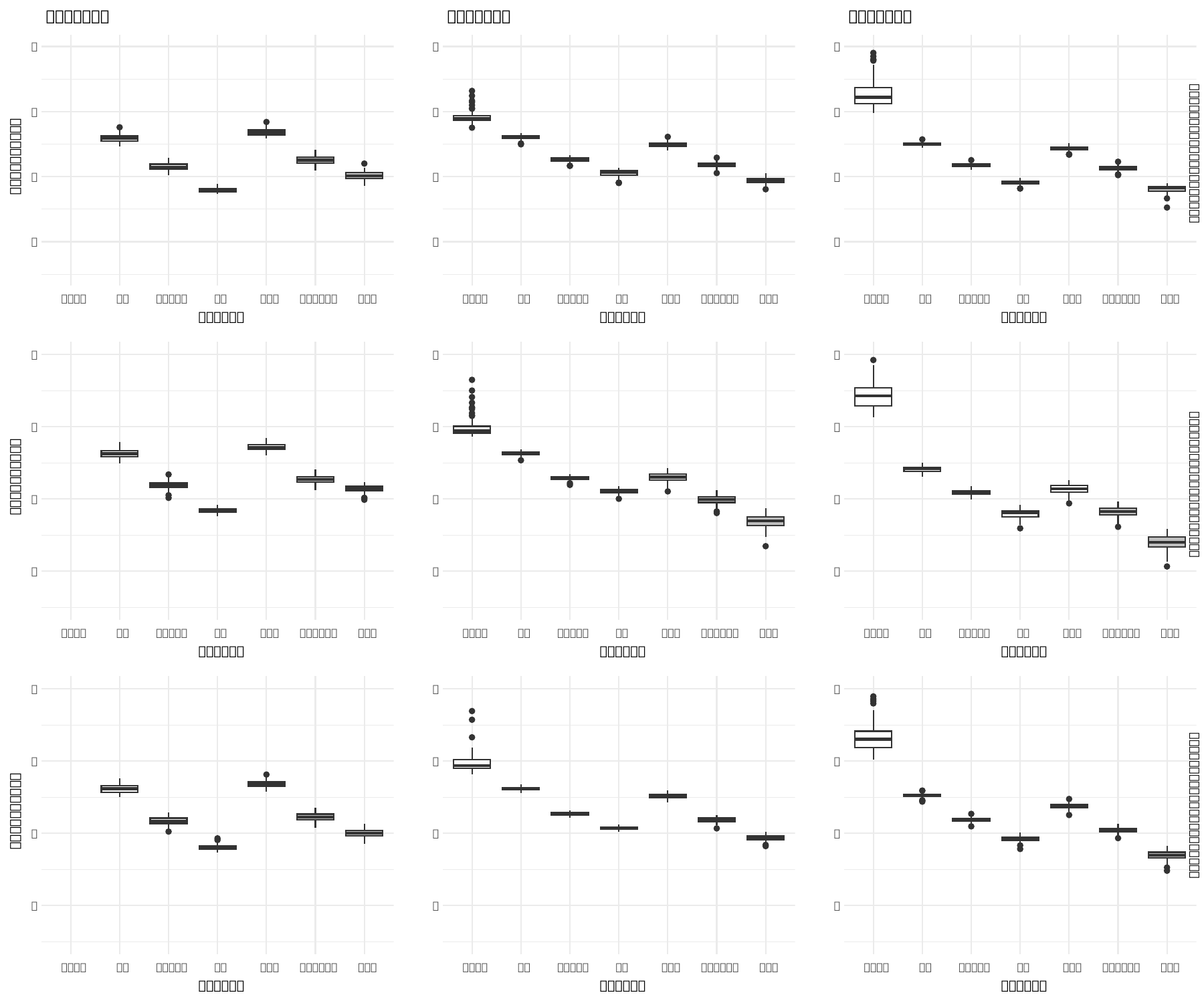}
            \caption{Observational Studies: Boxplot of MSE calculated for scenarios with 50 explanatory variables and errors that are uncorrelated, i.e., $\rho_1=0$ and $\rho_2=0$}
        \end{subfigure}
    \end{minipage}
    %\caption{Overall caption for the figures}
    %%\label{fig:figures}
\end{figure}
%%%%%%%%%%%%%%
%Setting3
\begin{figure}[ht]
    \centering
    \begin{minipage}[b]{0.8\textwidth}
        \centering
        \begin{subfigure}[b]{\textwidth}
            \centering
            \includegraphics[width=0.9\textwidth]{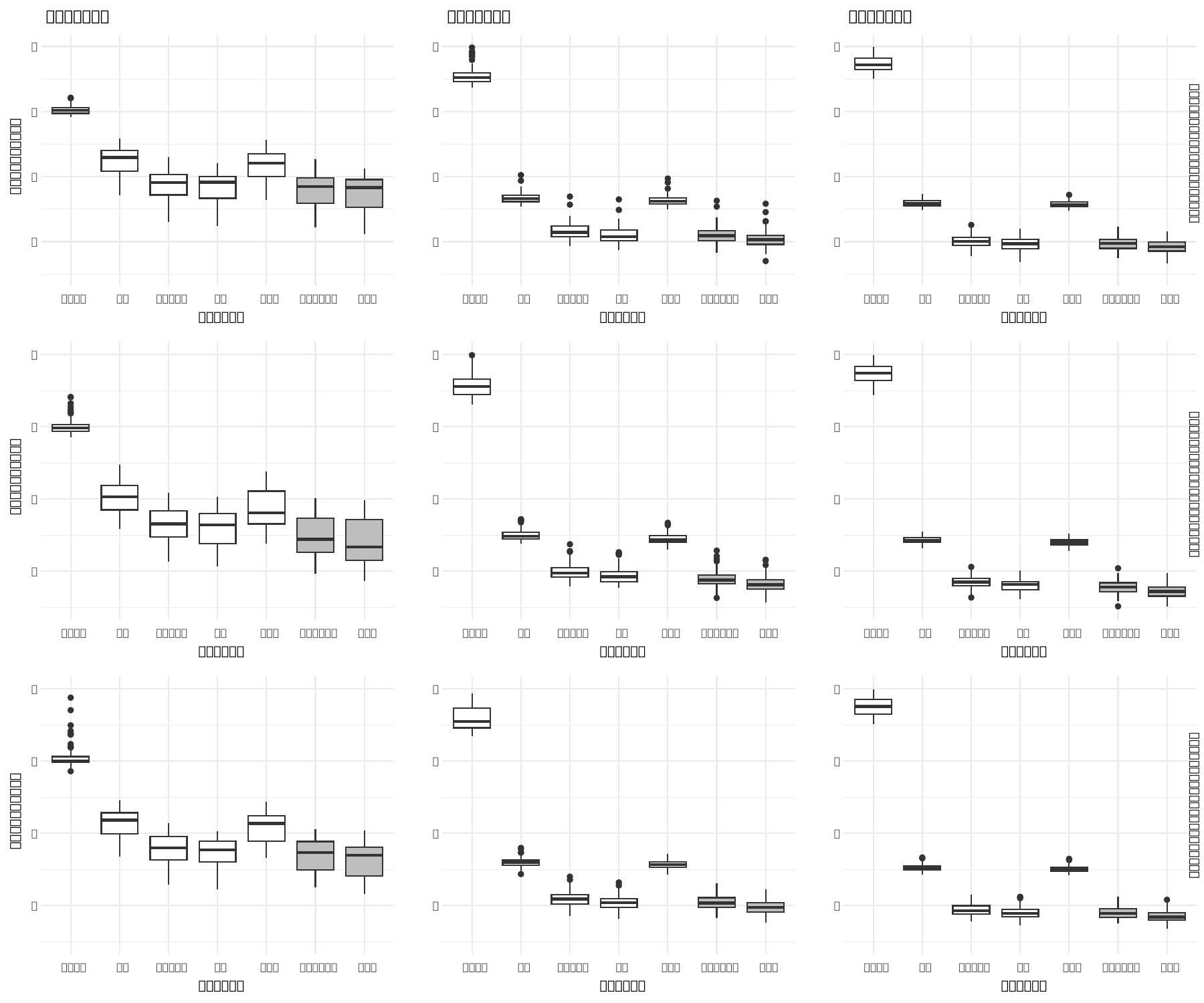}
            \caption{RCTs: Boxplot of MSE calculated for scenarios with 10 explanatory variables that are moderately correlated and errors that are uncorrelated, i.e., $\rho_1=1/3$ and $\rho_2=0$}
        \end{subfigure}
    \end{minipage}
    \vspace{0.5cm} % Adjust vertical space between figures
    \begin{minipage}[b]{0.8\textwidth}
        \centering
        \begin{subfigure}[b]{\textwidth}
            \centering
            \includegraphics[width=0.9\textwidth]{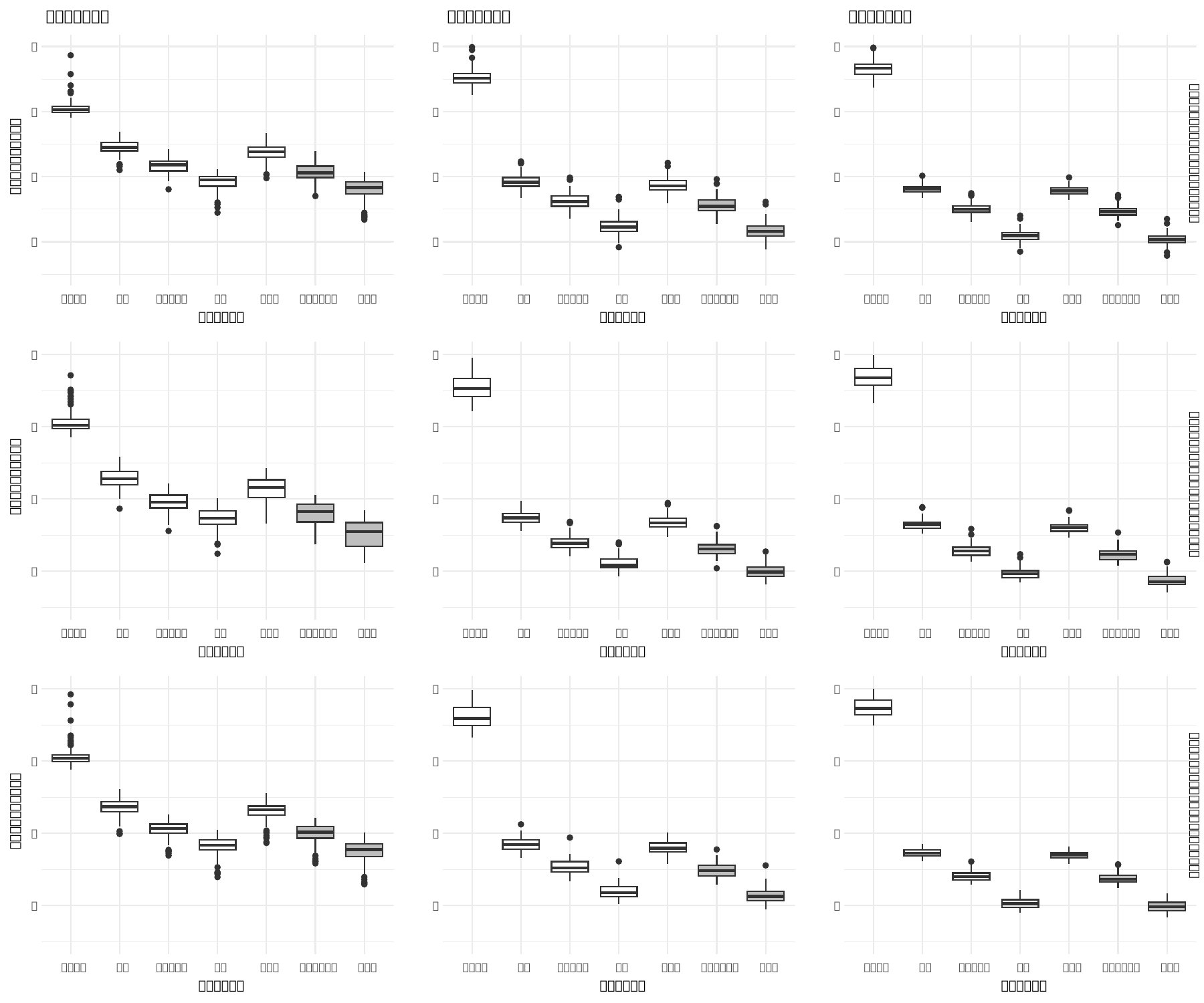}
            \caption{Observational Studies: Boxplot of MSE calculated for scenarios with 10 explanatory variables that are moderately correlated and errors that are uncorrelated, i.e., $\rho_1=1/3$ and $\rho_2=0$}
        \end{subfigure}
    \end{minipage}
    %\caption{Overall caption for the figures}
    %%\label{fig:figures}
\end{figure}
%%%%%%%%%%%%%%
%Setting4
\begin{figure}[ht]
    \centering
    \begin{minipage}[b]{0.8\textwidth}
        \centering
        \begin{subfigure}[b]{\textwidth}
            \centering
            \includegraphics[width=0.9\textwidth]{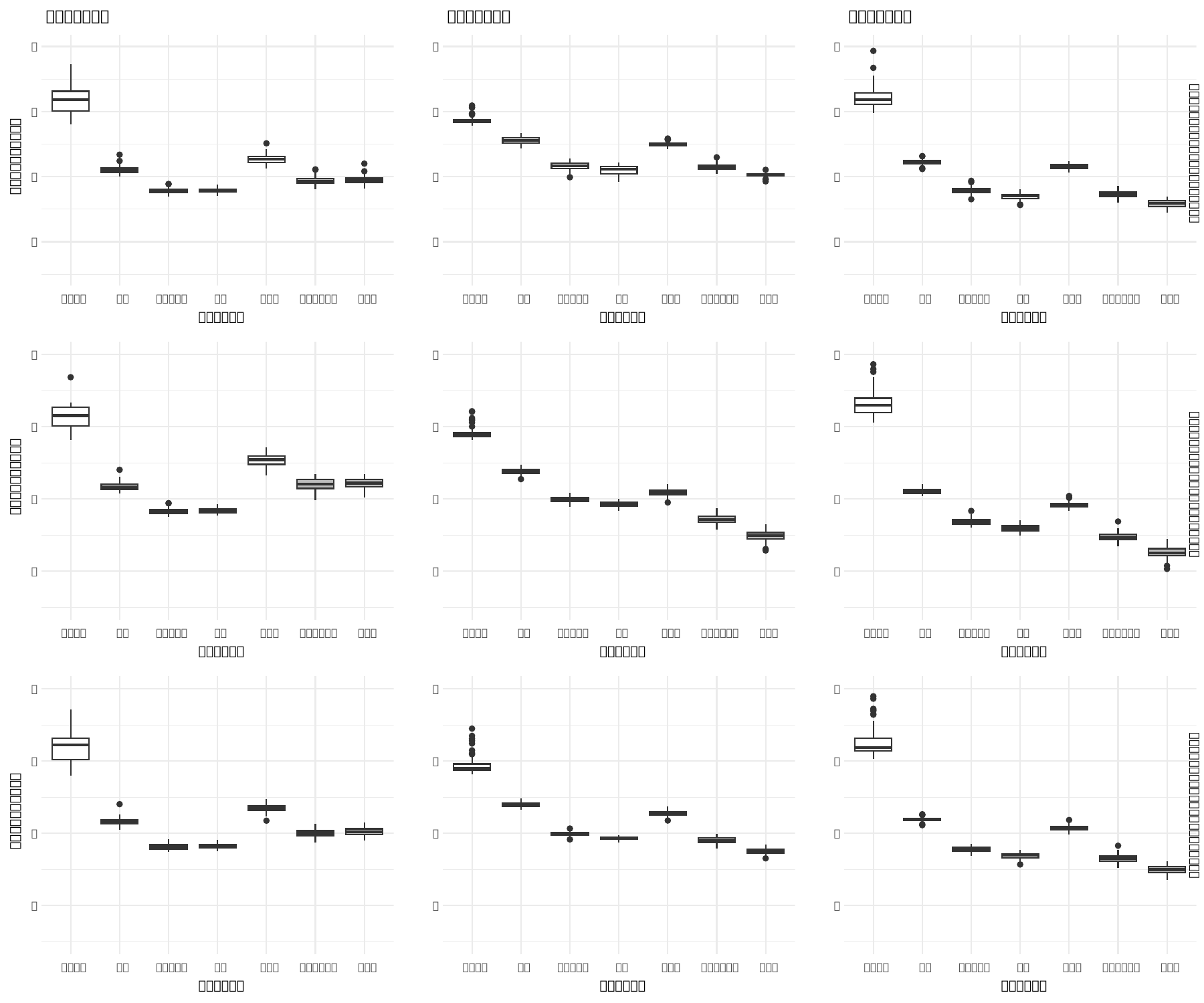}
            \caption{RCTs: Boxplot of MSE calculated for scenarios with 50 explanatory variables that are moderately correlated and errors that are uncorrelated, i.e., $\rho_1=1/3$ and $\rho_2=0$}
        \end{subfigure}
    \end{minipage}
    \vspace{0.5cm} % Adjust vertical space between figures
    \begin{minipage}[b]{0.8\textwidth}
        \centering
        \begin{subfigure}[b]{\textwidth}
            \centering
            \includegraphics[width=0.9\textwidth]{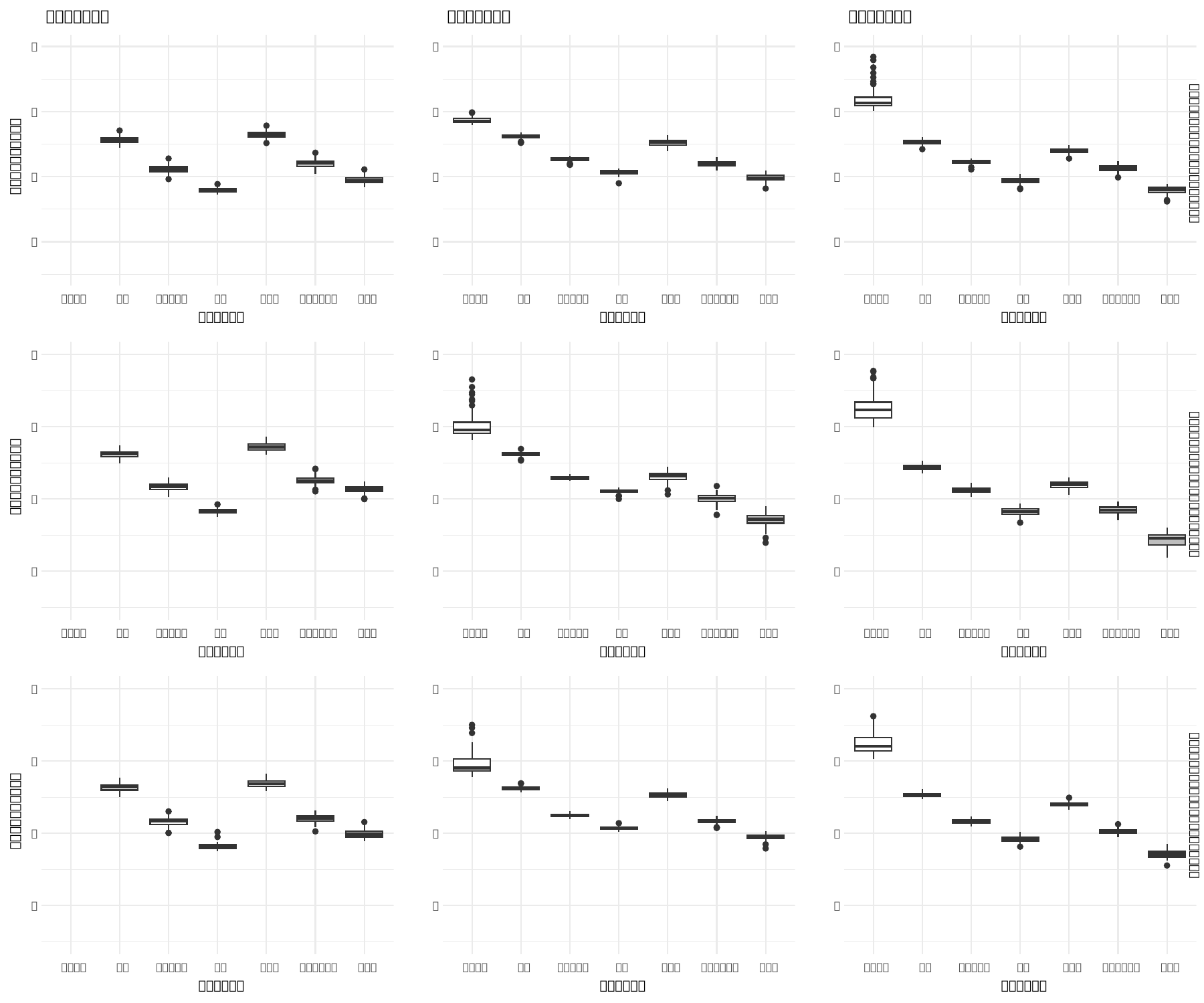}
            \caption{Observational Studies: Boxplot of MSE calculated for scenarios with 50 explanatory variables that are moderately correlated and errors that are uncorrelated, i.e., $\rho_1=1/3$ and $\rho_2=0$}
        \end{subfigure}
    \end{minipage}
    %\caption{Overall caption for the figures}
    %%\label{fig:figures}
\end{figure}
%%%%%%%%%%%%%%
%Setting5
\begin{figure}[ht]
    \centering
    \begin{minipage}[b]{0.8\textwidth}
        \centering
        \begin{subfigure}[b]{\textwidth}
            \centering
            \includegraphics[width=0.9\textwidth]{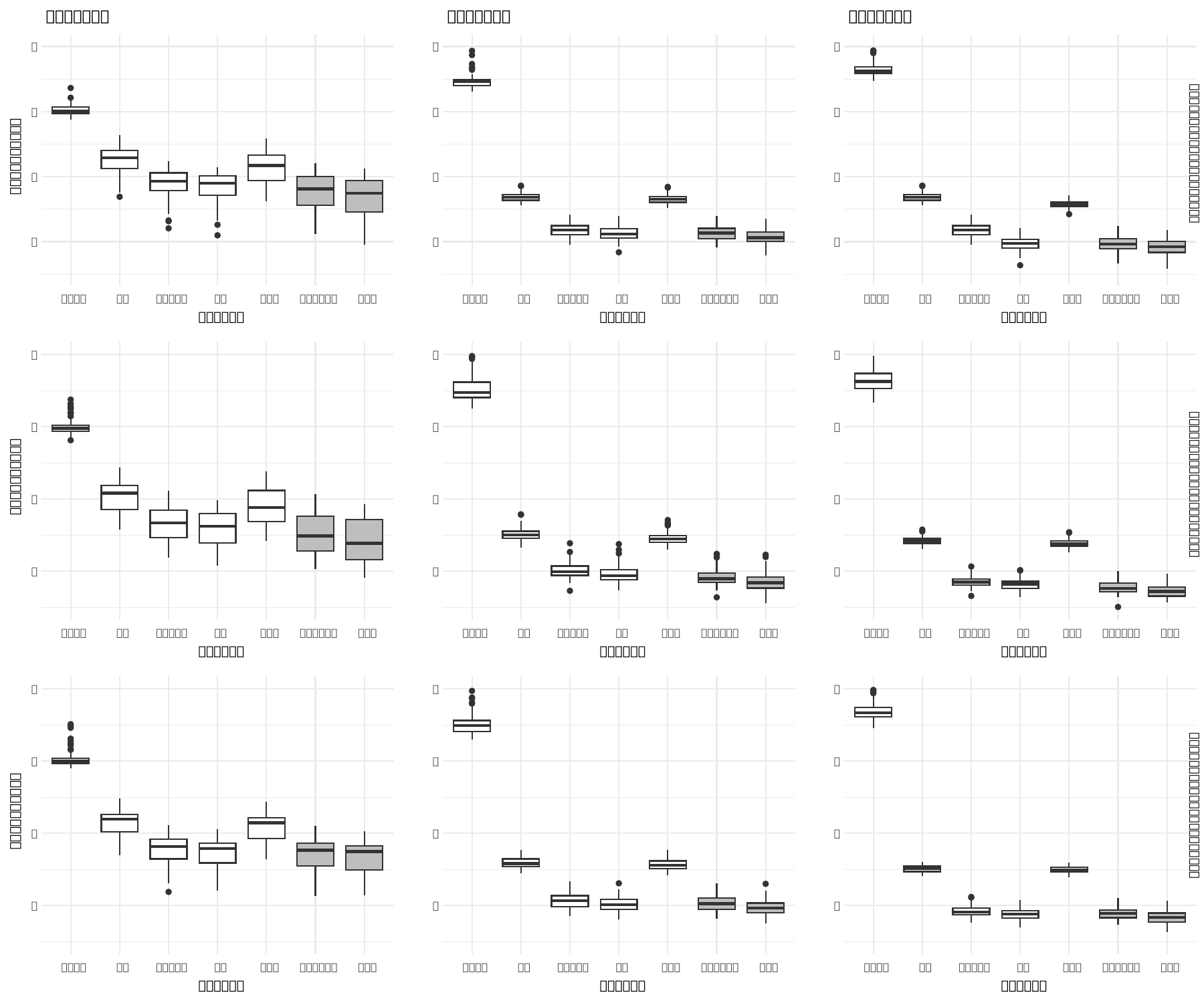}
            \caption{RCTs: Boxplot of MSE calculated for scenarios with 10 explanatory variables that are highly correlated and errors that are uncorrelated, i.e., $\rho_1=2/3$ and $\rho_2=0$}
        \end{subfigure}
    \end{minipage}
    \vspace{0.5cm} % Adjust vertical space between figures
    \begin{minipage}[b]{0.8\textwidth}
        \centering
        \begin{subfigure}[b]{\textwidth}
            \centering
            \includegraphics[width=0.9\textwidth]{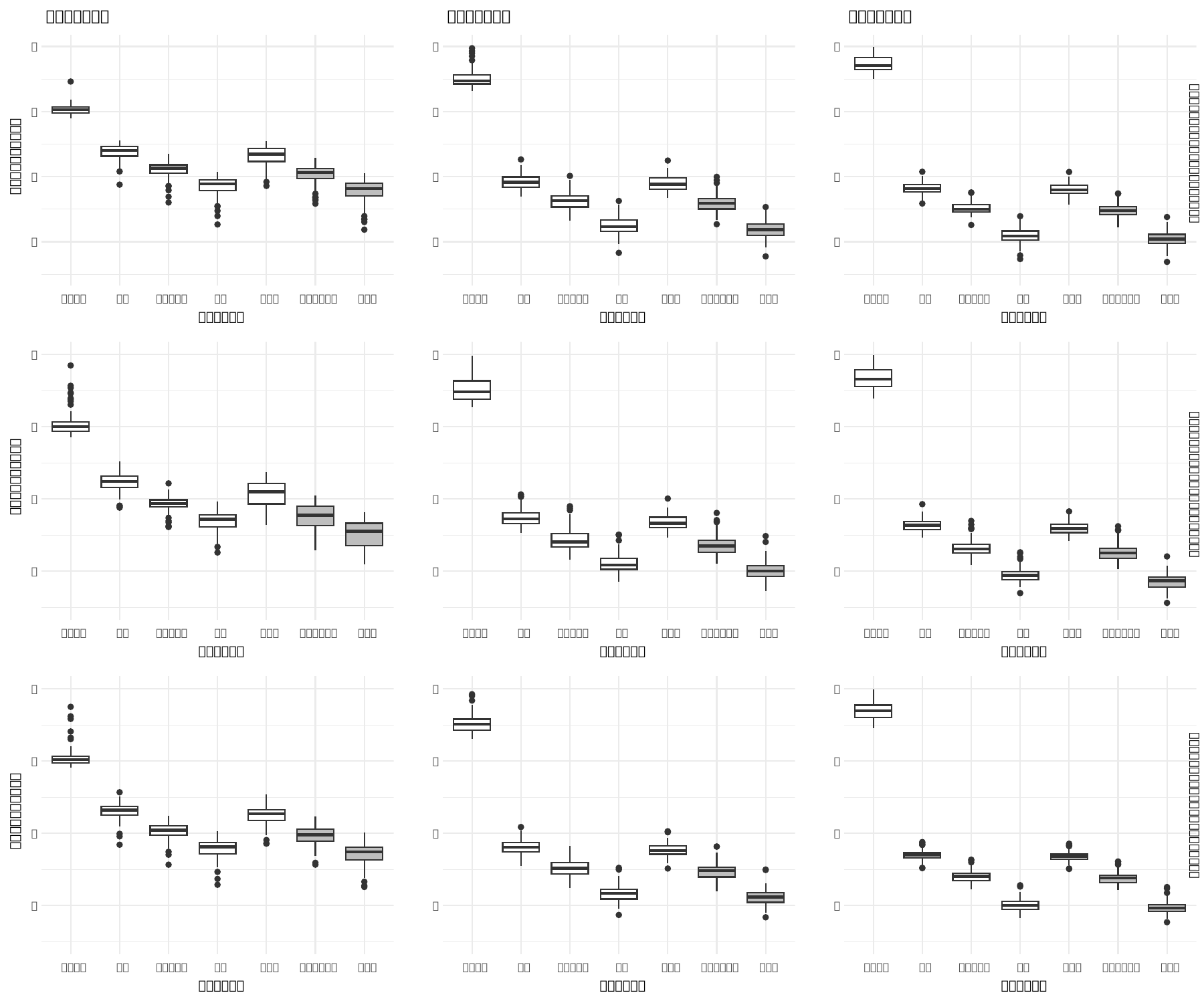}
            \caption{Observational Studies: Boxplot of MSE calculated for scenarios with 10 explanatory variables that are highly correlated and errors that are uncorrelated, i.e., $\rho_1=2/3$ and $\rho_2=0$}
        \end{subfigure}
    \end{minipage}
    %\caption{Overall caption for the figures}
    %%%\label{fig:figures}
\end{figure}
%%%%%%%%%%%%%%
%Setting6
\begin{figure}[ht]
    \centering
    \begin{minipage}[b]{0.8\textwidth}
        \centering
        \begin{subfigure}[b]{\textwidth}
            \centering
            \includegraphics[width=0.9\textwidth]{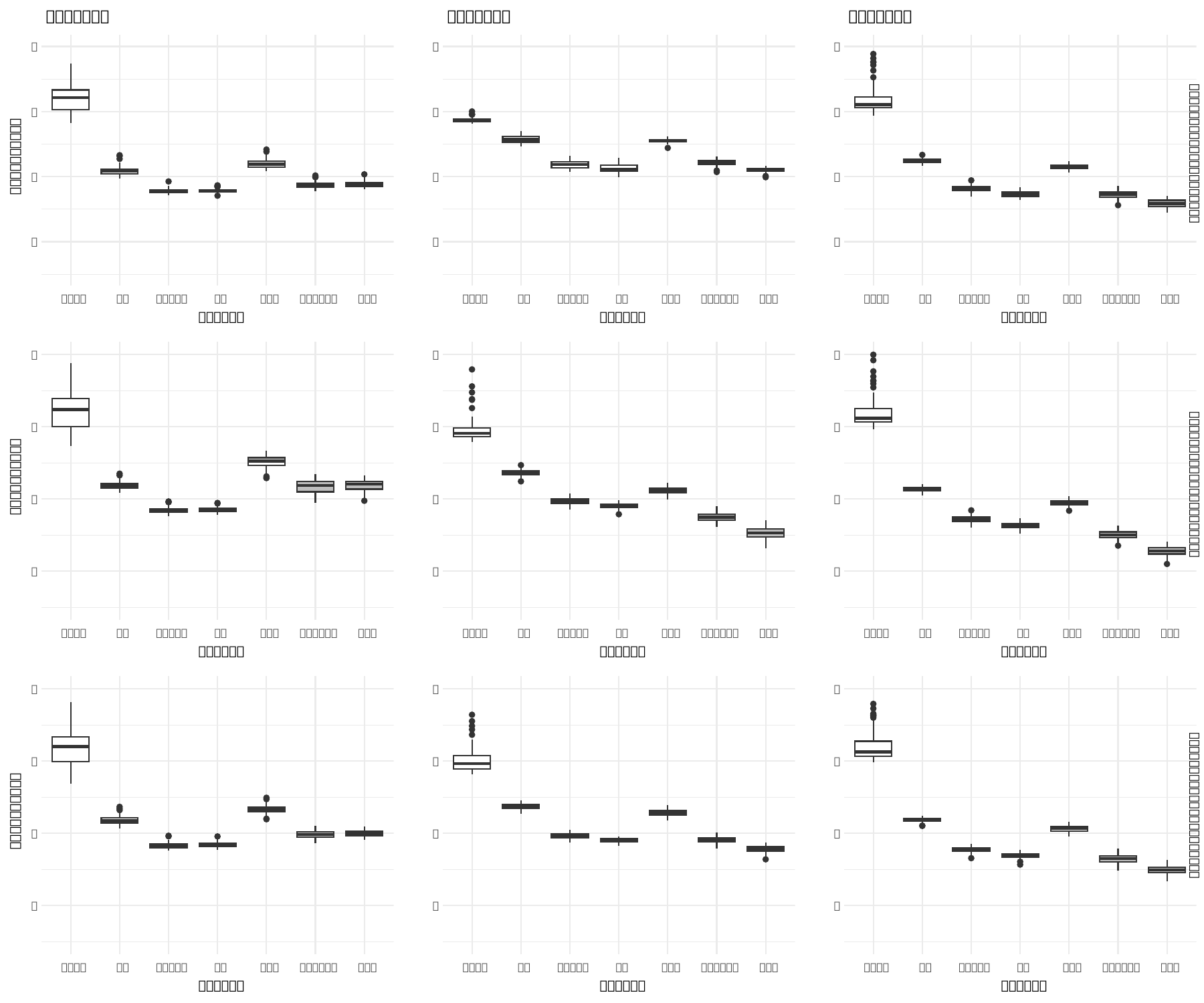}
            \caption{RCTs: Boxplot of MSE calculated for scenarios with 50 explanatory variables that are highly correlated and errors that are uncorrelated, i.e., $\rho_1=2/3$ and $\rho_2=0$}
        \end{subfigure}
    \end{minipage}
    \vspace{0.5cm} % Adjust vertical space between figures
    \begin{minipage}[b]{0.8\textwidth}
        \centering
        \begin{subfigure}[b]{\textwidth}
            \centering
            \includegraphics[width=0.9\textwidth]{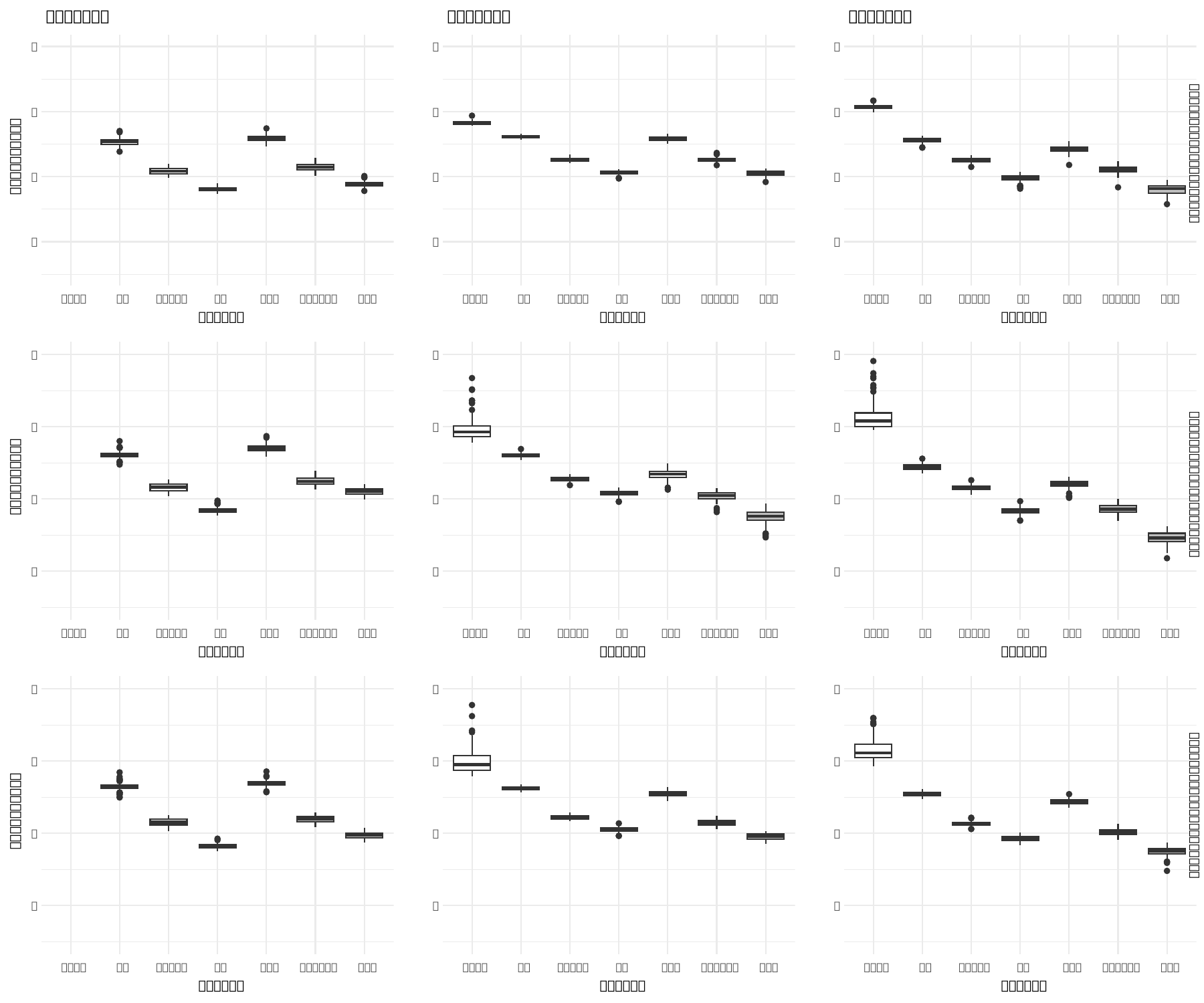}
            \caption{Observational Studies: Boxplot of MSE calculated for scenarios with 50 explanatory variables that are highly correlated and errors that are uncorrelated, i.e., $\rho_1=2/3$ and $\rho_2=0$}
        \end{subfigure}
    \end{minipage}
    %\caption{Overall caption for the figures}
    
\end{figure}
%%%%%%%%%%%%%%
%Setting7
\begin{figure}[ht]
    \centering
    \begin{minipage}[b]{0.8\textwidth}
        \centering
        \begin{subfigure}[b]{\textwidth}
            \centering
            \includegraphics[width=0.9\textwidth]{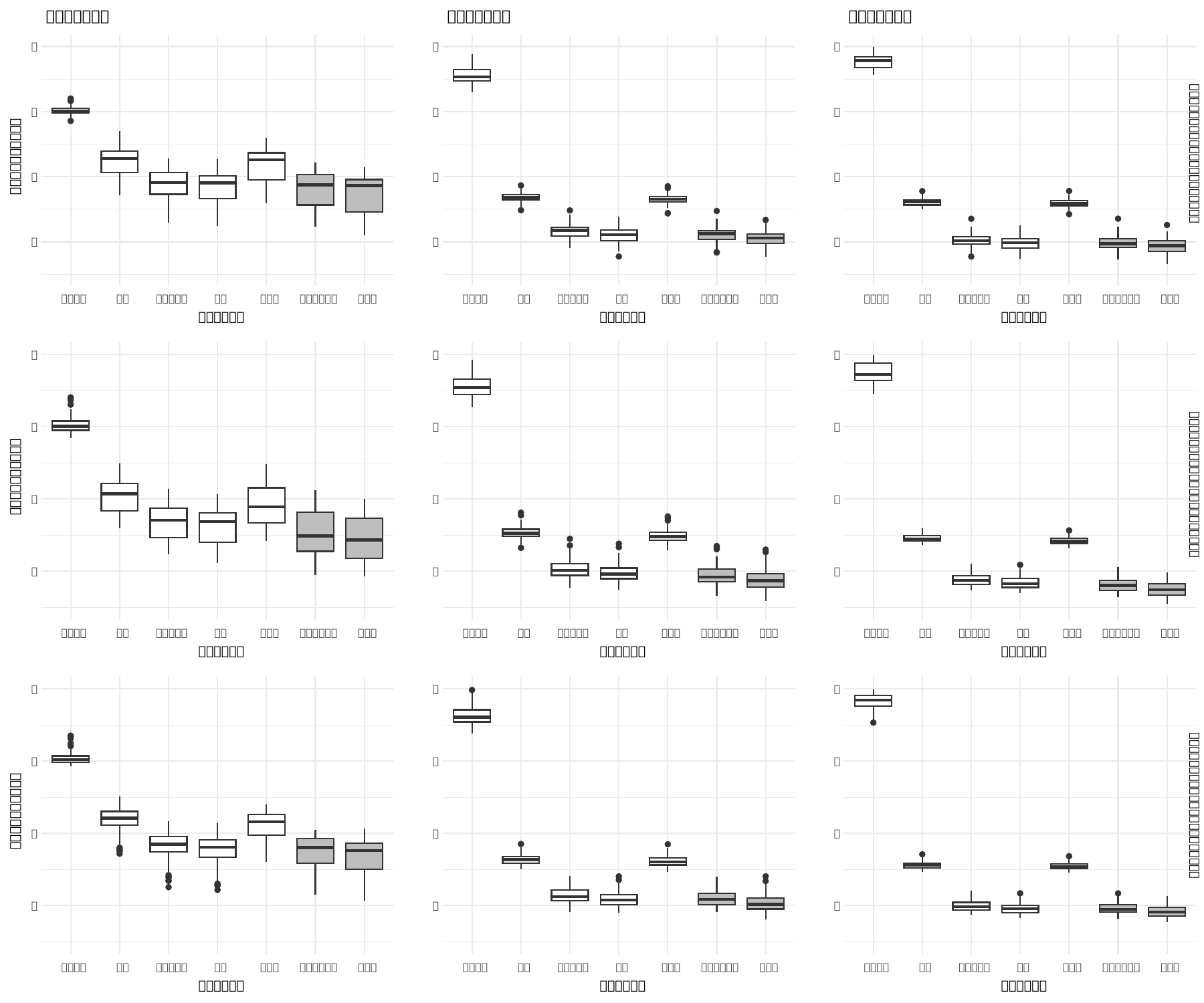}
            \caption{RCTs: Boxplot of MSE calculated for scenarios with 10 explanatory variables that are uncorrelated and errors that are moderately correlated, i.e., $\rho_1=0$ and $\rho_2=1/3$}
        \end{subfigure}
    \end{minipage}
    \vspace{0.5cm} % Adjust vertical space between figures
    \begin{minipage}[b]{0.8\textwidth}
        \centering
        \begin{subfigure}[b]{\textwidth}
            \centering
            \includegraphics[width=0.9\textwidth]{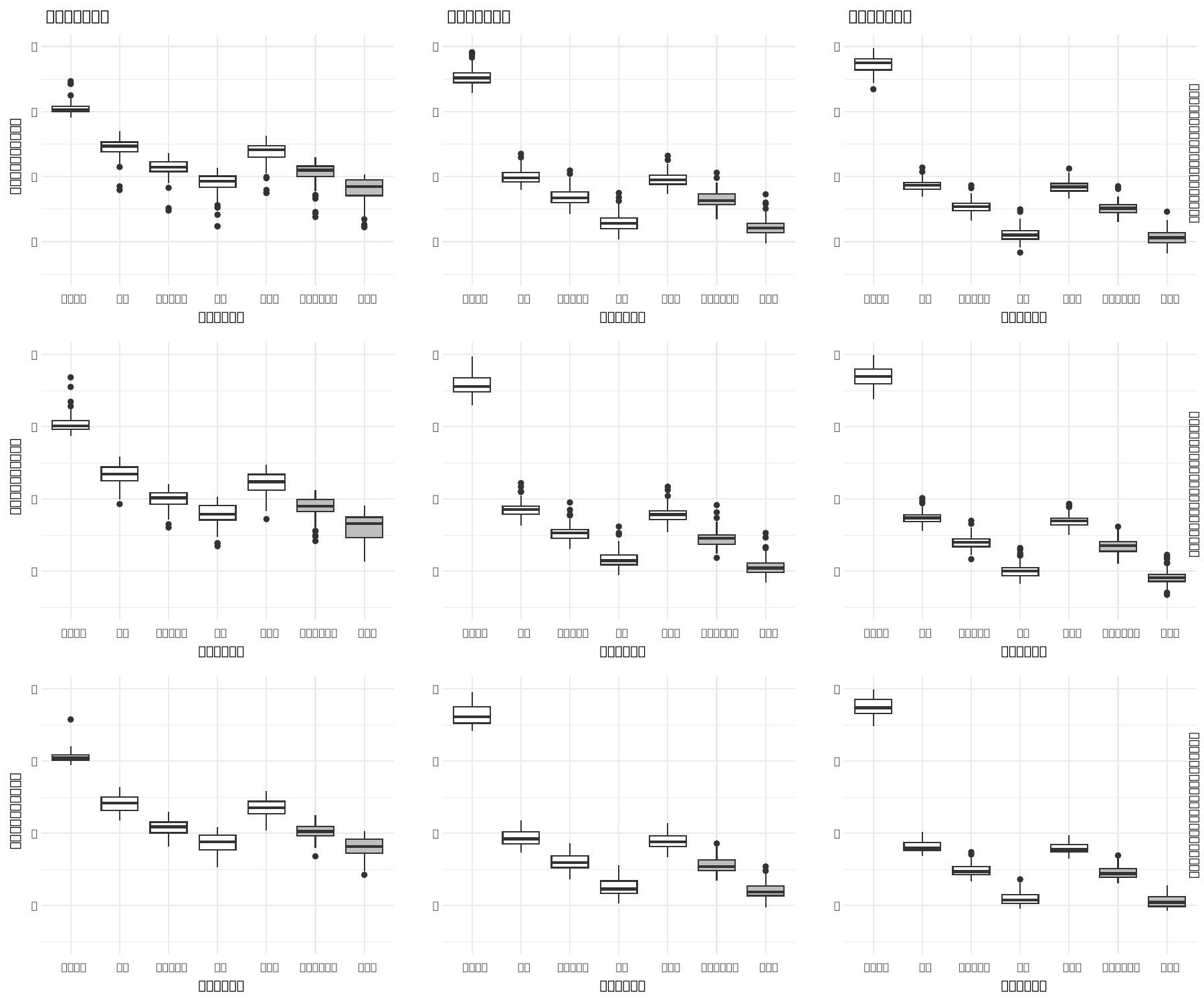}
            \caption{Observational Studies: Boxplot of MSE calculated for scenarios with 10 explanatory variables that are uncorrelated and errors that are moderately correlated, i.e., $\rho_1=0$ and $\rho_2=1/3$}
        \end{subfigure}
    \end{minipage}
\end{figure}
%%%%%%%%%%%%%%
%Setting8
\begin{figure}[ht]
    \centering
    \begin{minipage}[b]{0.8\textwidth}
        \centering
        \begin{subfigure}[b]{\textwidth}
            \centering
            \includegraphics[width=0.9\textwidth]{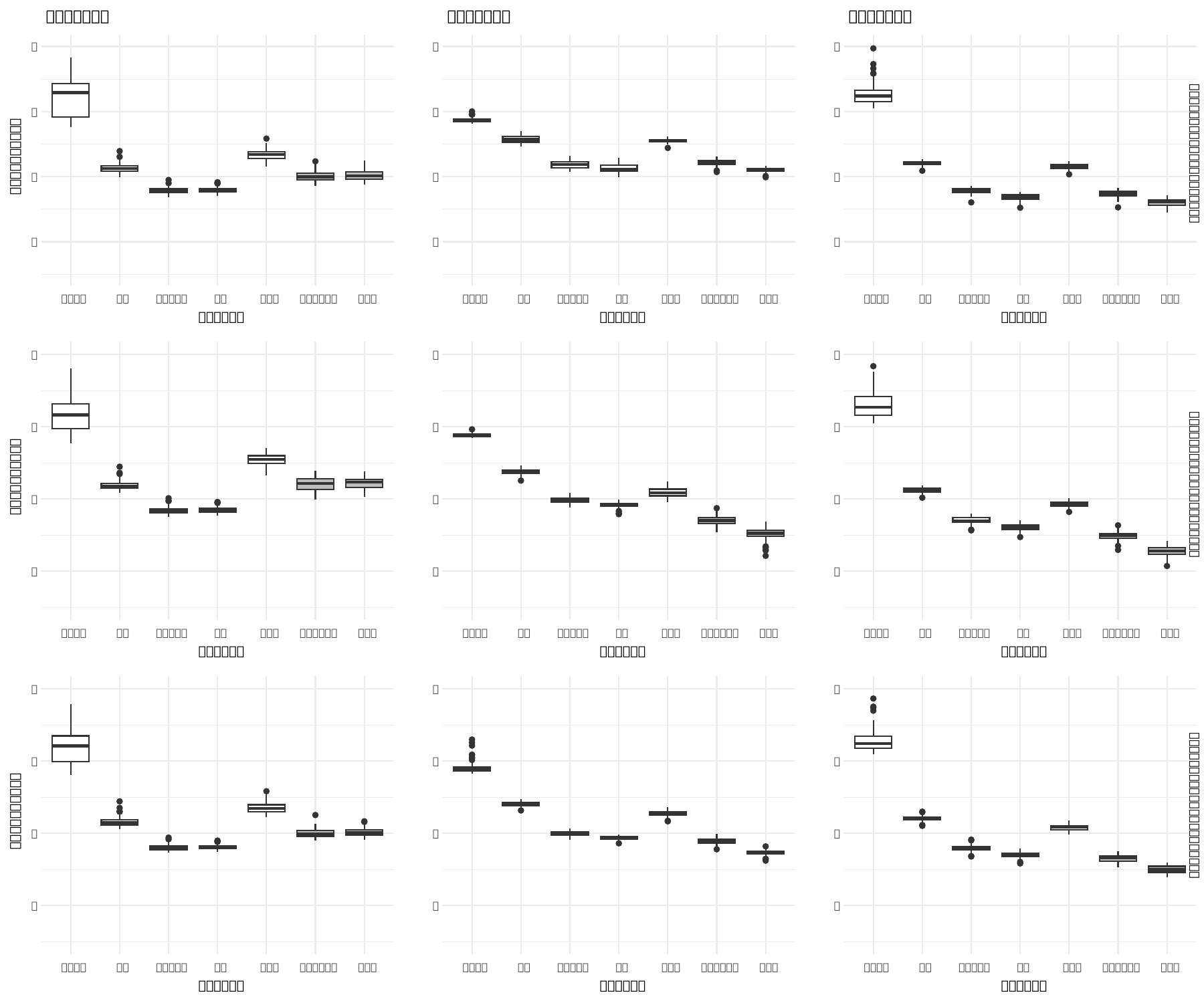}
            \caption{RCTs: Boxplot of MSE calculated for scenarios with 50 explanatory variables that are uncorrelated and errors that are moderately correlated, i.e., $\rho_1=0$ and $\rho_2=1/3$}
        \end{subfigure}
    \end{minipage}
    \vspace{0.5cm} % Adjust vertical space between figures
    \begin{minipage}[b]{0.8\textwidth}
        \centering
        \begin{subfigure}[b]{\textwidth}
            \centering
            \includegraphics[width=0.9\textwidth]{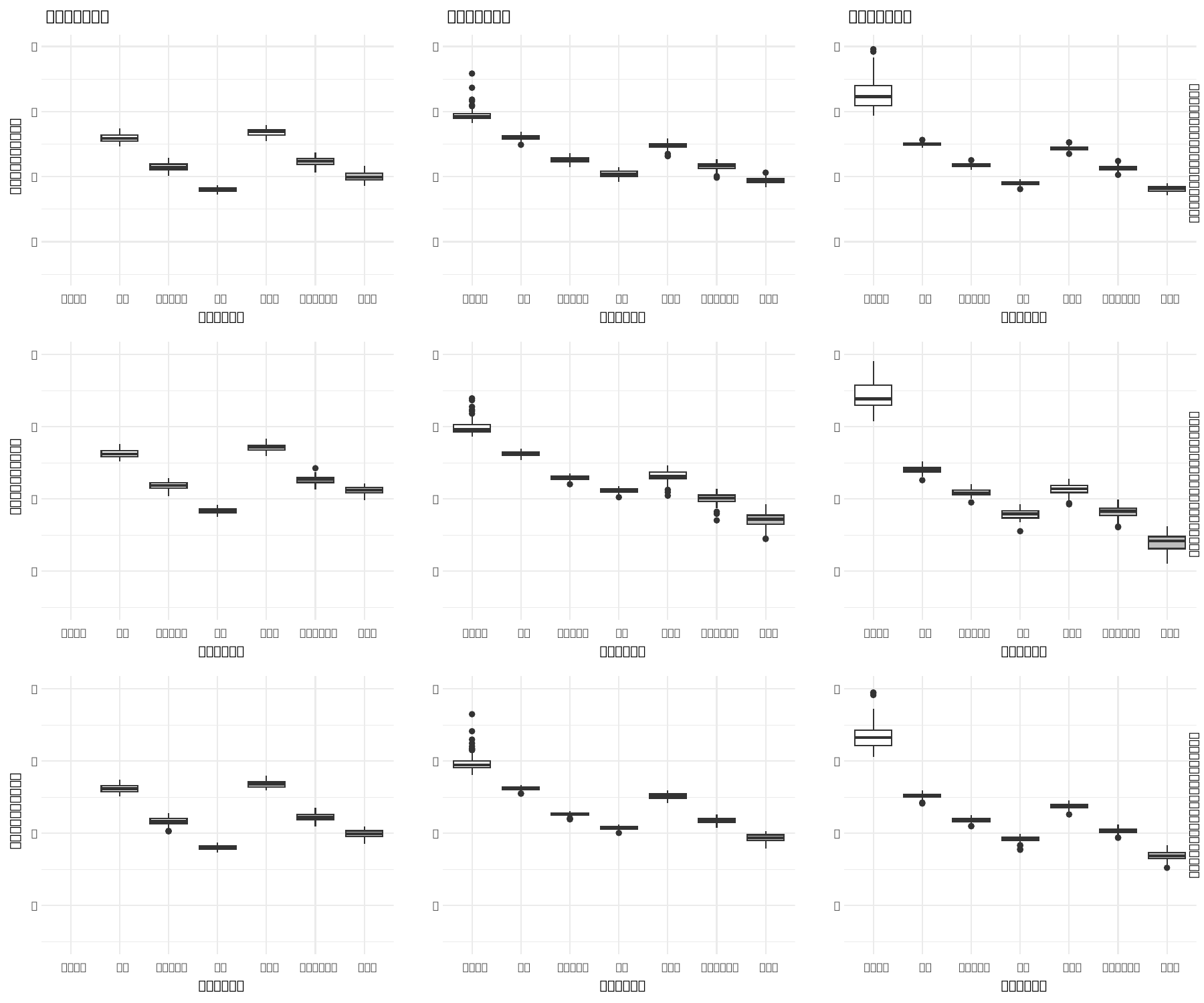}
            \caption{Observational Studies: Boxplot of MSE calculated for scenarios with 50 explanatory variables that are uncorrelated and errors that are moderately correlated, i.e., $\rho_1=0$ and $\rho_2=1/3$}
        \end{subfigure}
    \end{minipage}
    \end{figure}
%%%%%%%%%%%%%%
%Setting9
\begin{figure}[ht]
    \centering
    \begin{minipage}[b]{0.8\textwidth}
        \centering
        \begin{subfigure}[b]{\textwidth}
            \centering
            \includegraphics[width=0.9\textwidth]{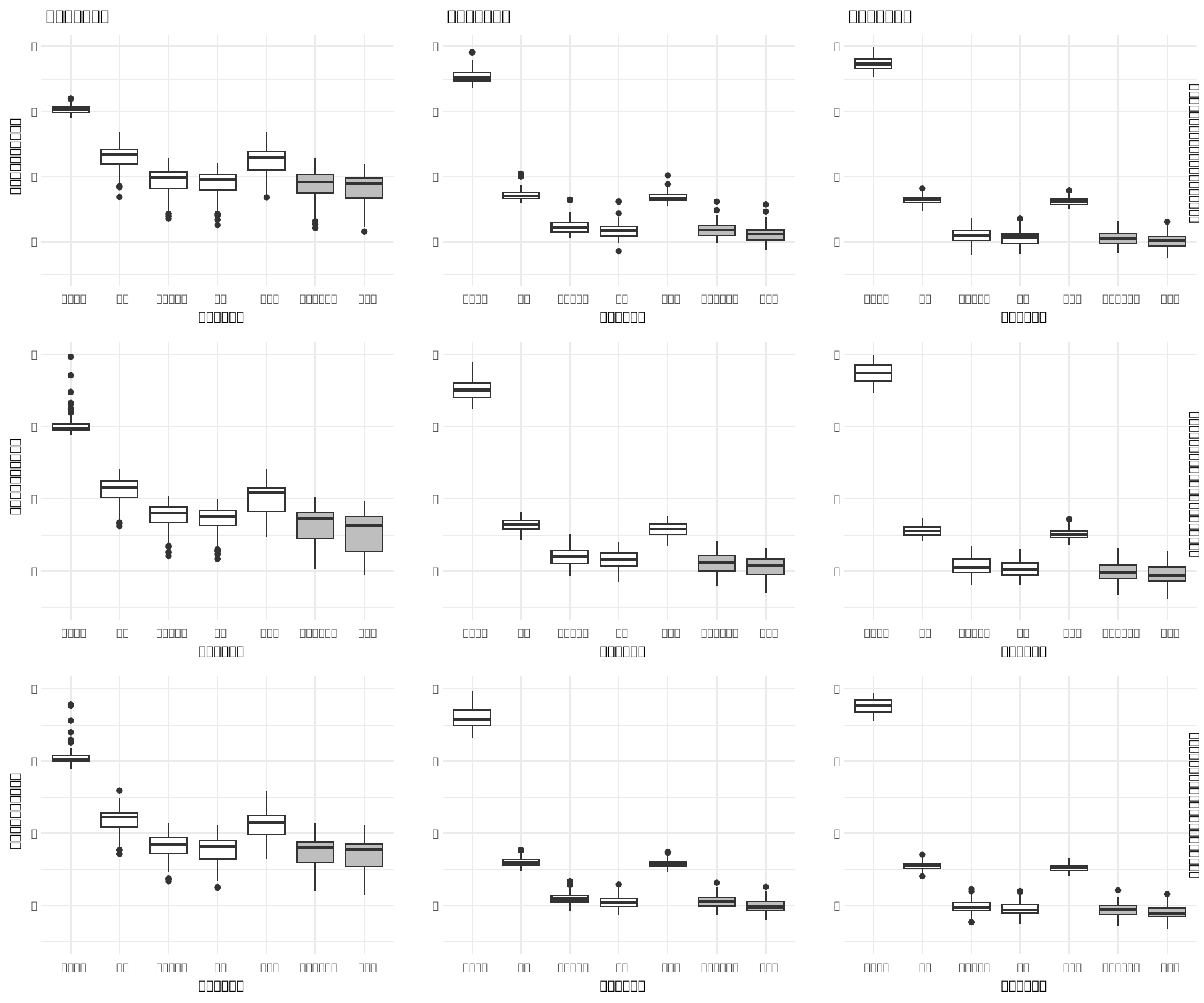}
            \caption{RCTs: Boxplot of MSE calculated for scenarios with 10 explanatory variables and errors that are moderately correlated, i.e., $\rho_1=1/3$ and $\rho_2=1/3$}
        \end{subfigure}
    \end{minipage}
    \vspace{0.5cm} % Adjust vertical space between figures
    \begin{minipage}[b]{0.8\textwidth}
        \centering
        \begin{subfigure}[b]{\textwidth}
            \centering
            \includegraphics[width=0.9\textwidth]{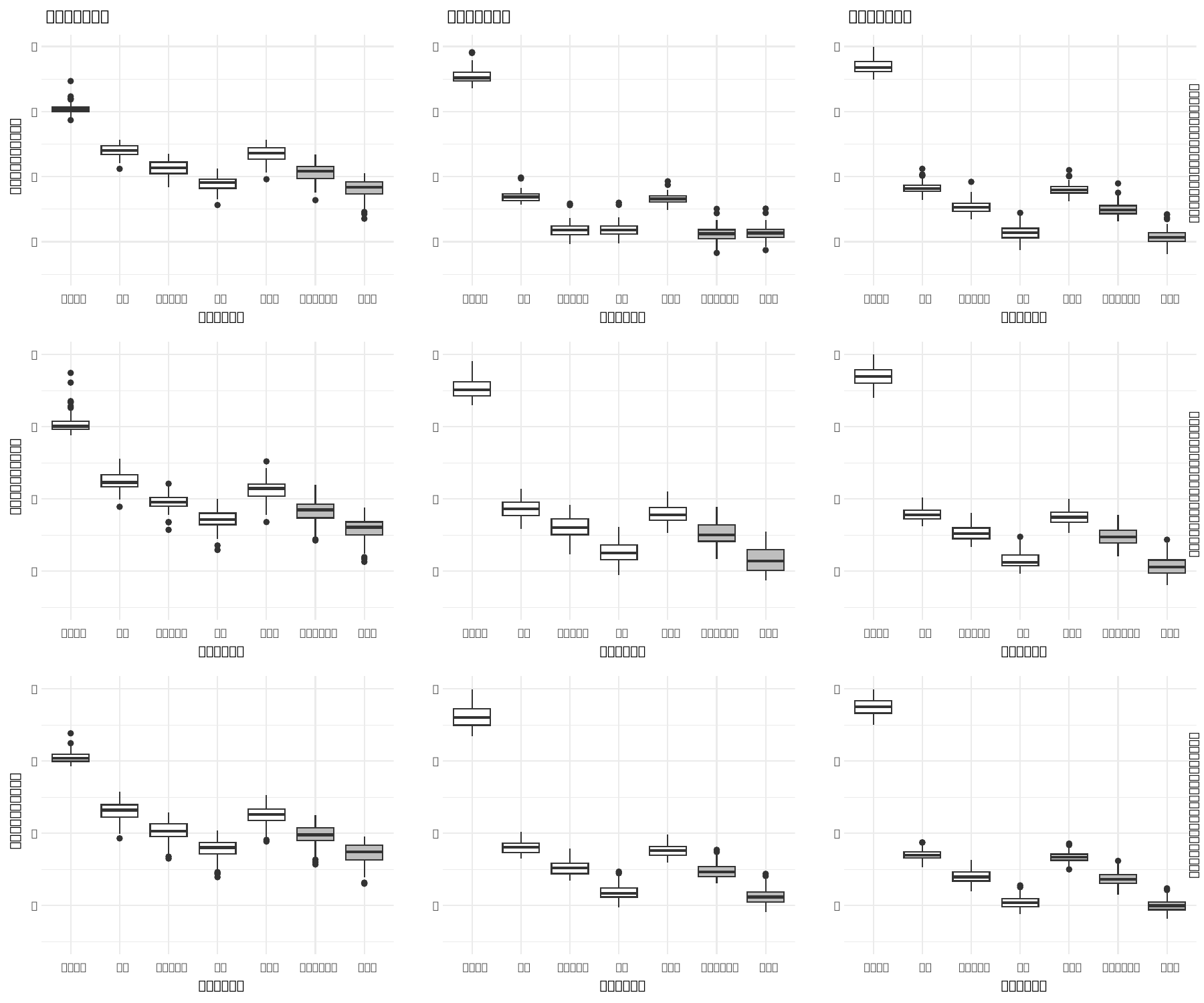}
            \caption{Observational Studies: Boxplot of MSE calculated for scenarios with 10 explanatory variables and errors that are moderately correlated, i.e., $\rho_1=1/3$ and $\rho_2=1/3$}
        \end{subfigure}
    \end{minipage}
\end{figure}
%%%%%%%%%%%%%%
%Setting10
\begin{figure}[ht]
    \centering
    \begin{minipage}[b]{0.8\textwidth}
        \centering
        \begin{subfigure}[b]{\textwidth}
            \centering
            \includegraphics[width=0.9\textwidth]{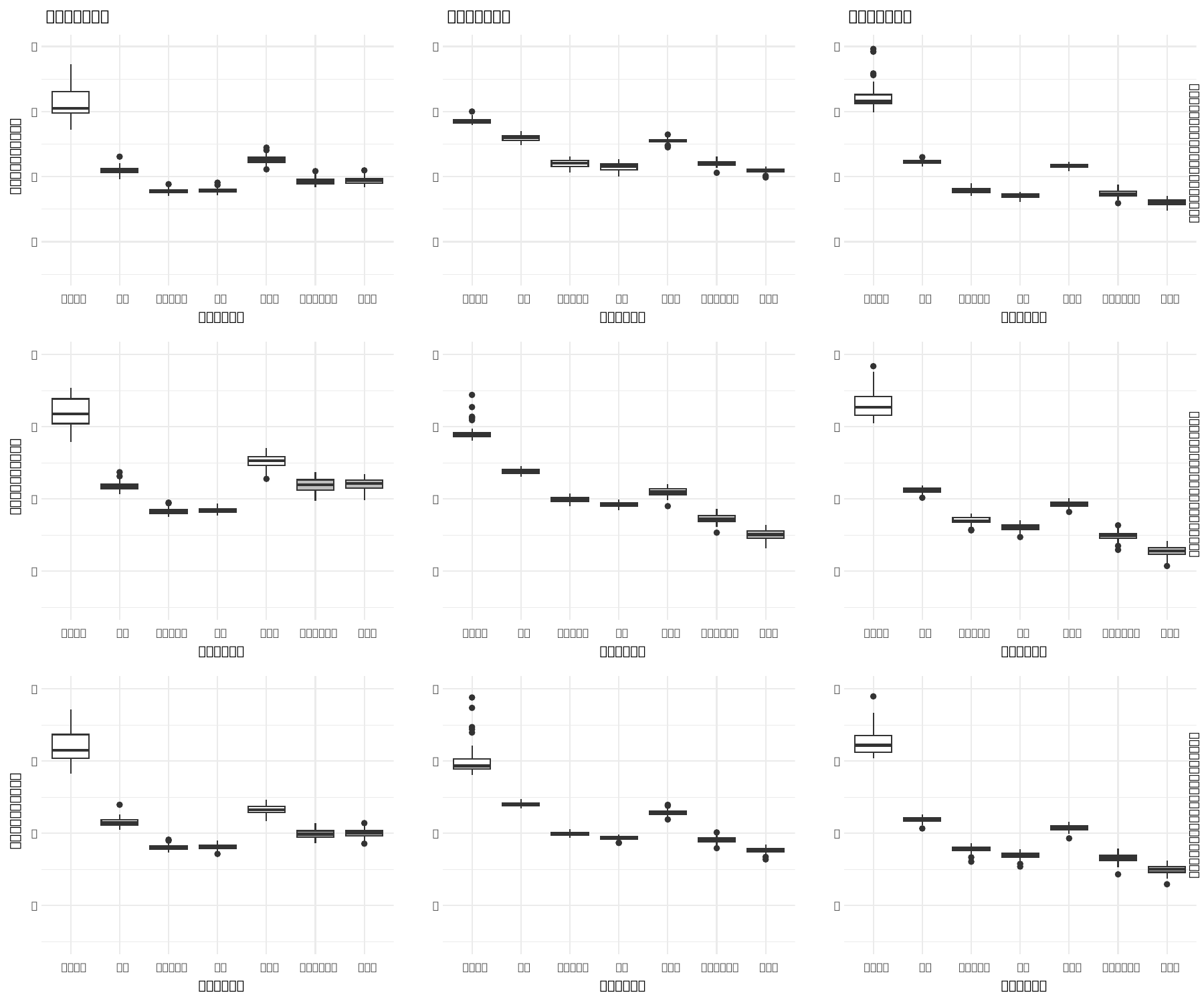}
            \caption{RCTs: Boxplot of MSE calculated for scenarios with 50 explanatory variables and errors that are moderately correlated, i.e., $\rho_1=1/3$ and $\rho_2=1/3$}
        \end{subfigure}
    \end{minipage}
    \vspace{0.5cm} % Adjust vertical space between figures
    \begin{minipage}[b]{0.8\textwidth}
        \centering
        \begin{subfigure}[b]{\textwidth}
            \centering
            \includegraphics[width=0.9\textwidth]{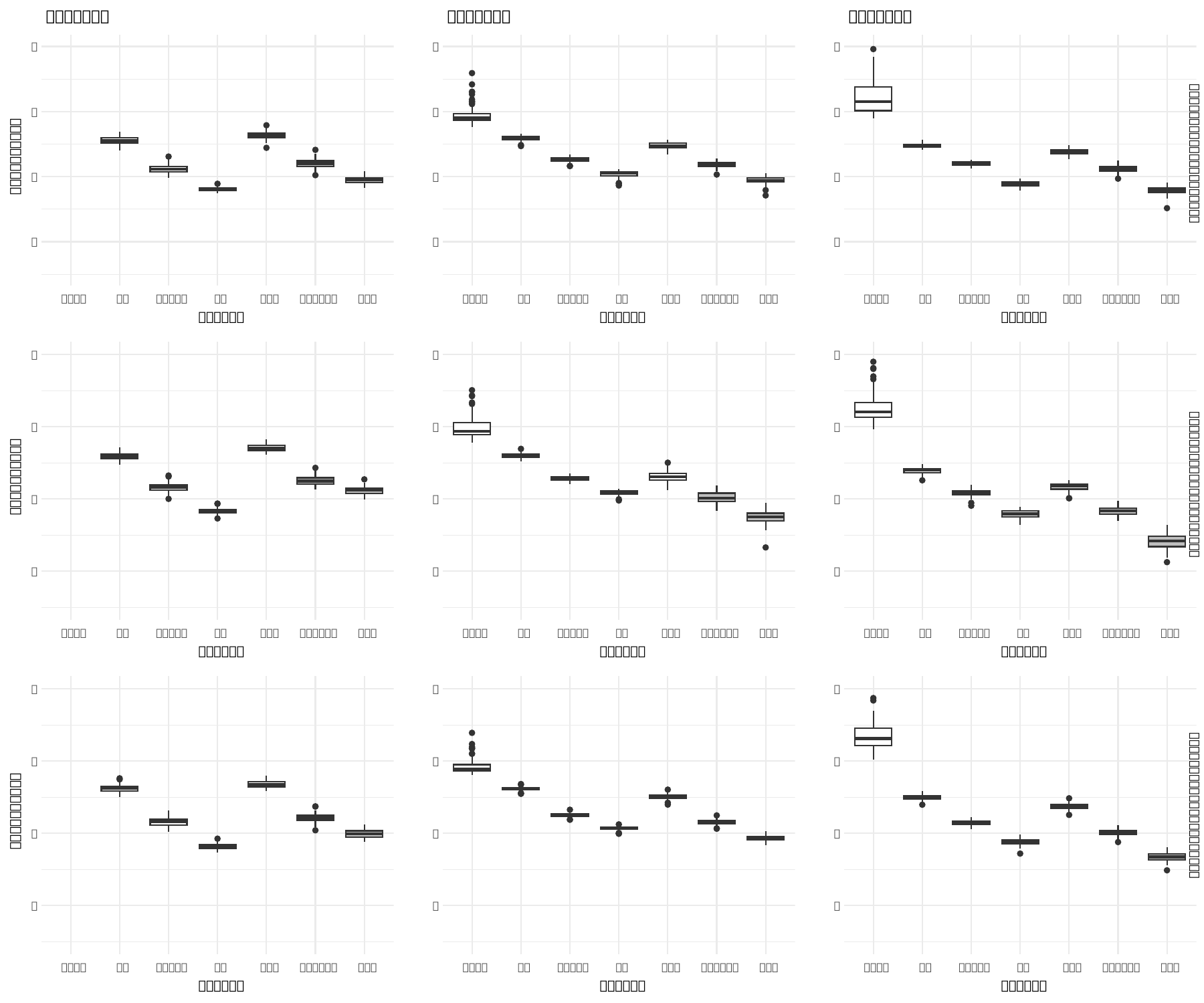}
            \caption{Observational Studies: Boxplot of MSE calculated for scenarios with 50 explanatory variables and errors that are moderately correlated, i.e., $\rho_1=1/3$ and $\rho_2=1/3$}
        \end{subfigure}
    \end{minipage}
    \end{figure}
%%%%%%%%%%%%%%
%Setting11
\begin{figure}[ht]
    \centering
    \begin{minipage}[b]{0.8\textwidth}
        \centering
        \begin{subfigure}[b]{\textwidth}
            \centering
            \includegraphics[width=0.9\textwidth]{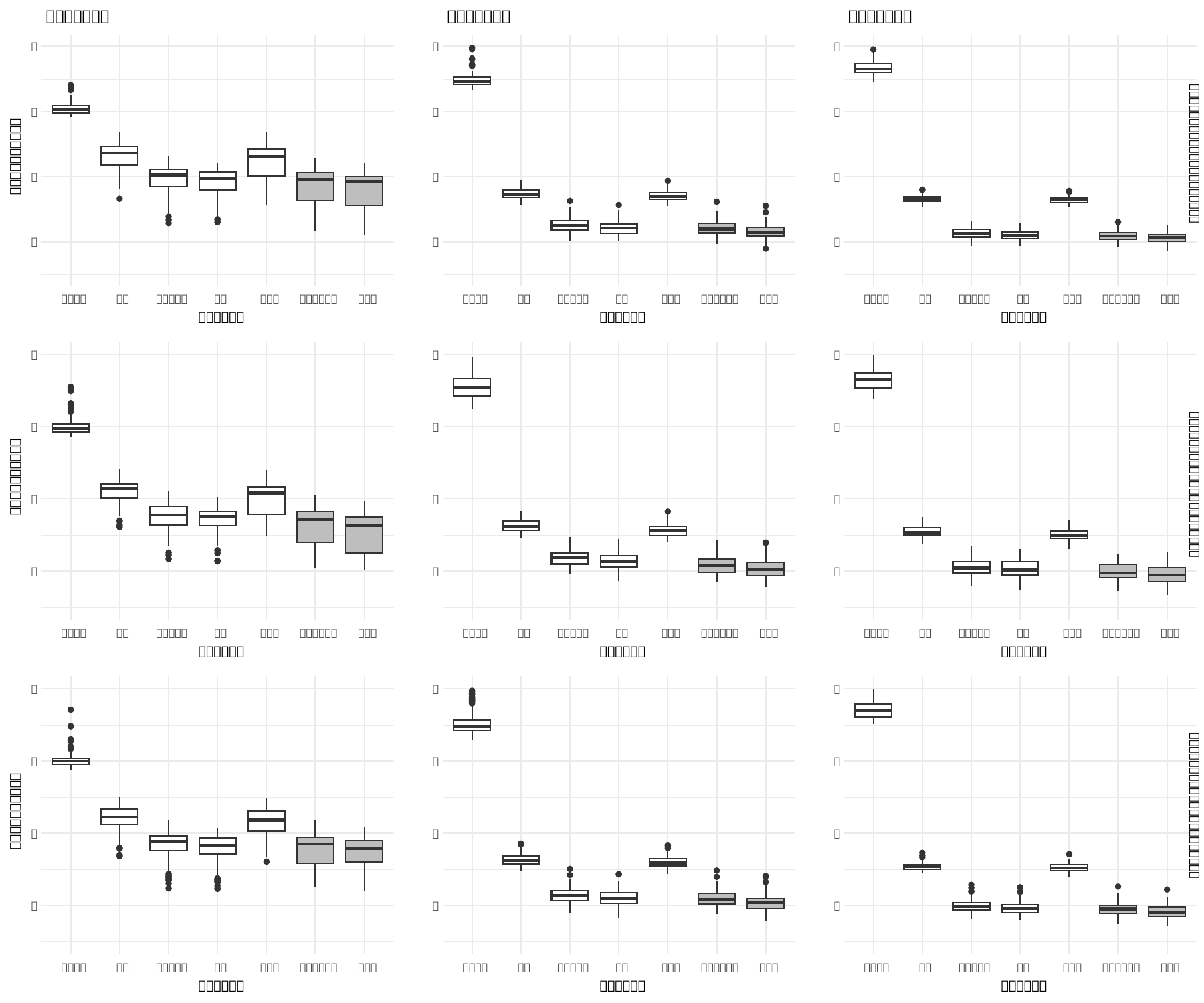}
            \caption{RCTs: Boxplot of MSE calculated for scenarios with 10 explanatory variables that are highly correlated and errors that are moderately correlated, i.e., $\rho_1=2/3$ and $\rho_2=1/3$}
        \end{subfigure}
    \end{minipage}
    \vspace{0.5cm} % Adjust vertical space between figures
    \begin{minipage}[b]{0.8\textwidth}
        \centering
        \begin{subfigure}[b]{\textwidth}
            \centering
            \includegraphics[width=0.9\textwidth]{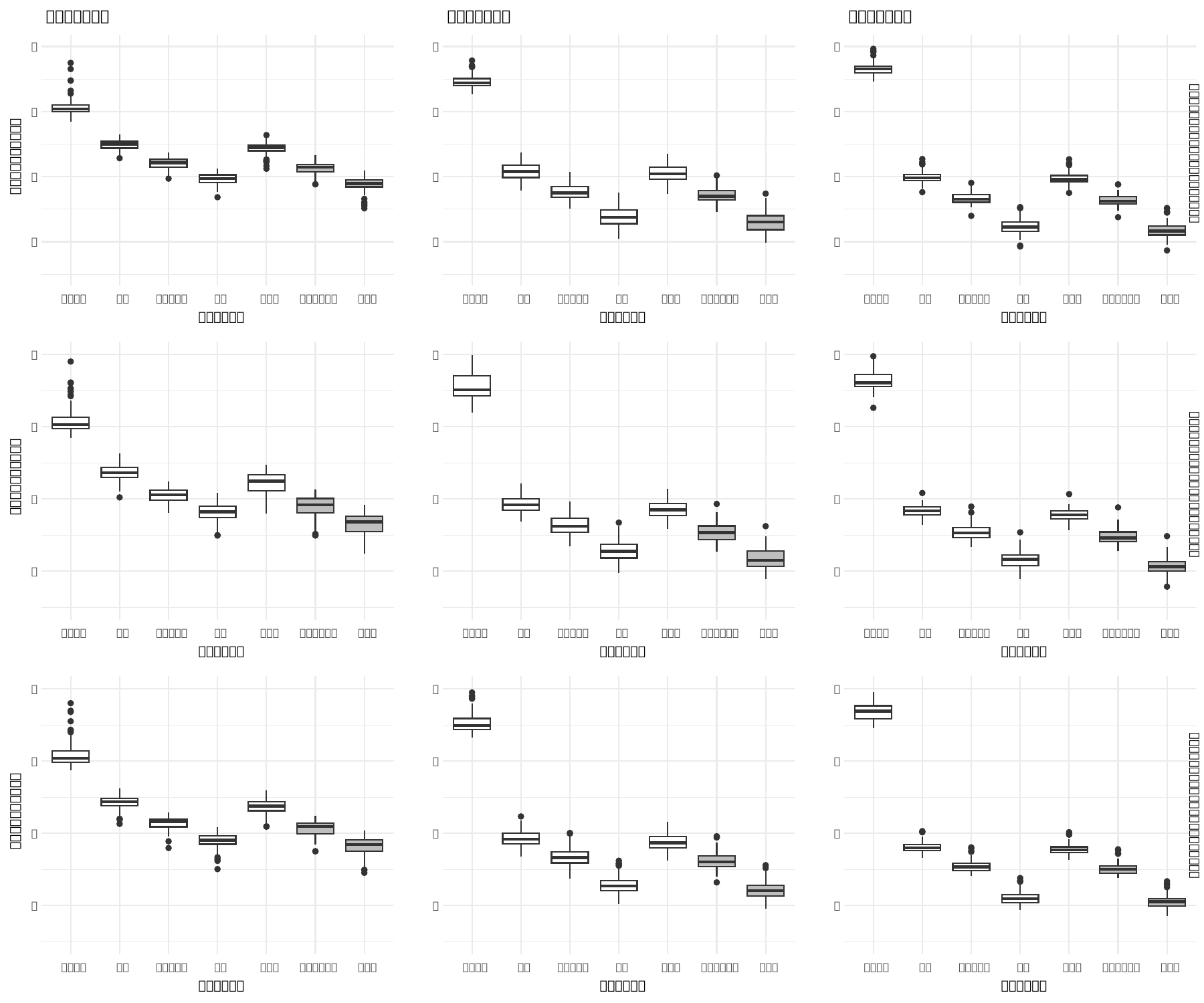}
            \caption{Observational Studies: Boxplot of MSE calculated for scenarios with 10 explanatory variables that are highly correlated and errors that are moderately correlated, i.e., $\rho_1=2/3$ and $\rho_2=1/3$}
        \end{subfigure}
    \end{minipage}
\end{figure}
%%%%%%%%%%%%%%
%Setting12
\begin{figure}[ht]
    \centering
    \begin{minipage}[b]{0.8\textwidth}
        \centering
        \begin{subfigure}[b]{\textwidth}
            \centering
            \includegraphics[width=0.9\textwidth]{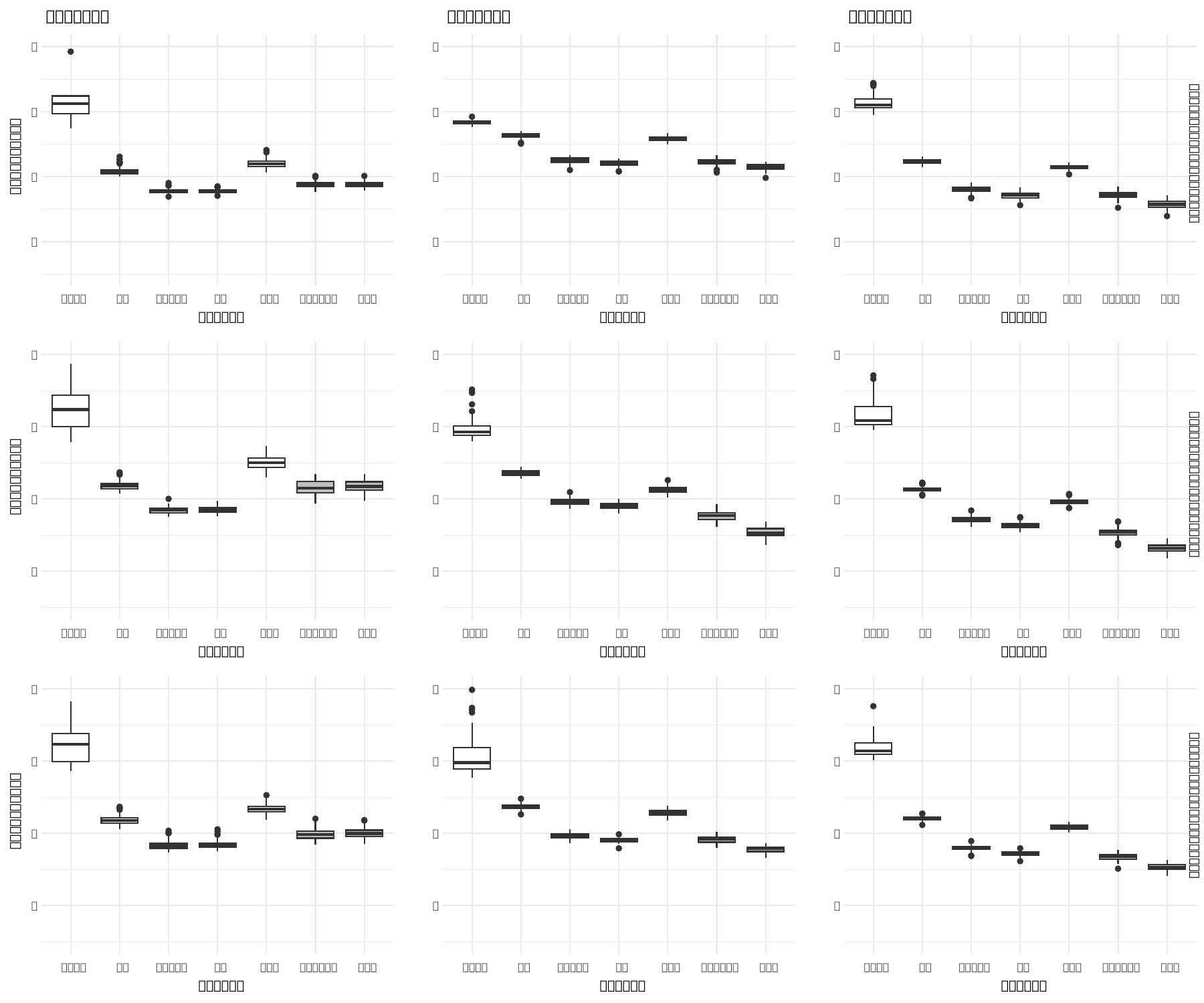}
            \caption{RCTs: Boxplot of MSE calculated for scenarios with 50 explanatory variables that are highly correlated and errors that are moderately correlated, i.e., $\rho_1=2/3$ and $\rho_2=1/3$}
        \end{subfigure}
    \end{minipage}
    \vspace{0.5cm} % Adjust vertical space between figures
    \begin{minipage}[b]{0.8\textwidth}
        \centering
        \begin{subfigure}[b]{\textwidth}
            \centering
            \includegraphics[width=0.9\textwidth]{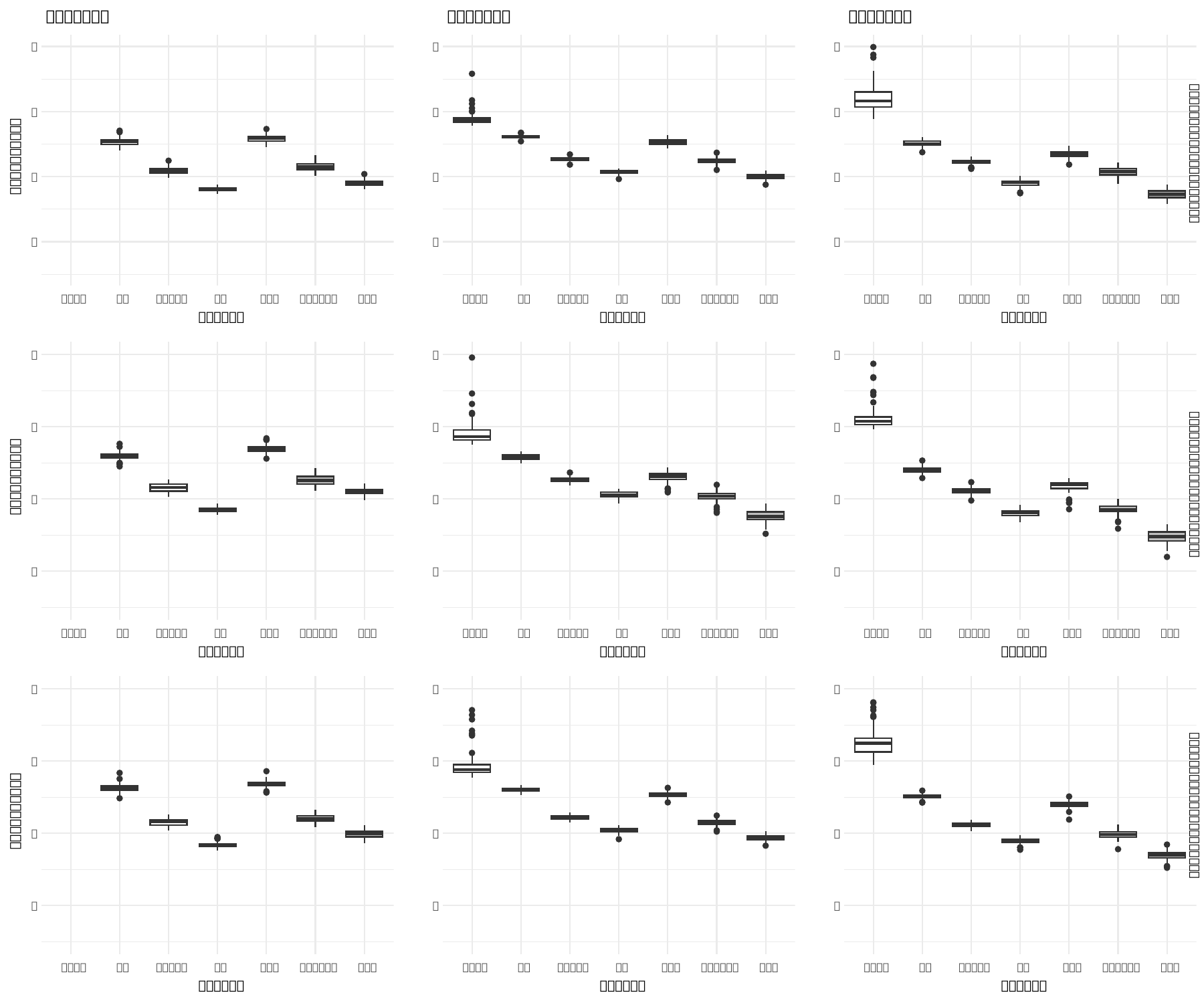}
            \caption{Observational Studies: Boxplot of MSE calculated for scenarios with 50 explanatory variables that are highly correlated and errors that are moderately correlated, i.e., $\rho_1=2/3$ and $\rho_2=1/3$}
        \end{subfigure}
    \end{minipage}
    \end{figure}
%%%%%%%%%%%%%%
%Setting13
\begin{figure}[ht]
    \centering
    \begin{minipage}[b]{0.8\textwidth}
        \centering
        \begin{subfigure}[b]{\textwidth}
            \centering
            \includegraphics[width=0.9\textwidth]{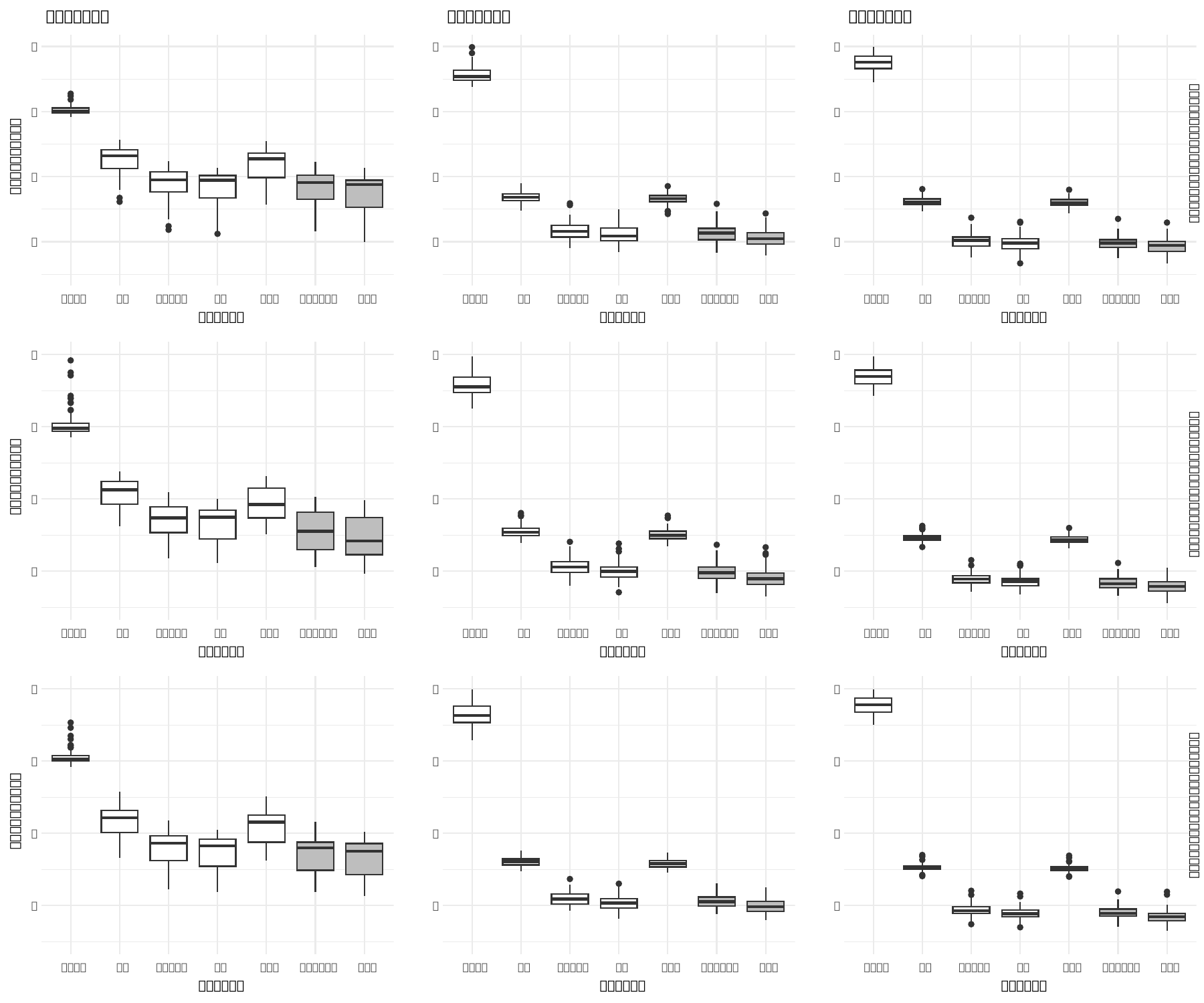}
            \caption{RCTs: Boxplot of MSE calculated for scenarios with 10 explanatory variables that are uncorrelated and errors that are highly correlated, i.e., $\rho_1=0$ and $\rho_2=2/3$}
        \end{subfigure}
    \end{minipage}
    \vspace{0.5cm} % Adjust vertical space between figures
    \begin{minipage}[b]{0.8\textwidth}
        \centering
        \begin{subfigure}[b]{\textwidth}
            \centering
            \includegraphics[width=0.9\textwidth]{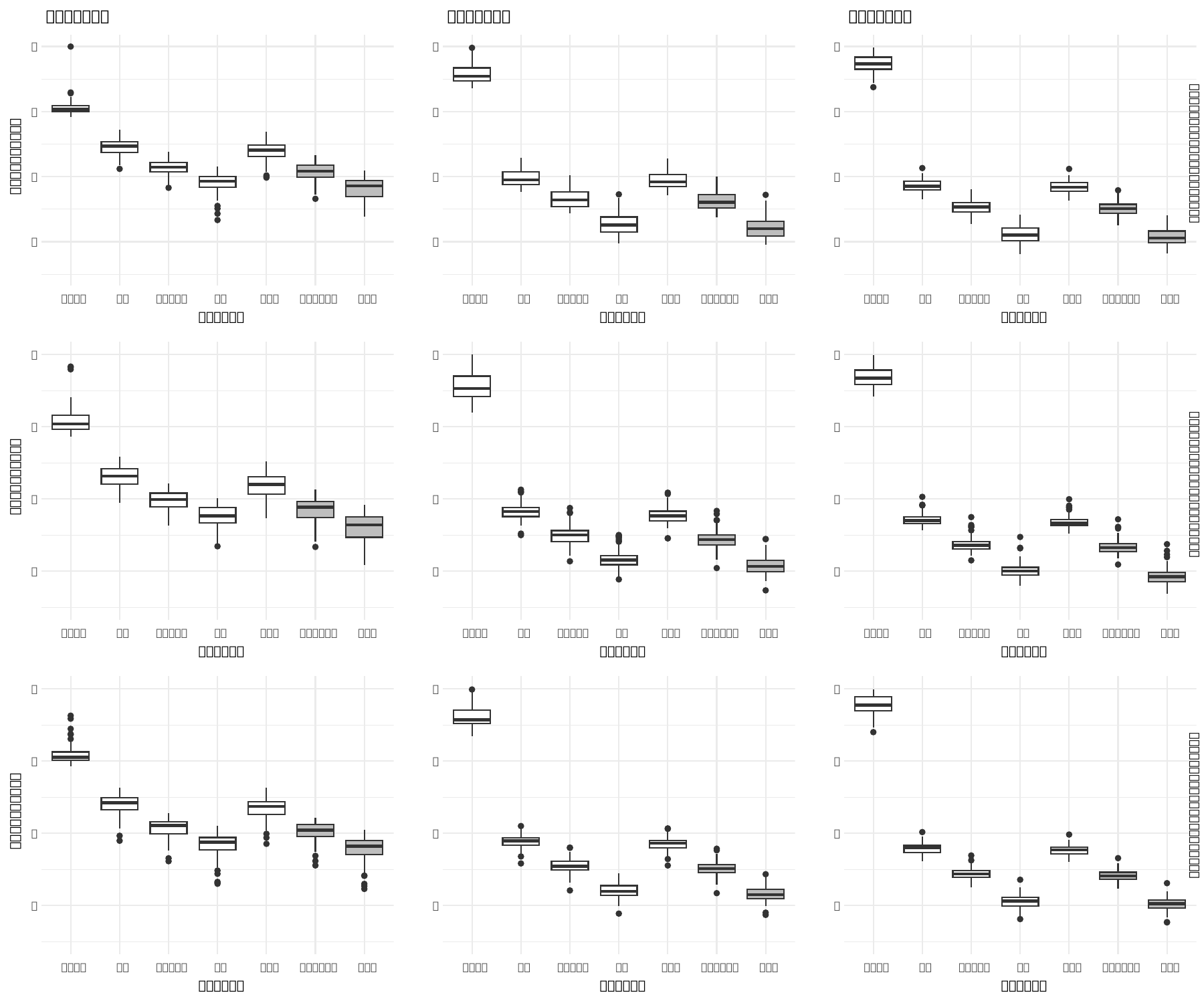}
            \caption{Observational Studies: Boxplot of MSE calculated for scenarios with 10 explanatory variables that are uncorrelated and errors that are highly correlated, i.e., $\rho_1=0$ and $\rho_2=2/3$}
        \end{subfigure}
    \end{minipage}
\end{figure}
%%%%%%%%%%%%%%
%Setting14
\begin{figure}[ht]
    \centering
    \begin{minipage}[b]{0.8\textwidth}
        \centering
        \begin{subfigure}[b]{\textwidth}
            \centering
            \includegraphics[width=0.9\textwidth]{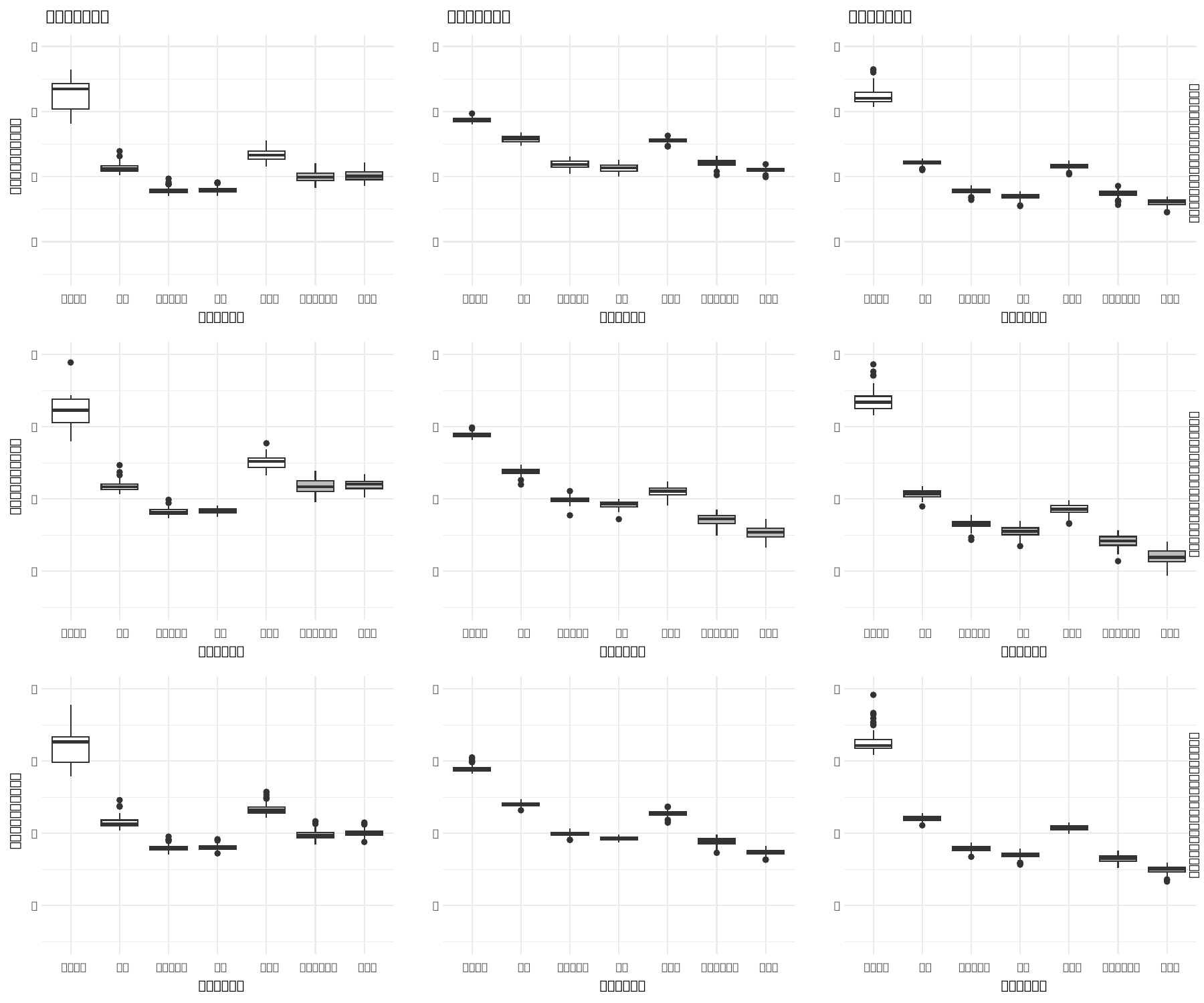}
            \caption{RCTs: Boxplot of MSE calculated for scenarios with 50 explanatory variables that are uncorrelated and errors that are highly correlated, i.e., $\rho_1=0$ and $\rho_2=2/3$}
        \end{subfigure}
    \end{minipage}
    \vspace{0.5cm} % Adjust vertical space between figures
    \begin{minipage}[b]{0.8\textwidth}
        \centering
        \begin{subfigure}[b]{\textwidth}
            \centering
            \includegraphics[width=0.9\textwidth]{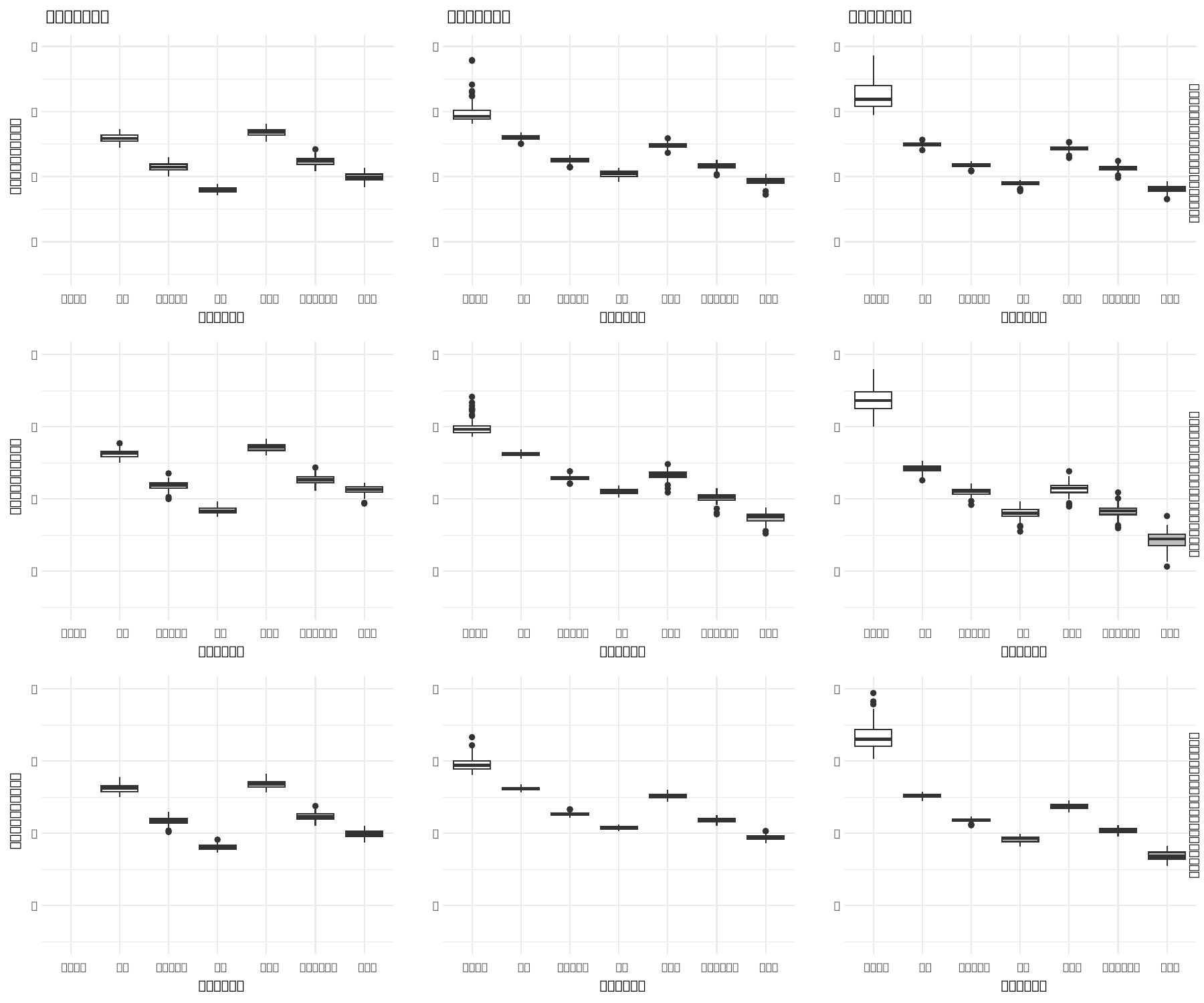}
            \caption{Observational Studies: Boxplot of MSE calculated for scenarios with 50 explanatory variables that are uncorrelated and errors that are highly correlated, i.e., $\rho_1=0$ and $\rho_2=2/3$}
        \end{subfigure}
    \end{minipage}
    \end{figure}
%%%%%%%%%%%%%%
%Setting15
\begin{figure}[ht]
    \centering
    \begin{minipage}[b]{0.8\textwidth}
        \centering
        \begin{subfigure}[b]{\textwidth}
            \centering
            \includegraphics[width=0.9\textwidth]{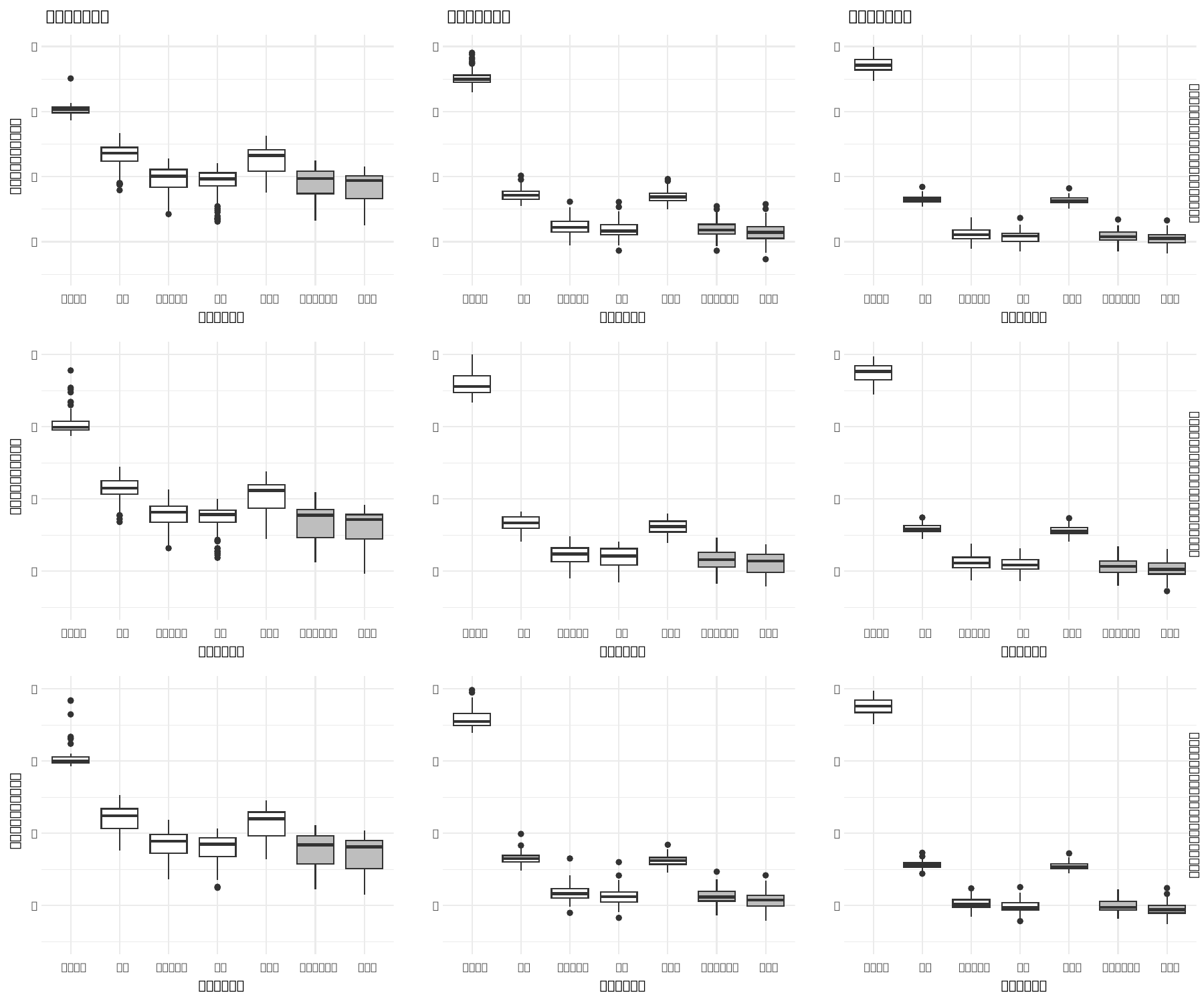}
            \caption{RCTs: Boxplot of MSE calculated for scenarios with 10 explanatory variables that are moderately correlated and errors that are highly correlated, i.e., $\rho_1=1/3$ and $\rho_2=2/3$}
        \end{subfigure}
    \end{minipage}
    \vspace{0.5cm} % Adjust vertical space between figures
    \begin{minipage}[b]{0.8\textwidth}
        \centering
        \begin{subfigure}[b]{\textwidth}
            \centering
            \includegraphics[width=0.9\textwidth]{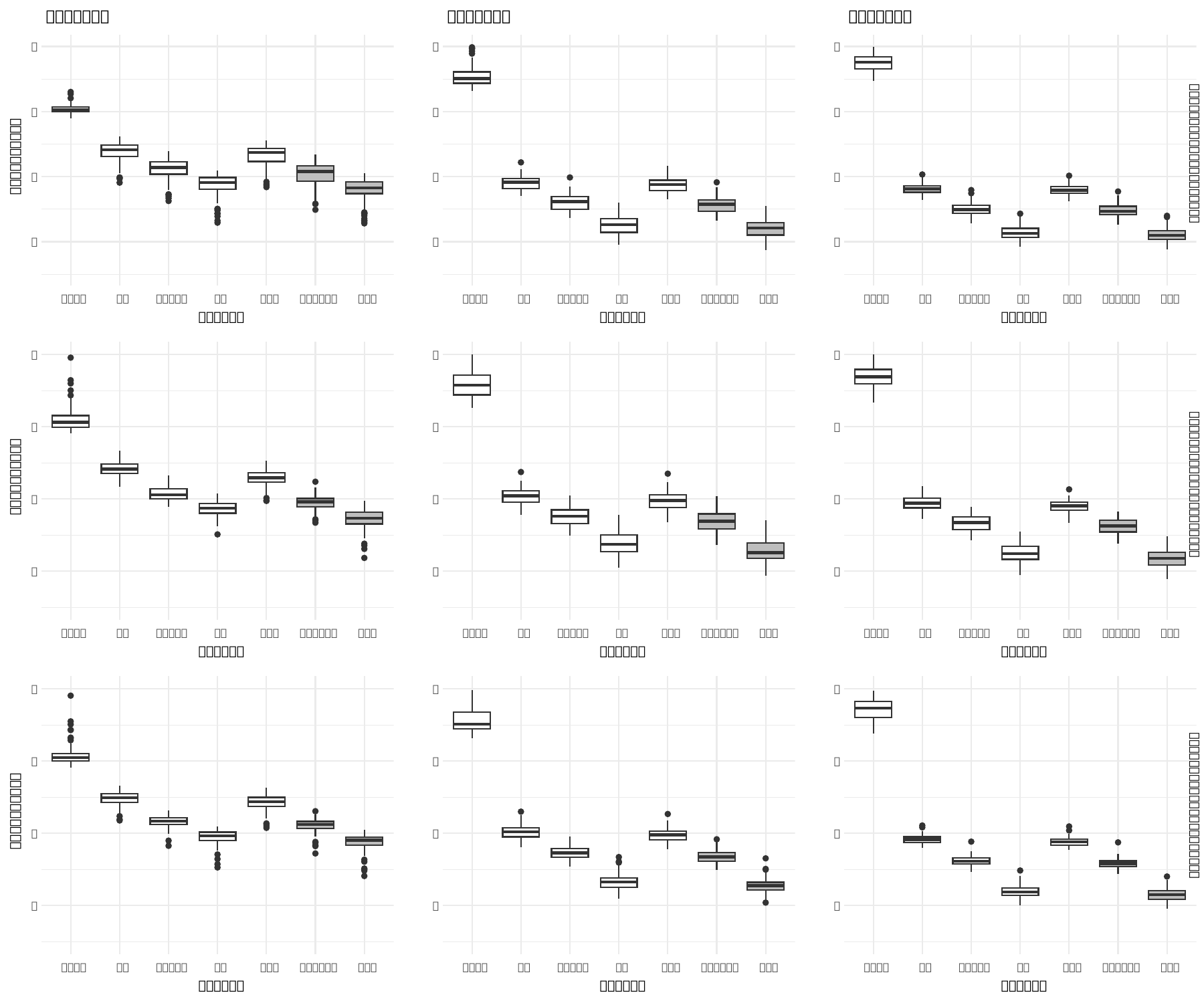}
            \caption{Observational Studies: Boxplot of MSE calculated for scenarios with 10 explanatory variables that are moderately correlated and errors that are highly correlated, i.e., $\rho_1=1/3$ and $\rho_2=2/3$}
        \end{subfigure}
    \end{minipage}
\end{figure}
%%%%%%%%%%%%%%
%Setting16
\begin{figure}[ht]
    \centering
    \begin{minipage}[b]{0.8\textwidth}
        \centering
        \begin{subfigure}[b]{\textwidth}
            \centering
            \includegraphics[width=0.9\textwidth]{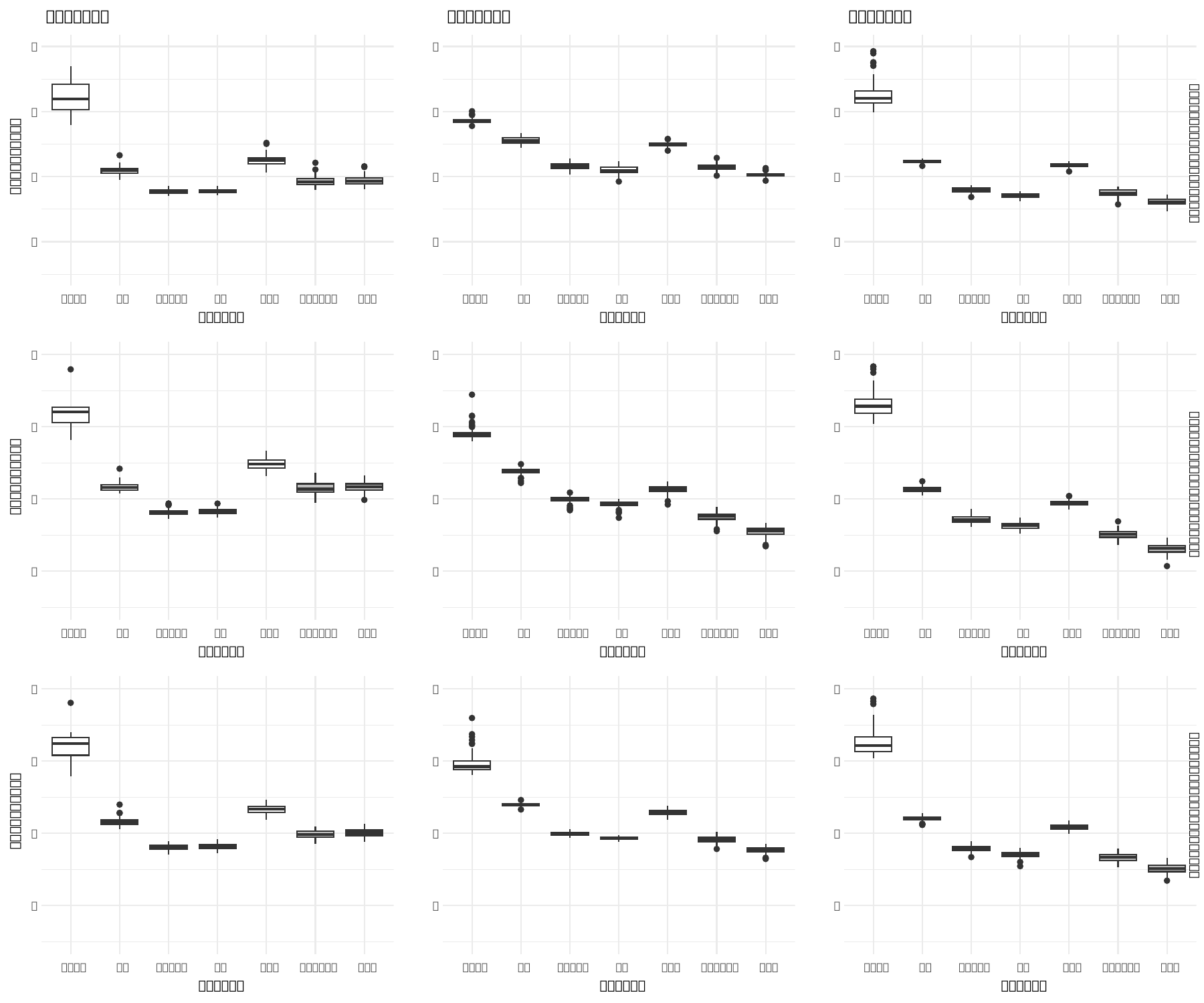}
            \caption{RCTs: Boxplot of MSE calculated for scenarios with 50 explanatory variables that are moderately correlated and errors that are highly correlated, i.e., $\rho_1=1/3$ and $\rho_2=2/3$}
        \end{subfigure}
    \end{minipage}
    \vspace{0.5cm} % Adjust vertical space between figures
    \begin{minipage}[b]{0.8\textwidth}
        \centering
        \begin{subfigure}[b]{\textwidth}
            \centering
            \includegraphics[width=0.9\textwidth]{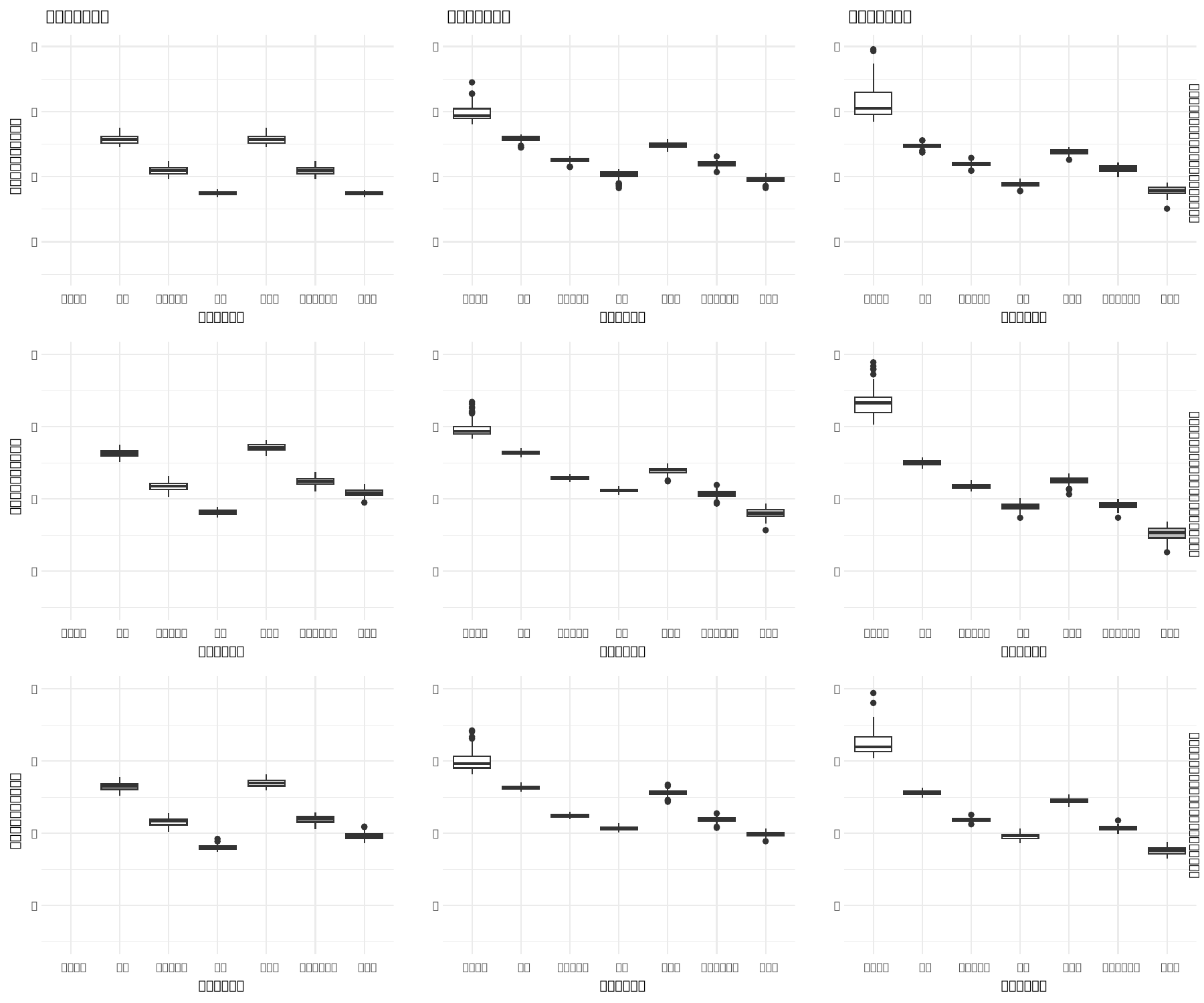}
            \caption{Observational Studies: Boxplot of MSE calculated for scenarios with 50 explanatory variables that are moderately correlated and errors that are highly correlated, i.e., $\rho_1=1/3$ and $\rho_2=2/3$}
        \end{subfigure}
    \end{minipage}
    \end{figure}
%%%%%%%%%%%%%%
%Setting17
\begin{figure}[ht]
    \centering
    \begin{minipage}[b]{0.8\textwidth}
        \centering
        \begin{subfigure}[b]{\textwidth}
            \centering
            \includegraphics[width=0.9\textwidth]{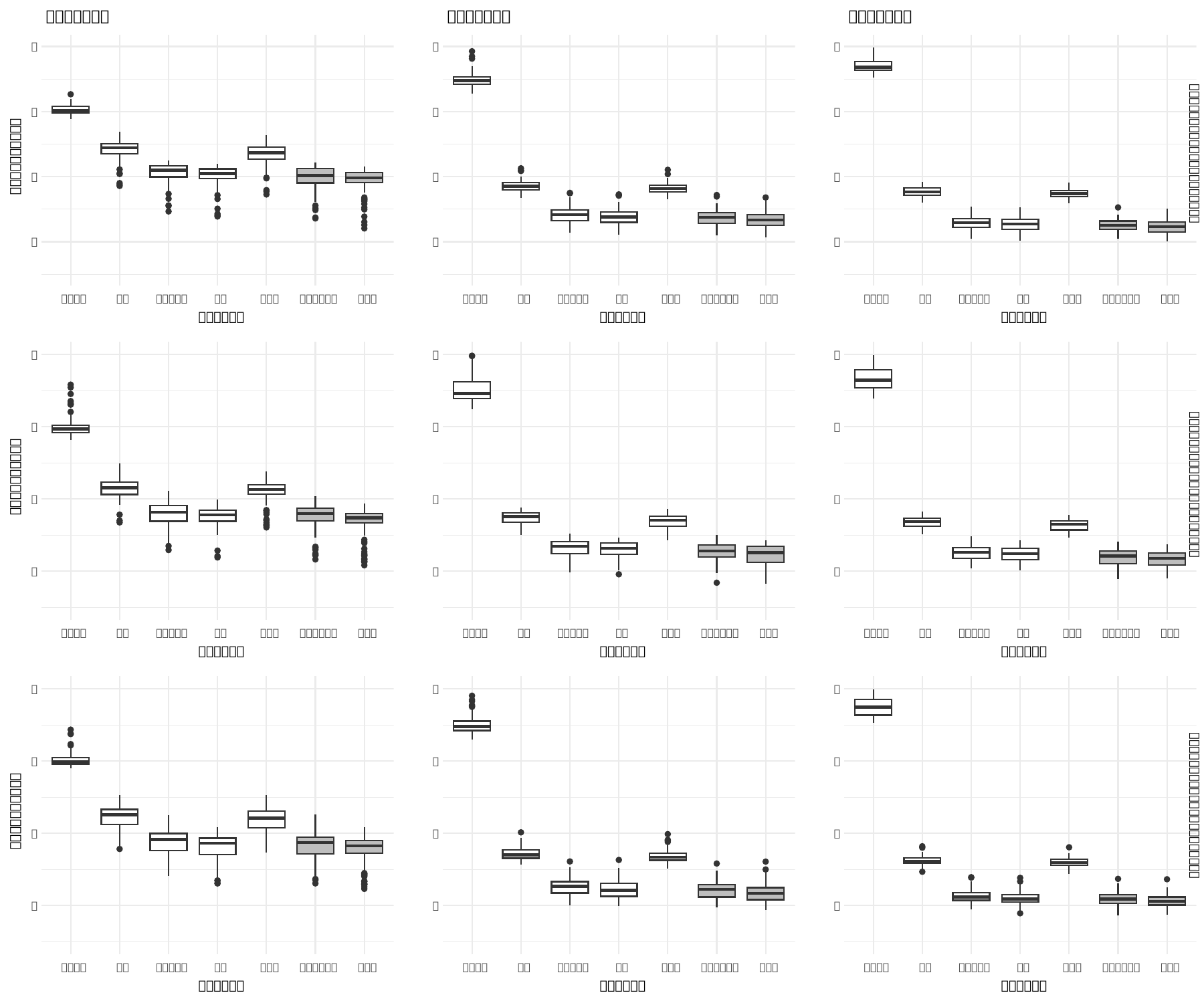}
            \caption{RCTs: Boxplot of MSE calculated for scenarios with 10 explanatory variables and errors that are highly correlated, i.e., $\rho_1=2/3$ and $\rho_2=2/3$}
        \end{subfigure}
    \end{minipage}
    \vspace{0.5cm} % Adjust vertical space between figures
    \begin{minipage}[b]{0.8\textwidth}
        \centering
        \begin{subfigure}[b]{\textwidth}
            \centering
            \includegraphics[width=0.9\textwidth]{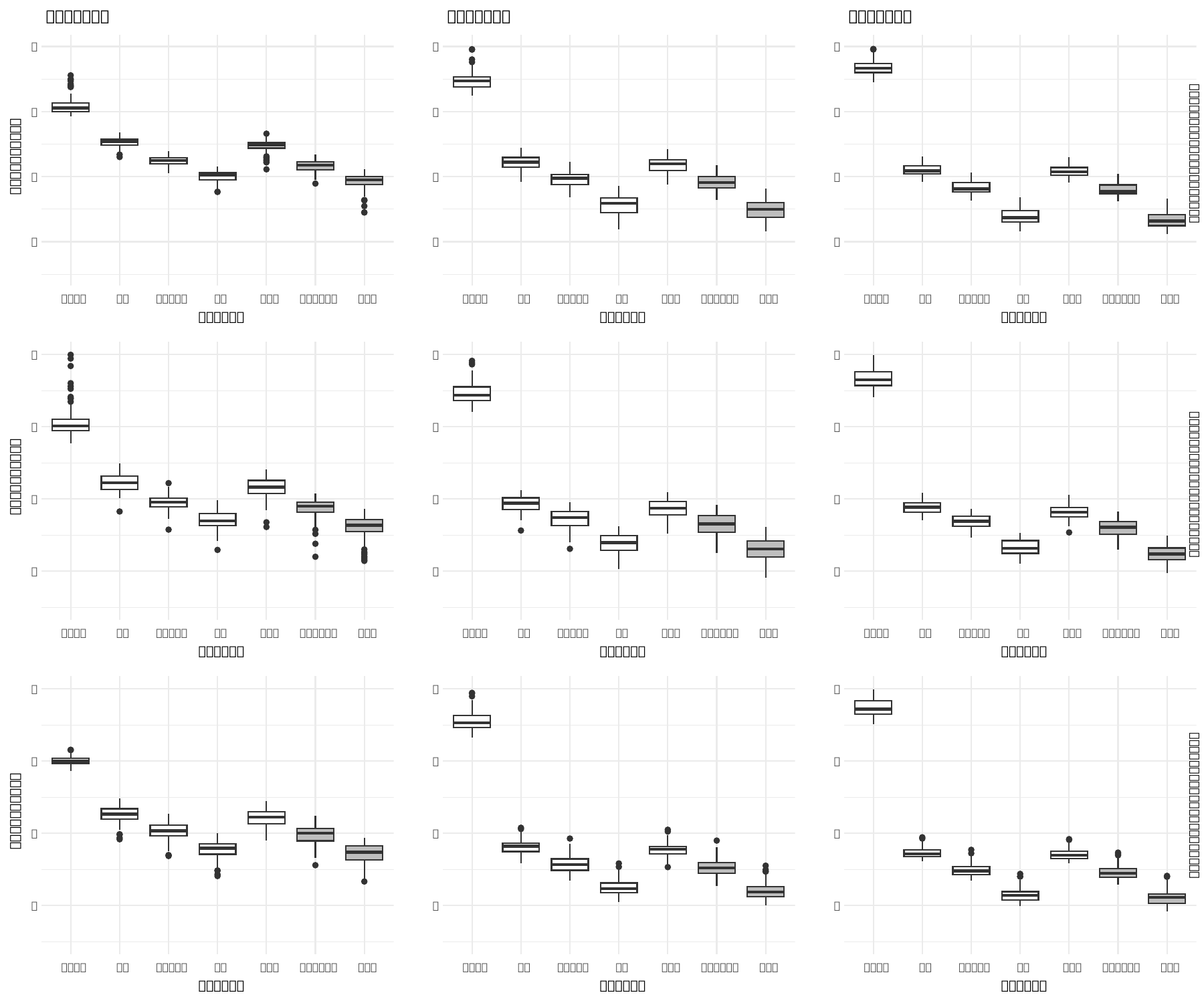}
            \caption{Observational Studies: Boxplot of MSE calculated for scenarios with 10 explanatory variables and errors that are highly correlated, i.e., $\rho_1=2/3$ and $\rho_2=2/3$}
        \end{subfigure}
    \end{minipage}
\end{figure}
%%%%%%%%%%%%%%
%Setting18
\begin{figure}[ht]
    \centering
    \begin{minipage}[b]{0.8\textwidth}
        \centering
        \begin{subfigure}[b]{\textwidth}
            \centering
            \includegraphics[width=0.9\textwidth]{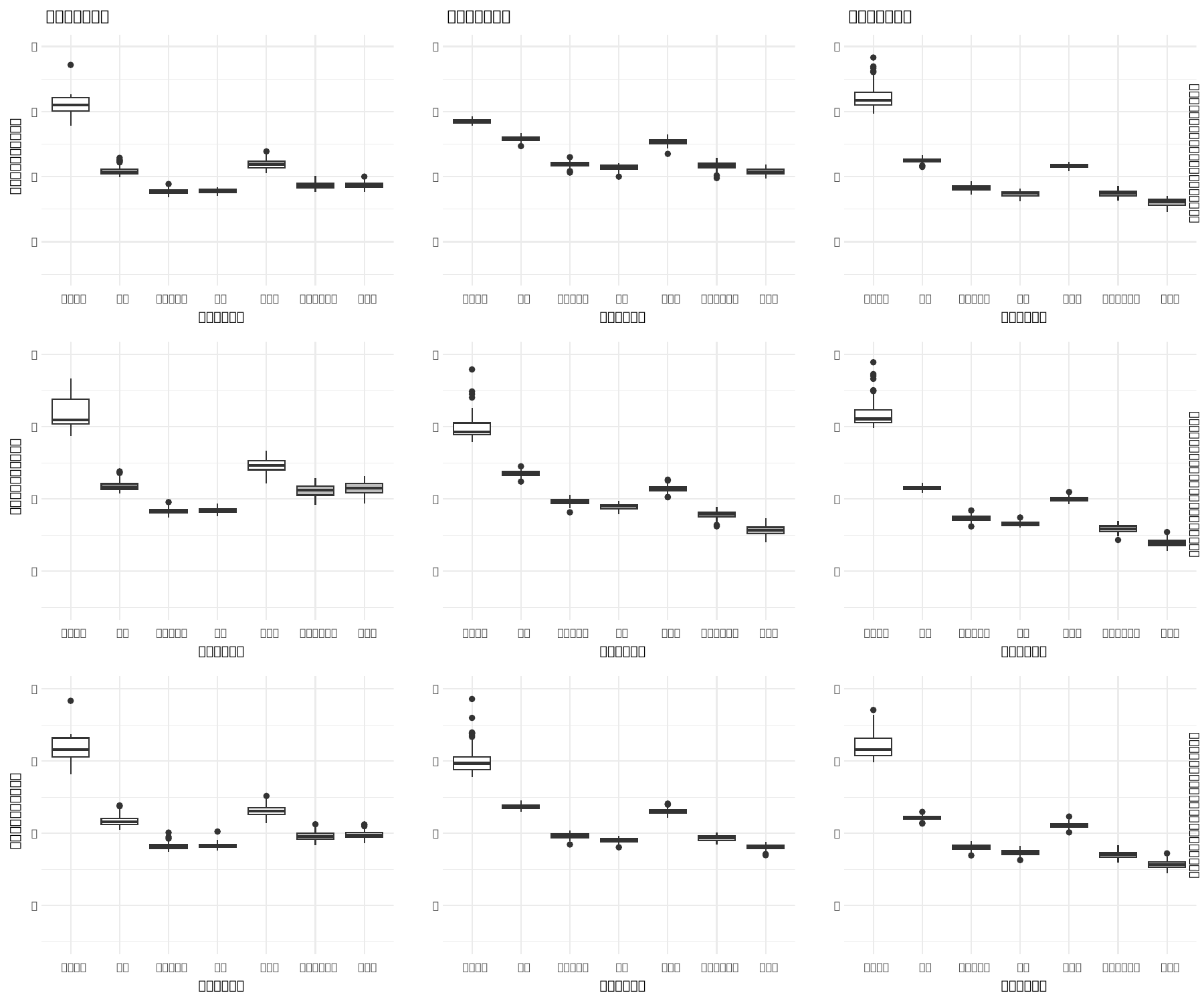}
            \caption{RCTs: Boxplot of MSE calculated for scenarios with 50 explanatory variables and errors that are highly correlated, i.e., $\rho_1=2/3$ and $\rho_2=2/3$}
        \end{subfigure}
    \end{minipage}
    \vspace{0.5cm} % Adjust vertical space between figures
    \begin{minipage}[b]{0.8\textwidth}
        \centering
        \begin{subfigure}[b]{\textwidth}
            \centering
            \includegraphics[width=0.9\textwidth]{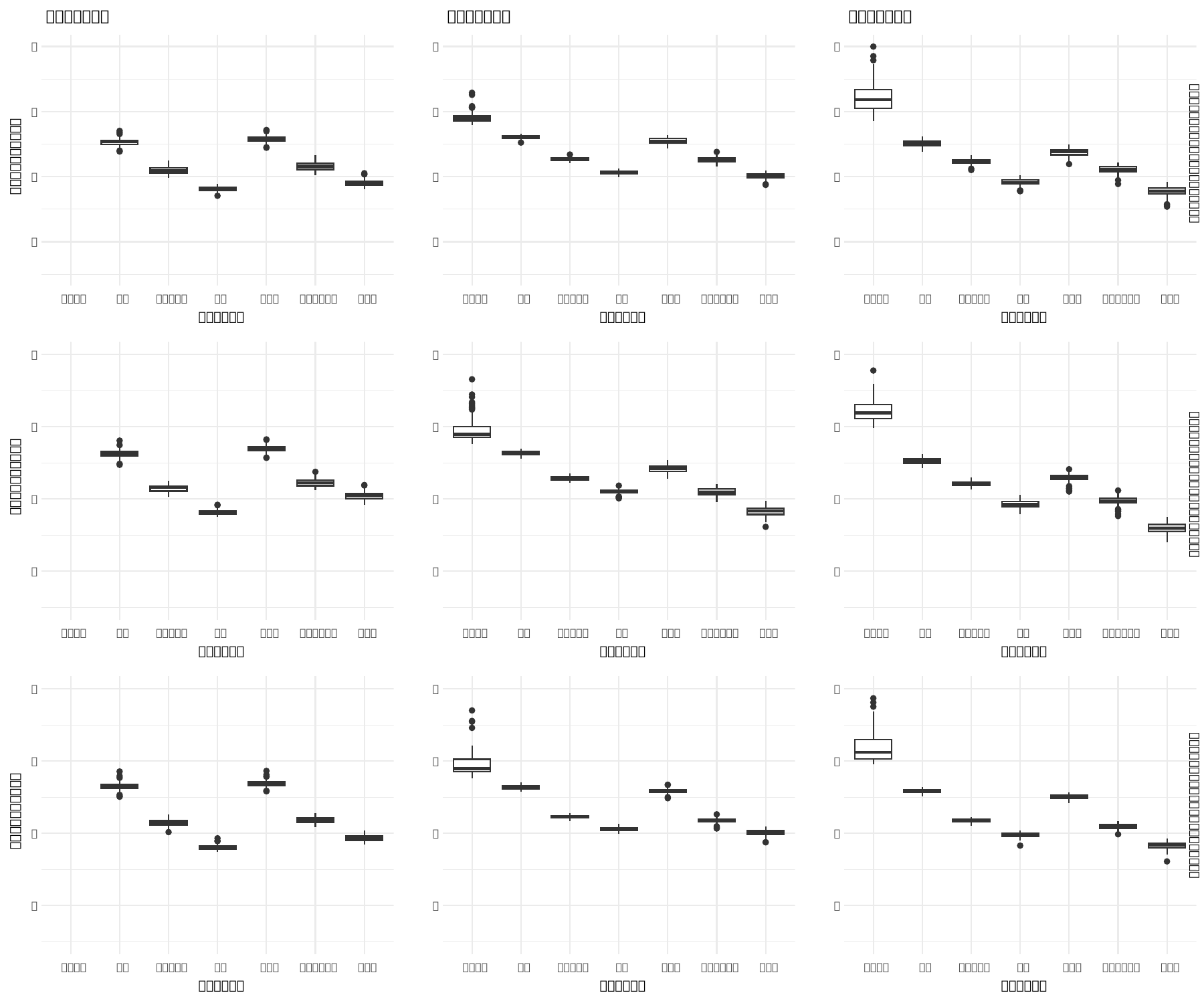}
            \caption{Observational Studies: Boxplot of MSE calculated for scenarios with 50 explanatory variables and errors that are highly correlated, i.e., $\rho_1=2/3$ and $\rho_2=2/3$}
        \end{subfigure}
    \end{minipage}
    \end{figure}
%%%%%%
%%%%%%
%%%%%%
%ROC
%%%%%%
%Setting1
\begin{figure}[ht]
    \centering
    \begin{minipage}[b]{0.8\textwidth}
        \centering
        \begin{subfigure}[b]{\textwidth}
            \centering            \includegraphics[width=0.9\textwidth]{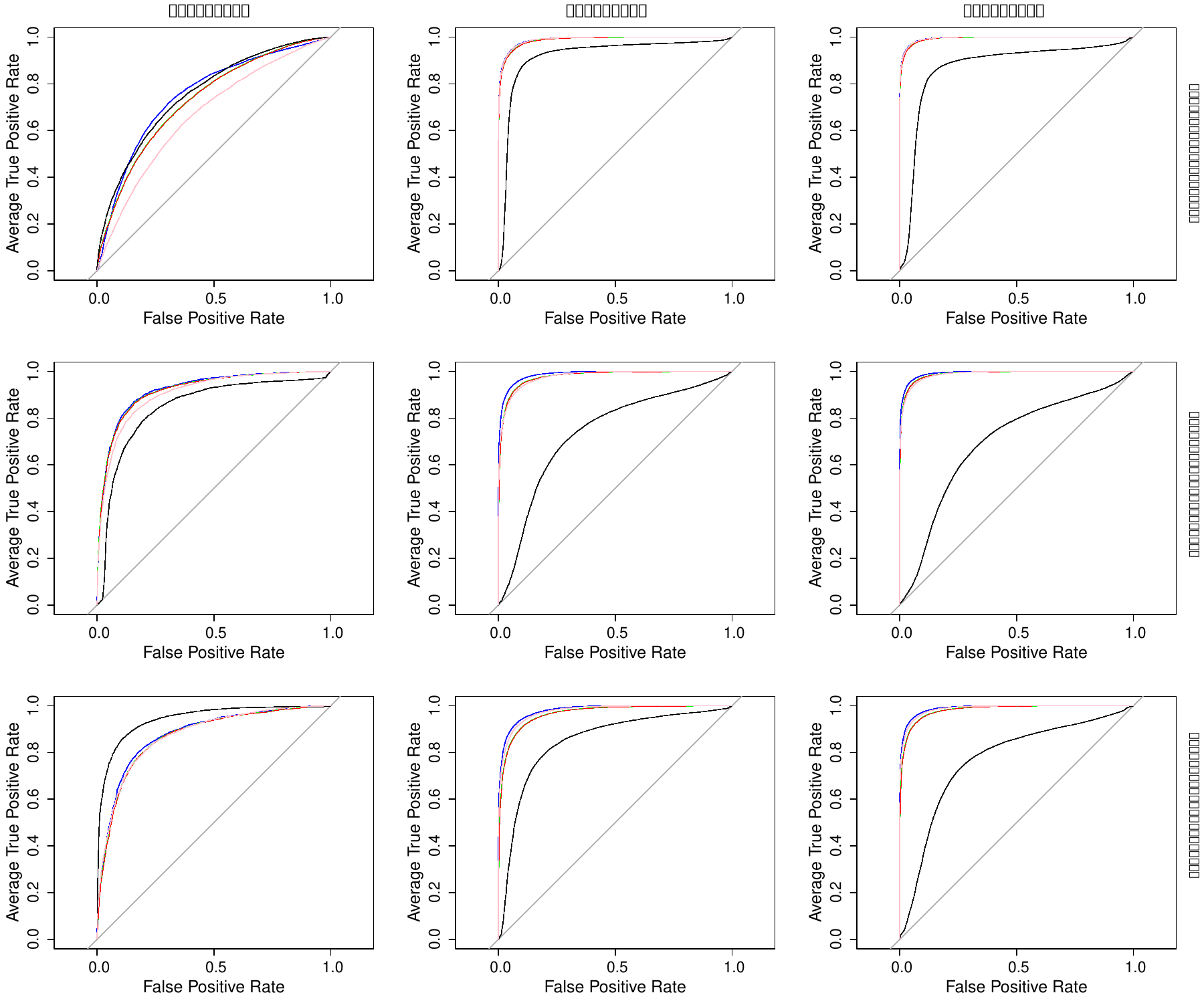}
            \caption{RCTs: ROC curves for scenarios with 10 explanatory variables and errors that are uncorrelated, i.e., $\rho_1=0$ and $\rho_2=0$}
        \end{subfigure}
    \end{minipage}
    \vspace{0.5cm} % Adjust vertical space between figures
    \begin{minipage}[b]{0.8\textwidth}
        \centering
        \begin{subfigure}[b]{\textwidth}
            \centering
            \includegraphics[width=0.9\textwidth]{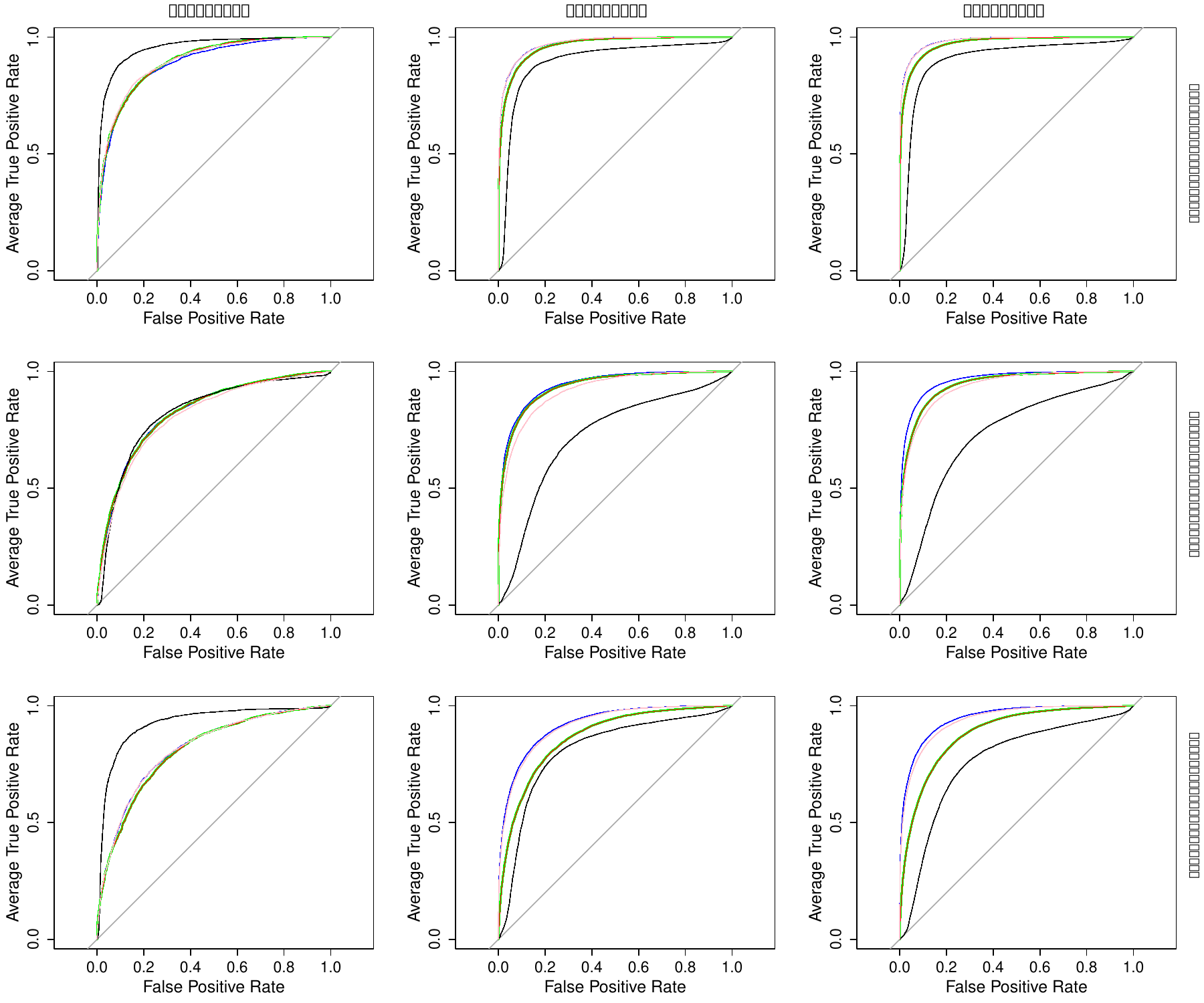}
            \caption{Observational Studies: ROC curves for scenarios with 10 explanatory variables and errors that are uncorrelated, i.e., $\rho_1=0$ and $\rho_2=0$}
        \end{subfigure}
    \end{minipage}
    %\caption{Overall caption for the figures}
    %%\label{fig:figures}
\end{figure}
%%%%%%%%%%%%%%
%Setting2
\begin{figure}[ht]
    \centering
    \begin{minipage}[b]{0.8\textwidth}
        \centering
        \begin{subfigure}[b]{\textwidth}
            \centering
            \includegraphics[width=0.9\textwidth]{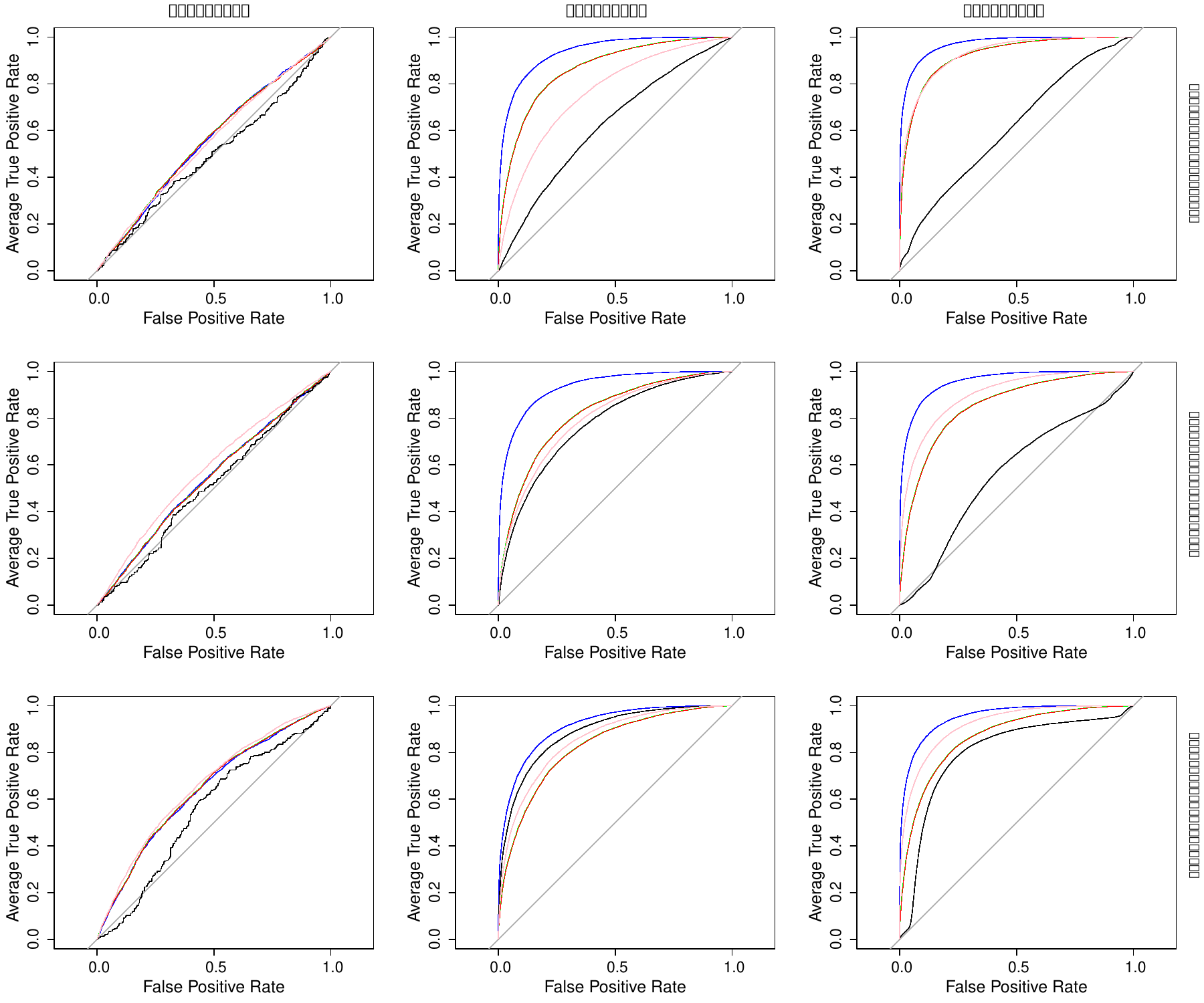}
            \caption{RCTs: ROC curves for scenarios with 50 explanatory variables and errors that are uncorrelated, i.e., $\rho_1=0$ and $\rho_2=0$}
        \end{subfigure}
    \end{minipage}
    \vspace{0.5cm} % Adjust vertical space between figures
    \begin{minipage}[b]{0.8\textwidth}
        \centering
        \begin{subfigure}[b]{\textwidth}
            \centering
            \includegraphics[width=0.9\textwidth]{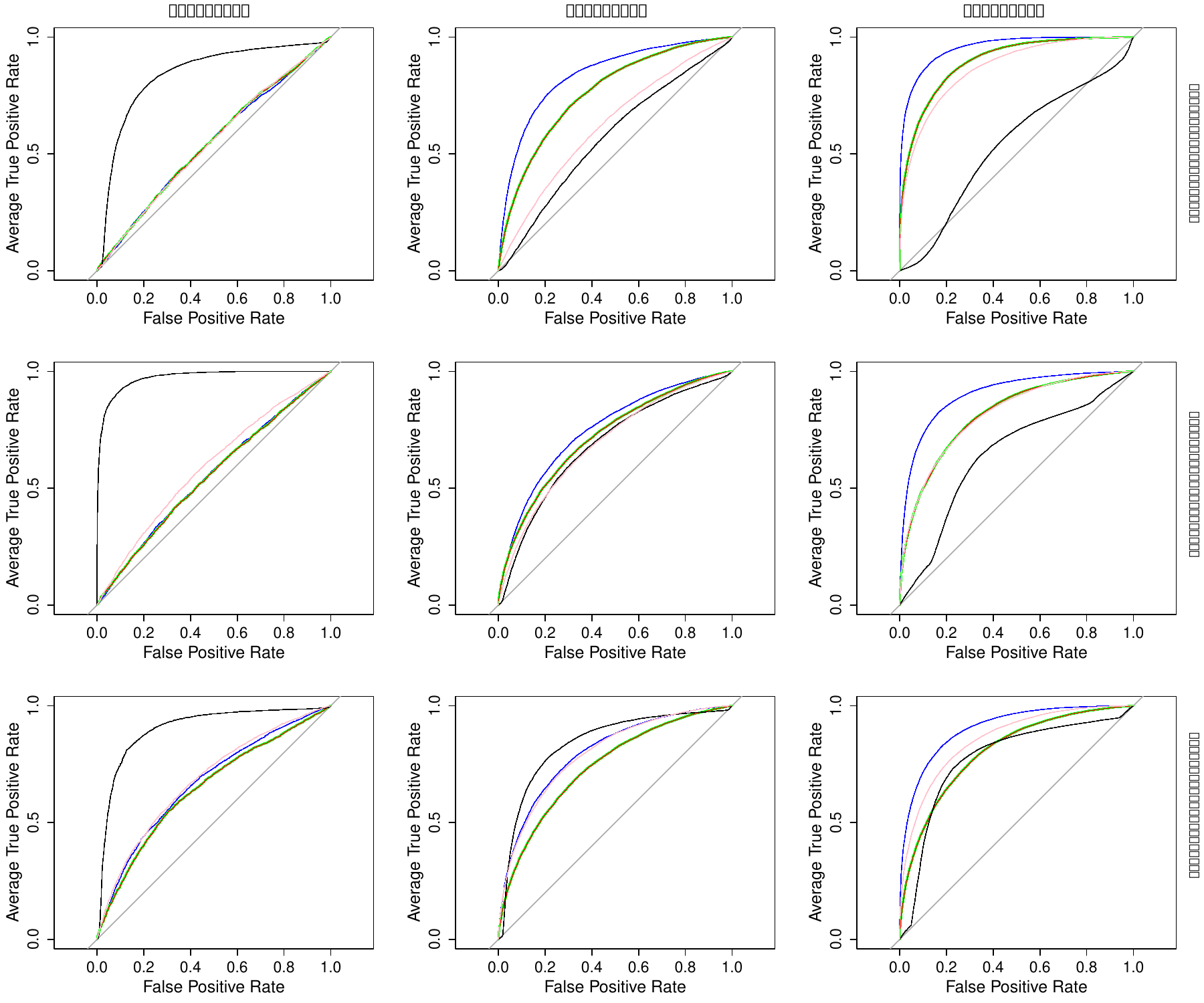}
            \caption{Observational Studies: ROC curves for scenarios with 50 explanatory variables and errors that are uncorrelated, i.e., $\rho_1=0$ and $\rho_2=0$}        \end{subfigure}
    \end{minipage}
    %\caption{Overall caption for the figures}
    %%\label{fig:figures}
\end{figure}
%%%%%%%%%%%%%%
%Setting3
\begin{figure}[ht]
    \centering
    \begin{minipage}[b]{0.8\textwidth}
        \centering
        \begin{subfigure}[b]{\textwidth}
            \centering
            \includegraphics[width=0.9\textwidth]{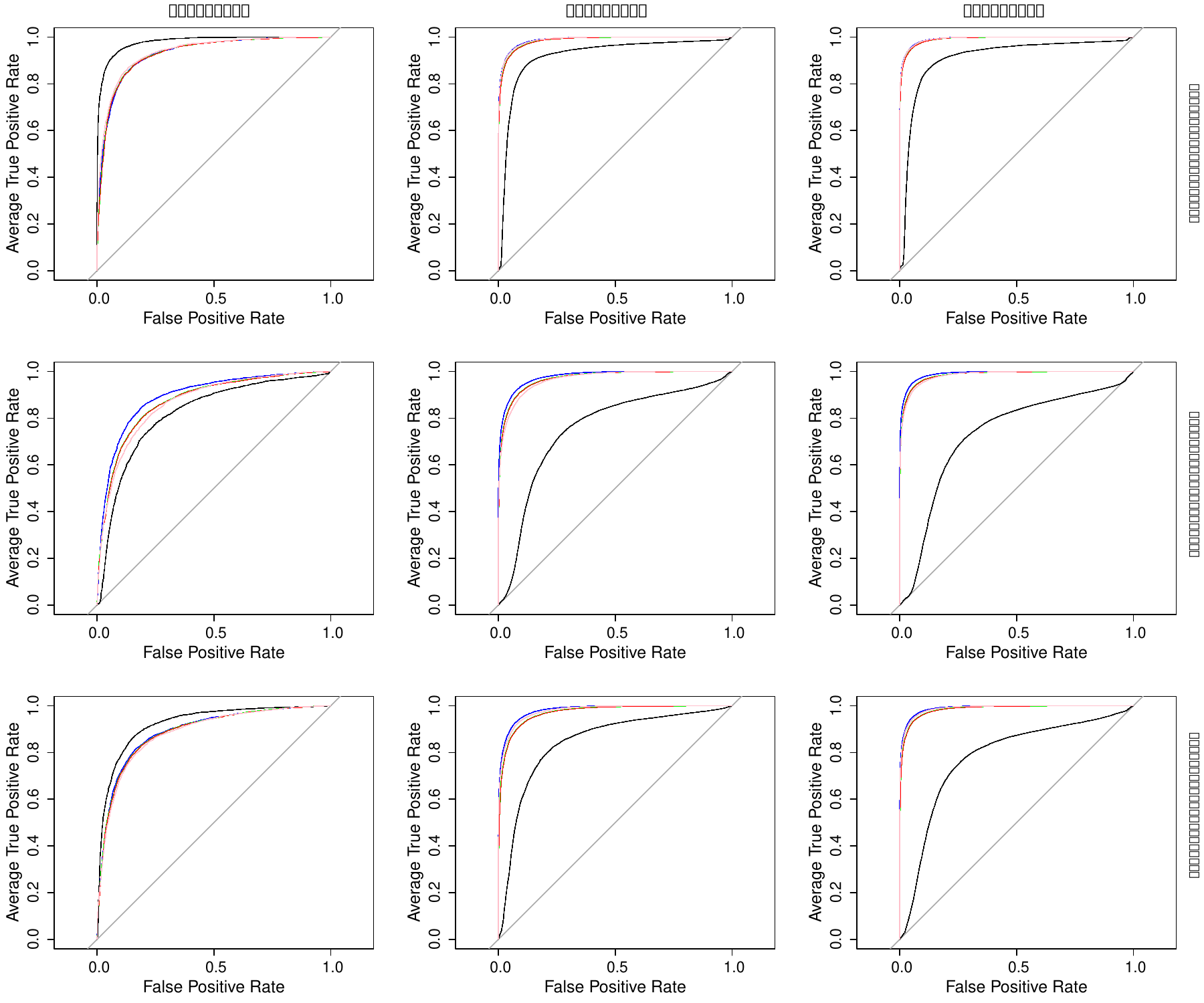}
            \caption{RCTs: ROC curves for scenarios with 10 explanatory variables that are moderately correlated and errors that are uncorrelated, i.e., $\rho_1=1/3$ and $\rho_2=0$}
        \end{subfigure}
    \end{minipage}
    \vspace{0.5cm} % Adjust vertical space between figures
    \begin{minipage}[b]{0.8\textwidth}
        \centering
        \begin{subfigure}[b]{\textwidth}
            \centering
            \includegraphics[width=0.9\textwidth]{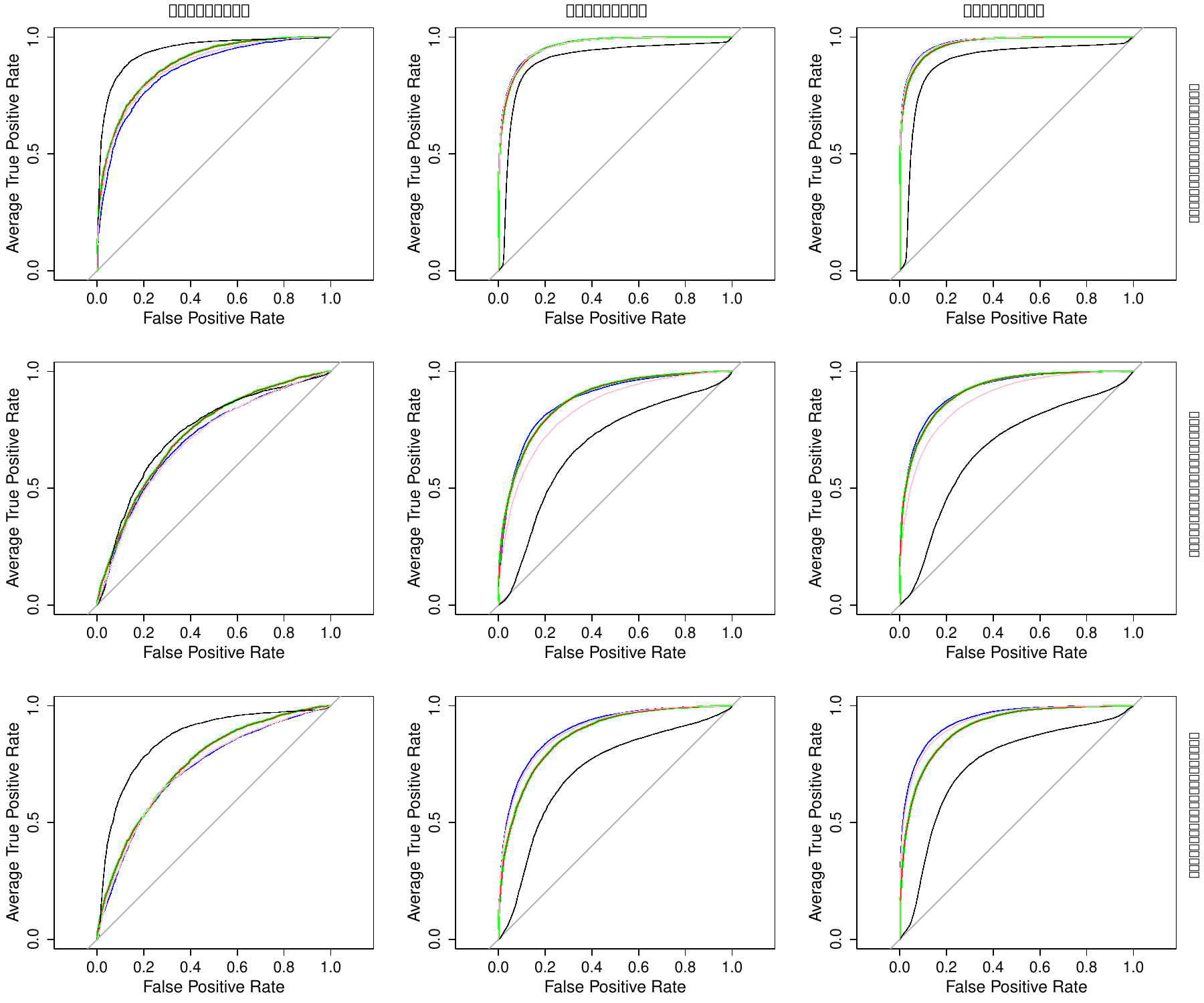}
            \caption{Observational Studies: ROC curves for scenarios with 10 explanatory variables that are moderately correlated and errors that are uncorrelated, i.e., $\rho_1=1/3$ and $\rho_2=0$}
        \end{subfigure}
    \end{minipage}
    %\caption{Overall caption for the figures}
    %%\label{fig:figures}
\end{figure}
%%%%%%%%%%%%%%
%Setting4
\begin{figure}[ht]
    \centering
    \begin{minipage}[b]{0.8\textwidth}
        \centering
        \begin{subfigure}[b]{\textwidth}
            \centering
            \includegraphics[width=0.9\textwidth]{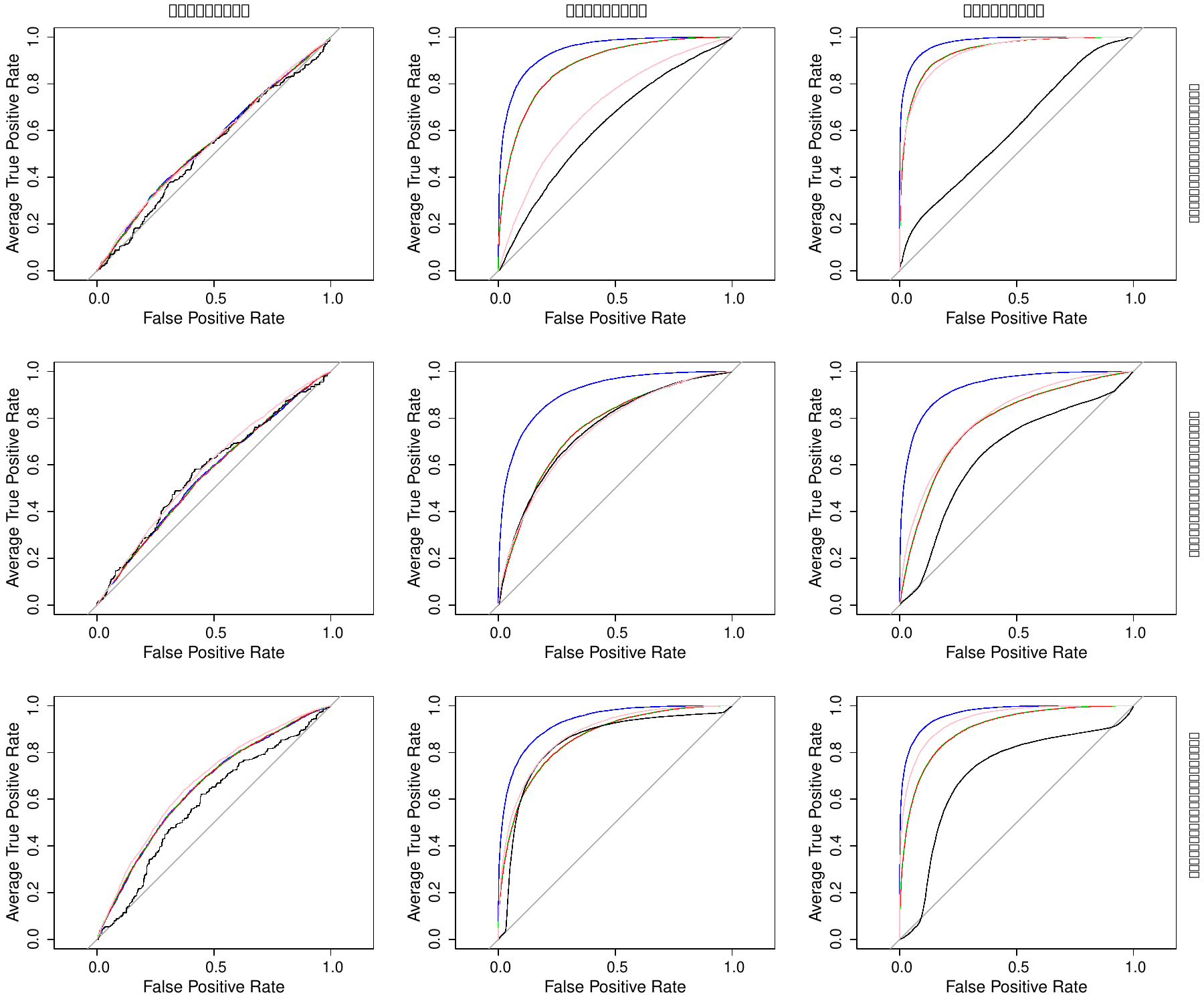}
            \caption{RCTs: ROC curves for scenarios with 50 explanatory variables that are moderately correlated and errors that are uncorrelated, i.e., $\rho_1=1/3$ and $\rho_2=0$}
        \end{subfigure}
    \end{minipage}
    \vspace{0.5cm} % Adjust vertical space between figures
    \begin{minipage}[b]{0.8\textwidth}
        \centering
        \begin{subfigure}[b]{\textwidth}
            \centering
            \includegraphics[width=0.9\textwidth]{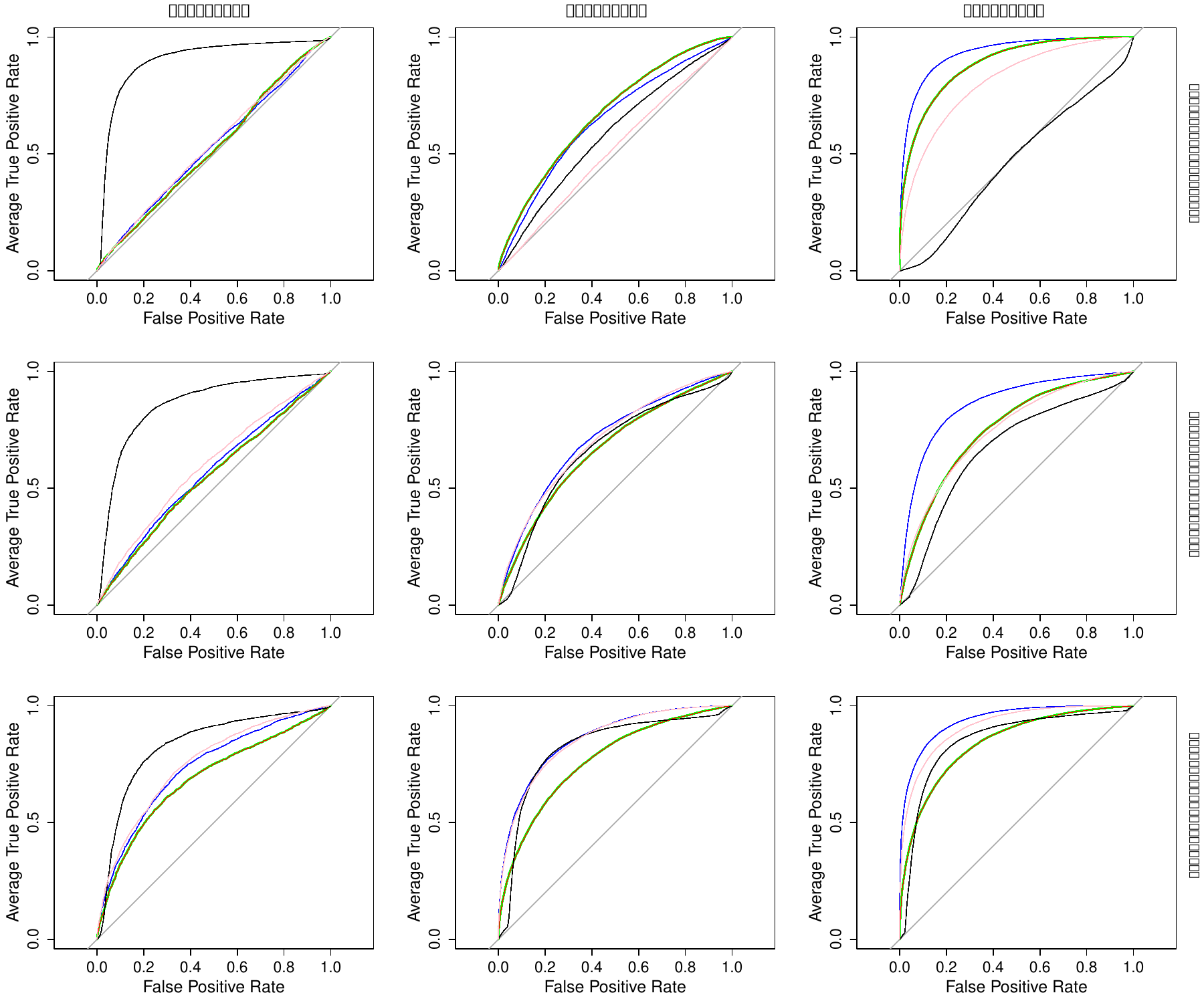}
            \caption{Observational Studies: ROC curves for scenarios with 50 explanatory variables that are moderately correlated and errors that are uncorrelated, i.e., $\rho_1=1/3$ and $\rho_2=0$}
        \end{subfigure}
    \end{minipage}
    %\caption{Overall caption for the figures}
    %%\label{fig:figures}
\end{figure}
%%%%%%%%%%%%%%
%Setting5
\begin{figure}[ht]
    \centering
    \begin{minipage}[b]{0.8\textwidth}
        \centering
        \begin{subfigure}[b]{\textwidth}
            \centering
            \includegraphics[width=0.9\textwidth]{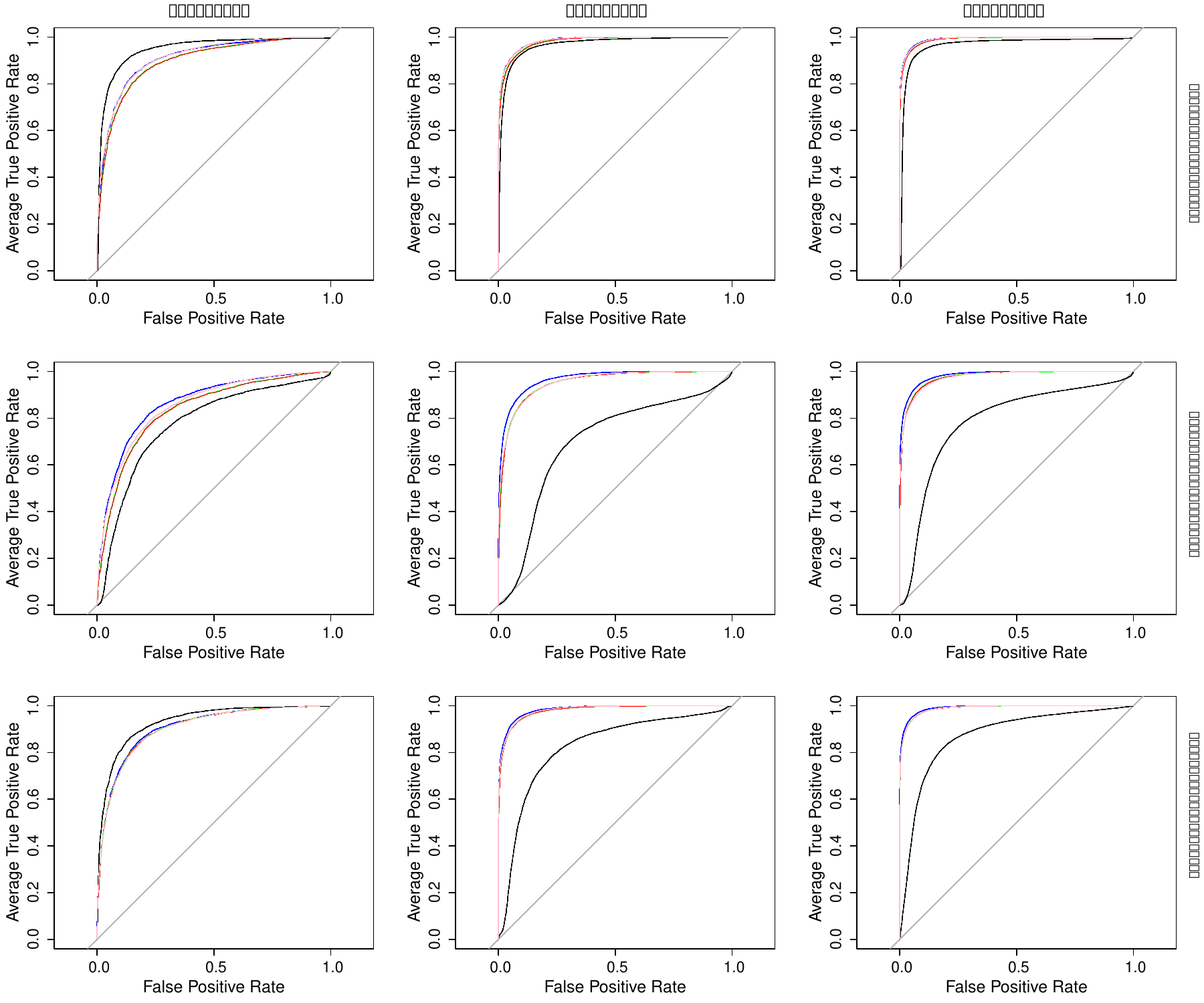}
            \caption{RCTs: ROC curves for scenarios with 10 explanatory variables that are highly correlated and errors that are uncorrelated, i.e., $\rho_1=2/3$ and $\rho_2=0$}
        \end{subfigure}
    \end{minipage}
    \vspace{0.5cm} % Adjust vertical space between figures
    \begin{minipage}[b]{0.8\textwidth}
        \centering
        \begin{subfigure}[b]{\textwidth}
            \centering
            \includegraphics[width=0.9\textwidth]{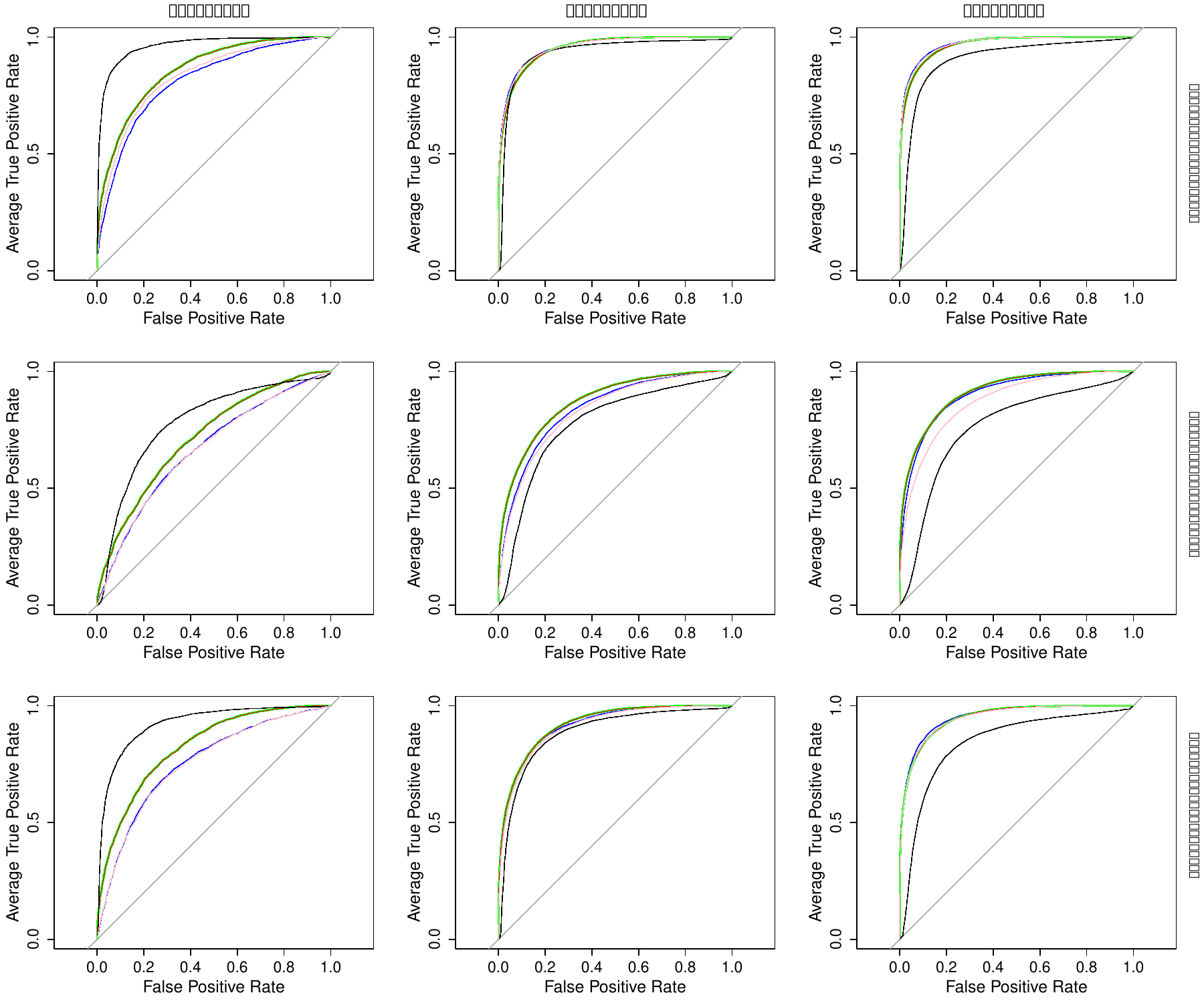}
            \caption{Observational Studies: ROC curves for scenarios with 10 explanatory variables that are highly correlated and errors that are uncorrelated, i.e., $\rho_1=2/3$ and $\rho_2=0$}
        \end{subfigure}
    \end{minipage}
    %\caption{Overall caption for the figures}
    %%%\label{fig:figures}
\end{figure}
%%%%%%%%%%%%%%
%Setting6
\begin{figure}[ht]
    \centering
    \begin{minipage}[b]{0.8\textwidth}
        \centering
        \begin{subfigure}[b]{\textwidth}
            \centering
            \includegraphics[width=0.9\textwidth]{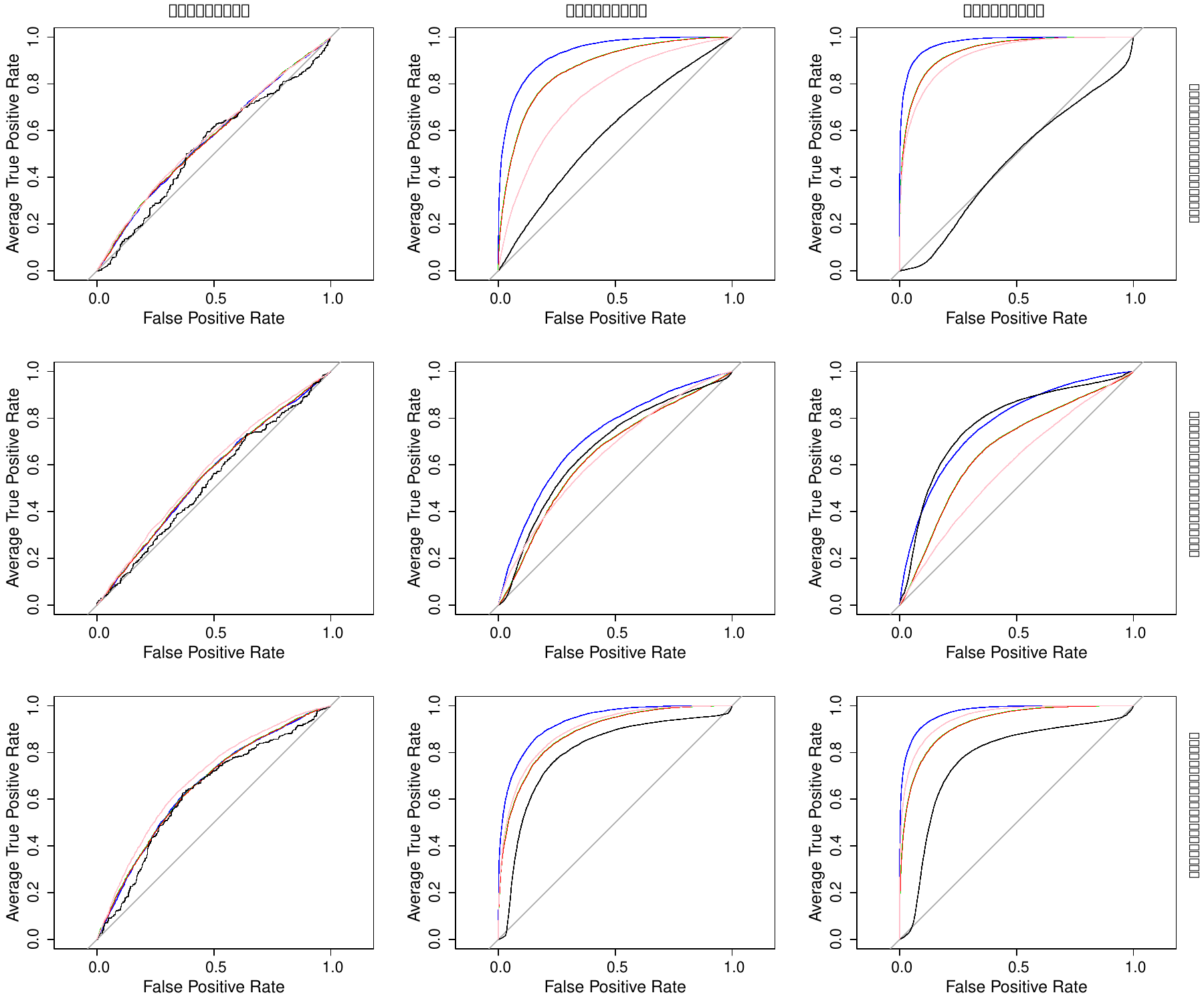}
            \caption{RCTs: ROC curves for scenarios with 50 explanatory variables that are highly correlated and errors that are uncorrelated, i.e., $\rho_1=2/3$ and $\rho_2=0$}
        \end{subfigure}
    \end{minipage}
    \vspace{0.5cm} % Adjust vertical space between figures
    \begin{minipage}[b]{0.8\textwidth}
        \centering
        \begin{subfigure}[b]{\textwidth}
            \centering
            \includegraphics[width=0.9\textwidth]{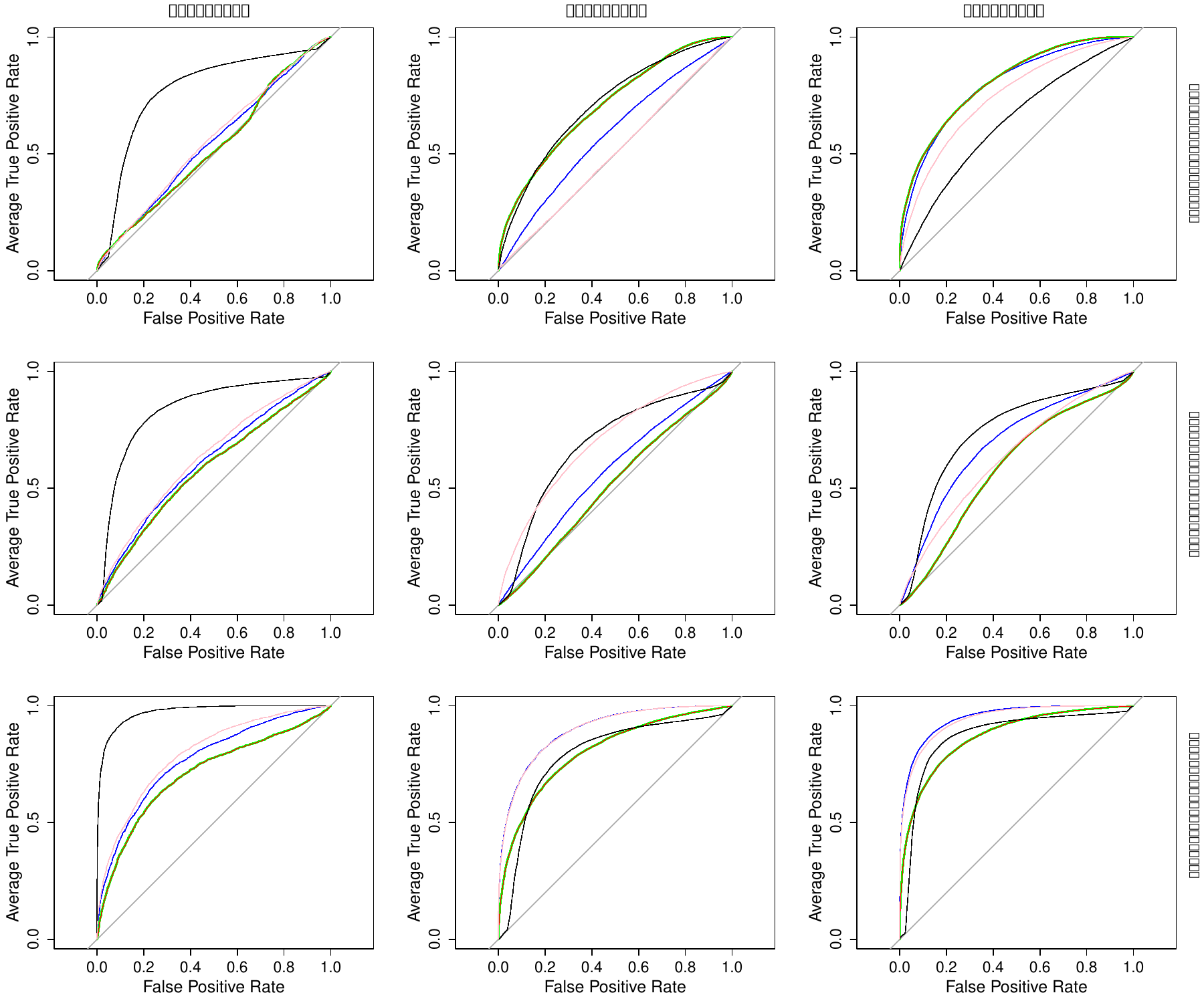}
            \caption{Observational Studies: ROC curves for scenarios with 50 explanatory variables that are highly correlated and errors that are uncorrelated, i.e., $\rho_1=2/3$ and $\rho_2=0$}        \end{subfigure}
    \end{minipage}
    %\caption{Overall caption for the figures}
    
\end{figure}
%%%%%%%%%%%%%%
%Setting7
\begin{figure}[ht]
    \centering
    \begin{minipage}[b]{0.8\textwidth}
        \centering
        \begin{subfigure}[b]{\textwidth}
            \centering
            \includegraphics[width=0.9\textwidth]{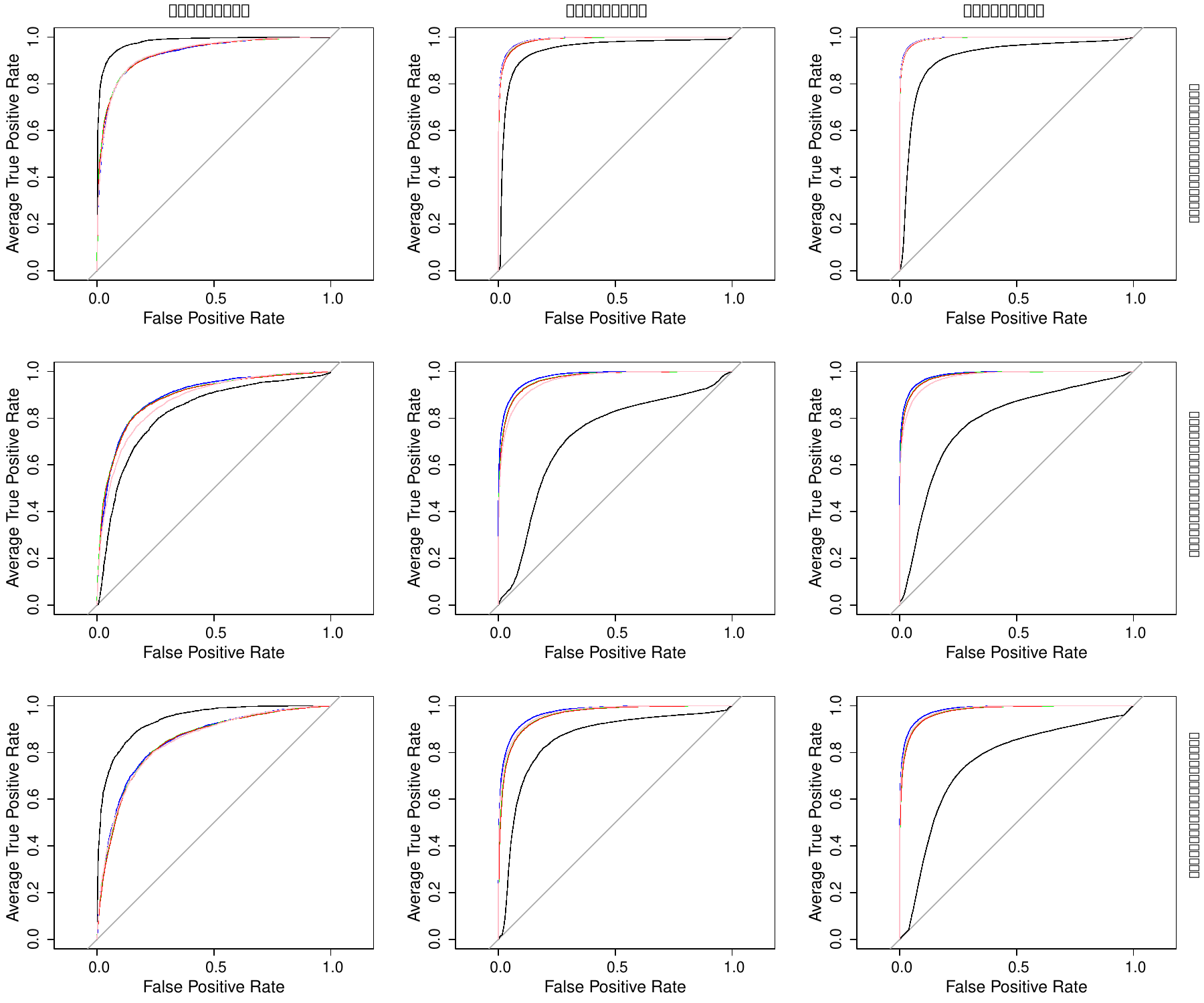}
            \caption{RCTs: ROC curves for scenarios with 10 explanatory variables that are uncorrelated and errors that are moderately correlated, i.e., $\rho_1=0$ and $\rho_2=1/3$}
        \end{subfigure}
    \end{minipage}
    \vspace{0.5cm} % Adjust vertical space between figures
    \begin{minipage}[b]{0.8\textwidth}
        \centering
        \begin{subfigure}[b]{\textwidth}
            \centering
            \includegraphics[width=0.9\textwidth]{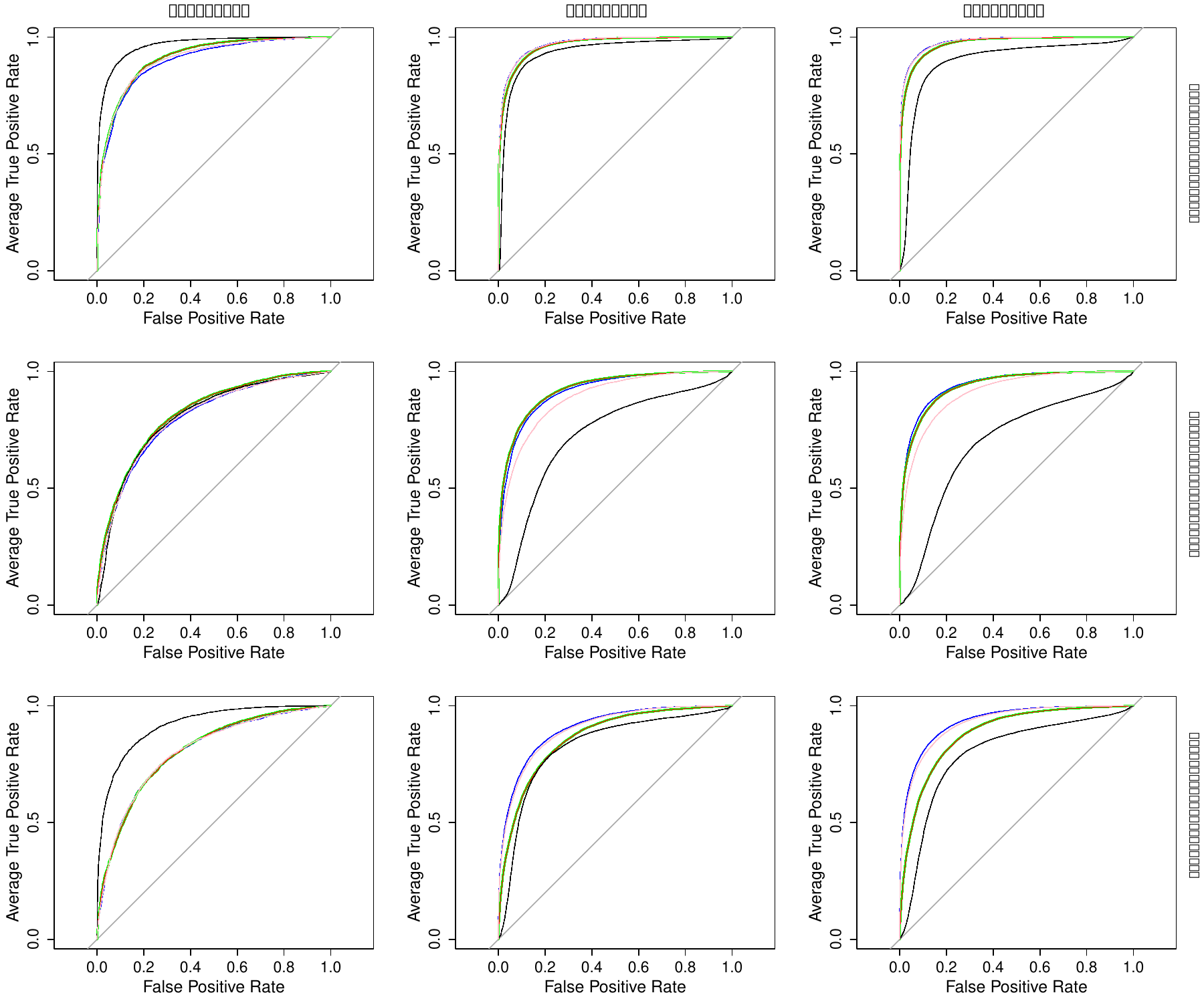}
            \caption{Observational Studies: ROC curves for scenarios with 10 explanatory variables that are uncorrelated and errors that are moderately correlated, i.e., $\rho_1=0$ and $\rho_2=1/3$}
        \end{subfigure}
    \end{minipage}
\end{figure}
%%%%%%%%%%%%%%
%Setting8
\begin{figure}[ht]
    \centering
    \begin{minipage}[b]{0.8\textwidth}
        \centering
        \begin{subfigure}[b]{\textwidth}
            \centering
            \includegraphics[width=0.9\textwidth]{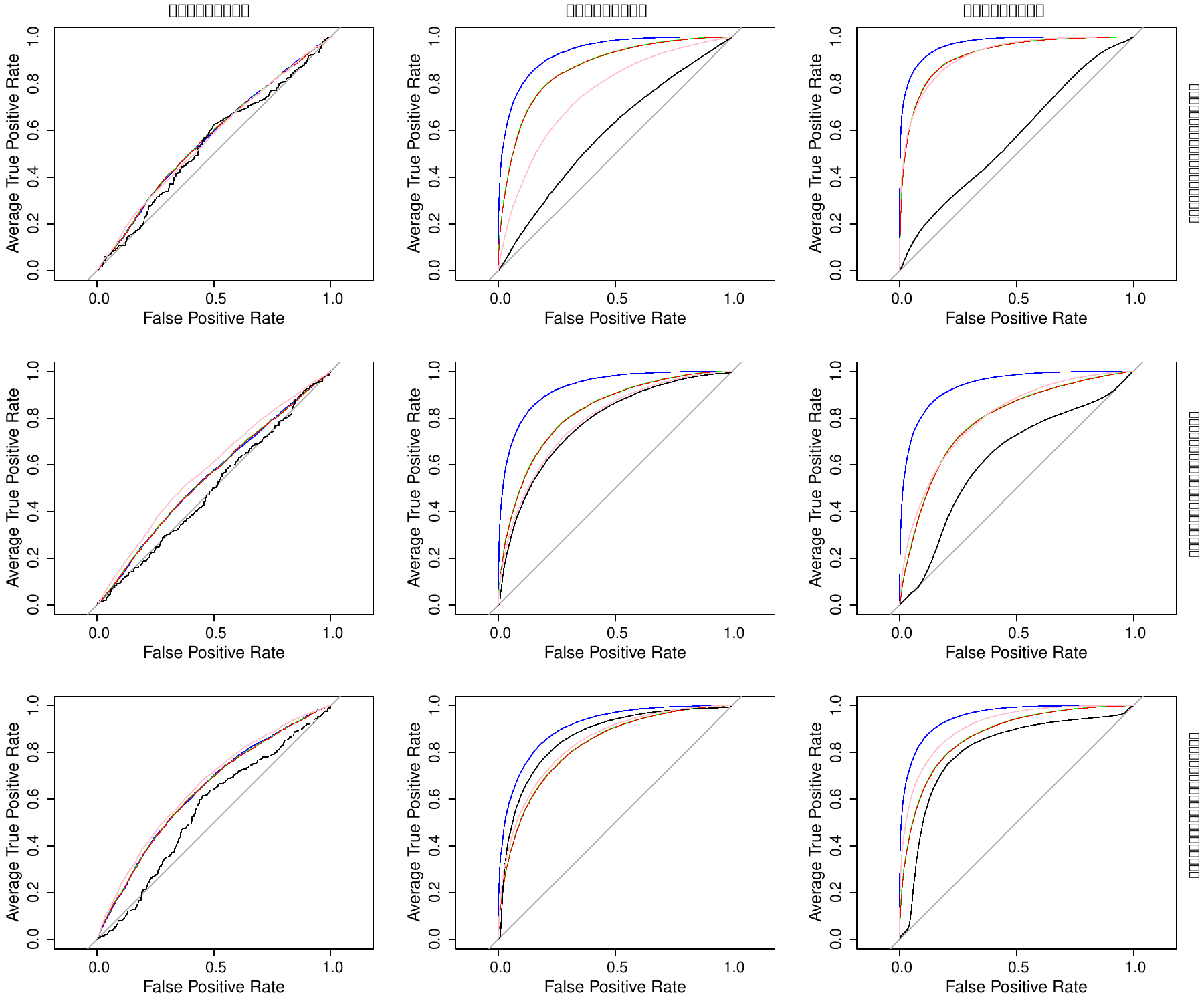}
            \caption{RCTs: ROC curves for scenarios with 50 explanatory variables that are uncorrelated and errors that are moderately correlated, i.e., $\rho_1=0$ and $\rho_2=1/3$}
        \end{subfigure}
    \end{minipage}
    \vspace{0.5cm} % Adjust vertical space between figures
    \begin{minipage}[b]{0.8\textwidth}
        \centering
        \begin{subfigure}[b]{\textwidth}
            \centering
            \includegraphics[width=0.9\textwidth]{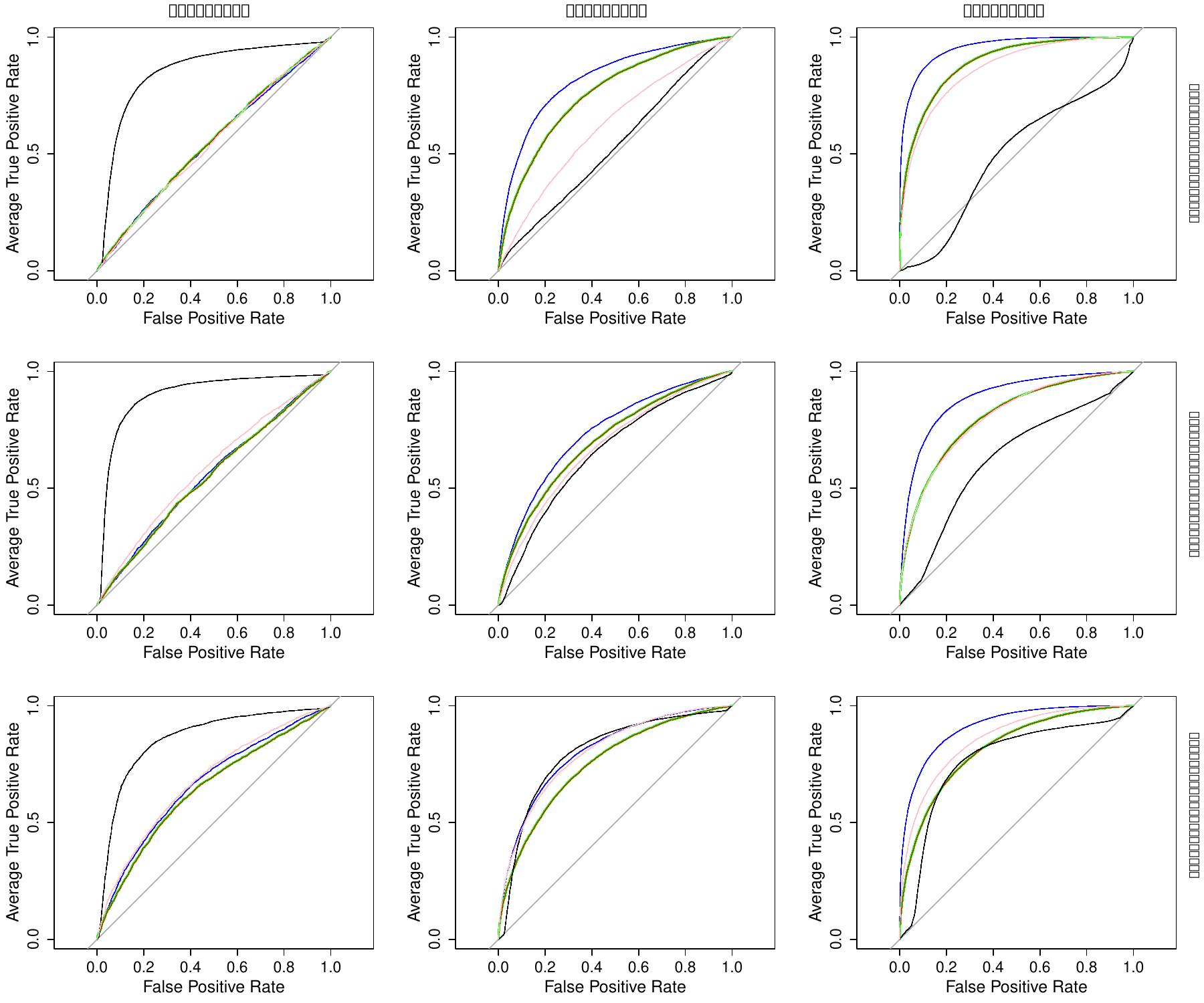}
            \caption{Observational Studies: ROC curves for scenarios with 50 explanatory variables that are uncorrelated and errors that are moderately correlated, i.e., $\rho_1=0$ and $\rho_2=1/3$}
        \end{subfigure}
    \end{minipage}
    \end{figure}
%%%%%%%%%%%%%%
%Setting9
\begin{figure}[ht]
    \centering
    \begin{minipage}[b]{0.8\textwidth}
        \centering
        \begin{subfigure}[b]{\textwidth}
            \centering
            \includegraphics[width=0.9\textwidth]{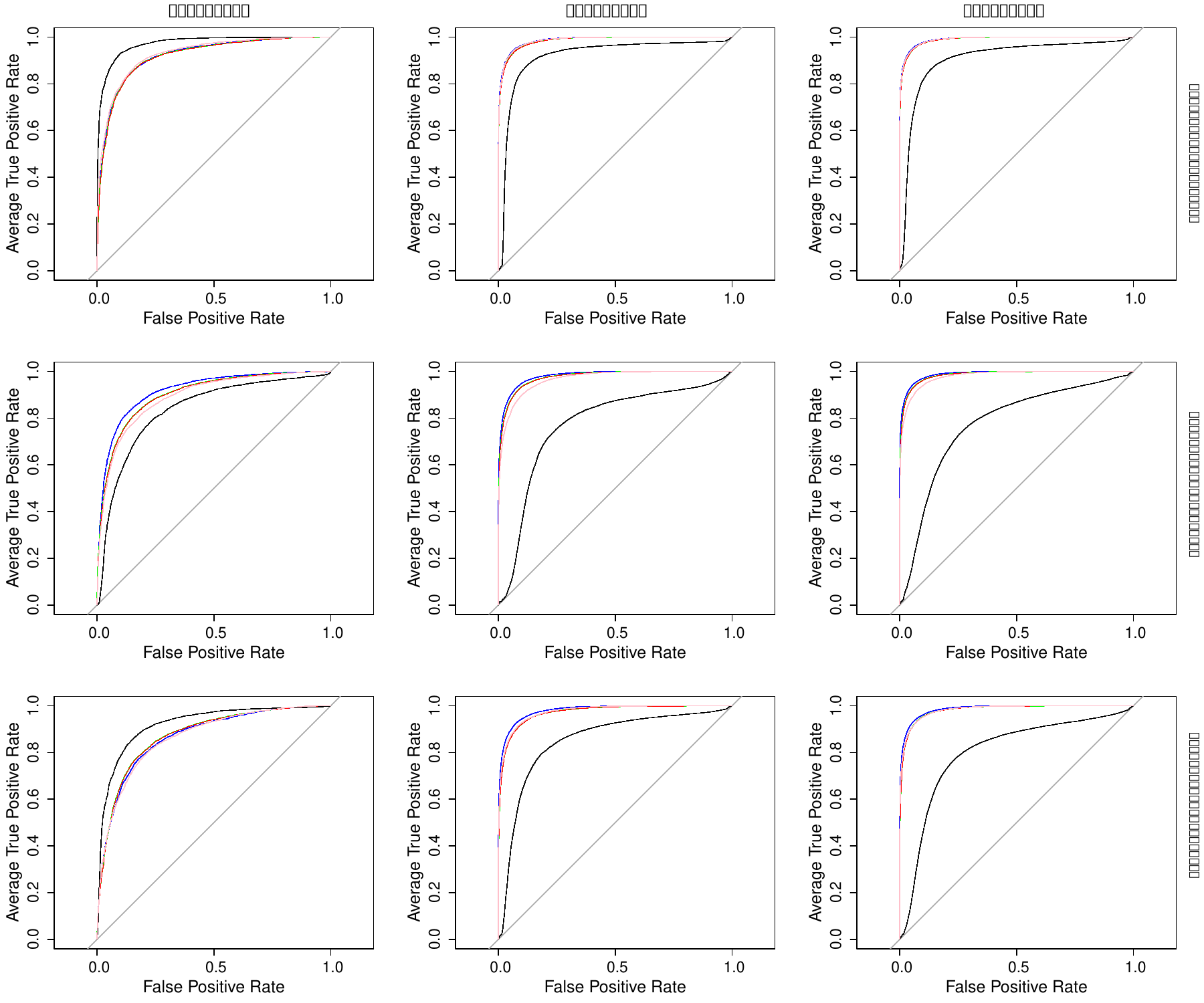}
            \caption{RCTs: ROC curves for scenarios with 10 explanatory variables and errors that are moderately correlated, i.e., $\rho_1=1/3$ and $\rho_2=1/3$}
        \end{subfigure}
    \end{minipage}
    \vspace{0.5cm} % Adjust vertical space between figures
    \begin{minipage}[b]{0.8\textwidth}
        \centering
        \begin{subfigure}[b]{\textwidth}
            \centering
            \includegraphics[width=0.9\textwidth]{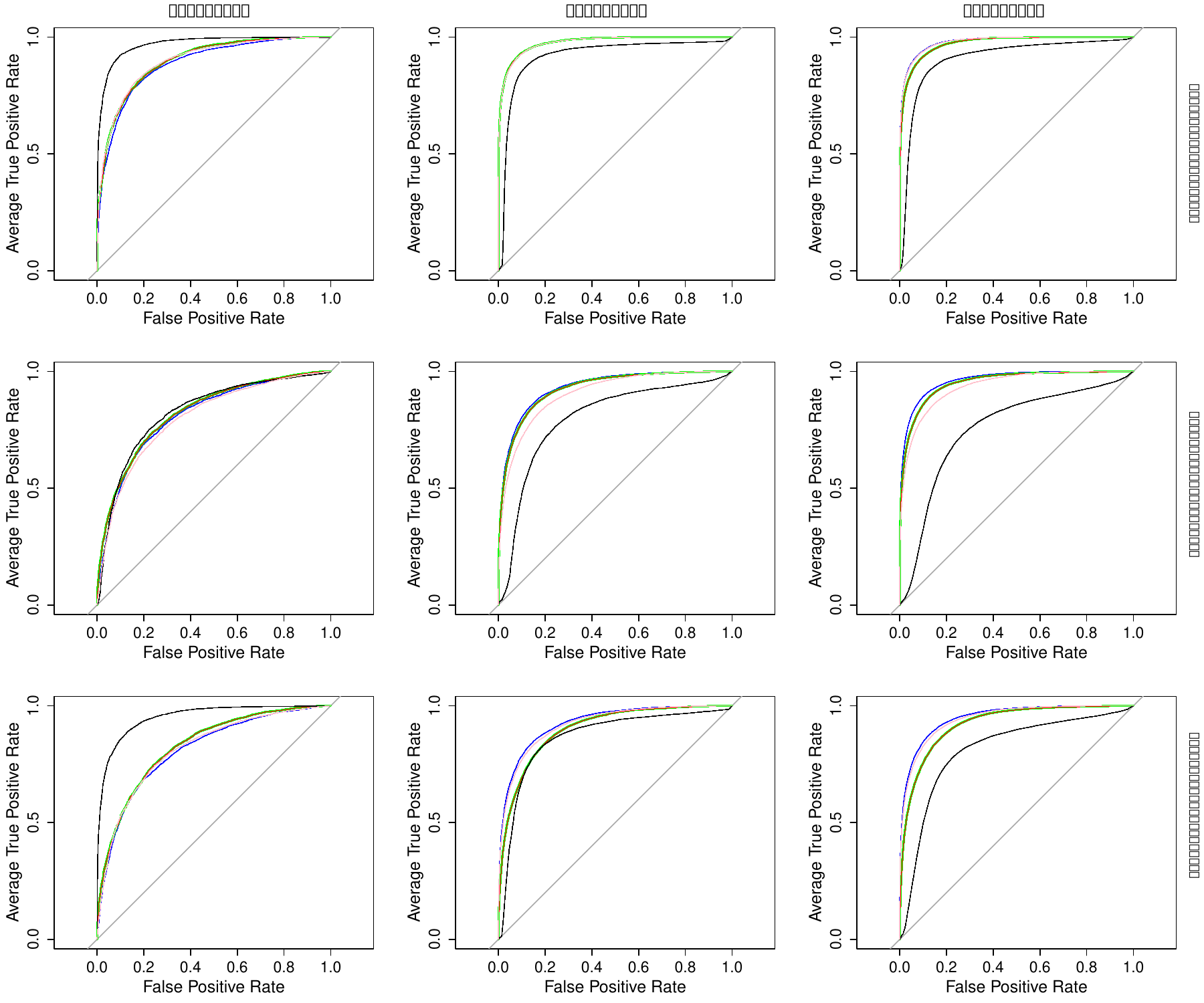}
            \caption{Observational Studies: ROC curves for scenarios with 10 explanatory variables and errors that are moderately correlated, i.e., $\rho_1=1/3$ and $\rho_2=1/3$}
        \end{subfigure}
    \end{minipage}
\end{figure}
%%%%%%%%%%%%%%
%Setting10
\begin{figure}[ht]
    \centering
    \begin{minipage}[b]{0.8\textwidth}
        \centering
        \begin{subfigure}[b]{\textwidth}
            \centering
            \includegraphics[width=0.9\textwidth]{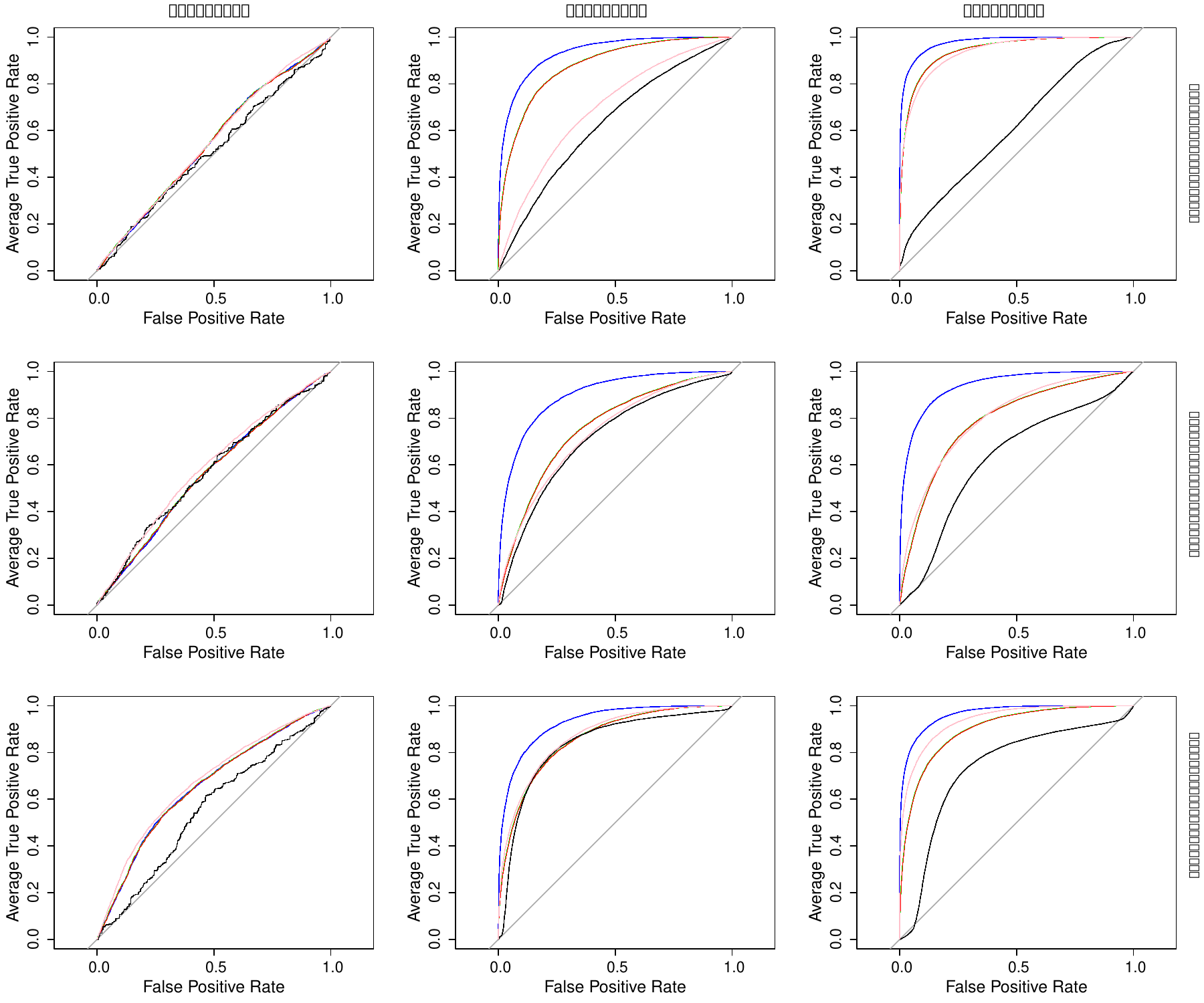}
            \caption{RCTs: ROC curves for scenarios with 50 explanatory variables and errors that are moderately correlated, i.e., $\rho_1=1/3$ and $\rho_2=1/3$}
        \end{subfigure}
    \end{minipage}
    \vspace{0.5cm} % Adjust vertical space between figures
    \begin{minipage}[b]{0.8\textwidth}
        \centering
        \begin{subfigure}[b]{\textwidth}
            \centering
            %Setting10に変更
\includegraphics[width=0.9\textwidth]{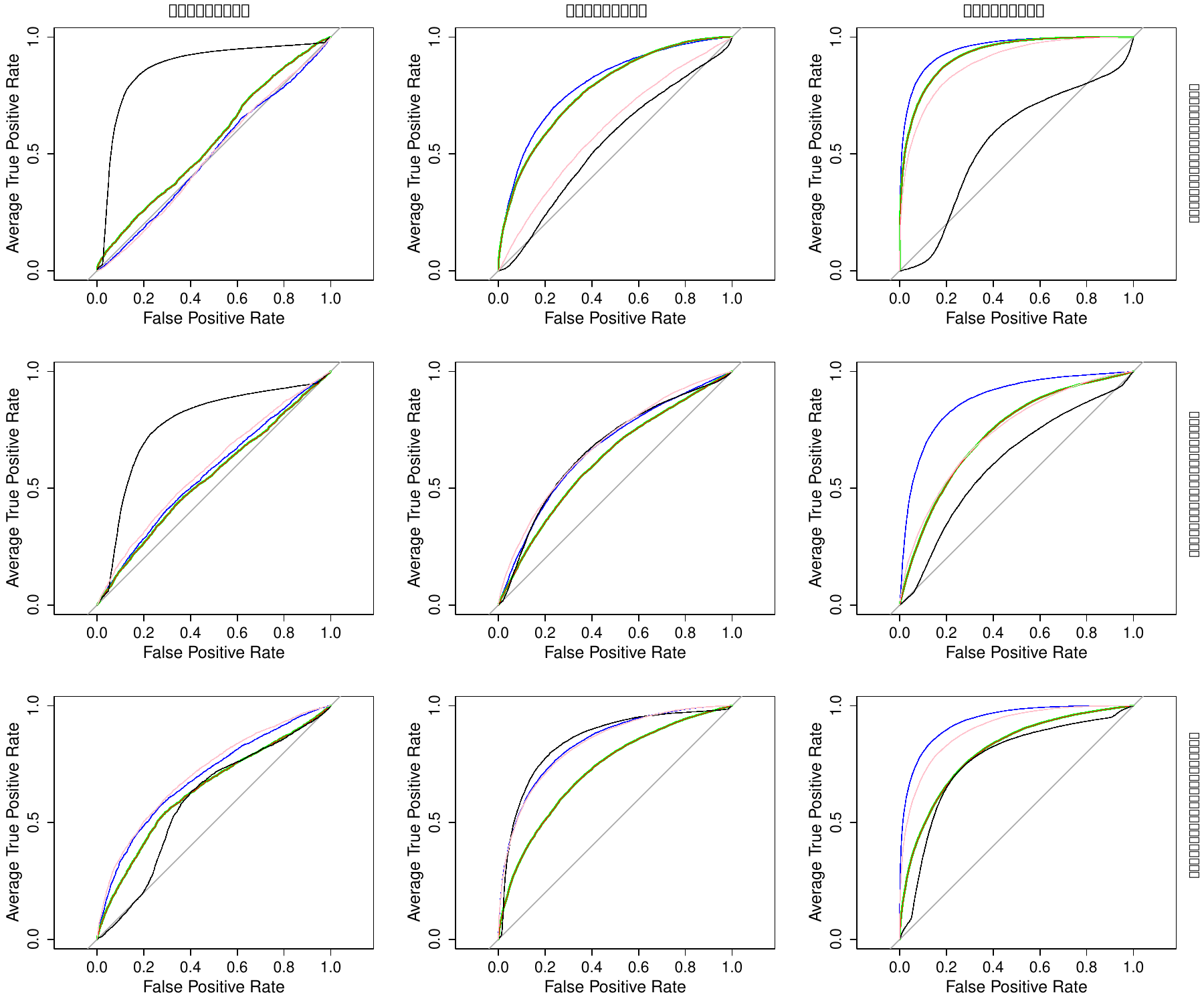}
            \caption{Observational Studies: ROC curves for scenarios with 50 explanatory variables and errors that are moderately correlated, i.e., $\rho_1=1/3$ and $\rho_2=1/3$}
        \end{subfigure}
    \end{minipage}
    \end{figure}
%%%%%%%%%%%%%%
%Setting11
\begin{figure}[ht]
    \centering
    \begin{minipage}[b]{0.8\textwidth}
        \centering
        \begin{subfigure}[b]{\textwidth}
            \centering
            \includegraphics[width=0.9\textwidth]{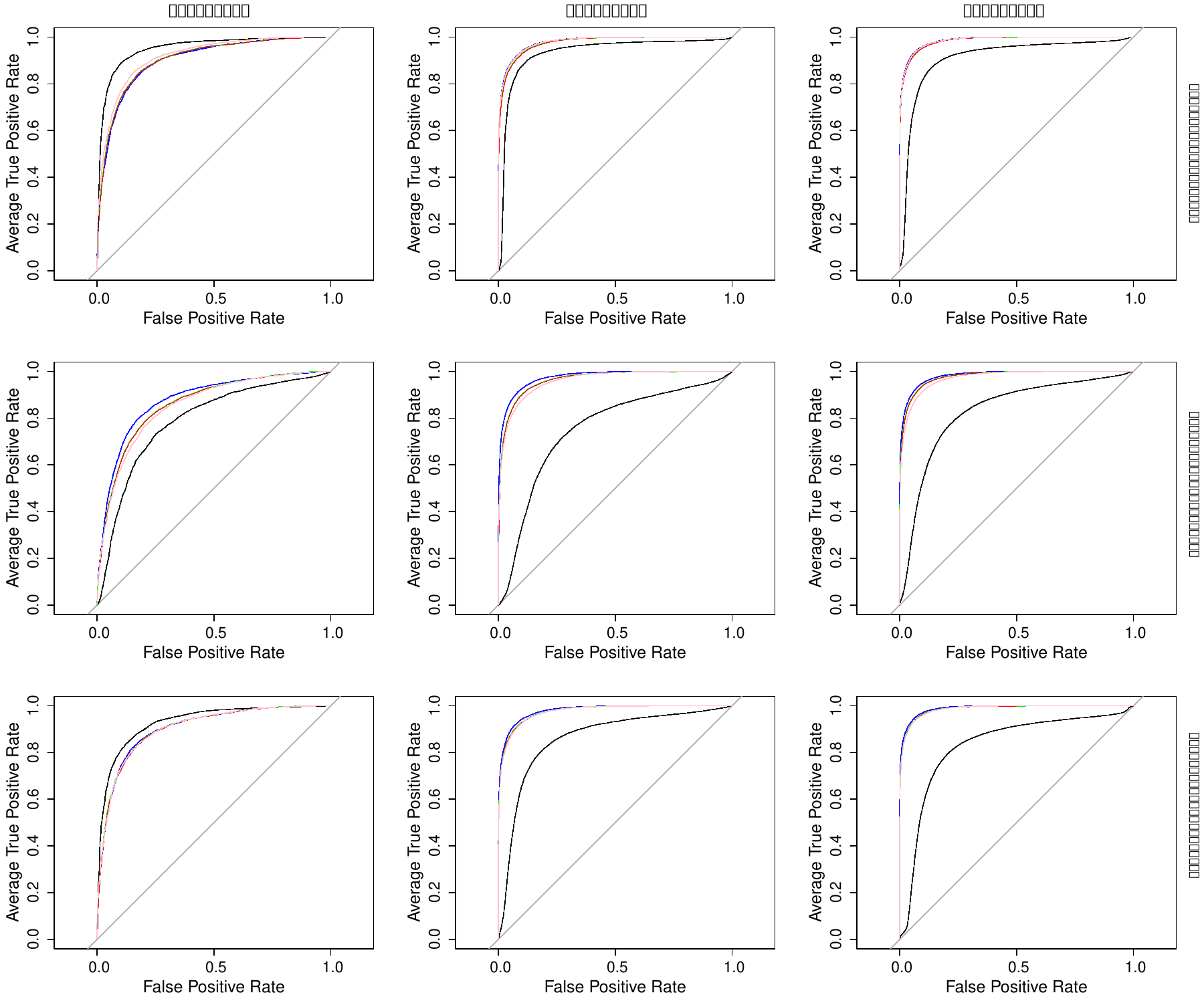}
            \caption{RCTs: ROC curves for scenarios with 10 explanatory variables that are highly correlated and errors that are moderately correlated, i.e., $\rho_1=2/3$ and $\rho_2=1/3$}
        \end{subfigure}
    \end{minipage}
    \vspace{0.5cm} % Adjust vertical space between figures
    \begin{minipage}[b]{0.8\textwidth}
        \centering
        \begin{subfigure}[b]{\textwidth}
            \centering
            \includegraphics[width=0.9\textwidth]{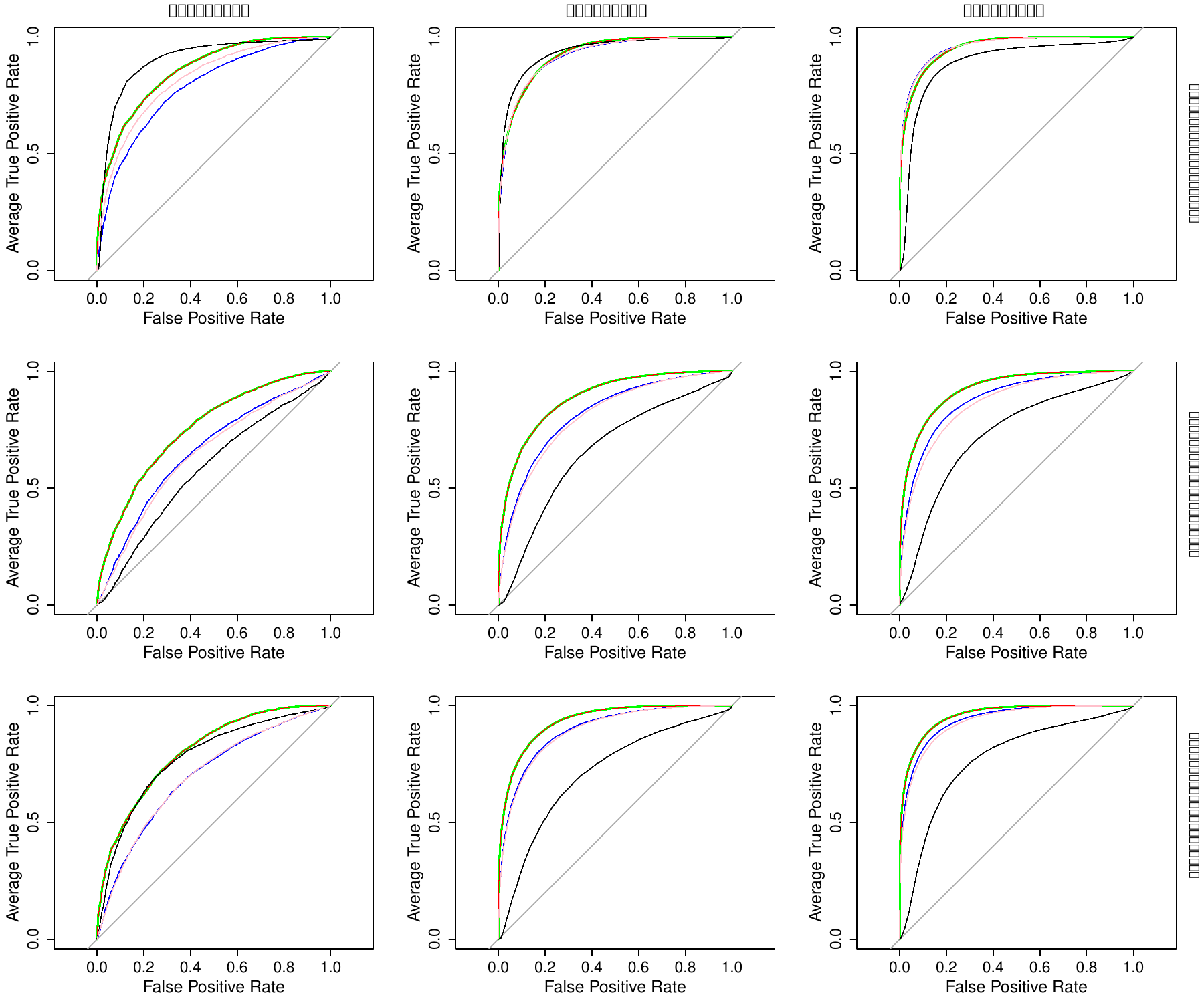}
            \caption{Observational Studies: ROC curves for scenarios with 10 explanatory variables that are highly correlated and errors that are moderately correlated, i.e., $\rho_1=2/3$ and $\rho_2=1/3$}
        \end{subfigure}
    \end{minipage}
\end{figure}
%%%%%%%%%%%%%%
%Setting12
\begin{figure}[ht]
    \centering
    \begin{minipage}[b]{0.8\textwidth}
        \centering
        \begin{subfigure}[b]{\textwidth}
            \centering
            \includegraphics[width=0.9\textwidth]{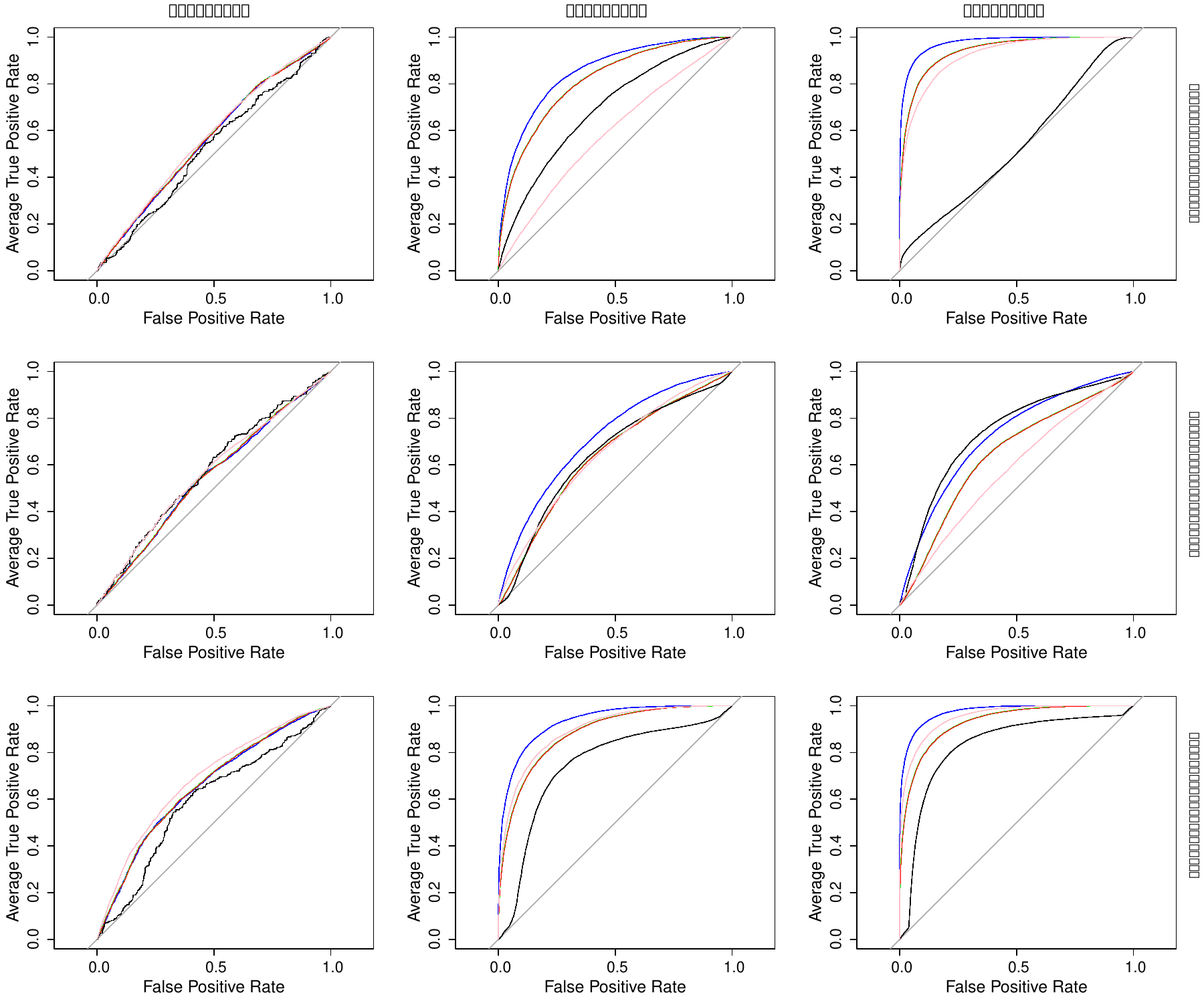}
            \caption{RCTs: ROC curves for scenarios with 50 explanatory variables that are highly correlated and errors that are moderately correlated, i.e., $\rho_1=2/3$ and $\rho_2=1/3$}
        \end{subfigure}
    \end{minipage}
    \vspace{0.5cm} % Adjust vertical space between figures
    \begin{minipage}[b]{0.8\textwidth}
        \centering
        \begin{subfigure}[b]{\textwidth}
            \centering
            \includegraphics[width=0.9\textwidth]{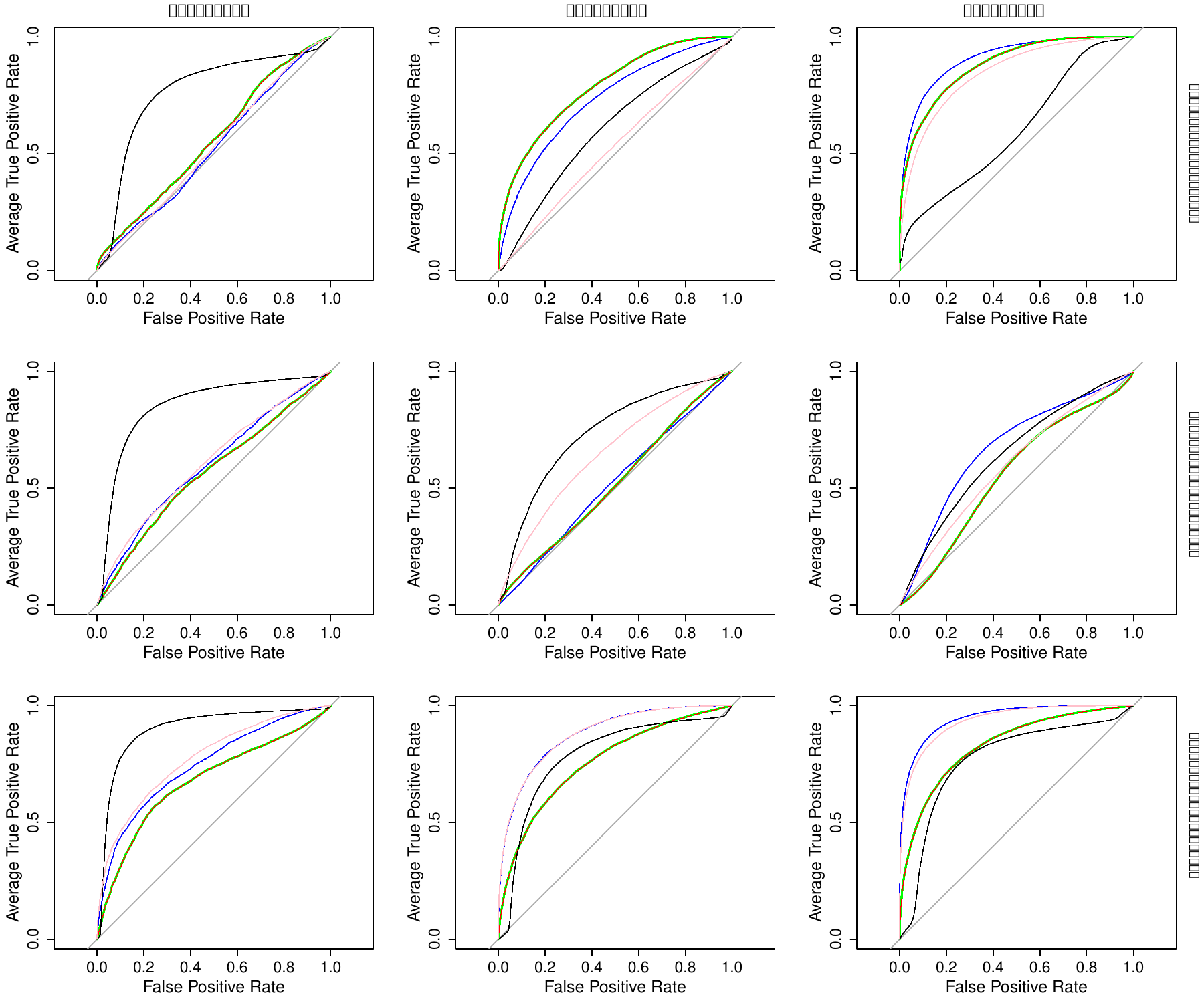}
            \caption{Observational Studies: ROC curves for scenarios with 50 explanatory variables that are highly correlated and errors that are moderately correlated, i.e., $\rho_1=2/3$ and $\rho_2=1/3$}
        \end{subfigure}
    \end{minipage}
    \end{figure}
%%%%%%%%%%%%%%
%Setting13
\begin{figure}[ht]
    \centering
    \begin{minipage}[b]{0.8\textwidth}
        \centering
        \begin{subfigure}[b]{\textwidth}
            \centering
            \includegraphics[width=0.9\textwidth]{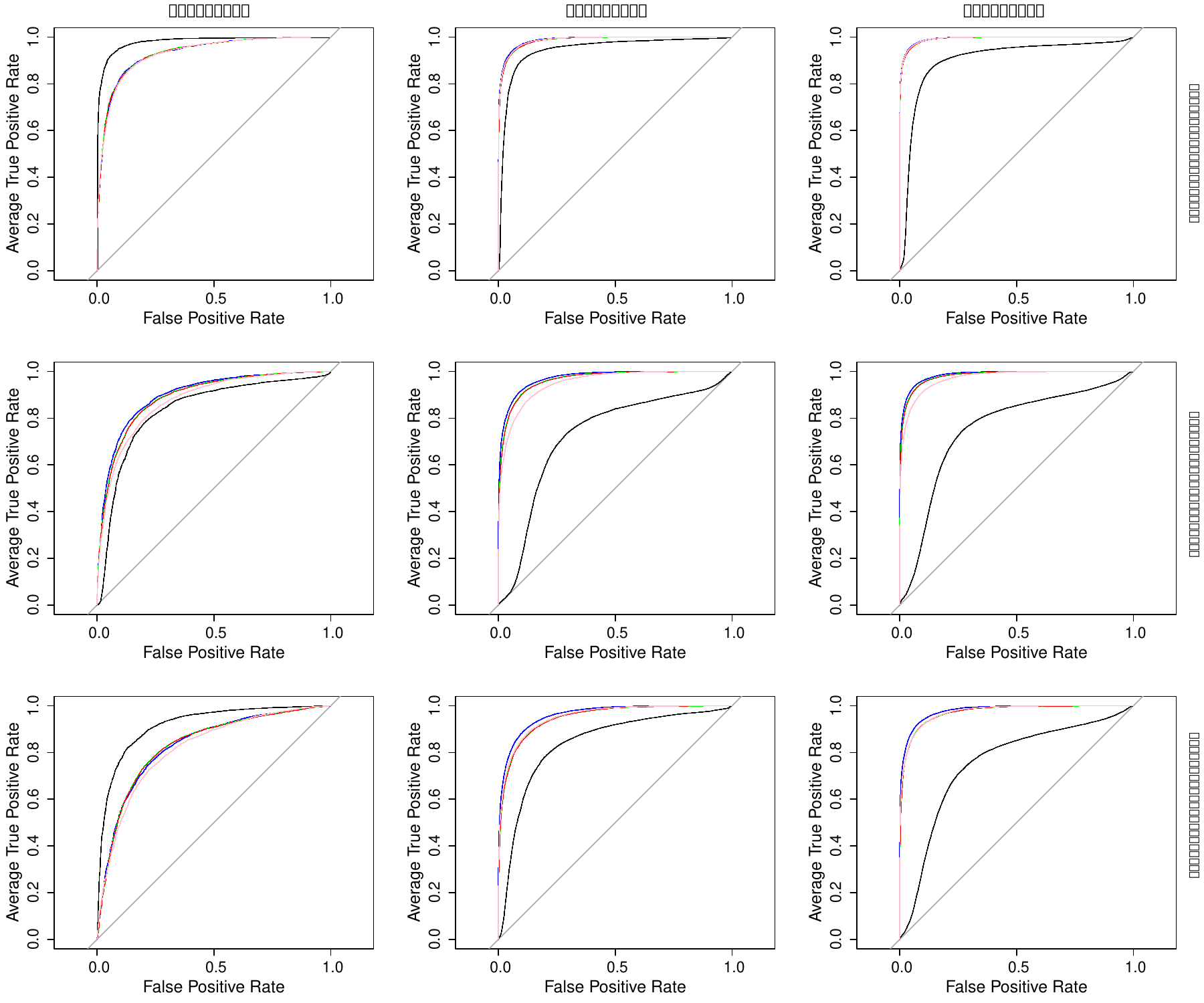}
            \caption{RCTs: ROC curves for scenarios with 10 explanatory variables that are uncorrelated and errors that are highly correlated, i.e., $\rho_1=0$ and $\rho_2=2/3$}
        \end{subfigure}
    \end{minipage}
    \vspace{0.5cm} % Adjust vertical space between figures
    \begin{minipage}[b]{0.8\textwidth}
        \centering
        \begin{subfigure}[b]{\textwidth}
            \centering
            \includegraphics[width=0.9\textwidth]{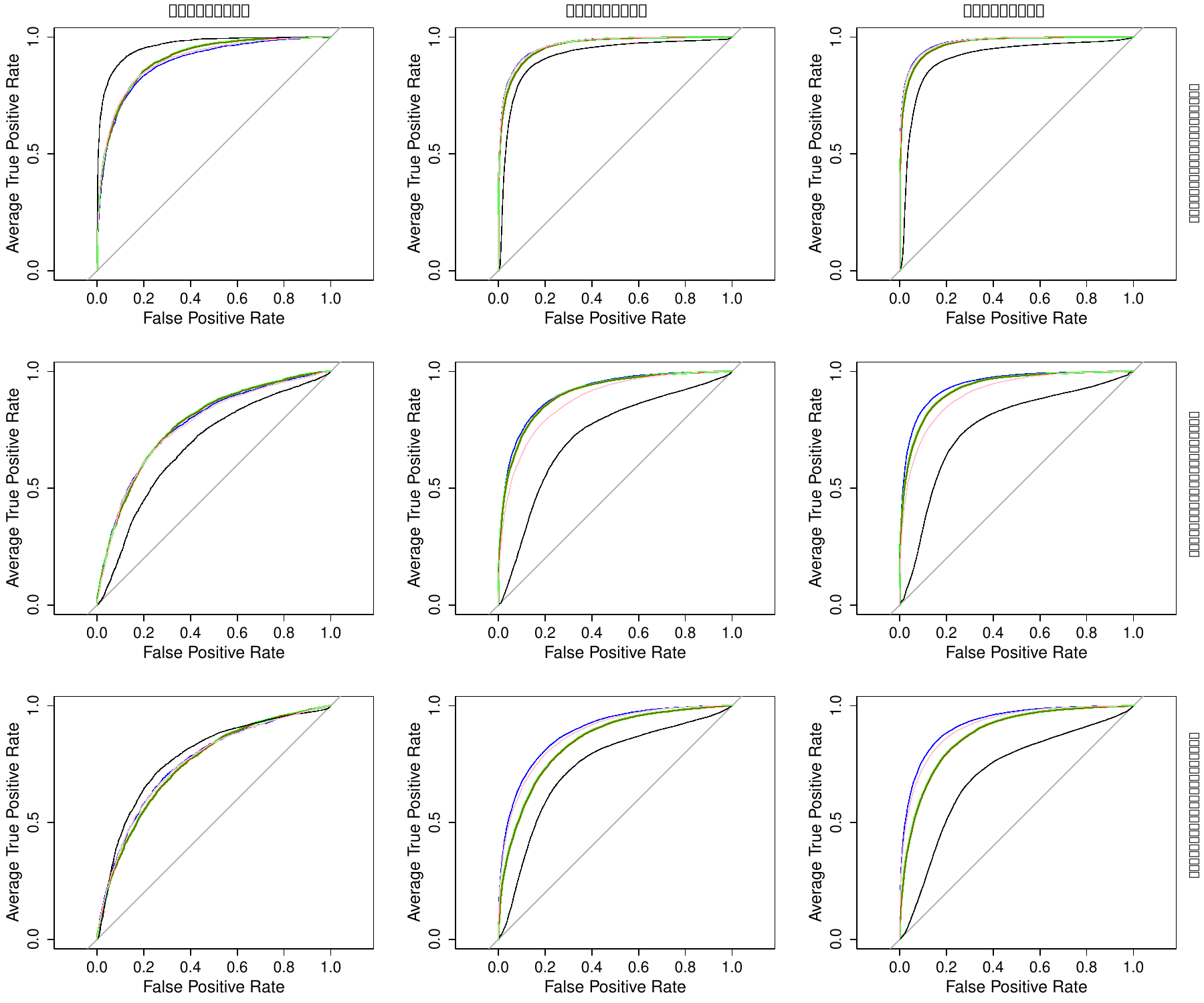}
            \caption{Observational Studies: ROC curves for scenarios with 10 explanatory variables that are uncorrelated and errors that are highly correlated, i.e., $\rho_1=0$ and $\rho_2=2/3$}
        \end{subfigure}
    \end{minipage}
\end{figure}
%%%%%%%%%%%%%%
%Setting14
\begin{figure}[ht]
    \centering
    \begin{minipage}[b]{0.8\textwidth}
        \centering
        \begin{subfigure}[b]{\textwidth}
            \centering
            \includegraphics[width=0.9\textwidth]{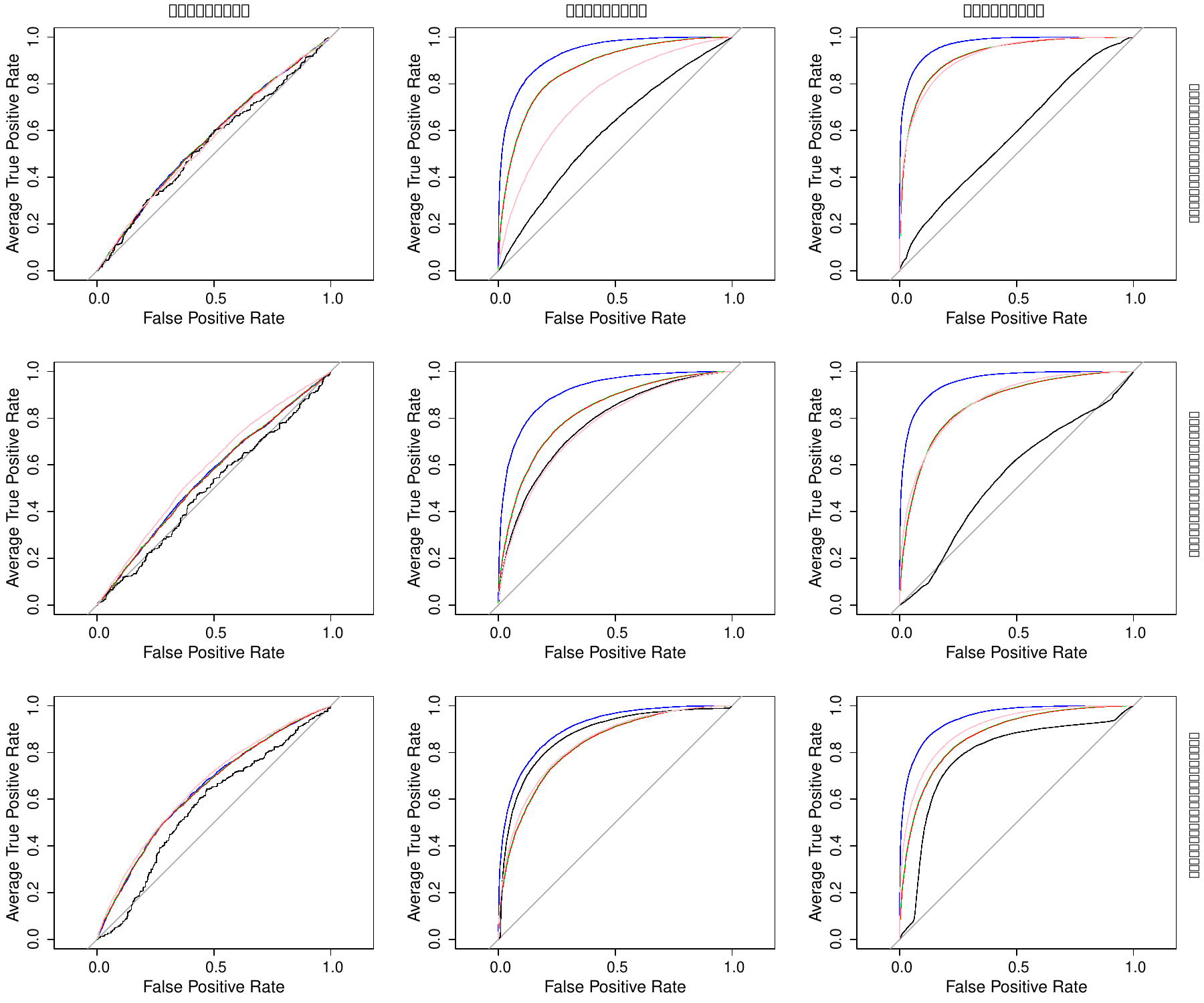}
            \caption{RCTs: ROC curves for scenarios with 50 explanatory variables that are uncorrelated and errors that are highly correlated, i.e., $\rho_1=0$ and $\rho_2=2/3$}
        \end{subfigure}
    \end{minipage}
    \vspace{0.5cm} % Adjust vertical space between figures
    \begin{minipage}[b]{0.8\textwidth}
        \centering
        \begin{subfigure}[b]{\textwidth}
            \centering
            \includegraphics[width=0.9\textwidth]{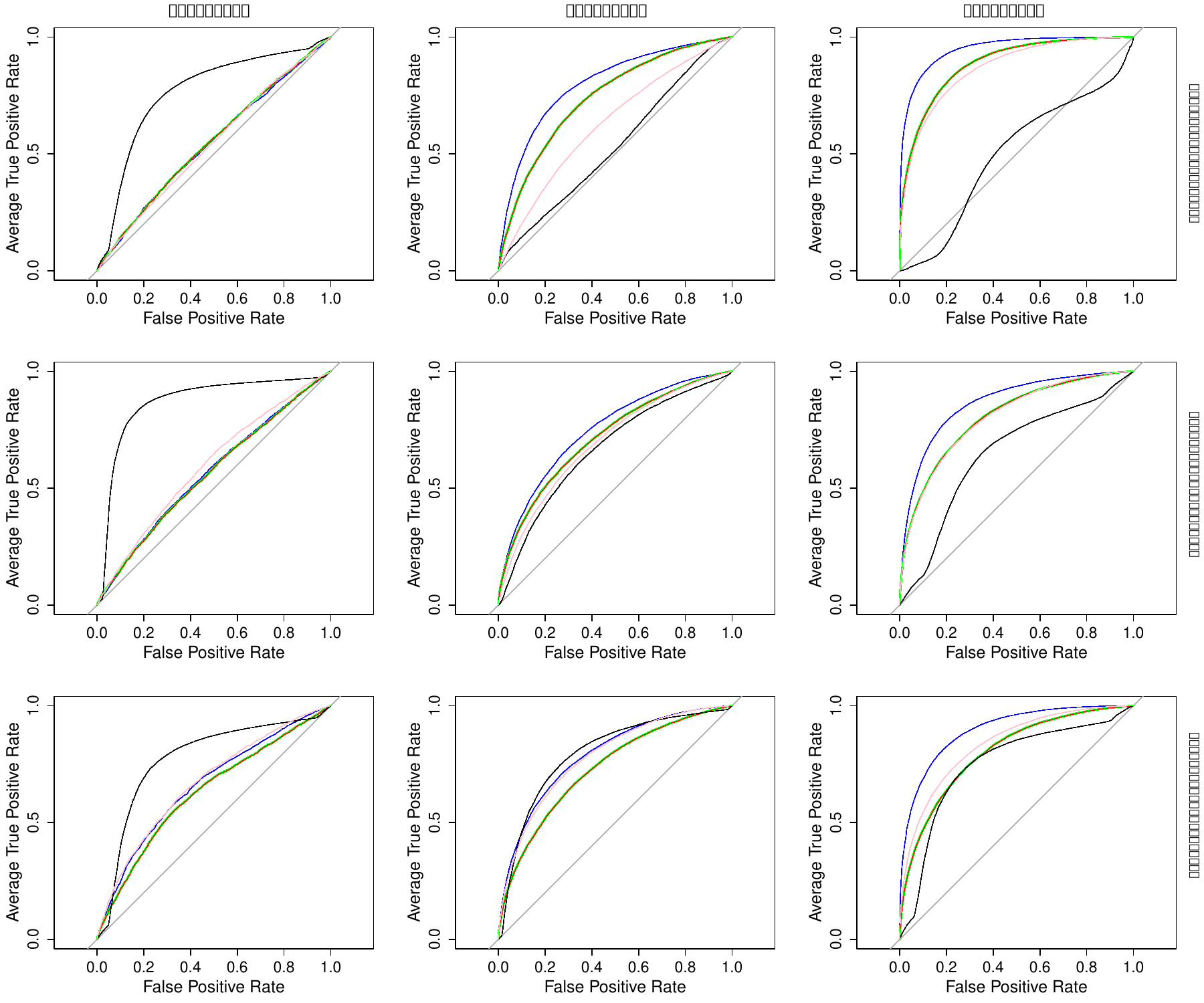}
            \caption{Observational Studies: ROC curves for scenarios with 50 explanatory variables that are uncorrelated and errors that are highly correlated, i.e., $\rho_1=0$ and $\rho_2=2/3$}
        \end{subfigure}
    \end{minipage}
    \end{figure}
%%%%%%%%%%%%%%
%Setting15
\begin{figure}[ht]
    \centering
    \begin{minipage}[b]{0.8\textwidth}
        \centering
        \begin{subfigure}[b]{\textwidth}
            \centering
            \includegraphics[width=0.9\textwidth]{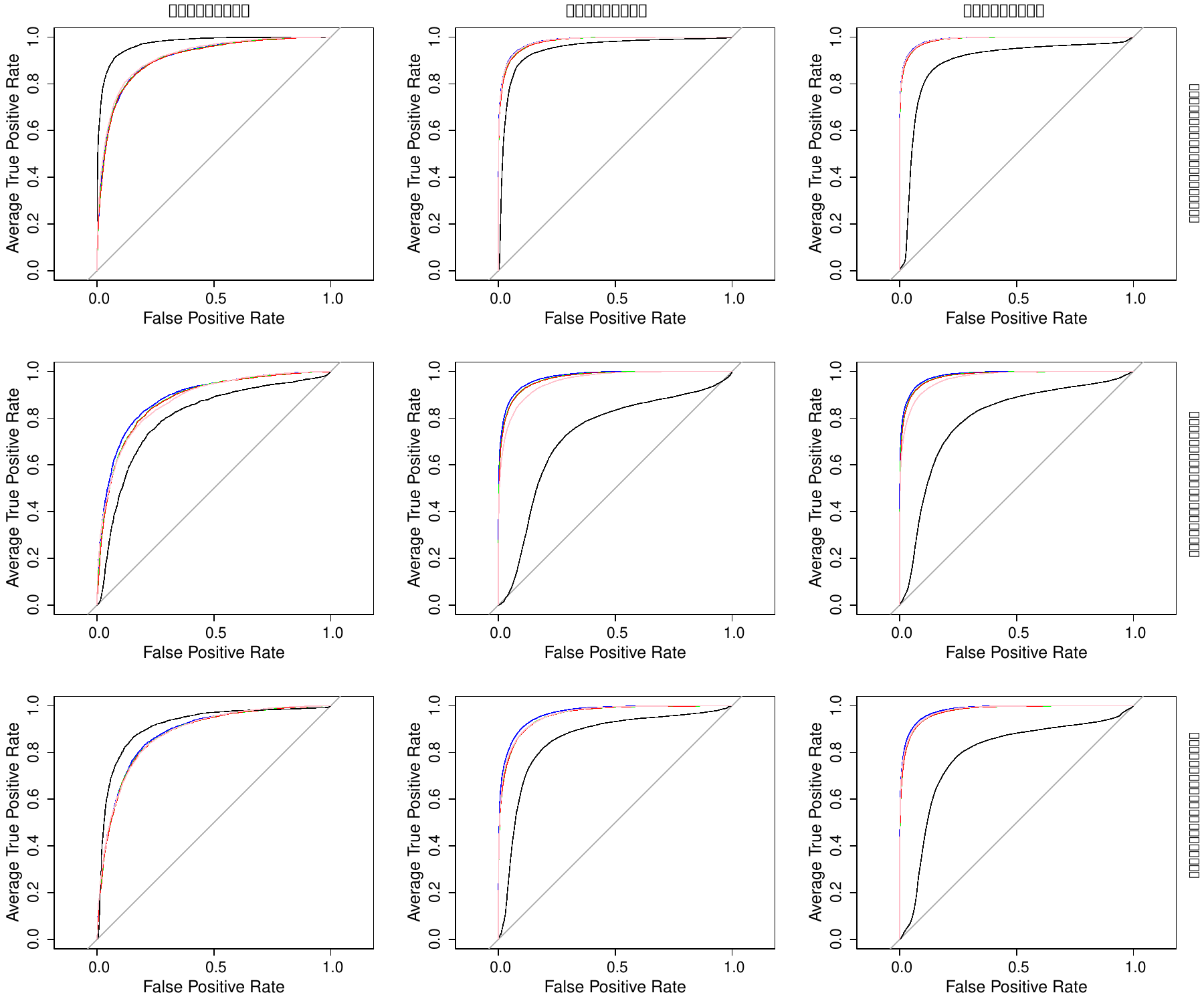}
            \caption{RCTs: ROC curves for scenarios with 10 explanatory variables that are moderately correlated and errors that are highly correlated, i.e., $\rho_1=1/3$ and $\rho_2=2/3$}
        \end{subfigure}
    \end{minipage}
    \vspace{0.5cm} % Adjust vertical space between figures
    \begin{minipage}[b]{0.8\textwidth}
        \centering
        \begin{subfigure}[b]{\textwidth}
            \centering
            \includegraphics[width=0.9\textwidth]{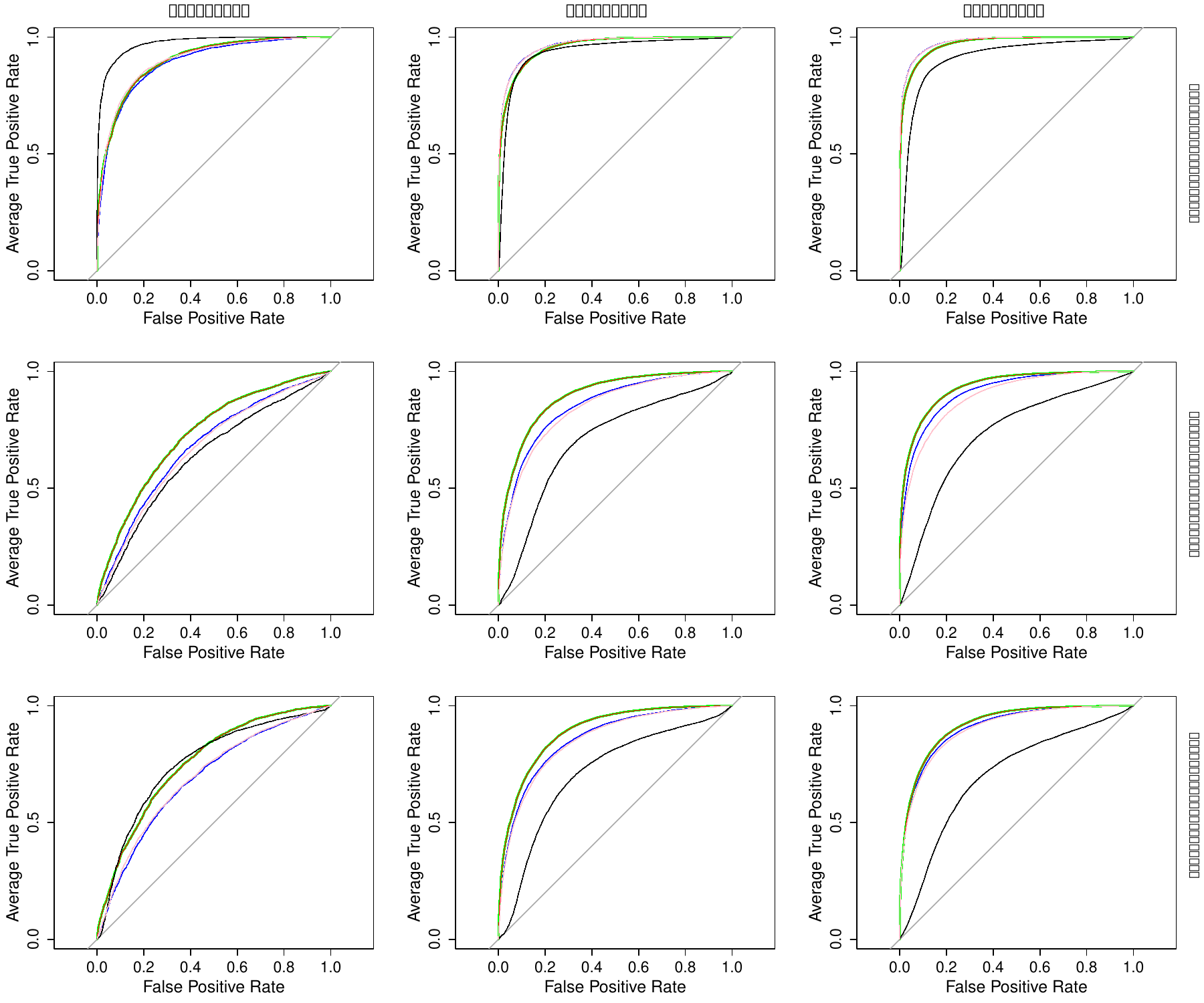}
            \caption{Observational Studies: ROC curves for scenarios with 10 explanatory variables that are moderately correlated and errors that are highly correlated, i.e., $\rho_1=1/3$ and $\rho_2=2/3$}
        \end{subfigure}
    \end{minipage}
\end{figure}
%%%%%%%%%%%%%%
%Setting16
\begin{figure}[ht]
    \centering
    \begin{minipage}[b]{0.8\textwidth}
        \centering
        \begin{subfigure}[b]{\textwidth}
            \centering
            \includegraphics[width=0.9\textwidth]{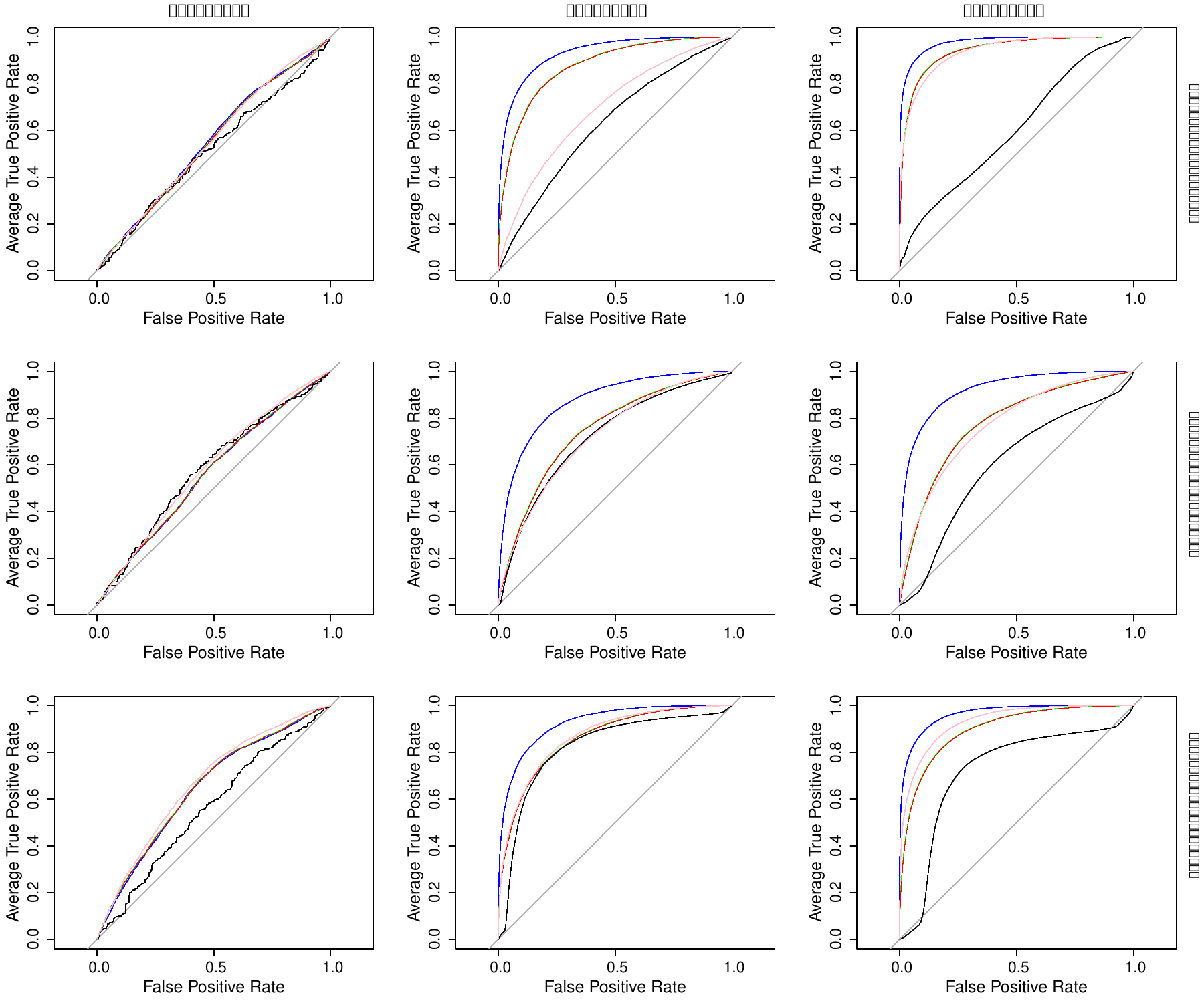}
            \caption{RCTs: ROC curves for scenarios with 50 explanatory variables that are moderately correlated and errors that are highly correlated, i.e., $\rho_1=1/3$ and $\rho_2=2/3$}
        \end{subfigure}
    \end{minipage}
    \vspace{0.5cm} % Adjust vertical space between figures
    \begin{minipage}[b]{0.8\textwidth}
        \centering
        \begin{subfigure}[b]{\textwidth}
            \centering
            \includegraphics[width=0.9\textwidth]{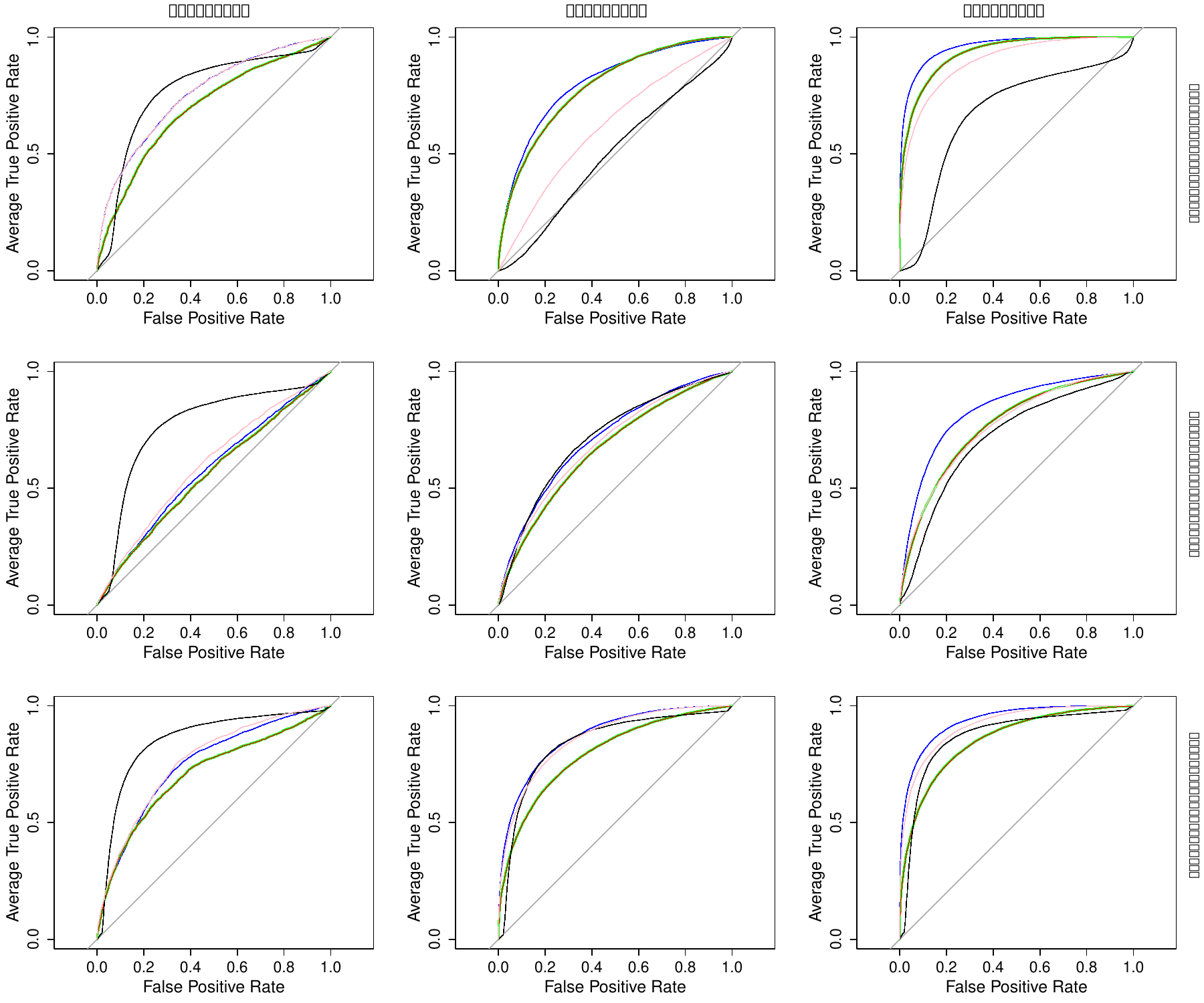}
            \caption{Observational Studies: ROC curves for scenarios with 50 explanatory variables that are moderately correlated and errors that are highly correlated, i.e., $\rho_1=1/3$ and $\rho_2=2/3$}
        \end{subfigure}
    \end{minipage}
    \end{figure}
%%%%%%%%%%%%%%
%Setting17
\begin{figure}[ht]
    \centering
    \begin{minipage}[b]{0.8\textwidth}
        \centering
        \begin{subfigure}[b]{\textwidth}
            \centering
            \includegraphics[width=0.9\textwidth]{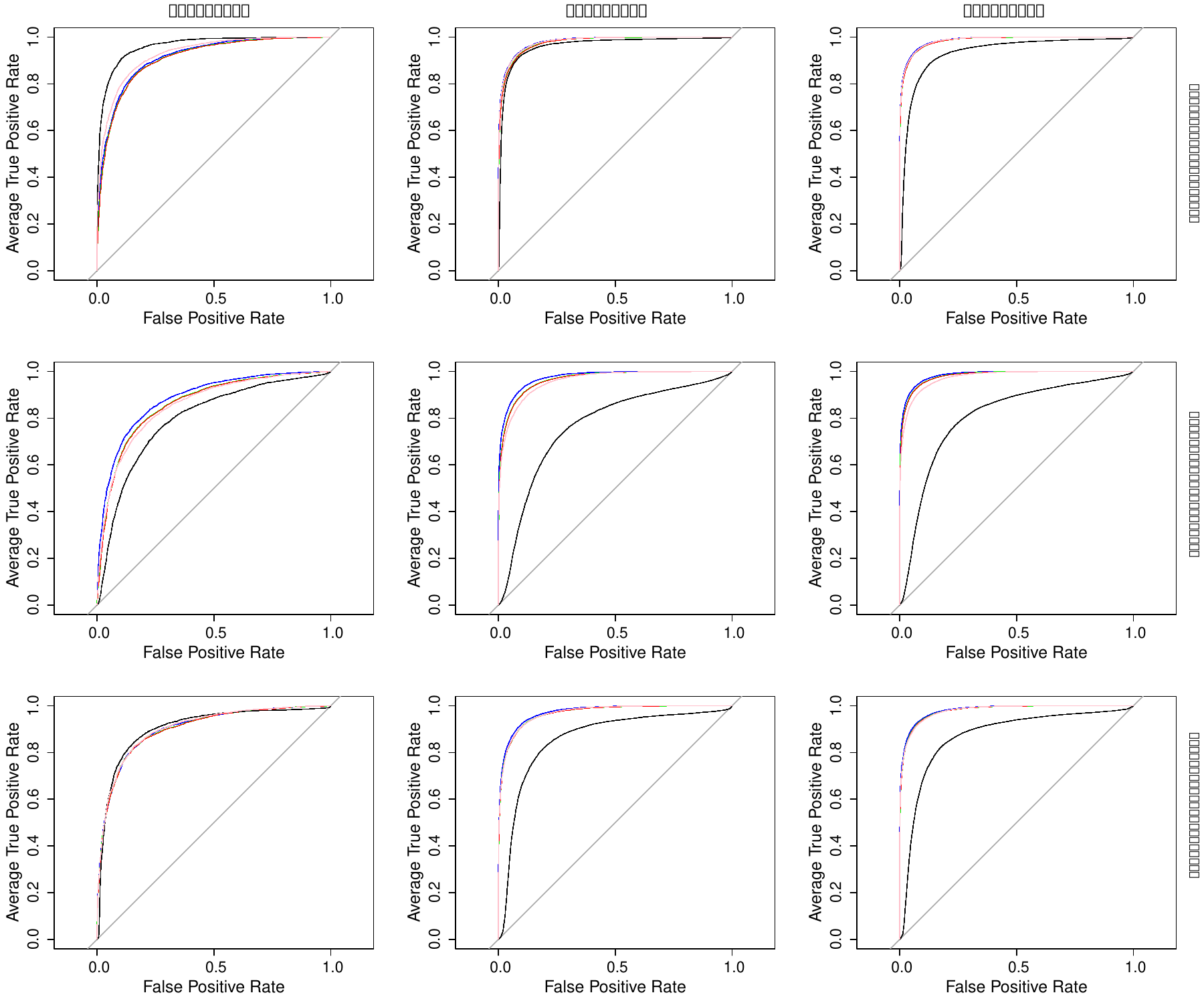}
            \caption{RCTs: ROC curves for scenarios with 10 explanatory variables and errors that are highly correlated, i.e., $\rho_1=2/3$ and $\rho_2=2/3$}
        \end{subfigure}
    \end{minipage}
    \vspace{0.5cm} % Adjust vertical space between figures
    \begin{minipage}[b]{0.8\textwidth}
        \centering
        \begin{subfigure}[b]{\textwidth}
            \centering
            \includegraphics[width=0.9\textwidth]{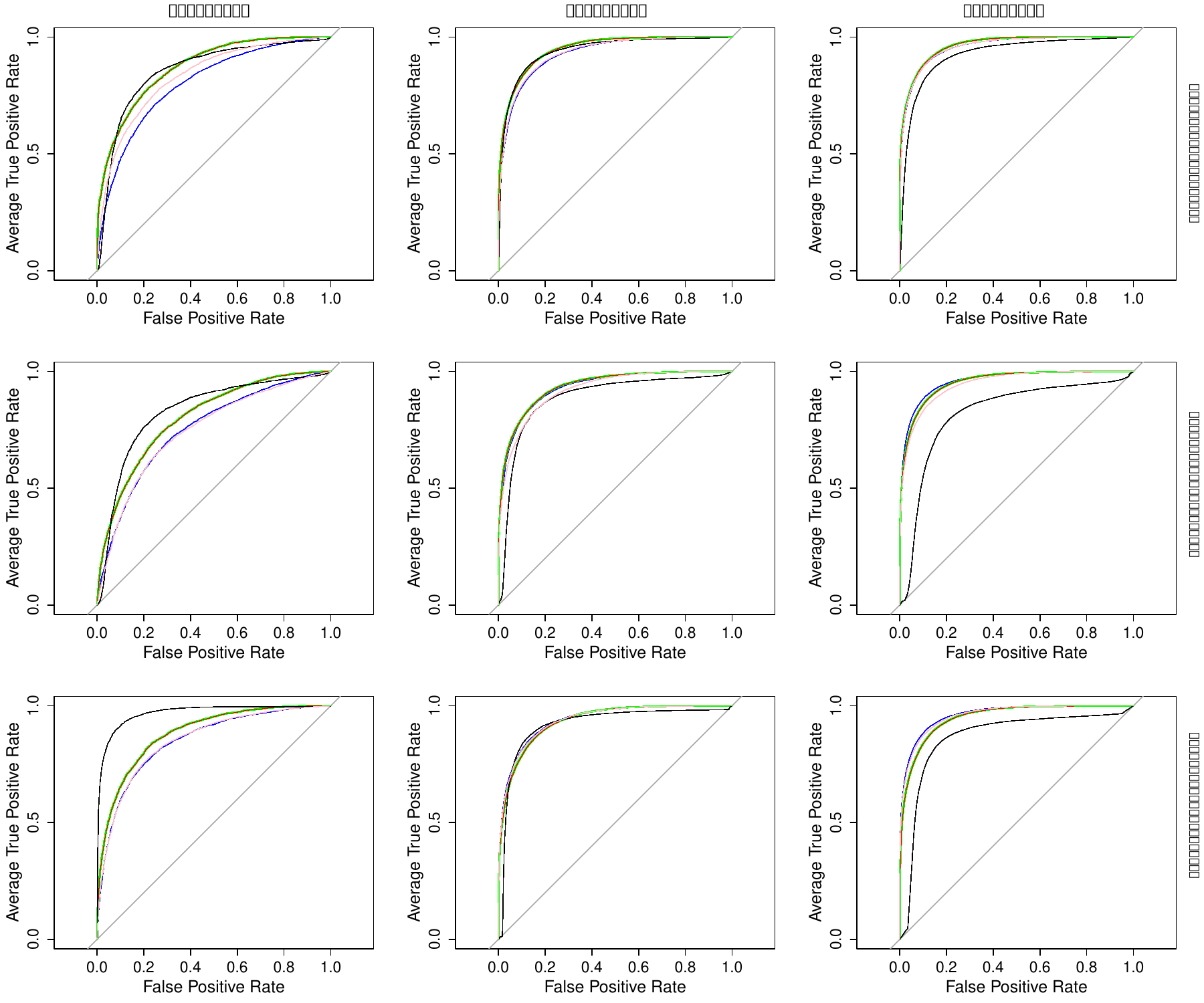}
            \caption{Observational Studies: ROC curves for scenarios with 10 explanatory variables and errors that are highly correlated, i.e., $\rho_1=2/3$ and $\rho_2=2/3$}
        \end{subfigure}
    \end{minipage}
\end{figure}
%%%%%%%%%%%%%%
%Setting18
\begin{figure}[ht]
    \centering
    \begin{minipage}[b]{0.8\textwidth}
        \centering
        \begin{subfigure}[b]{\textwidth}
            \centering
            \includegraphics[width=0.9\textwidth]{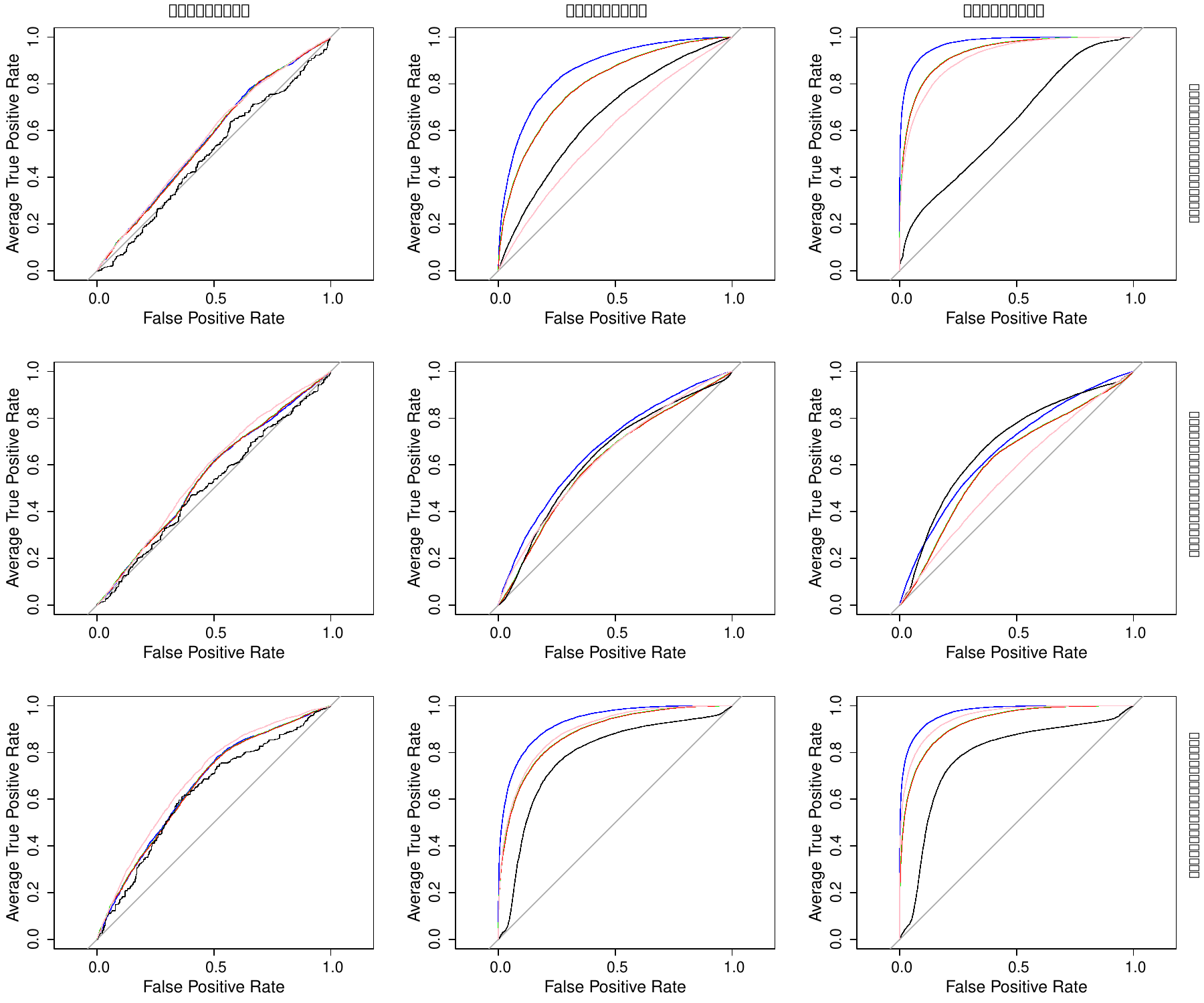}
            \caption{RCTs: ROC curves for scenarios with 50 explanatory variables and errors that are highly correlated, i.e., $\rho_1=2/3$ and $\rho_2=2/3$}
        \end{subfigure}
    \end{minipage}
    \vspace{0.5cm} % Adjust vertical space between figures
    \begin{minipage}[b]{0.8\textwidth}
        \centering
        \begin{subfigure}[b]{\textwidth}
            \centering
            \includegraphics[width=0.9\textwidth]{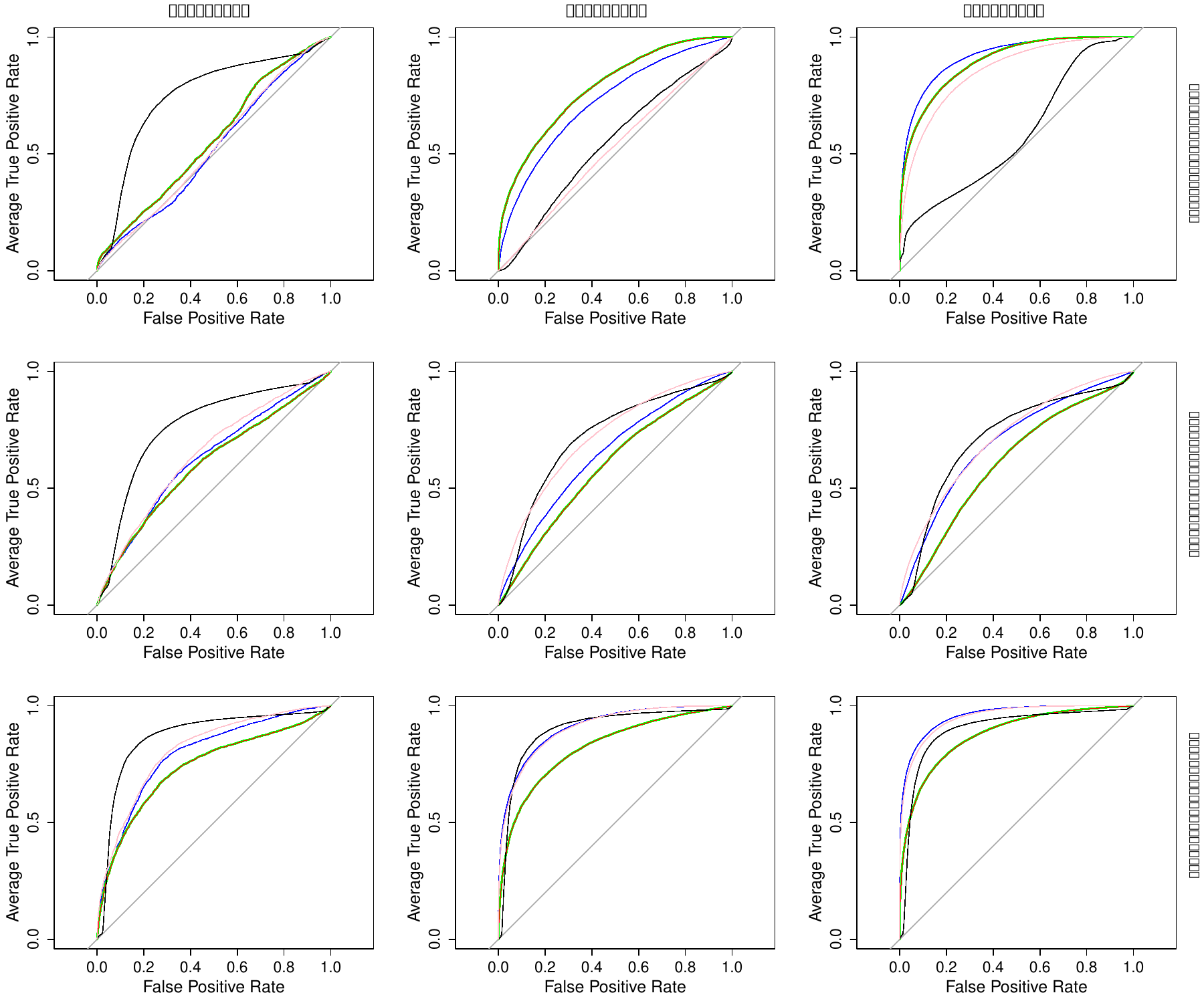}
            \caption{Observational Studies: ROC curves for scenarios with 50 explanatory variables and errors that are highly correlated, i.e., $\rho_1=2/3$ and $\rho_2=2/3$}
        \end{subfigure}
    \end{minipage}
    \end{figure}

\end{document}